\numberwithin{equation}{section}
\newcommand{\EEE}{\color{black}}
\theoremstyle{plain}
\newtheorem{theorem}{Theorem}[section]
\newtheorem{lemma}[theorem]{Lemma}
\newtheorem{proposition}[theorem]{Proposition}
\theoremstyle{definition}
\newtheorem{definition}[theorem]{Definition}
\newtheorem{remark}[theorem]{Remark}
\begin{document}

\title[Finite crystallization and Wulff shape emergence for ionic compounds] {Finite crystallization and Wulff shape emergence for ionic compounds in the square lattice}

\keywords{Ionic dimers, ground state, configurational energy minimization, crystallization, Wulff shape, square lattice, net charge}

\author{Manuel Friedrich}
\address[Manuel Friedrich]{Applied Mathematics M\"unster, University of M\"unster\\
Einsteinstrasse 62, 48149 M\"unster, Germany.}
\email{manuel.friedrich@uni-muenster.de}

\author{Leonard Kreutz}
\address[Leonard Kreutz]{Applied Mathematics M\"unster, University of M\"unster\\
Einsteinstrasse 62, 48149 M\"unster, Germany.}
\email{lkreutz@uni-muenster.de}

\date{}

\begin{abstract}

We present two-dimensional crystallization results in  the square lattice for finite particle systems consisting of two different atomic types. We identify energy minimizers of configurational energies featuring  two-body short-ranged particle interactions which favor some reference distance between different atomic types  and contain repulsive contributions for  atoms of the same type. We first prove that  ground states are connected subsets of the
square lattice with alternating arrangement of the two atomic types  in  the crystal lattice, 
 and address the emergence of a square macroscopic Wulff shape for an increasing number  of particles.  We then analyze the signed difference of the number of the two atomic types, the so-called net charge, for which we prove the sharp scaling ${\rm O}(n^{1/4})$ in terms of the particle number $n$. Afterwards, we investigate the model  under   prescribed net charge. We provide a characterization for the minimal energy and identify a critical net charge beyond which  crystallization in the square lattice fails. Finally, for this specific net charge we  prove a crystallization result and  identify a diamond-like Wulff-shape of energy minimizers  which illustrates  the sensitivity of the macroscopic geometry on the net charge.

\end{abstract}

\subjclass[2010]{82D25.} 
\maketitle

\section{Introduction}\label{section:introduction}

The question whether the ground states  of  particle systems for certain configurational energies  arrange themselves into  crystalline order is referred to as the {crystallization problem} \cite{Blanc}.  Due to its  paramount theoretical and applicative relevance, this mathematical issue has attracted a great deal of attention over the last decades and has led to various mathematically rigorous crystallization results  for ensembles consisting of \emph{one single} atomic type.  For particle systems with different types of atoms, however, rigorous results appear to be scarce.   The goal of this paper is to  contribute to these fundamental mathematical questions by presenting a study of  crystallization in the two-dimensional square lattice for finite   particle systems consisting of \emph{two different} atomic types.

Microscopically, crystallization  can be seen as the result of interatomic interactions governed by quantum mechanics. At zero or very low temperature, atomic interactions are expected to depend only on the geometry of the atomic arrangement. In this case, configurations can be identified  with the respective positions of identical atoms  $\lbrace x_1, \ldots, x_n \rbrace$.  Then the crystallization problem consists in considering the   minimization of a configurational energy $\mathcal{E}( \lbrace x_1,\ldots,x_n \rbrace )$ and in proving or disproving the periodicity of ground-state configurations of $\mathcal{E}$. 

Various such crystallization results for different choices of the energy $\mathcal{E}$   comprising classical interaction potentials have been derived over the last decades. Here, among the vast body of literature, we only mention some of the relevant works,   and refer the reader to the recent review \cite{Blanc} for a general perspective. Concerning results in one dimension we mention \cite{Gardner,Hamrick,Radi,Ventevogel}, and we refer to \cite{Lucia, HR, Mainini-Piovano, Mainini,  Radin,Wagner83}   for studies in dimension two  for  a finite number of identical particles. In particular, we highlight the work by {\sc Mainini, Piovano, and Stefanelli} \cite{Mainini-Piovano} where a comprehensive analysis of crystallization in the square lattice is performed. Besides crystallization, the authors also provide a fine characterization of ground-state geometries by proving the emergence of a square macroscopic Wulff shape for growing particle numbers. (We also refer to  \cite{Yuen, Davoli15, Davoli16, Mainini-Piovano-schmidt} for similar studies for different lattices.)  Under less restrictive assumptions on the potentials, various results have been obtained in the thermodynamic limit \cite{Betermin0,ELi, Smereka15, Theil}\EEE, i.e., as the number of particles tends to infinity.  In particular, we highlight the recent work by {\sc B\'etermin, De Luca, and Petrache} \cite{Betermin0} where, for a wide class of pair-interaction potentials, crystallization in the square lattice has been obtained.   \EEE The crystallization problem in three dimension seems to be very difficult and only few rigorous results \cite{Flateley1,Flateley2, cronut, Suto06} are  available.

For particle systems involving  different types of atoms, simulations are abundant, but   rigorous results seem to be limited to   \cite{Betermin,B43, FriedrichKreutzHexagonal,B195}.  To the best of our knowledge, the recent work \cite{FriedrichKreutzHexagonal} by the authors represents a first  rigorous mathematical crystallization result for two-dimensional \emph{dimers}, i.e., molecular compounds consisting of two atomic types. This result is  inspired by problems for systems of identical particles and follows the classical molecular-mechanical frame of configurational energy minimization: configurations of $n$ particles are identified with their respective \emph{positions} $\{x_1,\ldots,x_n\}  \subset  \mathbb{R}^{2}$ and additionally with their \emph{types} $\{q_1,\ldots,q_n\}  \in \{-1,1\}^n$. The goal is to determine minimizers of a corresponding  interaction energy $ \mathcal{E}(\{(x_1,q_1),\ldots,(x_n,q_n)\})$ and to characterize their geometry.

More precisely, the energy  $\mathcal{E} =  \mathcal{E}_{\mathrm{a}} + \mathcal{E}_{\mathrm{r}}$ is assumed to consist of two short-ranged two-body interaction potentials $\mathcal{E}_{\mathrm{a}}$ and  $\mathcal{E}_{\mathrm{r}}$, where $\mathcal{E}_{\mathrm{a}}$ represents the interactions between atoms of different type and  $\mathcal{E}_{\mathrm{r}}$ encodes the energy contributions of atoms having the  same type. The   potential $\mathcal{E}_{\mathrm{a}}$ is attractive-repulsive and  favors atoms sitting at some specific reference distance, whereas $\mathcal{E}_{\mathrm{r}}$  is a pure repulsive term. The main result of \cite{FriedrichKreutzHexagonal} is that, for specific quantitative assumptions on the potentials, global  minimizers of the configurational energy are essentially connected subsets of the regular hexagonal lattice with the two atomic types alternating. The first main goal of the present article is to show that   weaker short-ranged repulsive terms  favor crystallization in the square lattice, which illustrates  the sensitivity of the ground-state geometry on the precise assumptions on the potentials.

Let us mention that the  choice of the interaction potentials is motivated by the modelling of ions in ionic compounds.  In fact, one can interpret the two interaction energies as a (very simplified) model for  the bonding of ions capturing the essential features to describe the formation of ionic crystalline solids. The shape of  $\mathcal{E}_{\mathrm{a}}$ is due to electrostatic forces   between ions of opposite charge as well as a small additional force due to van der Waals interactions, resulting in a long ranged attraction. The short ranged repulsive force can be explained by the Pauli exclusion principle when a pair of ions comes close enough such that their outer electron shells  overlap. The balance between these forces leads to a potential energy landscape with a minimum energy when the nuclei are separated by a specific equilibrium distance. On the other hand, the repulsive  energy $\mathcal{E}_{\mathrm{r}}$ between ions of same charge models a simplified Coulomb repulsive force. \EEE We will often refer to the atomic types $\{q_1,\ldots,q_n\}  \in \{-1,1\}^n$ as positive or negative \emph{charges}. In this context, it is interesting to consider the \emph{net charge} $  \sum_{i=1}^n q_i$, i.e.,  the  (signed) difference of the number of the two atomic types.

In \cite{FriedrichKreutzHexagonal}, the \emph{free net charge problem} has been addressed, i.e., the net charge is not preassigned but a fundamental part of the minimization problem. It has been shown that $  \sum_{i=1}^n q_i$ is `almost neutral', with a deviation from zero of order   ${\rm O}(n^{1/4})$, where the scaling in terms of the particle number $n$ is sharp. Similar problems of charge distributions and net charge on Bravais lattices have been recently studied in \cite{Betermin2}. Besides extending the results of the free net charge problem to the present setting, the second main goal of this article is to complement the aforementioned analysis by studying the \emph{prescribed net charge problem}, i.e., the net charge is a given constraint in the minimization problem.    This corresponds to a model of a closed physical system containing a certain number of positively and negatively charged atoms.

We now give an overview of the main results of this article representing a comprehensive analysis of finite crystallization for dimers in the two-dimensional square lattice, see Section \ref{subsection:main results} for details.  
\begin{itemize}
\item[(1)]  We first  address the free net charge problem. Under suitable assumptions on the attractive and the repulsive potentials, we characterize the ground-state energy and geometry of  finite particle  configurations of ions in dimension two. In particular, we prove that each global minimizer of the configurational energy is essentially a connected subset of the square lattice with alternating arrangement of the two atomic types  in  the crystal lattice. This characterization holds except for possibly one atom at the boundary of the configuration. Similar to \cite{Mainini-Piovano}, we identify the emergence of a square macroscopic  Wulff shape for growing particle numbers. 
\item[(2)] We provide a fine asymptotic characterization for the net charge as the number of atoms $n$ grows.  More specifically,  we show that the fluctuation of the net charge around zero   can be at most of order ${\rm O}(n^{1/4})$, i.e., $|\sum_{i=1}^n q_i| \leq c n^{1/4}$ for some constant  $c>0$ independent of $n$. By providing an explicit construction we further prove that this scaling is sharp.
\item[(3)] We consider the prescribed net charge problem and provide a characterization of the minimal energy in dependence of the net charge. We identify a critical net charge,  the   so-called \emph{saturation net charge} $q_{\rm sat}$, which \EEE  corresponds to the case that all atoms of the less frequent atomic type are bonded to exactly four other atoms. By way of example,  see  Fig.~\ref{Fig: qnet bigger},  we show that $q_{\rm sat}$ is critical in the sense that beyond this specific net charge we cannot expect  that atoms arrange themselves in a regular lattice.
\item[(4)] We investigate the geometry of energy minimzers for prescribed net charge $q_{\rm sat}$ and show that also in this case optimal configurations are  (essentially)  subsets of the square lattice. Our interest in this specific case is twofold: (i) this problem together with the free net charge problem (which by (2) essentially corresponds to the problem with prescribed net charge  zero) constitute the extreme cases for which crystallization results can be shown. (ii) for $q_{\rm sat}$, we  identify a macroscopic diamond-like Wulff-shape, which  illustrates \EEE    the dependence of the macroscopic geometry  on  the prescribed net charge.   
\end{itemize}

Our general proof strategy for the free net charge problem follows the induction method on bond-graph layers developed in  \cite{HR, Mainini-Piovano, Mainini, Radin}. Let us mention that our problem is particularly related to  \cite{Mainini-Piovano}, where crystallization for identical particles in the square lattice has been investigated under three-body angular potentials. Actually, the ground-state energy in the present context coincides with the one obtained there.  A crucial point in the induction step is the derivation of a boundary energy estimate. The presence of repulsive instead of angular terms calls for a novel definition of the boundary energy,  complementing the approach in \cite{Mainini-Piovano} from a technical point of view, see Remark \ref{rem: boundary energy}.

To prove the emergence of a square Wulff shape and the sharp scaling ${\rm O}(n^{1/4})$ for the net charge,  we use the fact that in ground states the two atomic types are alternately arranged in the square lattice. This allows us to apply the $n^{3/4}$-law in  \cite{Mainini-Piovano} which states that ground states differ from
a square shape by at most ${\rm O}(n^{3/4})$ atoms, or equivalently, by  at most ${\rm O}(n^{1/4})$ in Hausdorff distance.

Concerning the proof of the results for the  prescribed net charge problem, in principle   we follow the same induction method on bond-graph layers as for the free net charge problem. The actual realization, however, is much more delicate. In fact, as a preliminary step for a crystallization result in the square lattice, a fine characterization of the saturation net charge $q_\mathrm{sat}$ is needed. Then, it turns out that the geometry of optimal configurations under prescribed net charge $q_\mathrm{sat}$  is quite flexible: by way of explicit constructions (see, e.g., Fig.~\ref{fig : subsetdiamond} and Fig.~\ref{fig : construction qsat}), we observe that optimal configurations are possibly not connected or regions at the boundary are not contained in the square lattice. Fine geometric arguments are necessary to ensure that these degenerate parts consist of a controlled number of atoms only.  For the  identification of the global diamond-like Wulff-shape,  we identify the charge of configurations with  the $\infty$-perimeter of specific  interpolations, and then, following an idea inspired by \cite{cicalese}, we apply the quantitative isoperimetric inequality to obtain a bound on the deviations from a diamond.

%
%
%

The article is organized as follows. In Section \ref{section:settings} we introduce the precise mathematical setting and present the main results about the free and prescribed net charge problem.  In Section \ref{section:upper bound} we construct explicitly some configurations in order to  provide sharp upper bounds for the ground-state energy and the  net charge. Moreover, we establish an upper bound for the saturation net charge $q_{\rm sat}$. These explicit constructions already give the right intuition for  the microscopic and macroscopic  geometry  of optimal configurations.  In Section \ref{sec : elementary}  we discuss elementary geometric properties of energy minimizers. In Section \ref{sec: ground states} we give the lower bound for the ground-state energy and provide a fine characterization of the geometry of ground states. Here, we also prove the $n^{1/4}$-law for the net charge of ground states. Finally,  Section \ref{sec: prescribed} is devoted to the prescribed net charge problem. We first provide a lower bound for $q_{\rm sat}$ matching the upper bound  derived \EEE in Section \ref{section:upper bound}.  Afterwards,  we  characterize \EEE the geometry of $q_{\rm sat}$-optimal configurations. \EEE

\section{Setting and main results}\label{section:settings}

 In this section we first introduce our model and give some basic definitions. Afterwards, we present our main
results.

\subsection{Configurations and interaction energy} We consider particle systems in two dimensions consisting of two different atomic types. We model their interaction by classical potentials in the frame of Molecular Mechanics \cite{Molecular,Friesecke-Theil15,Lewars}.
Let $n \in \mathbb{N}$. We indicate the \textit{configuration} of $n$ particles by
\begin{align*}
C_n=\{(x_1,q_1),\ldots,(x_{n},q_{n})\} \subset \left(\mathbb{R}^{2} \times \{-1,1\}\right)^n,
\end{align*}
identified with the respective \textit{atomic positions} $X_n=(x_1,\ldots,x_n) \in \mathbb{R}^{2n}$ together with their types $ Q_n  =(q_1,\ldots,q_n) \in \{-1, 1\}^n$. By referring to a model for ionic dimers, we will often call $Q_n$ the \textit{charges} of the atoms, $q=1$ representing \textit{cations} and $q=-1$ representing \textit{anions}. Our choice of the empirical potentials (see below) is indeed inspired by ions in ionic compounds, which are primarily held together by their electrostatic forces between the net negative charge of the anions and the net positive charge of the cations \cite{Pauling}.

\begin{figure}[htp]
\centering
\begin{tikzpicture}[scale=0.8]
\draw[->](0,-2)--++(0,6) node[anchor= east] {$V_{\mathrm{a}}(r)$};
\draw[->](-1,0)--++(5,0) node[anchor =north] {$r$};
\draw[decorate, decoration={snake,amplitude=.4mm,segment length=1.5mm}] (0,3) node[anchor =east]{$+\infty$}--++(1,0);
\draw[dashed,thin](1,-1) --++(0,1) node[anchor =north east] {$ 1 $}--++(0,3);
\draw[fill=black](0,-1) node[anchor = east] {$-1$}++(1,0) circle(.025);
\draw[thick](1.6,0)--++(2.4,0);
\draw[thick](1,-1)--(1.1,-0.6);
\draw[thick](1.1,-0.6) parabola[bend at end] (1.6,0) node[anchor=north]{$r_0$};

\begin{scope}[shift={(8,0)}]
\draw[->](0,-2)--++(0,6) node[anchor= east] {$V_{\mathrm{r}}(r)$};
\draw[->](-1,0)--++(5,0) node[anchor =north] {$r$};
\draw[decorate, decoration={snake,amplitude=.4mm,segment length=1.5mm}] (0,3) node[anchor =east]{$+\infty$}--++(1,0);
\draw[dashed,thin](1,0) node[anchor =north] {$ 1 $}--++(0,3);
\draw[thick](1,.75) parabola[bend at end] ({sqrt(3)},0) node[anchor=north]{$\sqrt{2}$};
\draw[thick]({sqrt(3)},0)--(4,0);
\draw[dashed,thin]({sqrt(3)},-.075)++(150:-.125)--++(150:1.125);
\end{scope}

\end{tikzpicture}
\caption{The potentials $V_{\mathrm{a}}$ and $V_{\mathrm{r}}$.}
\label{FigurePotentials}
\end{figure}

 By following the setting in \cite{FriedrichKreutzHexagonal}, we  define the energy $\mathcal{E}:  (\mathbb{R}^2 \times \lbrace -1,1 \rbrace)^{n}$ $ \to \overline{\mathbb{R}}$ of a given configuration $\lbrace (x_1,q_1),\ldots,(x_n,q_n) \rbrace  \in (\mathbb{R}^2 \times \lbrace -1,1 \rbrace)^{n}$   by
\begin{align}\label{Energy}
\mathcal{E}(C_n) = \frac{1}{2}\sum_{\underset{q_i = q_j}{i \neq j}} V_{\mathrm{r}}(|x_i-x_j|) + \frac{1}{2}\sum_{\underset{q_i \neq q_j}{i \neq j}} V_{\mathrm{a}}(|x_i-x_j|),
\end{align}
where $V_{\mathrm{r}}, V_{\mathrm{a}} : [0,+\infty)\to \overline{\mathbb{R}}$ are a \emph{repulsive potential} and an \emph{attractive-repulsive potential}, respectively. The factor $1/2$  accounts for the fact that every contribution is counted twice in the sum.  The two potentials are pictured schematically in Fig.~\ref{FigurePotentials}.  Let $r_0 \in [1,(2\sin(\frac{\pi}{7}))^{-1})$  and note that $r_0 < \sqrt{2}$.  The  attractive-repulsive potential $V_{\mathrm{a}}$ satisfies
\begin{align*}
{\rm [i]}& \ \ V_{\mathrm{a}}(r) =+\infty \text{ for all $r < 1$},\\
{\rm [ii]}& \ \ V_{\mathrm{a}}(r)=-1 \text{ if and only if $r=1$ and $V_{\mathrm{a}}(r) >-1$ otherwise},\\
{\rm [iii]}& \ \  V_{\mathrm{a}}(r) \leq 0 \text{ for all } r\geq 1   \text{ with equality for all $r > r_0$.}
\end{align*}
The distance $r =1 $ represents the (unique) equilibrium distance of two atoms with opposite charge. The choice of $V_{\rm a}$ reflects a balance between a long-ranged Coulomb attraction and  the  short-ranged  Pauli repulsion acting  when a pair of ions comes too close to each other. Assumption [iii] restricts the interaction range and guarantees that the \emph{bond graph} is planar, see Section \ref{sec: notions}.

The repulsive potential $V_{\mathrm{r}}$ satisfies
\begin{align*}
{\rm [iv]} & \ \  \text{$V_{\mathrm{r}}(r) = +\infty$ for all $r<1$ and $0 \leq V_{\mathrm{r}}(r) <+\infty$ for all $r\geq 1$},\\
{\rm [v]} & \ \  \text{$V_{\mathrm{r}}$ is non-increasing and convex for $r \geq 1$,}\\
{\rm [vi]} & \ \   \text{$V_{\mathrm{r}}\left(2r_0 \sin\left(\frac{\pi}{5} \right) \right) > 6$,}  \\
{\rm [vii]} & \ \  \text{$V_{\mathrm{r}}(r) = 0 $  if and only if  $r \geq \sqrt{2}$.}
\end{align*}
 The natural assumption [v] is satisfied for example for repulsive Coulomb interactions. We emphasize that some quantitative requirements of the form [vi] and [vii] are necessary to obtain a crystallization
result in the square lattice. Other quantitative assumptions on the repulsive potential will favor, e.g., that the atoms arrange themselves in a hexagonal lattice, see \cite{FriedrichKreutzHexagonal}.

Finally, we require the following \emph{slope conditions} \EEE
\begin{itemize}
\item[{\rm[viii]} ] $ \displaystyle
V'_{\mathrm{r},-}(\sqrt{2}) < -\frac{16}{\sqrt{2}\pi},\EEE  \ \ \ \ \ \ \    \frac{1}{r-1}(V_{\mathrm{a}}(r)- V_{\mathrm{a}}(1)) > -\frac{1}{2} V'_{\mathrm{r},+}(1)   \ \ \ \text{for all } r \in (1,r_0],
$
\end{itemize}
 where the functions $V'_{\mathrm{r},-}$, $V'_{\mathrm{r},+}$ denote the left and right derivative, respectively. (They exist
due to  the  convexity of $V_{\rm r}$.) These conditions are reminiscent of the soft-interaction assumption by
{\sc Radin} \cite{Radin} and the slope condition for an angular potential by  {\sc Mainini, Piovano, and Stefanelli} \cite{Mainini-Piovano}.  Assumption [viii] is only needed in Lemma \ref{LemmaBoundaryEnergy} where the energy contribution of atoms at the boundary of the configuration  is  estimated. For a more detailed discussion on the assumptions on $V_{\rm a}$ and $V_{\rm r}$ we refer the reader to \cite[Section 2.1]{FriedrichKreutzHexagonal}.

\subsection{Basic notions}\label{sec: notions} In this  subsection  we collect some basic notions. Consider a configuration $C_n \in \left(\mathbb{R}^2 \times \{-1,1\}\right)^{n}$ with finite energy consisting of the positions $X_n= (x_1,\ldots,x_n) \in \mathbb{R}^{2n}$ and the charges $Q_n=(q_1,\ldots,q_n)\in \{-1,1\}^n$.  

\textbf{Neighborhood, bonds, angles:}  For each $x_i\in \mathbb{R}^2$, $i \in \{1,\ldots,n\}$, we define   the \textit{neighborhood} by
\begin{align}\label{eq: neighborhood}
\mathcal{N}(x_i) = \left(X_n \setminus \{x_i\}\right) \cap \lbrace x \in \mathbb{R}^2: \ |x-x_i| \le r_0 \rbrace,
\end{align}
 where $r_0$ is defined in [iii].   If $x_j \in \mathcal{N}(x_i)$, we say that $x_i$ and $x_j$ are \emph{bonded}. We will say that $x_i$ is $k$-bonded if $\# \mathcal{N}(x_i) = k$. Given $x_j,x_k \in \mathcal{N}(x_i)$, we define the \textit{bond-angle} between $x_j,x_i,x_k$ as the angle between the two vectors $x_k-x_i$ and $x_j-x_i$. (We choose anti-clockwise orientation for
definiteness.) In general, we say that it is an angle at $x_i$.

\EEE

\textbf{Bond graph:}  The set of   atomic positions $X_n \subset \mathbb{R}^{2n}$ together with the set of  bonds $\{\{x_i,x_j\}: x_j \in \mathcal{N}(x_i)\}$   forms a graph which we call the  \textit{bond graph}.  Since for  configurations with finite energy  there holds $\mathrm{dist}(x_i,X_n \setminus \{x_i\})\geq 1$ and $x_j \in \mathcal{N}(x_i)$ only if $|x_i-x_j| \leq r_0<\sqrt{2}$,     their bond   graph is a planar.  Indeed, given a quadrangle with all sides and one diagonal in $[1,r_0]$, the second diagonal is at least $\sqrt{2} >r_0$.  If no ambiguity arises, the number of bonds in the bond graph will be denoted by $b$, i.e.,
$$b= \# \{\{x_i,x_j\}: x_j \in \mathcal{N}(x_i)\}.$$ \EEE 
We say a configuration is \emph{connected} if each two atoms are joinable through a simple path in the bond graph. In a similar fashion, we speak of connected components of a configuration.  Any simple cycle of the bond graph is a \textit{polygon}.

 \textbf{Acyclic bonds:} A bond is called \textit{acyclic} if it is not contained in any simple cycle of the bond graph. Among acyclic bonds we distinguish between \textit{flags} and \textit{bridges}. We say that an acyclic bond is a bridge if it is contained in some simple path connecting two vertices which are included in two distinct cycles.  All other acyclic bonds are  called flags, see Fig.~\ref{FigureFlagsBridges}.

 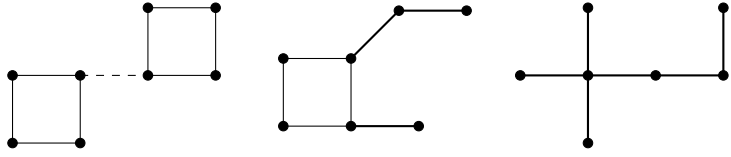
\begin{figure}[htp]
\centering
\begin{tikzpicture}[scale=0.9]
\foreach \j in {0,1,2,3}{
\draw(0,0)++(90*\j:.5)++(90*\j+90:-.5)--++(90*\j+90:1);
\draw[fill=black](0,0)++(90*\j:.5)++(90*\j+90:-.5) circle(.07);
\draw[fill=black](2,1)++(90*\j:.5)++(90*\j+90:-.5) circle(.07);

\draw(2,1)++(90*\j:.5)++(90*\j+90:-.5)--++(90*\j+90:1);
\begin{scope}[shift={(4,.25)}]
\draw(0,0)++(90*\j:.5)++(90*\j+90:-.5)--++(90*\j+90:1);
\draw[fill=black](0,0)++(90*\j:.5)++(90*\j+90:-.5) circle(.07);
\end{scope}

}

\draw[thick](4.5,-.25)--++(1,0);

\draw[thick](4.5,.75)--++(45:1)--++(0:1);
\draw[fill=black](4.5,.75)++(45:1)++(0:1)circle(.07);
\draw[fill=black](4.5,.75)++(45:1)circle(.07);
\draw[fill=black](4.5,-.25)++(1,0)circle(.07);
\draw[dashed](.5,.5)--++(1,0);

\draw[thick](7,.5)--++(3,0)++(0,1)--++(0,-1)++(-2,0)++(0,1)--++(0,-2);
\draw[fill=black](7,.5)circle(.07);
\draw[fill=black](7,.5)++(1,0)circle(.07);
\draw[fill=black](7,.5)++(2,0)circle(.07);
\draw[fill=black](7,.5)++(3,0)circle(.07);
\draw[fill=black](7,.5)++(1,1)circle(.07);
\draw[fill=black](7,.5)++(1,-1)circle(.07);
\draw[fill=black](7,0.5)++(3,1)circle(.07);
\end{tikzpicture}
\caption{Examples of flags (bold) and a bridge (dashed).}
\label{FigureFlagsBridges}
\end{figure}

 \textbf{Defects:}  By elementary polygons we denote  polygons which do \EEE not contain any non-acyclic bonds in its interior region.  An elementary polygon \EEE  in the bond graph which is not a  square is called  a \emph{defect}.  We  introduce the  \textit{excess of edges}   $\eta(C_n)$  by
\begin{align}\label{Excess}
\eta(C_n)= \sum_{j\geq 4} (j-4)   f_j,  
\end{align}
 where $f_j$ denotes the number of  elementary  polygons with $j$ vertices in the bond graph. The excess of edges is a tool to quantify the number of {defects} in the bond graph. \EEE { Note that the summation in (\ref{Excess}) runs over $j\geq 4$. This is due to the fact that we use this definition only for configurations whose bond graph contains only $k$-gons with $k \geq 4$,   cf.~Lemma \ref{RemarkPolygon}  below.  If it is clear from the context, we omit the dependence on $C_n$ and write $\eta=\eta(C_n)$.  
 
 In the following, we  frequently  refer to $C_n$ instead of $X_n$ when speaking about its bond graph, acyclic bonds, or connectedness properties.
 
\textbf{Charges:} We say that a configuration satisfying
\begin{align}\label{SamechargeNeighbourhood}
 \mathcal{N}(x_i) \cap \left\{x_j \in X_n: q_j=q_i\right\} = \emptyset  \text{ for all } i \in \{1,\ldots,n\}
\end{align}
 has \textit{alternating charge distribution}. A configuration is called  \emph{repulsion-free}  if $|x_i - x_j| \ge \sqrt{2}$ for all $x_i \neq x_j$ with $q_i = q_j$. The \emph{net charge} of a configuration is defined  as the (signed) difference of the number of the two atomic types, i.e.,   
 \begin{align} \label{DefinitionNetCharge}
  \mathcal{Q}(C_n) := \sum_{i=1}^n q_i. 
 \end{align} 
We note that all possible net charges are given by $\mathcal{Q}_{\rm net}(n) := (2\mathbb{Z} + n\,{\rm mod} 2) \cap [-n,n]$. If $\mathcal{Q}(C_n)>0$, we say that the atoms with charge $+1$ represent the \emph{majority phase} and the atoms with charge  $-1$  the \emph{minority phase}. We use a corresponding denomination if  $\mathcal{Q}(C_n)<0$.  We denote by 
\begin{align}\label{def : Cnpm}
X_n^+:=\{x_i \in X_n : q_i= + 1\}, \ \ \ \ \ \ \ X_n^-:=\{x_i \in X_n : q_i= - 1\}
\end{align}
the \textit{positively} and \textit{negatively} \textit{charged phase}, respectively.

\textbf{Ground states:} 
For $q_{\rm net} \in \mathcal{Q}_{\rm net}(n)$ we define 
\begin{align}\label{def : mnq}
\mathcal{E}^n_{\rm min}(q_{\rm net}):= \min\left\{\mathcal{E}(C_n): \,  C_n \subset ( \mathbb{R}^{2} \times \{ -1, 1\})^{n}, \  \mathcal{Q}(C_n) = q_{\rm net}\right\}.
\end{align}
A configuration $C_n$ with  $\mathcal{Q}(C_n)=q_{\rm net}$ and $\mathcal{E}(C_n) = \mathcal{E}^n_{\rm min}(q_{\rm net})$ is called a \emph{$q_{\rm net}$-optimal configuration}.  We will often simply call $C_n$ an \emph{optimal configuration}. Moreover, we call  $C_n$  a \textit{ground state} if  $\mathcal{E}(C_n) = \min_{q_{\rm net} \in \mathcal{Q}_{\rm net}(n)} \mathcal{E}^n_{\rm min}(q_{\rm net})$. In other words, $C_n$ is a ground state if and only if
\begin{align*}
\mathcal{E}(C_n) \leq \mathcal{E}(C_n')
\end{align*}
for all $C'_n \subset ( \mathbb{R}^{2} \times \{ -1, 1\})^{n}$.

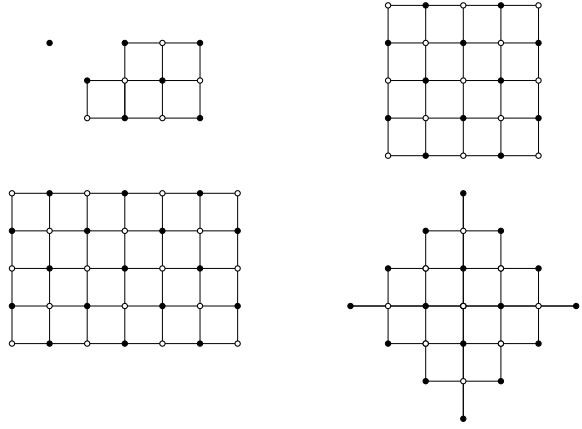
\begin{figure}[h]
\centering
\begin{tikzpicture}[scale=0.5]

\begin{scope}[rotate=90]

\draw[step=1cm,very thin] (1,0) grid (5,6);
\foreach \j in {0,...,2}{
\foreach \i in {1,...,2}{
\draw[fill=black](2*\i,2*\j) circle(.07);
\draw[fill=black](2*\i+1,2*\j+1) circle(.07);
\draw[fill=white](2*\i+1,2*\j) circle(.07);
\draw[fill=white](2*\i,2*\j+1) circle(.07);
}
}

\foreach \j in {0,...,2}{
\draw[fill=black](1,2*\j+1) circle(.07);
}

\foreach \j in {0,...,1}{
\draw[fill=black](2*\j+2,6) circle(.07);
\draw[fill=white](2*\j+3,6) circle(.07);
}

\foreach \j in {0,...,3}{
\draw[fill=white](1,2*\j) circle(.07);
}

\draw[step=1cm,very thin] (7,1) grid (9,3);
\draw[step=1cm,very thin] (7,3) grid (8,4);
\foreach \j in {0,...,1}{
\foreach \i in {0,...,1}{
\draw[fill=black](7,1)++(2*\i,2*\j) circle(.07);
}}

\foreach \j in {0,...,1}{
\draw[fill=white](7,2)++(0,2*\j) circle(.07);
\draw[fill=white](8,1)++(0,2*\j) circle(.07);
\draw[fill=black](8,2)++(0,2*\j) circle(.07);
}

\draw[fill=white](9,2) circle(.07);

\draw[fill=black](9,5) circle(.07);
\begin{scope}[shift={(2,-6)}]

\foreach \j in {0,...,2}{
\draw[ultra thin](\j,-3+\j)--++(0,6-2*\j);
\draw[ultra thin](-\j,-3+\j)--++(0,6-2*\j);
\draw[ultra thin](\j-3,\j)--++(6-2*\j,0);
\draw[ultra thin](\j-3,-\j)--++(6-2*\j,0);
}

\foreach \j in {1,3}{
\pgfmathsetmacro\y{\j-1}
\foreach \l in {0,...,\y}{
\foreach \k in {0,...,3}{
\begin{scope}[rotate = \k*90]
\draw[fill=black](0,0)++(\j,0)++(-\l,\l) circle(.07);
\end{scope}
}
}}

\foreach \j in {0,2}{
\pgfmathsetmacro\y{\j-1}
\foreach \l in {0,...,\y}{
\foreach \k in {0,...,3}{
\begin{scope}[rotate = \k*90]
\draw[fill=white](0,0)++(\j,0)++(-\l,\l) circle(.07);
\end{scope}
}
}}
\end{scope}

\begin{scope}[shift={(6,-8)}]
\clip(-.5,-.5) rectangle (4.5,4.5);
\draw[ultra thin] (0,0) grid (4,4);
\foreach \j in {0,1,2}{
\foreach \i in {0,1,2}{
\draw[fill=white](0,0)++(2*\j,2*\i) circle(.07);
\draw[fill=black](1,0)++(2*\j,2*\i) circle(.07);
\draw[fill=white](1,1)++(2*\j,2*\i) circle(.07);
\draw[fill=black](0,1)++(2*\j,2*\i) circle(.07);
}}

\end{scope}
\end{scope}
\end{tikzpicture}
\caption{Some configurations that are subset of the square lattice. The top configuration on the right  is $S_n$, $25 \le n\le 35$, whereas the bottom configuration on the right is $D_n$, $25\le n \le 40$. The configurations have alternating charge distribution where black indicates $q= + 1$ and white indicates $q=-1$.}
\label{Fig:Squarelattice}
\end{figure}

 \textbf{Subsets of the square lattice:}   We denote  by \EEE $\mathbb{Z}^2\subset \mathbb{R}^2$  the \emph{square lattice}.  We define special subsets of the square lattice representing the Wulff-shapes of optimal configurations. \EEE  For $n \in \mathbb{N}$   we let
 \begin{align}\label{def : Sn}
 S_n := \left\{x \in \mathbb{Z}^2 : x_1,x_2 \geq 0, \   |x|_\infty  \leq  \lfloor\sqrt{n}-1  \rfloor \right\}
 \end{align}
 the \textit{square} of sidelength  $\lfloor \sqrt{n}-1\rfloor$,  where by $\lfloor t\rfloor$ we denote the integer part of $t\in \mathbb{R}$.    By
 \begin{align} \label{def :Dn}
 D_n := \left\{x \in \mathbb{Z}^2 :  |x|_1  \leq \lfloor (-1+ \sqrt{2n-1})/2 \rfloor\right\} 
 \end{align}
 we denote the \textit{diamond} of radius $\lfloor (-1+ \sqrt{2n-1})/2 \rfloor$.  $S_n$ and $D_n$ represent the largest squares and diamonds, respectively,  whose number of atoms is  less or equal to $n$.     In Fig.~\ref{Fig:Squarelattice} some subsets of the square lattice are depicted. 

Observe that these  configurations can be chosen to have   alternating charge distribution.   By assumptions [ii], [iii], and [vii], the energy of such configurations $C_n$  satisfies        $\mathcal{E}(C_n) = -b$ since all atoms of the same charge have at least distance $\sqrt{2}$ and all atoms of $X_n$ are bonded only to atoms of opposite charge   with bonds of unit length.

\textbf{Equilibrated atoms:} We say  that an atom $x \in X_n$ is \emph{equilibrated} if all bond-angles at $x$ lie in $\lbrace \frac{\pi}{2},\pi,\frac{3\pi}{2} \rbrace$. By  $\mathcal{A}(X_n)$ \EEE we denote the atoms which are \emph{not} equilibrated.  Note that, if $\mathcal{A}(X_n)=\emptyset$ and $X_n$ is connected, then $X_n$ is a subset of the square lattice  $\mathbb{Z}^2$.

\subsection{Main results}\label{subsection:main results}  In this  subsection \EEE we state our main results.  We will first address the free net charge problem, and characterize the energy and geometry of ground states. In particular, we will  prove a rigorous planar crystallization result in the spirit of \cite{FriedrichKreutzHexagonal, HR,Mainini-Piovano,Mainini,Radin}  and  the emergence of a square macroscopic Wulff shape (cf.\ \cite{Yuen, Davoli15, Davoli16, Mainini-Piovano, Schmidt-disk}).  Then we will characterize the net charge of ground-state configurations.

Afterwards, we   change the perspective and   study $q_{\rm net}$-optimal configurations under prescribed net charge $q_{\rm net}$. We  give an estimate for the   minimal energy $\mathcal{E}^n_{\rm min}(q_{\rm net})$  and   identify a specific net charge, the \emph{saturation net charge} $q^n_{\rm sat}$, which corresponds to the smallest net  charge where all atoms of the minority phase are $4$-bonded. Finally, we  prove the emergence of a diamond-like Wulff-shape for $q^n_{\rm sat}$-optimal configurations which  reflects  the sensitivity of the Wulff-shape on  the prescribed net  charge. 

 \noindent \textbf{Free net charge:}   Our first result characterizes the energy of ground states.  For $n \in \mathbb{N}$, we introduce the function
\begin{align}\label{def: beta}
\beta(n):= 2n -2\sqrt{n}.
\end{align}

\begin{theorem}[Ground-state energy]\label{TheoremGroundstates}  Let $n \in \mathbb{N}$. Ground states are connected and have alternating charge distribution. They do not contain any bridges. There holds 
\begin{align} \label{Energygroundstates}
\mathcal{E}(X_n)=-b= -\lfloor\beta(n)\rfloor.
\end{align}
\end{theorem}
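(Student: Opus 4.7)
The plan is to prove the theorem in three stages: an explicit upper bound, a reduction to a pure bond count, and a combinatorial lower bound via bond-graph-layer induction, together with a short closing argument for connectedness and the absence of bridges.

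For the upper bound $\mathcal{E}(X_n)\le -\lfloor\beta(n)\rfloor$ I would invoke the constructions of Section \ref{section:upper bound}: a subset of $\mathbb{Z}^2$ close to the square $S_n$, equipped with the natural alternating charge assignment, has every different-charge bond of unit length, every same-charge pair at distance $\ge \sqrt{2}$, and exactly $\lfloor\beta(n)\rfloor$ bonds, so by [ii] and [vii] its energy equals $-\lfloor \beta(n)\rfloor$. For the reduction I would first establish that any ground state has alternating charge distribution: if two same-charge atoms lay within distance $2 r_0\sin(\pi/5)$, then by [vi] the single repulsive contribution would exceed $6$, which, compared against the at-most-six attractive neighbours per atom (since $r_0 < 1/(2\sin(\pi/7))$), is incompatible with the upper bound. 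Hence every bond in a ground state joins atoms of opposite charge. Then [vii] makes same-charge pairs contribute zero, so
\begin{equation*}
\mathcal{E}(C_n) \;=\; \sum_{\{x_i,x_j\}\in\mathrm{bonds}} V_{\mathrm{a}}(|x_i-x_j|) \;\ge\; -b,
\end{equation*}
with equality iff every bond has unit length. It thus suffices to show $b\le \lfloor \beta(n)\rfloor$ for every admissible configuration of finite energy.

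The combinatorial lower bound on $-b$ is the core of the argument and follows the bond-graph-layer induction of \cite{HR,Mainini-Piovano,Radin}. For a connected bridge-free component, Euler's formula combined with a count of edges around each face of the planar bond graph yields
\begin{equation*}
b \;=\; 2n - 2 - \tfrac12 P_{\mathrm{out}} - \tfrac12 \eta,
\end{equation*}
where $P_{\mathrm{out}}$ is the number of outer-boundary edges and $\eta$ is the excess of edges from \eqref{Excess}. The inductive step peels off an outer layer of atoms and applies the hypothesis to the residual configuration. The crucial quantitative input is the boundary energy estimate (Lemma \ref{LemmaBoundaryEnergy}), whose proof relies on the slope conditions [viii] to bound from below the contribution of near-boundary bonds, combined with a discrete isoperimetric inequality forcing $P_{\mathrm{out}}+\eta \ge 4\sqrt{n}-c$ for a universal constant $c$. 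Equality forces $\eta=0$ and all bonds of unit length, hence the ground state lies in $\mathbb{Z}^2$.

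Finally, connectedness and the absence of bridges follow from optimality: a disconnected ground state $C_n = C_{n_1}\sqcup C_{n_2}$ with $n_1,n_2\ge 1$ would satisfy $b \le \lfloor \beta(n_1)\rfloor + \lfloor \beta(n_2)\rfloor$, and since $\sqrt{n_1}+\sqrt{n_2} > \sqrt{n_1+n_2}$ by strict concavity of $\sqrt{\cdot}$, this is strictly less than $\lfloor \beta(n)\rfloor$, contradicting $\mathcal{E}(C_n)=-\lfloor\beta(n)\rfloor$; an analogous count with an extra $+1$ for the bridge excludes bridges. The main obstacle throughout is the boundary energy estimate: since the repulsive term $V_{\mathrm{r}}$ is a genuine two-body potential rather than an angular one as in \cite{Mainini-Piovano}, a new lower bound exploiting [viii] has to be designed to quantitatively trade off boundary flexibility against attractive bond-length deviations, and the careful bookkeeping of flags, bridges and defects along the inductive layer removal is the most delicate part.
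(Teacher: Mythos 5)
Your overall architecture matches the paper's: explicit construction for the upper bound, a relocation argument for alternating charges via [vi], the Euler-formula identity $b=2n-2-\tfrac12 d-\tfrac12\eta$ (this is exactly Lemma \ref{LemmaBoundaryestimate}), a layer-peeling induction hinging on the boundary energy estimate of Lemma \ref{LemmaBoundaryEnergy} and [viii], and subadditivity of $\beta$ for connectedness and the exclusion of bridges. However, there is one genuine logical flaw in the way you set up the core step: the reduction to showing ``$b\le\lfloor\beta(n)\rfloor$ for every admissible configuration of finite energy'' does not work. That bond-count inequality is false for general finite-energy configurations: nothing a priori prevents $5$- or $6$-bonded atoms (the bound $\#\mathcal{N}(x_i)\le 4$ is itself only derived for \emph{optimal} configurations via relocation), and even for configurations with degree at most $4$ and bipartite bond graph, boundary atoms with interior angles in $[\tfrac{2\pi}{5},\tfrac{\pi}{2})$ can carry more incident bonds than lattice boundary atoms, so the face/boundary count only yields $\#I_2+2\#I_3+3\#I_4\le\tfrac52(d-2)$ rather than the sharp $2d-4$ needed for $d+\eta\ge 4\sqrt n-4$. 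The sharp angle bound $\theta\ge\pi/2$ is a \emph{consequence} of $\mathcal{E}(C_n)=-b$ (repulsion-freeness plus unit bonds), which is precisely what you are trying to prove, so a purely combinatorial bond count is circular.

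The paper escapes this by never isolating the bond count: the induction runs on the reduced energy $\mathcal{R}$, and the boundary energy estimate $\mathcal{R}^{\mathrm{bnd}}(C_n)\ge -2d+4$ simultaneously trades excess boundary bonds and tight angles against the repulsive penalty $V_{\mathrm r}(2\sin(\theta/2))$ and the length-deviation penalty controlled by [viii]; the isoperimetric-type inequality $d+\eta\ge 4\sqrt n-4$ then emerges only \emph{a posteriori} from combining this with the induction hypothesis on the bulk and the monotonicity argument of Lemma \ref{lemma: Wurzel} — it is not an independent input to Lemma \ref{LemmaBoundaryEnergy} as your sketch suggests. Two smaller points: your parenthetical that equality forces $\eta=0$ and hence the ground state lies in $\mathbb{Z}^2$ overstates what Theorem \ref{TheoremGroundstates} needs and is not literally true in the presence of a flag (cf.\ Fig.~\ref{Fig:flexible}); and the induction must explicitly handle acyclic bonds (peeling off flags one at a time using $\lfloor\beta(n)\rfloor\le\lfloor\beta(n-1)\rfloor+1$) before the maximal-polygon machinery applies, which your ``careful bookkeeping'' remark gestures at but does not supply.
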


\begin{remark}\label{rem: repulsionsfree}
{\normalfont 
In view of assumptions ${\rm [ii]}$ and [vii], we have that $\mathcal{E}(C_n) \geq -b$ with equality if and only if the configuration is repulsion-free and all bonds have unit length. In particular, Theorem \ref{TheoremGroundstates}   implies that ground states satisfy both properties.
}
\end{remark}

The next result states that ground states are essentially subsets of the square lattice and that a square Wulff-shape emerges as $n \to \infty$. Without further notice, all following statements regarding the geometry of ground states hold up to
isometry.  Recall the definition of a square of sidelength   $\lfloor \sqrt{n}-1\rfloor$   in  (\ref{def : Sn}).

\EEE

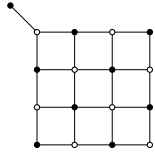
\begin{figure}[h]
\centering
\begin{tikzpicture}[scale=0.5]
\draw[very thin](0,3)--++(135:1);
\draw[fill=black](0,3)++(135:1) circle(.07);
\draw[step=1cm,very thin] (0,0) grid (3,3);
\foreach \j in {0,1}{
\foreach \i in {0,1}{
\draw[fill=black] (2*\j,2*\i) circle(.07);
\draw[fill=black] (1+2*\j,1+2*\i) circle(.07);
\draw[fill=white] (1+2*\j,2*\i) circle(.07);
\draw[fill=white] (2*\j,1+2*\i) circle(.07);
}
}
\end{tikzpicture}
\caption{A ground state for  $n=17$  that is not a subset of the square lattice.}
\label{Fig:flexible}
\end{figure}

\begin{theorem}[Characterization of ground states]\label{theorem : GeometryGroundstatewithout}  Let $n \in \mathbb{N}$ and let $C_n$ be a ground state.

\begin{itemize}
 \item[(a)]   (Crystallization)  Except for possibly one  atom,  $C_n$ is a   subset of the square lattice.
 \item[(b)]   (Wulff-shape) For a universal constant $c>0$ independent of $n$, there holds
$$\min_{a \in \mathbb{R}^2}    \# \big(X_n \triangle (a+ S_n) \big) \le cn^{3/4},$$ 
where $\triangle$ denotes the symmetric difference of sets. 
\end{itemize}

\end{theorem}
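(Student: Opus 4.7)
The plan is to leverage Theorem \ref{TheoremGroundstates}, which already supplies a ground state $C_n$ that is connected, has alternating charge distribution, contains no bridges, is repulsion-free, has all bonds of unit length, and realizes the sharp bond count $b=\lfloor \beta(n)\rfloor$. Since the bond graph is bipartite, every elementary polygon has an even number of edges, at least four; moreover each elementary $4$-gon is a unit rhombus whose two diagonals join equally charged atoms, and the repulsion-free property forces both diagonals to have length exactly $\sqrt{2}$, i.e., every $4$-gon is a unit square. In particular every atom in $\mathcal{A}(X_n)$ is either incident to a defect (an elementary polygon with $\geq 6$ edges) or lies on the outer boundary of the configuration.

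For part (a), I would combine Euler's formula for the planar bond graph with the face-edge incidence identity. Writing $|F|$ for the number of elementary polygons and $\ell$ for the length of the outer boundary (counting any flag edges twice), one obtains $|F|=1-n+b$ and $\sum_{j\geq 4} j f_j+\ell=2b$, and adding $4|F|+\eta=\sum_{j} j f_j$ yields the budget identity $\eta+\ell=4n-4-2b$. With $b=\lfloor 2n-2\sqrt{n}\rfloor$ this gives $\eta+\ell\leq 4\sqrt{n}+2$, while a discrete isoperimetric estimate on the bond graph forces $\ell\geq 4\sqrt{n}-{\rm O}(1)$, so $\eta$ is bounded by a universal constant. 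Any defect and any non-equilibrated boundary atom produces a strictly positive energy excess over $-b$ according to the boundary-energy estimate of Lemma \ref{LemmaBoundaryEnergy}, which crucially uses the slope conditions [viii]; since $\mathcal{E}(C_n)=-b$ exactly, a careful arithmetic that exploits the floor function in $\lfloor \beta(n)\rfloor$ leaves room for at most one non-equilibrated atom. Propagating orientations along a spanning tree of the equilibrated part of the bond graph then identifies it with an axis-aligned subset of a translate of $\mathbb{Z}^2$.

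For part (b), removing the at-most-one exceptional atom produces a configuration $X_n'\subset a+\mathbb{Z}^2$ with alternating charges and the sharp bond count for its cardinality. After ignoring the charge labels, $X_n'$ is therefore a ground state for the single-species square-lattice problem analyzed by {\sc Mainini, Piovano, and Stefanelli} \cite{Mainini-Piovano}. Invoking their $n^{3/4}$-law directly yields $\#(X_n'\triangle(a+S_n))\leq c'n^{3/4}$ for a universal constant $c'$, and the possible missing atom together with $|S_{n-1}\triangle S_n|={\rm O}(\sqrt{n})$ can be absorbed into the final constant $c$.

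The main obstacle is the quantitative step within part (a): converting the tight but a priori non-vanishing budget $\eta+\ell=4n-4-2b$, together with the boundary-energy estimate, into the sharp bound $|\mathcal{A}(X_n)|\leq 1$. The example in Fig.~\ref{Fig:flexible} shows that one off-lattice atom genuinely can occur, so every term must be accounted for exactly and no looseness of order ${\rm O}(1)$ can be tolerated in the final counting.
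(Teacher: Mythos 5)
Part (b) of your proposal is essentially the paper's argument: after discarding the flag-free discrepancy one forgets the charges and invokes the $n^{3/4}$-law of \cite{Mainini-Piovano}, absorbing the one extra atom into the constant. The genuine gap is in part (a), and it sits exactly where you yourself locate "the main obstacle" --- that step is not a residual technicality but the entire content of the theorem, and the ingredients you list cannot close it. The asserted "discrete isoperimetric estimate" $\ell\ge 4\sqrt n-{\rm O}(1)$ is not a combinatorial property of planar bond graphs with faces of length $\ge 4$ and degree $\le 4$; the only bound available before crystallization is established is $\ell = d = 4n-4-2b-\eta\ge 4\sqrt n-4-\eta$ (this is Lemma \ref{LemmaBoundaryestimate} combined with $b\le\lfloor\beta(n)\rfloor$), and substituting it into your budget identity $\eta+\ell=4n-4-2b\le 4\sqrt n-2$ makes $\eta$ cancel and yields the vacuous statement $4\sqrt n-4\le 4\sqrt n-2$. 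A lower bound on $\ell$ that is independent of $\eta$ presupposes that the configuration sits in $\mathbb{Z}^2$, which is what is being proved, so this route is circular. Moreover, even granting $\eta={\rm O}(1)$, your claim that "any defect produces a strictly positive energy excess over $-b$" is false as stated: a regular hexagonal face with unit bonds and alternating charges is repulsion-free and satisfies $\mathcal{E}=-b$ exactly; the cost of a defect is a deficit in the achievable bond number $b$ relative to $\lfloor\beta(n)\rfloor$, not an excess of $\mathcal{E}$ over $-b$, and detecting it requires comparing the bulk recursively against the optimal energy of its own cardinality.

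The paper closes precisely this step by induction on $n$: if a boundary atom is non-equilibrated, Lemma \ref{LemmaBoundaryEnergy} gives the strict inequality $\mathcal{R}^{\mathrm{bnd}}(C_n)>-2d+4$ while $\mathcal{R}^{\mathrm{bulk}}(C_n)\ge-\lfloor\beta(n-d)\rfloor$ by Theorem \ref{TheoremGroundstates}; if a bulk atom is non-equilibrated, the induction hypothesis applied to $C_n^{\mathrm{bulk}}$ yields the strict bulk inequality instead, and Lemma \ref{lemma : non-eq-atoms} supplies $\eta\ge 2$ (resp.\ $\eta\ge 4$). Feeding the strict inequality and the excess $\eta$ through Lemma \ref{LemmaBoundaryestimate} into Lemma \ref{lemma: Wurzel} (with $j=5$, resp.\ $j=6$) forces $\mathcal{E}(C_n)>-\lfloor\beta(n)\rfloor$, a contradiction. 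The outcome is the stronger conclusion $\mathcal{A}(X_n)=\emptyset$ for the configuration stripped of its (at most one, by Lemma \ref{lemma:flags}) flag; the exceptional atom of the statement is that flag atom, which is itself $1$-bonded and hence equilibrated, not a non-equilibrated atom. Note finally that a bound $\#\mathcal{A}(X_n)\le 1$ would not by itself imply "at most one off-lattice atom": a single non-equilibrated cut vertex could tilt an entire branch of the configuration, which is why the absence of bridges and the pairing of non-equilibrated atoms within polygons (Lemma \ref{RemarkPolygon}(a)) are essential and your spanning-tree propagation, as described, does not substitute for them.
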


  We point out that, if a ground state contains a flag,  it is possibly not a subset of the square lattice,  see Fig.~\ref{Fig:flexible}. Our next result addresses the net charge \eqref{DefinitionNetCharge} of ground-state configurations.

\begin{theorem}[Net charge of ground states]\label{theorem : charge} 
The following properties for the net charge hold:
\begin{itemize}
 \item[(a)]  (Net charge control) \EEE There  exists \EEE a universal constant $c>0$ such that for all $n \in \mathbb{N}$ and all ground states $C_n$ the net \EEE charge satisfies $|\mathcal{Q}(C_n)| \le cn^{1/4}$.

\item[(b)]  (Sharpness of the $n^{1/4}$-scaling) There exists  an increasing  sequence of integers $(n_j)_j$ and ground states $(C_{n_j})_j$ such that 
$$\liminf_{j \to +\infty} n_j^{-1/4}|\mathcal{Q}(C_{n_j})|>0. $$
\end{itemize}
\end{theorem}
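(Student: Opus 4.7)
The plan is to establish (a) and (b) separately, using the crystallization and Wulff-shape results (Theorems \ref{TheoremGroundstates} and \ref{theorem : GeometryGroundstatewithout}) together with polyomino isoperimetric structure for (a), and an explicit construction excising a triangular corner from a square for (b).

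For (a), I would first reduce to the case $X_n\subset \mathbb{Z}^2$ with alternating charges $q_{(x,y)}=(-1)^{x+y}$: any single off-lattice flag atom allowed by Theorem \ref{theorem : GeometryGroundstatewithout}(a) changes $\mathcal{Q}$ by at most $\pm 1$. Next, Theorem \ref{theorem : GeometryGroundstatewithout}(b) together with the Hausdorff-distance form of the $n^{3/4}$-law from \cite{Mainini-Piovano} places $X_n$ inside a bounding rectangle $R$ with side lengths in $\sqrt{n}+\mathrm{O}(n^{1/4})$, and the perimeter equality $4n-2\lfloor\beta(n)\rfloor = 4\sqrt{n}+\mathrm{O}(1)$ forces $H:=R\setminus X_n$ to be a union of at most four corner-staircase defects $H_i$ of linear sizes $j_i \le c\,n^{1/4}$. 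Writing $\mathcal{Q}(X_n) = \mathcal{Q}(R)-\sum_i \mathcal{Q}(H_i)$, the factorization $|\mathcal{Q}(R)| = |\sum_x(-1)^x|\cdot|\sum_y(-1)^y| \le 1$ and an antidiagonal summation giving $|\mathcal{Q}(H_i)| \le j_i+1$ for each staircase yield the bound $|\mathcal{Q}(X_n)| \le 1 + 4(cn^{1/4}+1) = \mathrm{O}(n^{1/4})$.

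For (b), I would fix $m\in \mathbb{N}$, set $k:=4m^2$, $j:=2m$, $n_m := k^2-j(j+1)/2 = 16m^4-2m^2-m$, and take $X_{n_m}$ to be the $k\times k$ square with the triangular corner $T := \{(x,y)\in\mathbb{Z}^2 : x,y\ge 0,\ x+y\le j-1\}$ removed, charged by $q_{(x,y)}=(-1)^{x+y}$. Direct counting shows that removing $T$ costs exactly $j(j+1)$ bonds, so $X_{n_m}$ has $b = 2k(k-1)-j(j+1) = 32m^4-12m^2-2m$ bonds; a Taylor expansion of $\beta(n_m)=2n_m-2\sqrt{n_m}$ around $n=k^2$ shows $\lfloor\beta(n_m)\rfloor = b$, so by Theorem \ref{TheoremGroundstates}, $X_{n_m}$ is a ground state. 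Because $\mathcal{Q}(\{0,\dots,k-1\}^2) = 0$ for $k$ even and the antidiagonal $\{x+y=\ell\}$ of $T$ contributes $(-1)^\ell(\ell+1)$, pairing adjacent antidiagonals yields $\mathcal{Q}(T) = \sum_{\ell=0}^{2m-1}(-1)^\ell(\ell+1) = -m$ and hence $|\mathcal{Q}(X_{n_m})|=m$. Since $n_m^{1/4} = 2m+o(m)$, we conclude $\liminf_m n_m^{-1/4}|\mathcal{Q}(X_{n_m})| = 1/2 > 0$.

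The hardest step is in (a), namely the passage from the $\mathrm{O}(n^{3/4})$ symmetric-difference bound of Theorem \ref{theorem : GeometryGroundstatewithout}(b) to the sharp $\mathrm{O}(n^{1/4})$ net-charge bound. A naive use of $\#H \le c\,n^{3/4}$ would give only $|\mathcal{Q}| = \mathrm{O}(n^{3/4})$; the key gain comes from the fact that a corner staircase of linear size $j$ has net charge only $\mathrm{O}(j)$ (not $\mathrm{O}(j^2)$) thanks to alternating-parity antidiagonal cancellations, which requires an isoperimetric classification of the bond-graph boundary that localizes all defects into corner staircases of $R$.
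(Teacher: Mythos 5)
Your part (b) is correct: the explicit bond count $b=2k(k-1)-j(j+1)$, the expansion $\beta(n_m)=32m^4-12m^2-2m+\tfrac12+{\rm O}(m^{-1})$ showing $b=\lfloor\beta(n_m)\rfloor$, and the antidiagonal computation $\mathcal{Q}(T)=-m$ all check out, so Theorem \ref{TheoremGroundstates} certifies these configurations as ground states with $|\mathcal{Q}|=m\sim \tfrac12 n_m^{1/4}$. This is a genuinely different construction from the paper's (Subsection \ref{subsection:sqare rectangle} attaches a trapezoid of $2l+1$ columns to a square rather than excising a triangle), but it is the same idea and equally valid.

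Part (a), however, has a gap precisely at the step you identify as the hardest one. The claim that the perimeter identity forces the four corner staircases $H_i$ of $R\setminus X_n$ to have ``linear sizes $j_i\le c n^{1/4}$'' is not justified and, for the natural readings of ``linear size,'' is false: a staircase can be a long thin strip (or a hook) running along an entire side of $R$, with horizontal and vertical extents both of order $\sqrt{n}$, while still being compatible with $\mathrm{Per}(X_n)=\mathrm{Per}(R)=4\sqrt{n}+{\rm O}(1)$ (removing a monotone staircase from a rectangle does not change the edge perimeter at all, so the perimeter equality gives no direct control on the extent or on the number of steps of each $H_i$). Correspondingly, the bound $|\mathcal{Q}(H_i)|\le j_i+1$ via antidiagonals is only as good as the quantity $j_i$ it is measured against; with $j_i$ the number of rows, columns, or antidiagonals of $H_i$ one only gets ${\rm O}(\sqrt n)$. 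The argument can be repaired, but it needs more than what you wrote: either (i) derive from the perimeter equality the \emph{area} bound $\#(R\setminus X_n)=ab-n\le \sqrt n+1$ (using $2(a+b)=\mathrm{Per}(X_n)\le 4\sqrt n+2$ and $ab\le (a+b)^2/4$) and then prove the nontrivial estimate $|\mathcal{Q}(H)|\le\sqrt{\#H}+1$ for a Young-diagram-shaped $H$ (the charge equals $\sum_{i:\lambda_i\ {\rm odd}}(-1)^i$, and consecutive sign-contributing runs force $\lambda$ to drop by $2$, so $\#H\ge Q(Q-1)$); or (ii) follow the paper's route, which avoids the issue entirely by using the two-sided sandwich $S_{k_1^2}\subset X_n\subset S_{k_2^2}$ with $k_2-k_1\le cn^{1/4}$ from Theorem \ref{theorem:deviation} (not just the symmetric-difference form of Theorem \ref{theorem : GeometryGroundstatewithout}(b)) together with convexity by rows and columns (Proposition \ref{prop:convexity}), so that $X_n\setminus S_{k_1^2}$ decomposes into at most $4(k_2-k_1)$ row or column segments, each of net charge in $\{-1,0,1\}$. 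As written, your chain of inequalities does not close.
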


 The sharp scaling $n^{1/4}$ for the upper bound of the net charge has also been identified in a related model where ground states are subsets of the hexagonal lattice, see \cite[Theorem 2.5]{FriedrichKreutzHexagonal}. The three theorems are proved in Section \ref{sec: ground states}. Explicit constructions for the upper bound of the  ground-state  energy and Theorem \ref{theorem : charge}(b)   are given in Subsection \ref{subsection:special subsets} and Subsection \ref{subsection:sqare rectangle}, respectively.


\noindent \textbf{Prescribed net charge:} We now change the perspective and suppose that the net charge $q_{\rm net}$ is prescribed.  We first characterize the energy of $q_{\rm net}$-optimal configurations $\mathcal{E}^n_{\rm min}(q_{\rm net})$, see \eqref{def : mnq}.  To this end, we introduce the \emph{saturation net charge} 
\begin{align}\label{eq: qsat-def}
q^n_{\rm sat} = \min\left\{q_\mathrm{net}: \,  q_\mathrm{net}\in \mathcal{Q}_\mathrm{net}(n)\cap [0,n], \  \mathcal{E}^n_\mathrm{min}(q_\mathrm{net})=-2n+2q_\mathrm{net} \right\}. 
\end{align}
The definition corresponds to the smallest net charge for which all atoms of the minority phase are $4$-bonded.  Configurations with  the latter  property will be called \emph{saturated} in the following.    The saturation net charge can be characterized as follows.
\begin{proposition}[Characterization of $q_\mathrm{sat}^n$]\label{theorem : qsat} 
There holds $q_\mathrm{sat}^n =\phi(n)$, where  $\phi(0) = 0$, $\phi(1)=1$, and 
\begin{align}\label{def: phi}
\phi(n) := \begin{cases} 
2\left\lfloor-\frac{1}{2}+\frac{1}{2}\sqrt{2n-5}\right\rfloor+3 &\text{if } n \text{ odd}, n \geq 3, \\
2\left\lfloor\frac{1}{2}\sqrt{2n-4}\right\rfloor+2 &\text{if } n \text{ even},n \geq 2.
\end{cases}
\end{align}
\end{proposition}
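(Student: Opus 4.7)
The proposition asserts $q_\mathrm{sat}^n = \phi(n)$, requiring matching upper and lower bounds. The upper bound $q_\mathrm{sat}^n \leq \phi(n)$ is established in Section \ref{section:upper bound} via an explicit diamond-shaped saturated construction, so I concentrate on the lower bound $q_\mathrm{sat}^n \geq \phi(n)$.

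Let $C_n$ achieve $\mathcal{E}(C_n) = -2n + 2q_\mathrm{net}$ for some $q_\mathrm{net} \in \mathcal{Q}_\mathrm{net}(n) \cap [0, n]$, and set $m = (n - q_\mathrm{net})/2$. Combining the energy identity with the basic estimates $V_\mathrm{a} \geq -1$, $V_\mathrm{r} \geq 0$, and the fact that each atom has at most four neighbors (cf.\ Remark \ref{rem: repulsionsfree}) forces $C_n$ to be repulsion-free, all bonds to have unit length, and every minority atom to be exactly $4$-bonded; in particular the bond count is $b = 4m$. Around each such $4$-bonded minority atom $x$, its four majority neighbors lie at distance exactly $1$ from $x$ and at pairwise distance $\geq \sqrt{2}$. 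Since two unit vectors at mutual Euclidean distance $\geq \sqrt{2}$ subtend an angle $\geq \pi/2$ at $x$, and since four such angles around $x$ sum to $2\pi$, the four neighbors sit at exact right angles, reproducing locally a $\mathbb{Z}^2$-cross around $x$.

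Propagating this rigidity through shared majority neighbors (a standard crystallization argument invoked in Section \ref{sec: ground states}), each connected component of $C_n$ is, after a rigid motion, a subset of $\mathbb{Z}^2$ in which all minority atoms occupy a single sublattice. Writing the minority positions as $A \subset \mathbb{Z}^2_\mathrm{even} := \{(i,j) \in \mathbb{Z}^2 : i+j \text{ even}\}$ and setting $N = \{\pm e_1, \pm e_2\}$, the $4$-bondedness condition reads $A + N \subset X_n^+$. Since disconnected configurations give rise to a strictly tighter version of the subsequent counting bound (as disjoint components accumulate boundary independently), I may assume that $C_n$ is connected, yielding the key constraint
\begin{equation*}
  n \;\geq\; m + |A + N|, \qquad A \subset \mathbb{Z}^2_\mathrm{even}, \;\; |A| = m.
\end{equation*}

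The lower bound on $q_\mathrm{sat}^n$ now reduces to a discrete isoperimetric inequality on the sublattice: among all $A \subset \mathbb{Z}^2_\mathrm{even}$ of cardinality $m$, the quantity $|A + N|$ is minimized by the $\ell^1$-diamonds $A_k := \{(i,j) \in \mathbb{Z}^2_\mathrm{even} : |i|+|j| \leq 2k\}$ and by their one-layer partial extensions, with extremal values $|A_k| = (2k+1)^2$ and $|A_k + N| = 4(k+1)^2$. Inserting these values (together with the layer-by-layer corrections handling intermediate $m$) into the counting bound and isolating $q_\mathrm{net} = n - 2m$ reproduces precisely the expressions in \eqref{def: phi}, the two parity cases corresponding to the truncation of $\sqrt{2n-5}$ or $\sqrt{2n-4}$ by the floor function. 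The main obstacle is the discrete isoperimetric inequality itself: while the correct asymptotic scaling $|A + N| - |A| \sim 2\sqrt{|A|}$ (equivalently, $q \sim \sqrt{2n}$) follows from a continuum $\ell^1$-perimeter comparison, matching the exact integer formulas of \eqref{def: phi} calls for a careful layer-by-layer compression argument tracking the sublattice boundary structure.
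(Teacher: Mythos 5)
Your reduction of the lower bound to a vertex-isoperimetric problem on the sublattice $\mathbb{Z}^2_{\rm even}$ is an attractive idea, but it rests on a step that is false: the claim that the local $\mathbb{Z}^2$-cross rigidity around each $4$-bonded minority atom propagates so that every connected component of a saturated optimal configuration is, up to a rigid motion, a subset of $\mathbb{Z}^2$. The rigidity propagates only through minority atoms and through $3$- and $4$-bonded majority atoms (this is the content of Lemma \ref{lemma : Charge Energybound}(a)(iv),(v)); a majority atom that is only $2$-bonded can act as a hinge with a bond angle anywhere in roughly $[5\pi/6,7\pi/6]$ (cf.\ \eqref{eq: grosser winkel}), and $1$-bonded or $0$-bonded majority atoms are entirely unconstrained. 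The paper's own Fig.~\ref{Fig: qnet bigger} and Fig.~\ref{fig : subsetdiamond} exhibit saturated optimal configurations --- the latter a genuine $q_{\rm sat}^{65}$-optimal one achieving $\mathcal{Q}=13=\phi(65)$ --- that are \emph{not} subsets of the square lattice, and Remark \ref{remark : fragmentation} and Theorem \ref{theorem : GeometryGroundstate}(a) make clear that crystallization holds only up to a bounded (and not small) exceptional set; establishing even that much occupies Lemmas \ref{lemma: removi}--\ref{lemma : smallness of A}. Since the extremal value $\phi(n)$ is attained by non-lattice configurations, your counting bound $n\ge m+|A+N|$ simply does not apply to the configurations you must rule out, and this cannot be patched by a one-line propagation argument.

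Two further points would need real work even if crystallization were granted. First, the dismissal of disconnected configurations as "strictly tighter" is exactly the superadditivity $\phi(m)+\phi(n-m)\ge\phi(n)+2$ of Lemma \ref{lemma : properties phi}(vi), which requires a delicate case analysis and moreover fails for components of one or two atoms --- which genuinely occur in optimal configurations (the $0$-bonded atoms in Fig.~\ref{fig : construction qsat}). Second, the exact discrete isoperimetric inequality is not a technicality you can defer: the value of $q_{\rm sat}^n$ is sensitive to an error of a single unit in $\min\{|A+N|:|A|=m\}$ (e.g.\ for $n=65$ one must know that no $27$-point set $A$ has $|A+N|\le 38$, while a $26$-point set achieves $|A+N|=38$), so the continuum $\ell^1$-perimeter heuristic does not suffice. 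For comparison, the paper avoids the global counting problem altogether: it proves the lower bound by strong induction, peeling off the maximal boundary polygon, showing that this boundary layer carries net charge at least $4$ (Lemma \ref{lemma : charge interior}), and closing the recursion with the arithmetic of $\phi$ (Lemmas \ref{lemma : properties phi} and \ref{lemma : computation of qsatn}); the non-lattice and disconnected pieces are controlled by charge superadditivity rather than by geometry.
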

 We refer to  (\ref{eq: different repr}) for an equivalent characterization. The representation \eqref{def: phi} shows the  scaling $q_{\rm sat}^n \sim n^{1/2}$, which corresponds to the scaling of the number of boundary atoms of optimal configurations.

By $h^+ := \max(h, 0)$ we denote the positive part of a function $h$.  The energy of $q_{\rm net}$-optimal configurations can be characterized as follows.

\begin{theorem}[Energy of $q_{\rm net}$-optimal configurations]\label{theorem: min-en2}
 For all $n \in \mathbb{N}$ and all $q_{\rm net} \in Q_{\rm net}(n)$ there holds
\begin{align}\label{eq: charge energy}
-2n  + 2|q_{\rm net}|  \le  \mathcal{E}^n_{\rm min}(q_{\rm net})  \le    -2n  + 2|q_{\rm net}|   +  4 (q^n_{\rm sat} - |q_{\rm net}|)^+.
 \end{align}

 \end{theorem}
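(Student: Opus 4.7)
The plan is to prove the two inequalities in (\ref{eq: charge energy}) separately, assuming $q_{\rm net} \geq 0$ without loss of generality (the case $q_{\rm net}<0$ follows from the symmetry of the energy under swapping the two charge values). Write $n^{-} := (n - q_{\rm net})/2$ for the size of the minority phase $X_n^{-}$.

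For the lower bound, the starting point is that $V_{\rm r} \geq 0$ on $[1,+\infty)$ and $V_{\rm a}(r) \in [-1,0]$ on $[1,+\infty)$ with $V_{\rm a}(r) = 0$ for $r > r_0$, giving
\begin{equation*}
\mathcal{E}(C_n) \geq -b^{\pm},
\end{equation*}
where $b^{\pm}$ denotes the number of opposite-charge bonds in $C_n$. Since every such bond has a unique endpoint in $X_n^{-}$, we have $b^{\pm} = \sum_{x_i \in X_n^{-}} \#(\mathcal{N}(x_i) \cap X_n^{+})$. The key claim is that $b^{\pm} \leq 4 n^{-}$: an atom with five opposite-charge neighbors inside its disk of radius $r_0$ must contain a pair of these neighbors (necessarily same-charge and at pairwise distance $\geq 1$) at mutual distance at most $2 r_0 \sin(\pi/5)$, which by assumption [vi] produces a repulsive contribution strictly greater than $6$. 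Since each such close same-charge pair can serve as common neighbors of only a bounded number of minority atoms (the intersection of two disks of radius $r_0$ whose centers lie within $2r_0 \sin(\pi/5)$ contains only $O(1)$ atoms at pairwise distance $\geq 1$), a quantitative accounting shows that the aggregate repulsive energy strictly outweighs the excess $b^{\pm} - 4n^{-}$, yielding $\mathcal{E}(C_n) \geq -4 n^{-} = -2n + 2 q_{\rm net}$.

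For the upper bound, I start from a saturated configuration $C^*$ — constructed as a subset of the square lattice in Section \ref{section:upper bound} — with $n$ atoms, net charge $q_{\rm sat}^n$, energy $-2n + 2q_{\rm sat}^n$, in which every minority atom is $4$-bonded at unit distance to majority neighbors and every atom has degree at most four. If $q_{\rm net} \geq q_{\rm sat}^n$, I remove $k := (q_{\rm net} - q_{\rm sat}^n)/2$ minority atoms from $C^*$ (each deletion breaks exactly four unit-length attractive bonds, raising the energy by $+4$) and add $k$ majority atoms far from the remaining configuration (contributing $0$). The resulting configuration has $n$ atoms, net charge $q_{\rm net}$, and energy $-2n + 2q_{\rm sat}^n + 4k = -2n + 2 q_{\rm net}$, matching the right-hand side of (\ref{eq: charge energy}) because $(q_{\rm sat}^n - q_{\rm net})^{+} = 0$. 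If $q_{\rm net} < q_{\rm sat}^n$, I instead remove $k := (q_{\rm sat}^n - q_{\rm net})/2$ majority atoms from $C^*$ (each of degree at most four, so at most four unit bonds are broken per deletion) and add $k$ isolated minority atoms far away. The resulting configuration has $n$ atoms, net charge $q_{\rm net}$, and energy at most $-2n + 2 q_{\rm sat}^n + 4k = -2n + 2 q_{\rm net} + 4(q_{\rm sat}^n - q_{\rm net})$, matching (\ref{eq: charge energy}).

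The principal obstacle lies in the uniform bond bound $b^{\pm} \leq 4 n^{-}$ of the lower bound: planarity of the bond graph combined with $r_0 < (2\sin(\pi/7))^{-1}$ only guarantees maximal degree six for an arbitrary finite-energy configuration, so pushing this down to the effective four-bonds-per-minority-atom genuinely requires the quantitative assumption [vi] and the repulsion-compensation argument sketched above; this parallels, in a slightly different guise, the boundary energy estimates mentioned in Remark \ref{rem: boundary energy}. Once this is in place, the remaining combinatorial bond count and the explicit atom-removal/insertion constructions are routine, with integer parity of the counts guaranteed by $q_{\rm net} \in \mathcal{Q}_{\rm net}(n)$.
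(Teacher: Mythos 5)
Your upper-bound argument is essentially the paper's: the paper performs the same charge swaps (replace a $4$-bonded minority atom by a far-away majority atom, or vice versa) one at a time by induction, while you do all $k$ swaps in one batch; since atoms of equal charge are never bonded in a saturated optimal configuration, the batched version is equivalent. One minor caveat: at this point in the logical development you should take $C^*$ to be a $q_{\rm sat}^n$-optimal configuration furnished directly by the definition \eqref{eq: qsat-def} (its structural properties — minority atoms $4$-bonded with unit bonds, degree at most four — then follow from Lemma \ref{lemma : Charge Energybound}(a)(i) and Lemma \ref{LemmaNeighborhood}), rather than the explicit square-lattice construction of Section \ref{section:upper bound}, because that construction has net charge $\phi(n)$ and the identification $\phi(n)=q_{\rm sat}^n$ is proved only later, using the present theorem.

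The genuine gap is in your lower bound. You try to prove $b^{\pm}\le 4n^-$ (up to repulsive compensation) for \emph{arbitrary} finite-energy configurations via a global ``excess bonds versus aggregate repulsion'' accounting, but the decisive step is only asserted. Concretely: each minority atom with $k\ge 5$ majority neighbors contributes up to $k-4\le 2$ to the excess and forces at least one same-charge pair at distance at most $2r_0\sin(\pi/5)$, hence a repulsive contribution exceeding $6$ by [vi]; but the same close pair may be charged by several over-bonded minority atoms, so to close the argument you need the multiplicity constant to be at most $3$, and your ``$O(1)$'' bound on the number of points at pairwise distance $\ge 1$ in the lens $B(y,r_0)\cap B(z,r_0)$ does not deliver this — that lens has diameter larger than $2$ and can a priori host four such points. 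The constant $6$ in [vi] is calibrated for a \emph{local relocation} argument (losing at most $6$ attractive bonds when moving one atom), not for your global count. The repair is immediate and is what the paper does: since $\mathcal{E}^n_{\rm min}(q_{\rm net})$ is a minimum, it suffices to prove \eqref{ineq : charge} for $q_{\rm net}$-optimal configurations, and for these Lemma \ref{LemmaNeighborhood} (proved by exactly the relocation argument you gesture at, applied to a single atom) yields alternating charge distribution and $\#\mathcal{N}(x_i)\le 4$ outright, whence $\mathcal{E}(C_n)\ge -b\ge -4\min\{\#X_n^+,\#X_n^-\}=-2n+2|q_{\rm net}|$. As written, your lower bound does not go through.
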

 We point out that the upper bound in \eqref{eq: charge energy} is consistent with Theorem \ref{TheoremGroundstates}, i.e., with $\mathcal{E}^n_{\rm min}(q_{\rm net}) \geq -\lfloor\beta(n)\rfloor$ for all $q_{\rm net}^n$. To see this, it suffices to note that   $2q_{\rm sat}^n \ge -\lfloor -2 \sqrt{n}\rfloor$.  The result states that for $|q_{\rm net}| \ge q^n_{\rm sat}$, the minimal energy is exactly   $-2n + 2|q_{\rm net}|$.  This corresponds to the case that  optimal configurations are saturated, i.e.,  each atom of the minority phase is $4$-bonded. In this sense, $q^n_{\rm sat}$ can be understood as a critical net charge.

\begin{figure}[htp]
 \begin{tikzpicture}[scale=.5]
 
     \draw[fill=black](9.2,2.1) circle(.07);
     
          \draw[fill=black](8.4,-1.8) circle(.07);

 \begin{scope}[rotate=5]
   \draw[ultra thin](7,0)--++(1,0);
   \draw[ultra thin](0,1)--++(-1,0);
 \draw[ultra thin](0,0) grid(7,1);
 \foreach \j in {0,1,2,3}{
 \draw[fill=black](2*\j,0) circle(.07);
  \draw[fill=white](2*\j,1) circle(.07);
   \draw[fill=white](2*\j+1,0) circle(.07);
    \draw[fill=black](2*\j+1,1) circle(.07);
     \draw[fill=black](2*\j+1,-1) circle(.07);
          \draw[fill=black](2*\j,2) circle(.07);
               \draw[ultra thin](2*\j+1,-1)--++(0,1);
          \draw[ultra thin](2*\j,2)--++(0,-1);
 }
  \draw[fill=black](7,0)++(1,0)circle(.07);
   \draw[fill=black](0,1)++(-1,0)circle(.07);
 \end{scope}
 
 \begin{scope}[shift={(12,0.025)}]
 \draw[ultra thin](-1,-1)--++(0,2);
  \draw[ultra thin](-2,0)--++(2,0);
    \draw[fill=white](-1,0) circle(.07);
 \foreach \j in {0,1,2,3}{
  \draw[fill=black](-1,0)++(90*\j:1) circle(.07);
 }
  \end{scope}
 
  \begin{scope}[shift={(13,.2)},rotate=10]
 \draw[ultra thin](0,-1)--++(0,2);
  \draw[ultra thin](-1,0)--++(2,0);
    \draw[fill=white](0,0) circle(.07);
 \foreach \j in {0,1,2,3}{
  \draw[fill=black](90*\j:1) circle(.07);
 }
 
 \end{scope}
 
   \begin{scope}[shift={(14.92,.7)},rotate=20]
 \draw[ultra thin](0,-1)--++(0,2);
  \draw[ultra thin](-1,0)--++(2,0);
    \draw[fill=white](0,0) circle(.07);
 \foreach \j in {0,1,2,3}{
  \draw[fill=black](90*\j:1) circle(.07);
 }
 
 \end{scope}
 
   \begin{scope}[shift={(23,1)},rotate=30]
 \draw[ultra thin](0,-1)--++(0,2);
  \draw[ultra thin](-1,0)--++(2,0);
    \draw[fill=white](0,0) circle(.07);
 \foreach \j in {0,1,2,3}{
  \draw[fill=black](90*\j:1) circle(.07);
 }
 
 \end{scope}
 
 \begin{scope}[shift={(18,-1.5)}]
\foreach \j in {0,...,3}{
\draw[ultra thin]({sqrt(2)},0)++(15:1)--++(15+90*\j:1);
\draw[ultra thin]({sqrt(2)},0)++(165:1)--++(75+90*\j:1);
\draw[ultra thin](0,0)++(60:{sqrt(2)})++(45:1)--++(45+90*\j:1);
}

\draw[fill=white]({sqrt(2)},0)++(15:1) circle(.07);
\draw[fill=black]({sqrt(2)},0)++(15:2) circle(.07);
\draw[fill=black]({sqrt(2)},0)++(-30:{sqrt(2)}) circle(.07);

\draw[fill=white]({sqrt(2)},0)++(165:1) circle(.07);
\draw[fill=black]({sqrt(2)},0)++(165:2) circle(.07);
\draw[fill=black]({sqrt(2)},0)++(30:{-sqrt(2)}) circle(.07);

\draw[fill=white](0,0)++(60:{sqrt(2)})++(45:1) circle(.07);
\draw[fill=black](0,0)++(60:{sqrt(2)})++(45:2) circle(.07);

\draw[fill=black]({sqrt(2)},0)++(60:{sqrt(2)})++(135:2) circle(.07);

\draw[fill=black]({sqrt(2)},0)++(60:{sqrt(2)}) circle(.07);

\draw[fill=black](0,0)++(60:{sqrt(2)}) circle(.07);
\draw[fill=black](0:{sqrt(2)}) circle(.07);

\end{scope}

 \end{tikzpicture}
 \caption{\emph{One} $q_\mathrm{net}$-optimal configuration for $q_\mathrm{net} > q_\mathrm{sat}^n$,  where $n=58$, $q_\mathrm{sat}^n = 12$, $q_\mathrm{net} = 43 - 15 = 28$. }
 \label{Fig: qnet bigger}
 \end{figure}
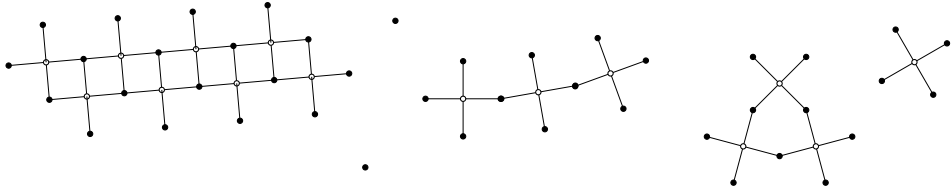

It turns out that $q^n_{\rm sat}$ is not only a critical net charge in terms   of \EEE the minimal energy $\mathcal{E}^n_{\rm min}(q_{\rm net})$, but also from a geometrical viewpoint: $q^n_{\rm sat}$ is critical in the sense that beyond $q^n_{\rm sat}$ no crystallization can be expected, cf.~Remark~\ref{remark : fragmentation}.  Note, however, that   $q_\mathrm{sat}^n$-optimal configurations crystallize,  cf.~Theorem  \ref{theorem : GeometryGroundstate}.

\begin{remark}[Fragmentation for $q_{\rm net} >q_\mathrm{sat}^n$] \label{remark : fragmentation} For $q_{\rm net} > q_\mathrm{sat}^n$, $q_{\rm net}$-optimal configurations may be completely degenerate, e.g., may consist of many connected components,  see Fig.~\ref{Fig: qnet bigger}.  Their characterizing property is that atoms of the minority phase are $4$-bonded with bond angles $\frac{\pi}{2}$, cf.\ Lemma \ref{lemma : Charge Energybound}(a)(i),(iv).  
\end{remark}

%
  Recall the definition of diamonds in  (\ref{def :Dn}).


 \begin{theorem}[Crystallization and Wulff-shape for $q^n_{\rm sat}$-optimal configurations]\label{theorem : GeometryGroundstate}  Let $n \in \mathbb{N}$ and let $C_n$ be a $q^n_{\rm sat}$-optimal configuration. 
 \begin{itemize}
 \item[(a)]  (Crystallization) There exists a universal constant  $n_0 \in \mathbb{N}$  independent of $n$ such that $C_n$ is a subset of the square lattice except for at most $n_0$ atoms. 
\item[(b)] (Wulff-shape) For a universal constant $c>0$ independent of $n$, there holds
$$\min_{a \in \mathbb{R}^2}    \# \big(X_n \triangle (a+ D_n) \big) \le cn^{3/4}.$$ 

\end{itemize}

\end{theorem}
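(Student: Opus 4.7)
The plan is to leverage the rigid local structure forced by saturation, decompose the configuration into bond-connected components, control their number via the sharp form of $\phi$, and then translate the geometry problem into an anisotropic isoperimetric one.

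\textbf{Local structure.} The energy identity $\mathcal{E}(C_n) = -2n + 2q_{\rm sat}^n$ combined with Remark \ref{rem: repulsionsfree} forces $C_n$ to be repulsion-free with all bonds of unit length. A bond count together with Lemma \ref{lemma : Charge Energybound} shows that every $x \in X_n^-$ is $4$-bonded with bond angles $\pi/2$ and that the bond graph is bipartite between $X_n^-$ and $X_n^+$. Consequently, around each $x \in X_n^-$ one finds four majorities at $x \pm e_1(x),\, x \pm e_2(x)$ for some orthonormal basis $\{e_1(x), e_2(x)\}$.

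\textbf{Part (a): crystallization.} I introduce the graph $G$ on $X_n^-$ whose edges connect pairs of minority atoms sharing a common majority neighbor. If $x, x'$ are $G$-adjacent through a majority $y$, writing $y - x = v$ and $y - x' = v'$, the repulsive constraint $|x - x'| \ge \sqrt{2}$ from assumption [vii] together with the $4$-bonded right-angled structure at $x'$ forces, by a short case distinction on the four possible pairs $(v, v')$, that $x' - x \in \{2v\} \cup \{v + w : w \perp v,\, |w|=1\}$ and that $\{e_1(x'), e_2(x')\} = \{e_1(x), e_2(x)\}$ up to sign. Iterating along bond paths shows that each bond-connected component $C^{(i)}$ of $C_n$ is contained in a single translated copy of $\mathbb{Z}^2$. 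To bound the number of exceptional atoms by a universal constant, I decompose $C_n$ into bond-connected components of sizes $n_i$ and net charges $q_i$. Saturation forces each estimate $\mathcal{E}(C^{(i)}) \ge -2n_i + 2|q_i|$ to be an equality, and $|q_i| \ge q_{\rm sat}^{n_i}$; together with $\sum |q_i| = q_{\rm sat}^n$ this yields $\sum q_{\rm sat}^{n_i} \le q_{\rm sat}^n$. The explicit formula (\ref{def: phi}) implies the strict sub-additivity $\phi(n_1) + \phi(n_2) > \phi(n_1+n_2)$ as soon as both $n_i$ exceed a universal threshold; hence all but one component has size bounded by an absolute constant and the number of such small components is bounded as well. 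The same argument controls flag majority atoms of the main component that are bonded to only a single minority.

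\textbf{Part (b): Wulff shape.} By (a) I may assume, up to $O(1)$ atoms, that $X_n \subset a + \mathbb{Z}^2$. I associate to the minority phase the polyomino
\begin{equation*}
F := \bigcup_{x \in X_n^-}\bigl(x + Q\bigr), \qquad Q := \operatorname{conv}\bigl\{(\pm \tfrac{1}{\sqrt{2}}, 0),\,(0, \pm \tfrac{1}{\sqrt{2}})\bigr\},
\end{equation*}
a union of unit-area squares rotated by $\pi/4$ whose vertex set coincides, up to boundary corrections, with $X_n^+$. The bond identity $\sum_{y \in X_n^+}(4 - \deg(y)) = 4(|X_n^+| - |X_n^-|) = 4 q_{\rm sat}^n$ translates into $P_\infty(F) = c\, q_{\rm sat}^n$ for a universal constant $c$, where $P_\infty$ is the Chebyshev ($\ell^\infty$) perimeter, whose Wulff shape is the $\ell^1$-ball, i.e.\ a diamond. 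By Proposition \ref{theorem : qsat}, $P_\infty(F)$ equals the isoperimetric minimum $\min\{P_\infty(F'):\,|F'| = |F|\}$ up to a bounded additive error, which minimum is attained on a diamond $D'$. The quantitative anisotropic isoperimetric inequality (cf.\ {\sc Cicalese} \cite{cicalese}) then yields
\begin{equation*}
\bigl|F \triangle (a' + D')\bigr| \le C\sqrt{|F|\,\bigl(P_\infty(F) - P_\infty(D')\bigr)} \le C\, n^{3/4},
\end{equation*}
and translating the resulting continuum estimate back into a discrete symmetric-difference bound on $X_n$ (using that $a + D_n$ agrees with $a' + D'$ up to $O(\sqrt{n})$ atoms near the boundary) proves (b).

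\textbf{Main obstacle.} The most delicate step is the combinatorial analysis in part (a): sharpening the bound on exceptional atoms from the natural $O(\sqrt{n})$ boundary estimate down to an absolute constant $n_0$ requires the precise form of \eqref{def: phi} and a detailed enumeration of the ``floppy'' boundary configurations permitted under saturation (as illustrated in Fig.~\ref{fig : subsetdiamond} and Fig.~\ref{fig : construction qsat}); generic sub-additivity arguments are not sharp enough to exclude satellite components in the narrow windows where $\phi$ fails to increase. Once (a) is in hand, part (b) is a relatively clean application of the quantitative isoperimetric inequality, modulo the technical adaptation to the anisotropic $\ell^\infty$ setting.
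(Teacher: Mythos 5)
Your part (b) follows essentially the same route as the paper: interpolate the minority atoms by rotated unit squares, identify the $\ell^\infty$-perimeter of the resulting set with the net charge, and apply the quantitative anisotropic isoperimetric inequality. Note, though, that the paper needs a preprocessing step (Lemma \ref{lemma:etacontrol}) to fill internal holes and remove bridging atoms before the identity of Lemma \ref{lemma: extra} holds; with holes or acyclic pieces the perimeter strictly exceeds $2\mathcal{Q}(C_n)-2$, so your identity $P_\infty(F)=c\,q_{\rm sat}^n$ is not automatic. This is a repairable technicality once (a) is in place.

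Part (a), however, contains a genuine gap. The claim that ``iterating along bond paths shows that each bond-connected component of $C_n$ is contained in a single translated copy of $\mathbb{Z}^2$'' is false, and Fig.~\ref{fig : subsetdiamond} is a counterexample: a five-atom cross is bond-connected to the tip of the diamond through a $2$-bonded majority atom and is rotated relative to the lattice of the main body. Your case distinction breaks exactly there. When the common majority neighbor $y$ of two minority atoms $x,x'$ is only $2$-bonded, the available constraints (repulsion-freeness, i.e.\ $|x-x'|\ge\sqrt{2}$, and the right angles at $x,x'$) only force the bond angle at $y$ to lie in $[\pi/2,3\pi/2]$, not to be a multiple of $\pi/2$; Lemma \ref{lemma : Charge Energybound}(a)(iv),(v) pins down the angles only at $3$- and $4$-bonded atoms. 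Hence the lattice frame does not propagate across $2$-bonded majority atoms, and a single bond-connected component can contain arbitrarily oriented sub-pieces. The actual content of part (a) --- and the bulk of Subsection \ref{subsection : geometry qsat} --- is to show that the set $\mathcal{A}(X_n)$ of non-equilibrated atoms has universally bounded cardinality (Lemma \ref{lemma : smallness of A}), which requires the boundary net-charge estimate of Lemma \ref{lemma : charge interior}, the Euler-formula count of Lemma \ref{lemma : card positively charged shell}, and an induction on the bulk. The decomposition into bond-connected components together with the subadditivity of $\phi$, which is all your argument uses, only controls the number of \emph{disconnected} satellites, not the tilted pieces attached to the main body. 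Your ``main obstacle'' paragraph correctly locates the difficulty, but the argument you sketch does not address it.
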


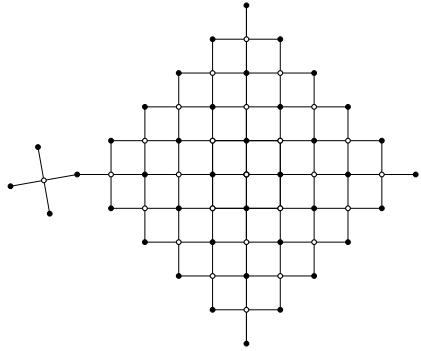
\begin{figure}[htp]
\centering
\begin{tikzpicture}[scale=0.45]
\begin{scope}[rotate=90]
\draw[ultra thin](0,5)--++(100:2);
\draw[ultra thin](0,5)++(100:1)++(190:1)--++(10:2);

\foreach \j in {0,...,2}{
\draw[ultra thin](\j,-2+\j)--++(0,4-2*\j);
\draw[ultra thin](-\j,-2+\j)--++(0,4-2*\j);
\draw[ultra thin](\j-2,\j)--++(4-2*\j,0);
\draw[ultra thin](\j-2,-\j)--++(4-2*\j,0);
\draw[fill=black](0,5)++(100:1)++(10+90*\j:1)circle(.07);
}

\draw[fill=white](0,5)++(100:1) circle(.07);

\foreach \j in {0,...,5}{
\draw[ultra thin](\j,-5+\j)--++(0,10-2*\j);
\draw[ultra thin](-\j,-5+\j)--++(0,10-2*\j);
\draw[ultra thin](\j-5,\j)--++(10-2*\j,0);
\draw[ultra thin](\j-5,-\j)--++(10-2*\j,0);
}

\foreach \j in {1,3,5}{
\pgfmathsetmacro\y{\j-1}
\foreach \l in {0,...,\y}{
\foreach \k in {0,...,3}{
\begin{scope}[rotate = \k*90]
\draw[fill=black](0,0)++(\j,0)++(-\l,\l) circle(.07);
\end{scope}
}
}}

\foreach \j in {0,2,4}{
\pgfmathsetmacro\y{\j-1}
\foreach \l in {0,...,\y}{
\foreach \k in {0,...,3}{
\begin{scope}[rotate = \k*90]
\draw[fill=white](0,0)++(\j,0)++(-\l,\l) circle(.07);
\end{scope}
}
}}
\end{scope}

\end{tikzpicture}
\caption{A $q_\mathrm{sat}^n$-optimal configuration for which four  atoms are not subset of the square lattice.  (Note that $q_\mathrm{sat}^n=13$ for $n=65$.)}
\label{fig : subsetdiamond}
\end{figure}
 
\emph{Complete} \EEE crystallization cannot be expected for certain values of $n$, as shown for example in Fig.~\ref{fig : subsetdiamond}.  Comparing this result to the geometry of ground states identified in Theorem \ref{theorem : GeometryGroundstatewithout}, we observe that the geometry of the Wulff-shape   and therefore the global geometry of  optimal \EEE configurations  is very sensitive to the prescribed net charge.  Proposition \ref{theorem : qsat}, Theorem \ref{theorem: min-en2}, and Theorem \ref{theorem : GeometryGroundstate} are proved in Section \ref{sec: prescribed}. The upper bound for $q_{\rm sat}^n$ is constructed explicitly in  Subsection \EEE \ref{subsection : upper bound qsat}.

\section{Constructions  of special subsets of the square lattice}\label{section:upper bound}
 This section is devoted to explicit constructions of sub-configurations of the square lattice  with alternating
charge distribution.   In Subsection \ref{subsection:special subsets} we exhibit candidates for the ground-state energy which will already give the upper bound in \eqref{Energygroundstates}.    In Subsection \ref{subsection:sqare rectangle} we construct ground-state configurations with  net charge  of order $n^{1/4}$, which will establish Theorem \ref{theorem : charge}(b). Finally,  in Subsection \ref{subsection : upper bound qsat} we define configurations  with net  charge $\phi(n)$, see \eqref{def: phi},    for which the atoms of the minority phase are $4$-bonded. This yields    the upper bound for $q_\mathrm{sat}^n$ in Proposition \ref{theorem : qsat}. We defer the lower bound on the ground-state energy and  the upper bound on the net charge to Section \ref{sec: ground states}. The lower bound for $q_\mathrm{sat}^n$ is addressed in  Subsection \ref{subsection : calculation qsat}.

 \subsection{Upper bound on the ground-state energy} \label{subsection:special subsets}
This subsection is devoted to  an  explicit construction of  configurations $C_n$  which \EEE maximize   the number of bonds and that are subsets of the square lattice. These configurations provide a reference energy value for every $n$, namely $\mathcal{E}(C_n)=-\lfloor\beta(n)\rfloor$,  see \eqref{def: beta}. 

By the special geometry of the square lattice, it is quite natural to give an interpretation of the two terms appearing in $\beta$. The leading order term of the energy is given by $-2n$ which corresponds to the bulk part of the energy. Its value is due to the fact that every interior atom  is bonded  to  four other  atoms of  opposite  charge. Furthermore, the repulsive term in the energy is zero for such configurations since the distance of two atoms with the same charge is bigger than or equal  to  $\sqrt{2}$. The additional lower order correction term is due to the fact that  atoms \EEE on the boundary of the ground-state  configuration \EEE do not have four  neighbors. 

The construction follows \cite{Mainini-Piovano} and is illustrated in Fig.~\ref{FigureDaisy}.   If  $n=k^2$, $k\in \mathbb{N}$,   we arrange the atoms on the lattice points of the square $S_{n}$ (cf.\ (\ref{def : Sn})). Then for $n=k^2+m+1$  with $0\leq m\leq 2k-1$  we proceed as follows:  for $ 0\leq m \leq k-1$ we recursively construct $X_{k^2+m+1}$ by adding the point with coordinates $(m,k)$ to   $X_{k^2+m}$. For $k\leq m \leq 2k-1$ we construct $X_{k^2+m+1}$ by adding the point with coordinates  $(k,m-k)$  to  $X_{k^2+m}$.  Since  the bond graphs only contain cycles of even length, we can choose corresponding charges such that the resulting configurations $C_n$ have alternating charge distribution. One can check that $\mathcal{Q}(C_n) \in \{-1,0,1\}$ for all $n \in \mathbb{N}$.

\begin{figure}[h]
\centering
\begin{tikzpicture}[scale=0.5]

\foreach \j in {1,...,4}{
\pgfmathtruncatemacro\result{\j+3}
\pgfmathtruncatemacro\resulti{\j-1}
\draw[dotted](\j-1,3)--++(0,1) node[anchor=south]{$\resulti$};
\draw[dotted](3,\j-1)--++(1,0) node[anchor=west]{$\result$};
}

 \foreach \j in {1,...,3}{
\draw[dotted](\j-1,4)--++(1,0);
\draw[dotted](4,\j-1)--++(0,1);
}

\draw[step=1cm,very thin] (0,0) grid (3,3);
\foreach \j in {0,...,1}{
\foreach \i in {0,...,1}{
\draw[fill=black](2*\i,2*\j) circle(.07);
\draw[fill=black](2*\i+1,2*\j+1) circle(.07);
\draw[fill=white](2*\i+1,2*\j) circle(.07);
\draw[fill=white](2*\i,2*\j+1) circle(.07);
}

\draw[fill=black](2*\j,4) circle(.07);
\draw[fill=black](4,2*\j) circle(.07);
\draw[fill=white](4,2*\j+1) circle(.07);
\draw[fill=white](2*\j+1,4) circle(.07);
}

\end{tikzpicture}
\caption{The construction for    $n=4^2 + m+1$ with $0 \le m \le 7$.   }
\label{FigureDaisy}
\end{figure}
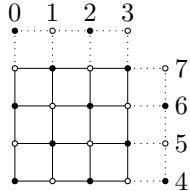

\begin{proposition}\label{PropositionDaisy} For all $n \in \mathbb{N}$,  let $C_n$ be the configuration introduced above. Then  there holds
\begin{align*}
\mathcal{E}(C_n) = -\lfloor \beta(n)\rfloor =  -\lfloor 2n - 2  \sqrt{n}  \rfloor.
\end{align*}
\end{proposition}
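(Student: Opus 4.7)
\medskip

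My plan is to reduce the energy computation to a bond-count and then verify the floor-function identity case by case.

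First, I would show that $\mathcal{E}(C_n)=-b$, where $b$ is the number of bonds. By construction $C_n \subset \mathbb{Z}^2$, so any two atoms have distance $\geq 1$, which is compatible with assumptions [i] and [iv]. Since the bond graph of $\mathbb{Z}^2$ is bipartite and $C_n$ is chosen with alternating charges, every bond connects atoms of opposite type at distance exactly $1$; each such pair contributes $-1$ to $\mathcal{E}_{\mathrm{a}}$ by [ii], while by [iii] non-bonded opposite-type atoms contribute nothing. On the other hand, any two atoms of the same charge are at distance $\geq \sqrt{2}$, so by [vii] they contribute nothing to $\mathcal{E}_{\mathrm{r}}$. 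Hence $\mathcal{E}(C_n)=-b$.

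Next, I would count $b$ by induction along the construction. Writing $n = k^2 + s$ with $k = \lfloor \sqrt{n}\rfloor$ and $0 \leq s \leq 2k$, the recursive definition tracks how many new bonds each added atom contributes: the initial $k\times k$ square $S_{k^2}$ contains $b = 2k(k-1)$ bonds. When $1 \leq s \leq k$, we are extending the top row: the first new atom $(0,k)$ adds a single vertical bond, while each subsequent atom $(j,k)$ for $1 \leq j \leq s-1$ adds one vertical and one horizontal bond. This gives
\[
b = 2k(k-1) + 1 + 2(s-1) = 2k^2 - 2k + 2s - 1.
\]
When $k+1 \leq s \leq 2k$, the top row is complete and we are extending the right column; an analogous count gives $b = 2k^2 + 2(s-k-1) = 2n - 2k - 2$.

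Finally, I would verify $b = \lfloor 2n - 2\sqrt{n}\rfloor$ by checking that $2\sqrt{n}$ lies in the appropriate unit interval in each case. For $s=0$ this is immediate. For $1 \leq s \leq k$ one has $k^2 < n \leq k^2 + k < (k+\tfrac{1}{2})^2$, so $2k < 2\sqrt{n} < 2k+1$, giving $\lfloor 2n - 2\sqrt{n}\rfloor = 2n - 2k - 1 = 2k^2 - 2k + 2s - 1$. For $k+1 \leq s \leq 2k$ one has $(k+\tfrac{1}{2})^2 < k^2 + k + 1 \leq n \leq (k+1)^2 - 1 < (k+1)^2$, hence $2k+1 < 2\sqrt{n} < 2k+2$, yielding $\lfloor 2n - 2\sqrt{n}\rfloor = 2n - 2k - 2 = b$. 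This closes the proof. I do not anticipate a genuine obstacle here; the only subtlety is keeping track of the two sub-regimes inside $[k^2, (k+1)^2]$ where the floor of $\beta$ jumps, and matching them to the two recursion regimes of the construction.
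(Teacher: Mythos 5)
Your proposal is correct and follows essentially the same route as the paper: reduce $\mathcal{E}(C_n)$ to $-b$ using the alternating charges, unit bonds, and the $\sqrt{2}$-separation of like charges, and then match the bond count of the construction to $\lfloor 2n-2\sqrt{n}\rfloor$. The only difference is that the paper delegates the bond-counting step to \cite[Proposition 4.3]{Mainini-Piovano}, whereas you carry out the induction along the two regimes of the construction explicitly (and your counts and the placement of $\sqrt{n}$ in the relevant unit intervals check out).
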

\begin{proof} The proof follows as in \cite[Proposition 4.3]{Mainini-Piovano},   additionally observing that  all atoms in the bond graph are bonded to particles of opposite charge only   and that for such configurations, up to neglecting the charge, our energy coincides with the one considered in \cite{Mainini-Piovano}.  
\end{proof}

\subsection{Squares with additional trapezoid} \label{subsection:sqare rectangle}

Recall that the above configurations have net charge in $\{-1,0,1\}$. Starting with a square and attaching a trapezoid in a suitable way, we can also construct configurations with energy $-\lfloor\beta(n)\rfloor$ having net charge of order $n^{1/4}$.
The construction is inspired by related ideas \cite{Davoli15, FriedrichKreutzHexagonal} used in connection to the derivation of the so-called  $n^{3/4}$-law.

We choose $k= 5l^2+7l+3$,   $l \in 2\mathbb{N}$,   and $n=k^2+1$. We construct a configuration $C_n$ as follows. We start from the square $S_{(k-l)^2}$.  Since $k-l$ is odd, the net charge of $S_{(k-l)^2}$  can be chosen as $+1$.  We add a new atom to the bond graph in such a way that it gets bonded to the second up-most among the rightmost atoms. Then we add descendingly  atoms along the right side of the square  in such a way that they are bonded to the atom in the previous step and one atom of the square. In this way, we add $(k-l-2)$ atoms. Since $k-l$ is odd, we have added $\lceil (k-l)/2 \rceil-1$ atoms of charge $+1$ and $\lfloor (k-l)/2 \rfloor-1$ atoms of charge $-1$. Next, we add a new column starting from the second up-most among the atoms of the previous column.  In this way, we add  $(k-l-4)$ atoms. We repeat this  procedure  until we have added $2l+1$ columns of atoms. This corresponds to 
\begin{align*}
\sum_{j=1}^{2l+1}(k-l-2j)=2kl-l^2+1
\end{align*}
added atoms. Note that in each column the number of added atoms of charge $+1$ exceeds the number of added atoms of charge $-1$ exactly by one,  and that the resulting configuration consists of  $n=k^2+1$  atoms.   The construction is sketched in Fig.~\ref{Fig:trapezoid}.
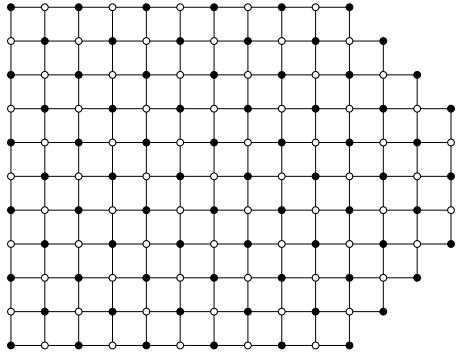
\begin{figure}[htp]
\centering
\begin{tikzpicture}[scale=0.45]
\draw[very thin](10,5)--++(3,0);
\draw[very thin](10,6)--++(3,0);
\draw[very thin](10,4)--++(3,0);
\foreach \j in {0,...,2}{
\draw[very thin](10,\j+1)--++(\j+1,0);
\draw[very thin](10,9-\j)--++(\j+1,0);
}

\foreach \j in {0,...,2}{
\draw[very thin](11+\j,\j+1)--++(0,8-2*\j);
}

\draw[fill=black](10,10) circle(.1);
\draw[step=1cm,very thin] (0,0) grid (10,10);
\foreach \j in {0,...,4}{
\draw[fill=black](10,2*\j) circle(.1);
\draw[fill=white](10,2*\j+1) circle(.1);

\draw[fill=black](2*\j,10) circle(.1);
\draw[fill=white](2*\j+1,10) circle(.1);
\foreach \i in {0,...,4}{
\draw[fill=black](2*\i,2*\j) circle(.1);
\draw[fill=black](2*\i+1,2*\j+1) circle(.1);
\draw[fill=white](2*\i+1,2*\j) circle(.1);
\draw[fill=white](2*\i,2*\j+1) circle(.1);
}
}

\foreach \j in {0,...,2}{
\pgfmathtruncatemacro\result{5-\j}
\pgfmathtruncatemacro\results{4-\j}
\foreach \i in {1,...,\result}{
\draw[fill=black](11+\j,2*\i+\j-1) circle(.1);
}
\foreach \i in {1,...,\results}{
\draw[fill=white](11+\j,2*\i+\j) circle(.1);
}
}

\end{tikzpicture}
\caption{Construction of a square with an additional trapezoid.  Three columns of atoms  have been added where the first and the last atom in the added columns  have charge $+1$. Thus, the net charge is  $4$.    Note that $k=12$ and $l=1$ is actually not admissible but chosen here for illustration purposes.}
\label{Fig:trapezoid}
\end{figure}

 We now determine the energy of the configurations. Recall the definition of $\beta$ in \eqref{def: beta}.  \EEE We observe that in a column where we  add \EEE  $m$  \EEE atoms we add exactly $2m-1$ bonds to the bond graph. Consequently, in view of Proposition \ref{PropositionDaisy}, the energy of $C_n$ is given by
\begin{align*}
\mathcal{E}(C_n) &= \mathcal{E}( S_{(k-l)^2} )-\sum_{j=1}^{2l+1}(2(k-l-2j)-1) = -2(k-l)^2+2(k-l)-4kl+2l^2-2 +2l+1 \\&=-2k^2 + 2\sqrt{k^2} -1= -\lfloor \beta(k^2)\rfloor -1= -\lfloor \beta(n)\rfloor,
\end{align*}
 where in the last step we used that $n =k^2 + 1$.  \EEE We now determine the net charge of the configuration. Recall that the configuration  $S_{(k-l)^2}$  has net charge equal to $+1$. As explained above, in each column the number of added atoms of charge $+1$ exceeds the number of added atoms of charge $-1$ by exactly one, i.e., $\mathcal{Q}(C_n) = 2l+2$.

  We are now in the position to give the proof of Theorem \ref{theorem : charge}(b). To this end, consider the sequence of integers $n_l = (5l^2+7l+3)^2+1$, $l \in 2\mathbb{N}$, and the configurations $C_{n_l}$ constructed above. Note that $4l \geq  n_l^{1/4}$. Thus, we obtain
$
\mathcal{Q}(C_{n_l}) = 2l+2 \geq \frac{1}{2}n_l^{1/4}.
$
This yields
\begin{align*}
\liminf_{l\to +\infty} n_l^{-1/4}|\mathcal{Q}(C_{n_l})| >0.
\end{align*}
The statement follows once we know that the ground state energy equals exactly $-\lfloor \beta(n)\rfloor$ for all $n \in \mathbb{N}$. This will be proven in  Subsection  \ref{subsection:energetic characterization}.

\subsection{Upper bound on $q_\mathrm{sat}^n$}\label{subsection : upper bound qsat} 

We  construct  sub-configurations of the square lattice satisfying $\mathcal{E}(C_n) = -2n +2|\mathcal{Q}(C_n)|$ and $|\mathcal{Q}(C_n)|= \phi(n)$. This shows $q_\mathrm{sat}^n \le \phi(n)$, see \eqref{eq: qsat-def}--\eqref{def: phi}.   The main idea of the construction is to place atoms of the minority phase only at sites whose $\mathbb{Z}^2$ neighborhood, that is the set of points on $\mathbb{Z}^2$ with distance $1$ to the point, is already occupied by four points of the majority phase. This leads to configurations whose global geometry is reminiscent of a diamond. \EEE For an illustration of the construction we refer to Fig.~\ref{fig : construction qsat}.

For $1 \leq n \leq 4$ we arrange $ n$ atoms of charge $+1$ such that their mutual distance is bigger or equal to $\sqrt{2}$.
Next, we provide the construction for $n =1 + 2k^2+2k$, $k \in \mathbb{N}$. In this case, we define  $X_n=D_n$, see \eqref{def :Dn},  and $q_{i}=(-1)^{i_1+i_2+k}$, $i=(i_1,i_2) \in  D_n  $. Now for $n = 1 +2 k^2+2k+m$, with $1\leq m \leq 4k+3$, we recursively construct $X_{ 1 +2 k^2+2k+m+1}$ by adding one atom to the configuration $X_{1 +2 k^2+2k+m}$. For $m=1$ we add an atom of positive charge at position $(-(k+1),1)$. Let $2\leq m \leq 2k+2$. For $m$ even, we add  an atom of positive charge at position $(-(k+1)+m/2,1+m/2)$.  For $m$ odd, we add an atom of negative charge at position  $(-(k+1)-\lfloor -m/2 \rfloor, -\lfloor- m/2 \rfloor)$.   Now  let $2k+3 \leq m \leq 4k+3$. For $m$ odd, we add an atom  of positive charge at position  $( -\lfloor - m/2 \rfloor -(k+1),2k+3-\lfloor m/2 \rfloor)$.  For $m$ even,   we add an atom of negative charge at position $( m/2 -(k+2),2k+3- m/2)$. 
%
%

\begin{figure}[htp]
\centering
\begin{tikzpicture}[scale=0.5]

\foreach \j in {0,...,5}{
\draw[ultra thin](\j,-5+\j)--++(0,10-2*\j);
\draw[ultra thin](-\j,-5+\j)--++(0,10-2*\j);
\draw[ultra thin](\j-5,\j)--++(10-2*\j,0);
\draw[ultra thin](\j-5,-\j)--++(10-2*\j,0);
}

\foreach \j in {1,3,5}{
\pgfmathsetmacro\y{\j-1}
\foreach \l in {0,...,\y}{
\foreach \k in {0,...,3}{
\begin{scope}[rotate = \k*90]
\draw[fill=black](0,0)++(\j,0)++(-\l,\l) circle(.07);
\end{scope}
}
}}

\foreach \j in {0,2,4}{
\pgfmathsetmacro\y{\j-1}
\foreach \l in {0,...,\y}{
\foreach \k in {0,...,3}{
\begin{scope}[rotate = \k*90]
\draw[fill=white](0,0)++(\j,0)++(-\l,\l) circle(.07);
\end{scope}
}
}}

\draw[fill=black](-6,1) circle(.07);
\draw(-6,1) node[anchor=east]{\small $1$};

\draw[fill=black](-5,2) circle(.07);
\draw(-5,2) node[anchor=east]{\small $2$};

\draw[fill=white](-5,1) circle(.07);
\draw(-5,1) node[anchor=east]{\small $3$};

\draw[fill=black](-4,3) circle(.07);
\draw(-4,3) node[anchor=east]{\small $4$};

\draw[fill=white](-4,2) circle(.07);
\draw(-4,2) node[anchor=east]{\small $5$};

\draw[fill=black](-3,4) circle(.07);
\draw(-3,4) node[anchor=east]{\small $6$};

\draw[fill=white](-3,3) circle(.07);
\draw(-3,3) node[anchor=east]{\small $7$};

\draw[fill=black](-2,5) circle(.07);
\draw(-2,5) node[anchor=east]{\small $8$};

\draw[fill=white](-2,4) circle(.07);
\draw(-2,4) node[anchor=east]{\small $9$};

\draw[fill=black](-1,6) circle(.07);
\draw(-1,6) node[anchor=east]{\small $10$};

\draw[fill=white](-1,5) circle(.07);
\draw(-1,5) node[anchor=east]{\small $11$};

\draw[fill=black](0,7) circle(.07);
\draw(0,7) node[anchor=east]{\small $12$};

\draw[fill=white](0,6) circle(.07);
\draw(0,6) node[anchor=east]{\small $14$};

\begin{scope}[rotate=-90]
\draw[fill=black](-6,1) circle(.07);
\draw(-6,1) node[anchor=east]{\small $13$};

\draw[fill=black](-5,2) circle(.07);
\draw(-5,2) node[anchor=east]{\small $ 15$};

\draw[fill=white](-5,1) circle(.07);
\draw(-5,1) node[anchor=east]{\small $16$};

\draw[fill=black](-4,3) circle(.07);
\draw(-4,3) node[anchor=east]{\small $17$};

\draw[fill=white](-4,2) circle(.07);
\draw(-4,2) node[anchor=east]{\small $18$};

\draw[fill=black](-3,4) circle(.07);
\draw(-3,4) node[anchor=east]{\small $19$};

\draw[fill=white](-3,3) circle(.07);
\draw(-3,3) node[anchor=east]{\small $20$};

\draw[fill=black](-2,5) circle(.07);
\draw(-2,5) node[anchor=east]{\small $21$};

\draw[fill=white](-2,4) circle(.07);
\draw(-2,4) node[anchor=east]{\small $22$};

\draw[fill=black](-1,6) circle(.07);
\draw(-1,6) node[anchor=east]{\small $23$};

\end{scope}
\end{tikzpicture}
\caption{The construction of $\phi(n)$-optimal configurations $C_n$ where $n=1 + 2k^2 + 2k +m$ for  $k=5$ and  $1 \le m \le 23$.} 
\label{fig : construction qsat}
\end{figure}
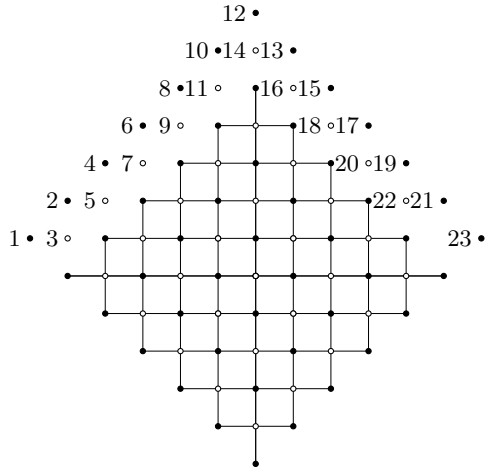

\begin{proposition}[Upper bound for $q_\mathrm{sat}^n$] \label{proposition : upper bound} Let $n \in \mathbb{N}$. Then
$q_\mathrm{sat}^n \leq \phi(n)$.
\end{proposition}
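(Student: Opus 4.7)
The plan is to verify that the configurations $C_n$ constructed above satisfy (i) $\mathcal{Q}(C_n) = \phi(n)$ and (ii) $C_n$ is saturated, i.e., every atom of the minority phase is $4$-bonded. Since $C_n$ is a subset of $\mathbb{Z}^2$ with alternating charges and all bonds of unit length, assumptions [ii], [iii], [vii] give $\mathcal{E}(C_n) = -b$; saturation then forces $b = 4 \cdot \tfrac{n-\phi(n)}{2} = 2n - 2\phi(n)$. Combined with the general lower bound $\mathcal{E}^n_\mathrm{min}(q_\mathrm{net}) \geq -2n + 2|q_\mathrm{net}|$ (cf.\ Lemma \ref{lemma : Charge Energybound}), this yields $\mathcal{E}^n_\mathrm{min}(\phi(n)) = -2n + 2\phi(n)$, so $\phi(n)$ is admissible in \eqref{eq: qsat-def} and $q_\mathrm{sat}^n \leq \phi(n)$.

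For $1 \leq n \leq 4$ the minority phase is empty, so saturation is trivial, $\mathcal{E}(C_n) = 0$, and $\mathcal{Q}(C_n) = n$ coincides with $\phi(n)$ by direct evaluation of \eqref{def: phi}. For $n = 1 + 2k^2 + 2k$ we take $C_n = D_n$ with charges $q_i = (-1)^{i_1 + i_2 + k}$. The key geometric point is that on the boundary $|i_1| + |i_2| = k$ one has $i_1 + i_2 \equiv k \pmod 2$ (since $|j|-j$ is always even), hence $q_i = (-1)^{2k} = +1$ along $\partial D_n$; every minority atom therefore lies strictly inside $D_n$ and has all four $\mathbb{Z}^2$-neighbors inside $D_n$, giving (ii). A layer-by-layer count (layer $j := |i|_1$ consists of $\max\{1,4j\}$ atoms of common charge $(-1)^{j+k}$) then yields $\mathcal{Q}(D_n) = (k+1)^2 - k^2 = 2k+1$, and this equals $\phi(n)$ via $\sqrt{2n-5} = \sqrt{(2k+1)^2-4} \in (2k, 2k+1)$ and a one-line floor computation.

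For the intermediate values $n = 1 + 2k^2 + 2k + m$ with $1 \leq m \leq 4k+3$ we argue by induction on $m$. The construction places each new atom at a $\mathbb{Z}^2$-site adjacent only to atoms of the opposite charge already present, so the alternating property is preserved. The crucial claim, verified by a case split on the parity of $m$ and on whether $m \leq 2k+2$ or $m \geq 2k+3$, is that every newly added \emph{minority} atom has all four of its $\mathbb{Z}^2$-neighbors in the final $X_n$: two lie on the face of $D_{n_0}$ (with $n_0 = 1 + 2k^2 + 2k$), while the other two are the flanking majority atoms placed at the previous and next majority step; compare Fig.~\ref{fig : construction qsat}. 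Hence $C_n$ is saturated. The net charge is identified by tracking its $\pm 1$ jumps at each step and matching the resulting oscillating sequence with \eqref{def: phi} via routine control of the floor functions $\lfloor\tfrac12\sqrt{2n-4}\rfloor$ and $\lfloor-\tfrac12 + \tfrac12\sqrt{2n-5}\rfloor$. The main obstacle is this final bookkeeping, particularly at the transition $m = 2k+3$ where the construction changes direction; it is made transparent by the observation that after $m = 4k+3$ the configuration coincides with $D_{n'}$ for $n' = 1 + 2(k+1)^2 + 2(k+1)$, so the induction closes on itself and the argument reduces to a finite case check organized by the residue of $m$ modulo $2$ and by which of the two blocks $m$ lies in.
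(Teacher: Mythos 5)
Your proposal is correct and follows essentially the same route as the paper's proof: verify that the explicit configurations are saturated (each minority atom is placed only at sites whose four $\mathbb{Z}^2$-neighbors carry the opposite charge), deduce $\mathcal{E}(C_n)=-2n+2\mathcal{Q}(C_n)$, and match $\mathcal{Q}(C_n)$ with $\phi(n)$ via the equivalent block representation of $\phi$ and a step-by-step charge count. The only cosmetic difference is that you organize the diamond charge count by $\ell_1$-layers while the paper counts by rows, and you make the use of the lower bound from Lemma \ref{lemma : Charge Energybound} explicit.
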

\begin{proof} We  check that for every $n \in \mathbb{N}$  the constructed  configuration $C_n$  satisfies $\mathcal{E}(C_n)=-2n +2\mathcal{Q}(C_n)$ and  $\mathcal{Q}(C_n)=\phi(n)$. This implies $q_\mathrm{sat}^n \leq \phi(n)$.

 Let us first  confirm   that, for all $n \in \mathbb{N}$,  $C_n$  satisfies $\mathcal{E}(C_n)= -2n + 2\mathcal{Q}(C_n)$.   Recall definition \eqref{def : Cnpm}.  Since  $\# X_n^+ + \# X_n^-=n$,  $\# X_n^+ - \# X_n^-=\mathcal{Q}(C_n)$, and $X_n^-\leq X_n^+$  we get  
\begin{align}\label{eq: good for later} 
 \# X_n^- = \frac{1}{2}(n-\mathcal{Q}(C_n)).
\end{align}
The construction shows that each atom with negative charge has exactly four neighbors of positive charge since we   place atoms of negative charge only on lattice sites whose $\mathbb{Z}^2$-neighbourhood  is already occupied by four  atoms   of positive charge. Moreover, we observe that the distance of atoms with the same charge is at least $\sqrt{2}$. Thus, by [ii], [vii], and \eqref{eq: good for later} we get
$\mathcal{E}(C_n)= -4  \# X_n^- = -2n +2\mathcal{Q}(C_n)$. \EEE

 It remains to prove that $\mathcal{Q}(C_n)=\phi(n)$. To this end,  it is convenient to  use a different representation of $\phi$ given by
\begin{align*}
\phi(n) = \begin{cases}
2 +  2k  + 0  & \ \ \ \text{if } n=1+2k^2+2k+m, \ \ \  1\leq m \leq 2k+2 , \ m \text{ odd},\\
2+ 2k +  1 &\ \ \ \text{if } n=1+2k^2+2k+m,  \ \ \   1\leq m \leq 2k+2, \ m \text{ even},\\
3+ 2k + 1  &\ \ \ \text{if } n=1+2k^2+2k+m, \ \ \   2k+3\leq m \leq 4k+ 4 \  m \text{ odd},\\
3+ 2k + 0 &\ \ \ \text{if } n=1+2k^2+2k+m, \ \ \  2k+ 3\leq m \leq 4k+4, \ m \text{ even},\\
\end{cases}
\end{align*} 
see \eqref{eq: different repr} in  Subsection \EEE \ref{subsection : calculation qsat} below.  
 Let $n =1 + 2k^2+2k = 1 + 2(k-1)^2+2(k-1) + 4(k-1)+4 $.  In this case,  $C_n$ is a union of $2k+1$  rows  for each of which the number of atoms of charge $+1$ exceeds the number of atoms of charge $-1$ by  exactly  one. We therefore have $\mathcal{Q}(C_n)=1+2k=\phi(n)$. 

For $n =1 + 2k^2+2k +m$ with $ m \in \{1,2\}$, we add $m$ atoms of charge $+1$  to the configuration $C_{1+2k^2+2k}$.  Consequently, we get $\mathcal{Q}(C_n) = \mathcal{Q}(C_{1+2k^2+2k})+ m = 1+2k+m= \phi(n)$. 

For $3 \leq m \leq 2k+  2  $, we add alternatingly first an atom of charge $-1$ and then an atom of charge $+1$. We therefore obtain $\mathcal{Q}(C_n) = \mathcal{Q}(C_{1+2k^2+2k}) + 1 = 1+2k +1  = \phi(n)$ if $m$ is odd and $\mathcal{Q}(C_n) = \mathcal{Q}(C_{1+2k^2+2k}) + 2= 1+2k+2 = \phi(n)$ if $m$ is even.  For $2k+3 \leq m \leq 4k+3$, we add alternatingly first an atom of charge $+1$ and then an atom of charge $-1$. We therefore obtain $\mathcal{Q}(C_n) = \mathcal{Q}(C_{1+2k^2+4k+2}) + 1 = 1+2k +3  = \phi(n)$ if $m$ is odd and $\mathcal{Q}(C_n) = \mathcal{Q}(C_{1+2k^2+4k+2}) =  2k +3 \EEE = \phi(n)$ if $m$ is even.
\end{proof}

\section{Elementary properties of optimal configurations}\label{sec : elementary}
In this section we prove elementary geometric properties  of   optimal configurations.  For their definition, we refer to the paragraph below   \eqref{def : mnq}. 

\begin{lemma}\label{LemmaNeighborhood} 
 Let $C_n$ be an optimal configuration.  Then $C_n$ has alternating charge distribution,  all bond angles satisfy  
\begin{align}\label{ineq: bondangles}
\tfrac{2\pi}{5} \leq \theta \leq \tfrac{8\pi}{5},
\end{align} 
 \EEE  and
\begin{align}\label{NeighbourhoodCard}
\#\mathcal{N}(x_i) \leq 4 \text{ for all } i \in \{1,\ldots,n\}.
\end{align}

\end{lemma}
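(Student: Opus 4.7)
The strategy for the three assertions is the same: starting from an optimal $C_n$ that violates one of them, I build a competitor $\widetilde C_n$ by translating a single atom to a location so distant that it interacts with nothing, and I verify that $\mathcal{E}(\widetilde C_n)<\mathcal{E}(C_n)$. Since such a translation preserves both $n$ and the net charge, this contradicts optimality in either the free or the prescribed setting. Two quantitative ingredients fuel the comparisons. First, assumption [vi] combined with the monotonicity in [v] gives
\begin{equation*}
V_{\mathrm{r}}(r)>6 \qquad \text{for every } r\le 2r_0\sin(\pi/5),
\end{equation*}
and in particular for every $r\le r_0$, since $r_0<2r_0\sin(\pi/5)$. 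Second, from $r_0<(2\sin(\pi/7))^{-1}$ and the forced minimal distance $|x_k-x_l|\ge 1$ (by [i] and [iv]), a standard consecutive-angle packing argument at every $x\in X_n$ yields the universal bound $\#\mathcal{N}(x)\le 6$. Using also $V_{\mathrm{a}}\ge -1$ on $[1,\infty)$ by [ii] and $V_{\mathrm{a}}\equiv 0$ on $(r_0,\infty)$ by [iii], moving $x_j$ to infinity produces
\begin{equation*}
\mathcal{E}(\widetilde C_n)-\mathcal{E}(C_n)\le -\sum_{k\ne j:\,q_k=q_j}V_{\mathrm{r}}(|x_j-x_k|)+\#\mathcal{N}(x_j).
\end{equation*}

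I then run the three claims in cascade. For the alternating property \eqref{SamechargeNeighbourhood}, suppose $x_j\in\mathcal{N}(x_i)$ with $q_i=q_j$. Since $|x_i-x_j|\le r_0$, the displayed sum contains the term $V_{\mathrm{r}}(|x_i-x_j|)>6$, and the estimate yields $\mathcal{E}(\widetilde C_n)-\mathcal{E}(C_n)<-6+6=0$, a contradiction. Knowing now that the charge distribution is alternating, I turn to \eqref{ineq: bondangles} and assume that some bond angle $\theta$ at $x_i$ between $x_j,x_k\in\mathcal{N}(x_i)$ satisfies $\theta<2\pi/5$ (the case $\theta>8\pi/5$ is symmetric after reversing the anticlockwise orientation). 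The law of cosines with $|x_i-x_j|,|x_i-x_k|\in[1,r_0]$ gives $|x_j-x_k|\le 2r_0\sin(\pi/5)$, and alternating at $x_i$ forces $q_j=q_k$, so the term $V_{\mathrm{r}}(|x_j-x_k|)>6$ is present when we translate $x_j$; the displayed estimate again produces a strict decrease of the energy.

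Finally, for \eqref{NeighbourhoodCard}, suppose $m:=\#\mathcal{N}(x_i)\ge 5$. The $m$ consecutive bond angles at $x_i$ sum to $2\pi$ and are each $\ge 2\pi/5$ by the previous step, so necessarily $m=5$ with all five angles equal to $2\pi/5$. By alternating, the five neighbors $y_1,\dots,y_5$ share the common charge $-q_i$. Each consecutive pair $(y_a,y_{a+1})$ has mutual distance at most $2r_0\sin(\pi/5)$, and must be non-bonded (otherwise alternating fails at $y_a$), so $|y_a-y_{a+1}|\in(r_0,2r_0\sin(\pi/5)]$ and $V_{\mathrm{r}}(|y_a-y_{a+1}|)>6$. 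Translating $y_1$ to infinity removes in particular the two same-charge repulsive terms $V_{\mathrm{r}}(|y_1-y_2|)$ and $V_{\mathrm{r}}(|y_1-y_5|)$, so $\mathcal{E}(\widetilde C_n)-\mathcal{E}(C_n)<-12+\#\mathcal{N}(y_1)\le -6$, the desired contradiction. The main technical point is to establish the universal packing bound $\#\mathcal{N}(x)\le 6$ carefully from the hypothesis $r_0<(2\sin(\pi/7))^{-1}$; this is what balances the margins in the comparisons above, as the repulsive gain ($>6$ per bad pair) exactly beats the bond loss ($\le 6$) that one pays when removing a single atom.
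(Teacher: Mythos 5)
Your proposal is correct and follows essentially the same route as the paper: relocate a single atom far away (preserving $n$ and the net charge), bound the attractive loss by $\#\mathcal{N}\le 6$ via the packing consequence of $r_0<(2\sin(\tfrac{\pi}{7}))^{-1}$, and beat it with a repulsive term exceeding $6$ coming from [v]--[vi], running the three claims in the same cascade (alternation, then angles, then degree). The only cosmetic difference is in the last step, where you force all five consecutive angles to equal $\tfrac{2\pi}{5}$ and harvest two repulsive terms, while the paper simply picks by pigeonhole one consecutive angle $\le \tfrac{2\pi}{5}$ and reuses the angle-step computation.
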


\begin{proof}  In this proof, we will use the following convention: we say that  we \textit{relocate} $(x,q) \in C_n$,   and write $C_n-\{(x,q)\}$, by considering the configuration $(C_n\cup (x+\tau,q))\setminus (x,q) $, where $\tau \in \mathbb{R}^2$ is chosen such that
\begin{align*}
\mathrm{dist}(X_n,x+\tau)\geq \sqrt{2}.
\end{align*}
Since  $C_n$ is a $q_{\rm net}$-optimal configuration, there  holds  $\mathcal{E}(C_n)<+\infty$. Thus,  by [i] and [iv] 
\begin{align}\label{eq: distance}
\mathrm{dist}(x_i,  X_n \EEE \setminus\{x_i\}) \geq 1 \text{ for all }i\in \{1,\ldots,  n \EEE \}.
\end{align} 
 For brevity, we define $\mathcal{N}_{\rm rep}(x_i) =   \mathcal{N}(x_i) \cap \left\{x_j \in   X_n: q_j=q_i\right\}$  for all $i \in \lbrace 1 ,\ldots, n \rbrace$. We give the proof of the statement in two \EEE steps. \EEE

\noindent \emph{Step 1: $\# \mathcal{N}_{\rm rep}(x_i)  =0 \text{ for all } i \in \{1,\ldots,n\}$.} \EEE First, \eqref{eq: distance} \EEE and $r_0 < (2\sin(\frac{\pi}{7}))^{-1}$  entail by an elementary  argument that \EEE
\begin{align}\label{eq: six neighbors}
\#\mathcal{N}(x_i) \leq 6 \text{ for all } i \in \{1,\ldots,n\}.
\end{align}
 Indeed, if $\#\mathcal{N}(x_i) \ge 7$, two neighbors of $x_i$ would necessarily have distance smaller than $1$.  
Now assume by contradiction that $\# \mathcal{N}_{\rm rep}(x_i) \geq 1$. Note that every bond between  atoms  of different charge contributes at least $-1$ to the energy by $[\mathrm{ii}]$.  This along with  $V_{\rm r} \ge 0$,  see [iv], \EEE  and the fact that   the energy per neighbor of same charge  exceeds  $6$ (see  $[\mathrm{iv}],[\mathrm{v}]$,   and   $[\mathrm{vi}]$)  allows us to relocate $(x_i,q_i)$:   by \eqref{eq: six neighbors}   we  obtain 
 \begin{align*}
\mathcal{E}(  C_n - \{(x_{i},q_i)\} ) &<   \mathcal{E}(C_n) + \#\left(\mathcal{N}(x_i) \setminus \mathcal{N}_{\rm rep}(x_i)\right) - 6\#\mathcal{N}_{\rm rep}(x_i)\\& \le \mathcal{E}(C_n) + 5 - 6    < \EEE  \mathcal{E}(C_n).
\end{align*} 
 This contradicts the fact that $C_n$ is  a $q_{\rm net}$-optimal configuration. Thus, Step 1 is proved.

 \noindent \emph{Step 2: $\tfrac{2\pi}{5} \leq \theta \leq \tfrac{8\pi}{5}.$ } Assume by contradiction that there exists a bond angle such that $\theta<\tfrac{2\pi}{5}$. Denote by $x_1,x_0,x_2$, $x_1,x_2 \in \mathcal{N}(x_0)$,  the three atoms forming the bond angle $\theta$.
 By Step 1 we   get  that $q_0=-q_1=-q_2$.
 Since $\sin(x)$ is   increasing  for $x\in [0,\frac{\pi}{2}]$, we have  $|x_{1}-x_{2}| \leq  2r_0 \sin\left(\theta/2\right)\leq 2r_0 \sin\left(\pi/5\right)$. By using  $[\mathrm{ii}]$,  $[\mathrm{v}]$,  [vi], \EEE  $V_{\rm r} \ge 0$ (cf. $[\mathrm{iv}]$),  and \eqref{eq: six neighbors}  we get 
 \begin{align*}
 \mathcal{E}(C_n-\{ (x_{1},q_{1})\})&\leq  \mathcal{E}(C_n)-V_{\mathrm{r}}(2r_0  \sin\left(\pi/5\right)) +  \# \big(\mathcal{N}(x_{1}) \setminus \mathcal{N}_{\rm rep} (x_{1}) \EEE \big) \EEE  \\&<  \mathcal{E}(C_n)-6 +  6    =  \mathcal{E}(C_n).
\end{align*}
If a bond angle $\theta$ satisfies $\theta>\tfrac{8\pi}{5}$, consider the opposite angle formed by the three points, i.e., $2\pi-\theta<\tfrac{2\pi}{5}$.
In both cases, this contradicts the fact that $C_n$ is   an optimal configuration and concludes the proof of  Step 2.\EEE
 \EEE

 \noindent\emph{Step 3: $\#\mathcal{N}(x_i) \leq 4$ for all $i \in \{1,\ldots,n\}$.}  Assume by contradiction that there exists $i \in \{1,\ldots,n\}$ such that $\#\mathcal{N}(x_i) \geq 5$. By Step 1 we may suppose that there exist  \EEE  $\{x_0,\ldots,x_4\}\subset \mathcal{N}(x_{i})$ with $q_j =-q_i$ $j=0,\ldots,4$. We let $\theta_j \in [0,2\pi)$ be the angle between $x_j,x_i,x_{j+1}$.  (Here and in the following the indices have to be understood modulo $4$.) \EEE We can choose $j_0 \in \{0,\ldots,4\}$ such that
\begin{align*}
\theta_{j_0} \leq \frac{1}{5}\sum_{j=0}^4 \theta_{j} = \frac{2\pi}{5}.
\end{align*}

 Now  \eqref{ineq: bondangles} and \eqref{NeighbourhoodCard} follow from Step 2 and 3, respectively.  The property of alternating charge (see \eqref{SamechargeNeighbourhood})  follows from  Step 1.  This   concludes \EEE the proof of Lemma \ref{LemmaNeighborhood}.
\end{proof}
 Recall the definition of non-equilibrated atoms $\mathcal{A}(X_n)$ at the end of Subsection \ref{sec: notions}.

\begin{lemma}[Bond-angles and polygons]\label{RemarkPolygon} Let $C_n$ be a configuration.
\begin{enumerate}
\item[(a)] If $C_n$ is  an optimal  configuration, then the bond graph consists  only of polygons of even length.  For each polygon  $P \subset X_n$ satisfying   $\mathcal{A}(X_n)\cap P \neq \emptyset$ there holds
\begin{align*}
\#(\mathcal{A}(X_n)\cap P) \geq 2.
\end{align*}
\item[(b)] If $C_n$ is such that $\mathcal{E}(C_n) =-b$, then all bonds have unit length, the configuration is repulsion-free, and all bond angles $\theta$ satisfy
\begin{align}\label{eq: lower theta}
\tfrac{\pi}{2} \leq \theta \le \tfrac{3\pi}{2}.
\end{align}
 In particular, all squares in the bond graph are regular.
\end{enumerate}
\end{lemma}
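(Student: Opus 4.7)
The plan is to handle (a) and (b) separately, using a sum-of-interior-angles argument for (a) and Remark \ref{rem: repulsionsfree} together with the law of cosines for (b).

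For the first assertion in (a) I would use that, by Lemma \ref{LemmaNeighborhood}, an optimal configuration has alternating charge distribution, so every bond joins atoms of opposite charge. Traversing any simple cycle in the bond graph the charges must alternate, and since the cycle is closed this forces its length to be even. For the lower bound $\#(\mathcal{A}(X_n)\cap P)\geq 2$ I would argue by contradiction: suppose a polygon $P$ of length $k$ contains exactly one non-equilibrated atom $x_*$. Since the interior angles of the simple $k$-gon $P$ sum to $(k-2)\pi$ and at each of the $k-1$ equilibrated vertices the interior angle is a bond angle lying in $\{\pi/2,\pi,3\pi/2\}\subset\tfrac{\pi}{2}\mathbb{Z}$, the remaining interior angle $\theta_*$ at $x_*$ is also a multiple of $\pi/2$. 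Lemma \ref{LemmaNeighborhood} forces $\theta_*\in[2\pi/5,8\pi/5]$, so $\theta_*\in\{\pi/2,\pi,3\pi/2\}$. I would then combine this with the bound $\#\mathcal{N}(x_*)\leq 4$ and with the identity $\sum_{\text{face-angles at }x_*}=2\pi$, each summand being a bond angle $\geq 2\pi/5$, to conclude by case analysis on $m:=\#\mathcal{N}(x_*)\in\{2,3,4\}$ that every face-angle at $x_*$ lies in $\{\pi/2,\pi,3\pi/2\}$; for $m\leq 4$ this is in turn equivalent to $x_*$ being equilibrated, contradicting $x_*\in\mathcal{A}(X_n)$. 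This last pinning of the non-polygon face-angles is the main obstacle, since a priori neighbors of $x_*$ outside $P$ could create an angle outside $\{\pi/2,\pi,3\pi/2\}$; closing the argument relies on planarity together with the tightness imposed by the budget $2\pi$ and the lower bound $2\pi/5$ on each face-angle.

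For (b), the first two conclusions are immediate from Remark \ref{rem: repulsionsfree}: equality $\mathcal{E}(C_n)=-b$ forces on the one hand every repulsive contribution $V_{\mathrm{r}}$ to vanish, so by [vii] every same-charge pair has distance $\geq \sqrt{2}$ (repulsion-freeness), and on the other hand every attractive contribution $V_{\mathrm{a}}$ to equal $-1$, so by [ii] every bond has unit length. For the angular bound, take any atom $x_0$ with two neighbors $x_1,x_2$. Since the configuration is repulsion-free, no two same-charge atoms lie within $r_0<\sqrt{2}$; hence $q_1=-q_0$ and $q_2=-q_0$, in particular $q_1=q_2$, and repulsion-freeness gives $|x_1-x_2|\geq\sqrt{2}$. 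With $|x_0-x_1|=|x_0-x_2|=1$, the law of cosines yields $2-2\cos\theta=|x_1-x_2|^2\geq 2$, so $\cos\theta\leq 0$ and $\theta\in[\pi/2,3\pi/2]$. For any $4$-cycle in the bond graph, the four interior angles sum to $2\pi$ and are each $\geq\pi/2$ by the previous step; therefore all four equal $\pi/2$, and combined with unit side lengths this shows that every square in the bond graph is regular.
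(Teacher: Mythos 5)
Your treatment of part (b) and of the parity claim in part (a) is correct and coincides with the paper's proof: unit bond lengths and repulsion-freeness follow from Remark \ref{rem: repulsionsfree}, the angle bound from $|x_1-x_2|\ge\sqrt{2}$ together with the law of cosines, regularity of squares from the angle sum $2\pi$, and evenness of cycles from the alternating charge distribution provided by Lemma \ref{LemmaNeighborhood}. Likewise, your angle-sum argument showing that the interior angle $\theta_*$ of $P$ at the unique non-equilibrated vertex must lie in $\tfrac{\pi}{2}\mathbb{N}\cap[2\pi/5,8\pi/5]=\{\pi/2,\pi,3\pi/2\}$ is exactly the paper's argument; the paper stops at that point and reads it as the contradiction (it identifies ``$x_1\in\mathcal{A}(X_n)$'' with ``the interior angle of $P$ at $x_1$ is not an integer multiple of $\pi/2$'').

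The gap is in the extra step you attempt, namely upgrading $\theta_*\in\{\pi/2,\pi,3\pi/2\}$ to the literal statement that $x_*$ is equilibrated, i.e.\ that \emph{every} face angle at $x_*$ lies in $\{\pi/2,\pi,3\pi/2\}$. You claim this follows by case analysis from the budget $\sum\theta=2\pi$ and the lower bound $2\pi/5$ on each face angle, but that bookkeeping does not close. For $m=3$ with $\theta_*=\pi$ the two remaining face angles need only sum to $\pi$ with each at least $2\pi/5$, so values such as $0.45\pi$ and $0.55\pi$ are not excluded; for $m=4$ with $\theta_*=\pi/2$ the other three angles need only lie in $[2\pi/5,4\pi/5]$ and sum to $3\pi/2$, which again leaves a continuum of non-equilibrated possibilities. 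Excluding such configurations cannot be done by angle counting alone: one would have to argue energetically, e.g.\ that a face angle strictly between $2\pi/5$ and $\pi/2$ brings two same-charge neighbors closer than $\sqrt{2}$ and activates $V_{\mathrm{r}}$, contradicting optimality. So either adopt the paper's shortcut of deriving the contradiction directly from the interior polygon angle, or replace the claimed case analysis by a genuine argument; as written, that step fails.
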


\begin{proof}

\noindent \emph{Proof of (a).} Let $C_n$ be  an optimal  configuration.  By  Lemma \ref{LemmaNeighborhood} the configuration has alternating charge distribution. Hence, the bond graph consists only of elementary polygons of even length. Let $P=\{x_1,\ldots,x_k\} \subset X_n $ be a polygon,  and denote the interior angles by $\{\theta_1,\ldots,\theta_k\}$.  Assume  by contradiction  that  $\#(\mathcal{A}(X_n)\cap P)=1$. Without loss of generality $x_1 \in \mathcal{A}(X_n)$, i.e., $\theta_1 \notin \frac{\pi}{2}\mathbb{N}$. We have $\theta_1 =\pi(k-2)- \sum_{i=2}^k \theta_i$. Now by assumption $\theta_i \in \frac{\pi}{2}  \mathbb{N}  $ for all $i \in \{2,\ldots,k\}$ and so the right hand side is  an  integer multiple   of $\frac{\pi}{2}$. This contradicts the fact that $x_1 \in \mathcal{A}(X_n)$.

\noindent\emph{Proof of (b):}  Given a configuration $C_n$ such that $\mathcal{E}(C_n)=-b$, all bonds are necessarily of unit length and the configuration is repulsion-free,  see Remark \ref{rem: repulsionsfree}.    We show  that all bond angles $\theta$ satisfy (\ref{eq: lower theta}).
In fact, suppose that $x_1$, $x_0$, $x_2$ form the angle $\theta$.  Observe that $C_n$ has  alternating charge distribution since  it is repulsion-free. As  $x_1$ and  $x_2$ are neighbors of $x_0$, we  thus  have $q_1 = q_2$.  Then we get $|x_1 - x_0| = |x_2 - x_0| =1$  as well as  $|x_1-x_2| \ge \sqrt{2}$ by [vii].   A simple geometric argument  yields $\pi/2\le\theta \le 3\pi/2$.  By (\ref{eq: lower theta}) and the fact that the four interior angles of a square sum up to $2\pi$ we deduce that all squares are regular. 
\end{proof}


\EEE

 We now investigate the relation of  net \EEE charge and energy.  Without restriction we consider   configurations with non-negative  net \EEE charge in order to simplify notation.

 \begin{lemma}[Net charge controls energy] \label{lemma : Charge Energybound}      Let $C_n$ be a   configuration  with $\mathcal{Q}(C_n) \ge 0$.
 
 \noindent (a) There holds  
\begin{align}\label{ineq : charge}
\mathcal{E}(C_n)\geq -2n +2\mathcal{Q}(C_n),
\end{align}
with equality only if
\begin{itemize}
\item[(i)] All atoms  with charge $-1$  are $4$-bonded,  and all bonds are of unit length, 
\item[(ii)]  The configuration is repulsion-free,
\item[(iii)]  $\mathcal{Q}(C_n) \ge q_\mathrm{sat}^n$, 
\item[(iv)] All $4$-bonded atoms have only bond angles  $\frac{\pi}{2}$,
\item[(v)]  If $x_i$  satisfies  $\#\mathcal{N}(x_i)=3$, then $x_i \notin \mathcal{A}(X_n)$.
\item[(vi)] For all connected components $C_m  \subset C_n$  there holds $\mathcal{Q}(C_m) \geq q_\mathrm{sat}^m$.  
\end{itemize}
 (b) Conversely, if properties (i) and (ii) are satisfied, then equality holds in \eqref{ineq : charge}. 
\end{lemma}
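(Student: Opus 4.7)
Since $n = \#X_n^+ + \#X_n^-$ and $\mathcal{Q}(C_n) = \#X_n^+ - \#X_n^-$ with $\mathcal{Q}(C_n)\ge 0$, one rewrites $-2n + 2\mathcal{Q}(C_n) = -4\#X_n^-$. Part (b) is then immediate: if (i) every $-$ atom is $4$-bonded by unit-length bonds and (ii) the configuration is repulsion-free, then by [ii] and [vii] each of the $4\#X_n^-$ attractive bonds contributes exactly $-1$ to $\mathcal{E}$ while every repulsive term vanishes, yielding $\mathcal{E}(C_n) = -4\#X_n^-$.

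For the inequality in (a), we may assume $\mathcal{E}(C_n) < +\infty$, whence all pairwise distances are $\ge 1$ by [i] and [iv], and consequently $\#\mathcal{N}(x_i) \le 6$ as in the derivation of \eqref{eq: six neighbors}. For $x_i \in X_n^-$, let $k_i := \#(\mathcal{N}(x_i) \cap X_n^+)$. The central step is the local estimate
\begin{equation*}
A(x_i) + \tfrac{1}{2} R(x_i) \;:=\; \sum_{y \in \mathcal{N}(x_i) \cap X_n^+} V_{\mathrm{a}}(|x_i - y|) + \tfrac{1}{2} \sum_{\substack{y, y' \in \mathcal{N}(x_i) \cap X_n^+ \\ y \neq y'}} V_{\mathrm{r}}(|y-y'|) \;\ge\; -4.
\end{equation*}
For $k_i \le 4$ this follows from $V_{\mathrm{a}} \ge -1$; for $k_i \in \{5, 6\}$ the angular pigeonhole forces two $+$-neighbors within angular separation $\le 2\pi/k_i \le 2\pi/5$ at $x_i$, hence at mutual distance $\le 2 r_0 \sin(\pi/5)$, producing via [v] and [vi] a repulsive contribution exceeding $6$, which dominates the excess $k_i - 4 \le 2$. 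Summing the local estimate over $x_i \in X_n^-$ and dropping the non-negative $---$ repulsive contributions (by [iv]) gives $\mathcal{E}(C_n) \ge -4\#X_n^-$, provided one controls the possible over-counting of a $+$-$+$ repulsive pair that sits simultaneously in $\mathcal{N}(x_i)$ for several $x_i \in X_n^-$. This multiplicity is bounded by an explicit area-packing argument: common $-$-neighbors of a $+$-$+$ pair at distance $\le 2r_0 \sin(\pi/5)$ lie in a lens of uniformly bounded area and satisfy mutual distance $\ge 1$, forcing their number to be a small absolute constant much below the slack $V_{\mathrm{r}}/6 > 1$ built into the local estimate. This accounting is the main technical obstacle.

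The equality conditions in (a) are sharpenings of the above. Conditions (i) and (ii) correspond to equality in $V_{\mathrm{a}} \ge -1$ on every attractive bond and $V_{\mathrm{r}} \ge 0$ on every same-charge pair, together with $k_i = 4$ for every $x_i \in X_n^-$ (as $k_i \in \{5,6\}$ produces strict slack in the local bound). Condition (iii) follows directly from \eqref{eq: qsat-def}: (i) and (ii) realize $\mathcal{E}(C_n) = -2n + 2\mathcal{Q}(C_n)$, so $\mathcal{Q}(C_n)$ is admissible for the minimum defining $q_\mathrm{sat}^n$. For (vi), distinct connected components share no interactions, so the same analysis applied to each $C_m$ gives $\mathcal{E}(C_m) = -2m + 2\mathcal{Q}(C_m)$ and hence $\mathcal{Q}(C_m) \ge q_\mathrm{sat}^m$. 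Condition (iv) is geometric: by (i) every $4$-bonded atom has four opposite-charge neighbors at unit distance, and any bond angle $\neq \pi/2$ would place two adjacent same-charge neighbors at distance $<\sqrt{2}$, contradicting (ii). Finally (v) follows by propagating the rigid $\pi/2$-cross structure of (iv) from the $-$-atom neighbors of a $3$-bonded atom $x_i$ (which must be of majority type by (i)): the three rigid crosses at its $-$-neighbors, combined with the repulsion-freeness (ii), force the consecutive bond angles at $x_i$ into $\{\pi/2,\pi,3\pi/2\}$, so $x_i \notin \mathcal{A}(X_n)$.
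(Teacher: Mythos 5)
Your part (b), your derivation of the equality conditions (i)--(v), and your observation that $-2n+2\mathcal{Q}(C_n)=-4\#X_n^-$ all match the paper in substance. The problem is the inequality in (a). You attack an arbitrary finite-energy configuration directly via the local estimate $A(x_i)+\tfrac12 R(x_i)\ge -4$ at each minority atom, and you yourself flag the resulting over-counting as ``the main technical obstacle'' --- but you do not resolve it, and as stated it is a genuine gap, not a routine verification. The chain $\mathcal{E}(C_n)\ge\sum_{x_i\in X_n^-}\bigl(A(x_i)+\tfrac12 R(x_i)\bigr)$ fails whenever a $+$--$+$ pair $\{y,y'\}$ with $V_{\mathrm{r}}(|y-y'|)>0$ is a common neighbor pair of $m\ge 2$ minority atoms: the energy contains that pair once, while your sum of local estimates charges it $m$ times. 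The arithmetic is not generous: a bad pair is worth $>6$, and each minority atom with $k_i\in\{5,6\}$ that relies on it carries a deficit of up to $k_i-4=2$, so you need the sharing multiplicity to be at most $3$; the claim that the multiplicity is ``a small absolute constant much below the slack'' is exactly the statement that needs a proof (a packing bound in a lens with exclusion zones around $y$ and $y'$), and nothing in your sketch supplies it. The paper avoids this entirely by a one-line reduction you miss: since the right-hand side of \eqref{ineq : charge} depends only on $n$ and $\mathcal{Q}(C_n)$, it suffices to prove the inequality for a $q_{\rm net}$-optimal configuration, and for those Lemma \ref{LemmaNeighborhood} (proved variationally, by relocating atoms) already yields alternating charge and $\#\mathcal{N}(x_i)\le 4$ for \emph{every} atom, whence $\mathcal{E}(C_n)\ge -b\ge -4\#X_n^-$ with no repulsion bookkeeping at all. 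Either adopt that reduction or supply the missing multiplicity bound; as written the core inequality is not established.

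A smaller gap sits in (vi): ``the same analysis applied to each $C_m$'' presupposes $\mathcal{Q}(C_m)\ge 0$, without which $\mathcal{Q}(C_m)\ge q_{\rm sat}^m$ is false ($q_{\rm sat}^m\ge 0$ by definition \eqref{eq: qsat-def}). You must first rule out components of negative net charge: applying the inequality with $|\mathcal{Q}(C_m)|$ to each component and summing gives $\mathcal{E}(C_n)\ge -2n+2\sum_m|\mathcal{Q}(C_m)|\ge -2n+2|\mathcal{Q}(C_n)|$, so equality forces $\sum_m|\mathcal{Q}(C_m)|=|\sum_m\mathcal{Q}(C_m)|$ and hence all component charges have the same (non-negative) sign; only then does (iii) applied componentwise give the claim. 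Also note that components can interact repulsively (atoms at distance in $(r_0,\sqrt2)$ are not bonded but may have $V_{\mathrm{r}}>0$), so ``distinct connected components share no interactions'' is only true after invoking (ii).
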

\begin{proof}
 We first show \eqref{ineq : charge}. To this end, it is not restrictive to assume that $C_n$ is an optimal configuration.  Recall \eqref{def : Cnpm}.   Since  $\# X_n^+ + \# X_n^-=n$ and  $\# X_n^+ - \# X_n^-=\mathcal{Q}(C_n)$, we get  $\min\lbrace \# X_n^+,  \# X_n^- \rbrace = (n-|\mathcal{Q}(C_n)|)/2$. As   $C_n$ is   an   optimal configuration,  \eqref{NeighbourhoodCard} yields $\mathcal{N}(x_i) \leq 4$ for all $i=1,\ldots,n$. We therefore obtain  by [ii] and $V_{\rm r} \ge 0$ 
\begin{align}\label{ineq : charge2}
\mathcal{E}(C_n) \geq -b \geq -4 \,   \min\lbrace \# X_n^+,  \# X_n^- \rbrace  = -2n + 2|\mathcal{Q}(C_n)|. 
\end{align}
 This shows (\ref{ineq : charge}) since by assumption $|\mathcal{Q}(C_n)| = \mathcal{Q}(C_n)$.   In a similar fashion, to see (b), it suffices to note  that (i) and (ii) imply that all inequalities in \eqref{ineq : charge2} are actually equalities. \EEE

 Now we assume that equality holds in \eqref{ineq : charge}. (In particular, this implies that $C_n$ is  an optimal configuration  and satisfies $\mathcal{E}(C_n) = -b$.)   We  confirm (i)--(vi).

\noindent\emph{Proof of (i), (ii):}  First, by Lemma \ref{RemarkPolygon}(b) we get that all bonds have unit length and that the configuration is  repulsion-free.   Suppose  now   that there exists an atom  with charge $-1$  such that $\mathcal{N}(x_i)<4$.  This implies  strict inequality in the second inequality in (\ref{ineq : charge2}). This contradicts  the equality in  \eqref{ineq : charge}.

 \noindent\emph{Proof of (iii):} The inequality follows directly from definition \eqref{eq: qsat-def}. 
 
\noindent\emph{Proof of (iv), (v):}   Now, (iv) follows from  \eqref{eq: lower theta} and the fact that  the bond angles at each atom sum up to $2\pi$.  To see (v), let   $x_0$ be a $3$-bonded atom. By (i) it necessarily has charge $+1$ and  it is only   bonded to atoms with charge $-1$. These atoms  are $4$-bonded by (i). Denote the bond angles at $x_0$ by $\theta_1,\theta_2,\theta_3$. Without loss of generality, suppose that $\pi/2\leq\theta_1 \leq 2\pi/3 \leq 3\pi/4$, see \eqref{eq: lower theta}. Assume by contradiction that $\theta_1 > \pi/2$. Denote by $x_1,x_2$ the two $4$-bonded atoms whose bonds enclose $\theta_1$. Denote by $z_1,z_2$ the two atoms $z_1 \in \mathcal{N}(x_1) \setminus \lbrace x_0 \rbrace  $, $z_2 \in \mathcal{N}(x_2) \setminus \lbrace x_0 \rbrace $ that have minimal distance to each other. Observe that $z_1$ and $z_2$ have charge $+1$.  We proceed to show that  $z_1 \neq z_2$ and $|z_1-z_2| <\sqrt{2}$ which contradicts (ii). By  (iv)  and the fact that all bonds are of unit length  we get  $|z_1-x_0| =|z_2-x_0|=\sqrt{2}$. Moreover,   the angle enclosed by  $z_1 $, $x_0$, $z_2$  is equal to $\theta_1- 2 \pi/4$,   where $0<\theta_1- 2 \pi/4  \leq \pi/4$. Thus, we obtain 
\begin{align*}
  0<  |z_1-z_2| = 2\sqrt{2} \sin\left(\frac{1}{2}\left(\theta_1 -\frac{\pi}{2}\right)\right) \leq 2\sqrt{2}\sin \left(\frac{\pi}{8}\right) <\sqrt{2}.
\end{align*}
 This yields  $z_1 \neq z_2$ and contradicts (ii). Therefore, we have shown  \EEE  $\theta_1=\pi/2$.  One of the remaining two bond angles, say $\theta_2$, then also satisfies  $\theta_2 \leq 3\pi/4$.  By the same argument we have that $\theta_2 =\pi/2$. Now $\theta_3=\pi$ and the claim is proved.

\noindent\emph{Proof of (vi):} Assume that there exist two connected components $C_m$ and $C_{n-m}$ with no bonds between them.  We have that 
\begin{align}\label{eq : charge additivity}
\mathcal{Q}(C_m)+\mathcal{Q}(C_{n-m})=\mathcal{Q}(C_n).
\end{align}
By  \eqref{ineq : charge2} \EEE applied to $C_m$ and $C_{n-m}$ there holds
\begin{align}\label{ineq : CmCn-m}
 \mathcal{E}(C_m)\geq -2m +2|\mathcal{Q}(C_m)|, \ \ \ \  \quad \mathcal{E}(C_{n-m}) \geq -2(n-m) +2|\mathcal{Q}(C_{n-m})|,
\end{align}
and therefore
\begin{align}\label{ineq : (vi)}
\begin{split}
-2n +2|\mathcal{Q}(C_n)|&=\mathcal{E}(C_n) \geq -2n  +2( |\mathcal{Q}(C_m)|  + |\mathcal{Q}(C_{n-m})|) \EEE \ge -2n  +2|\mathcal{Q}(C_n)|.
\end{split}
\end{align}
We now have equality everywhere in (\ref{ineq : (vi)}) and therefore necessarily   also  in (\ref{ineq : CmCn-m}). 
Moreover,   in view of \eqref{eq : charge additivity}, equality also implies that  there  holds $\mathcal{Q}(C_m)\ge 0$ and $\mathcal{Q}(C_{n-m}) \ge 0$.  
 By equality in (\ref{ineq : CmCn-m}) and (iii)  (applied on $C_m$ and $C_{n-m}$)  we get $\mathcal{Q}(C_m) \geq q_\mathrm{sat}^m$ and $\mathcal{Q}(C_{n-m}) \geq q_\mathrm{sat}^{n-m}$. \end{proof}

\section{Characterization of ground states}\label{sec: ground states}

 This section is devoted to the proofs of Theorem \ref{TheoremGroundstates}, Theorem \ref{theorem : GeometryGroundstatewithout}, and Theorem \ref{theorem : charge}. \EEE 

\subsection{Boundary energy}\label{section:boundary energy}

 This subsection is devoted to the concept of boundary energy and a corresponding
estimate which will be instrumental for the characterization of ground states and their energy in  Subsections \ref{subsection:energetic characterization} and \ref{subsection:geometric characterization}.  It is convenient to first introduce an auxiliary energy, sorted by the contributions of single atoms.  We then define the so-called  \EEE boundary energy   in terms of this auxiliary energy.

 \textbf{Reduced energy:}   Let $C_n$ be a configuration \EEE and let $x_i \in X_n$. \EEE We set
\begin{align} \label{def:cell energy}
V_{ \rm atom \EEE }(x_i) := \frac{1}{2} \sum_{x_j \in \mathcal{N}(x_i)} V_{\mathrm{a}}(|x_i-x_j|)+ \frac{1}{8}\underset{x_j \neq x_k}{\sum_{x_j,x_k \in \mathcal{N}(x_i)}} V_{\mathrm{r}}(|x_j-x_k|),
\end{align}
 where $\mathcal{N}(x_i)$ is defined in \eqref{eq: neighborhood}. \EEE
We define  the \emph{reduced energy} \EEE $\mathcal{R} : (\mathbb{R}^2\times \{-1,1\})^n \to \overline{\mathbb{R}}$ by 
\begin{align}\label{def:renormalized energy}
\mathcal{R}(C_n):= \sum_{x_i \in X_n} V_{\rm atom}(x_i).
\end{align}

\begin{lemma}[Relation of $\mathcal{E}$ and $\mathcal{R}$]\label{lemma: relation} If $C_n$ has alternating charge distribution and satisfies $\# \mathcal{N}(x_i) \le 4$ for all $i=1,\ldots,n$, then 
\begin{align*}
\mathcal{E}(C_n) \geq \mathcal{R}(C_n)
\end{align*}
with equality if $C_n$ is repulsion-free. 
\end{lemma}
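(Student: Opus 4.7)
The plan is to split both energies $\mathcal{E}$ and $\mathcal{R}$ into their attractive and repulsive contributions, $\mathcal{E}=\mathcal{E}_{\rm a}+\mathcal{E}_{\rm r}$ and $\mathcal{R}=\mathcal{R}_{\rm a}+\mathcal{R}_{\rm r}$, and to verify each inequality separately. First I would reduce to the case $\mathcal{E}(C_n)<+\infty$, since otherwise the claim is trivial; by assumptions [i] and [iv] this already forces all pairwise distances to be at least $1$, so the individual terms $V_{\rm a}(|x_i-x_j|)$ and $V_{\rm r}(|x_i-x_j|)$ are finite.

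For the attractive part, the key observation is that $V_{\rm a}(r)=0$ for $r>r_0$ by [iii], so the sum defining $\mathcal{E}_{\rm a}$ automatically restricts to opposite-charge pairs at distance $\le r_0$. Combined with the alternating charge assumption, which guarantees that every neighbor of $x_i$ has charge $-q_i$, these pairs are exactly the ones appearing in the first term of $V_{\rm atom}(x_i)$. Keeping track of the $1/2$ factor yields the exact equality $\mathcal{E}_{\rm a}(C_n)=\mathcal{R}_{\rm a}(C_n)$ regardless of the repulsion-free hypothesis.

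For the repulsive part, I would swap the order of summation in $\mathcal{R}_{\rm r}$. A pair $\{x_j,x_k\}$ contributes to the inner sum through every common neighbor $x_i\in\mathcal{N}(x_j)\cap\mathcal{N}(x_k)$, and the alternating charge distribution guarantees that any such pair automatically satisfies $q_j=q_k$. After accounting for the factor $1/8$ together with the double counting of ordered pairs, this gives
\[
\mathcal{R}_{\rm r}(C_n) \;=\; \sum_{\{j,k\}:\,q_j=q_k}\frac{|\mathcal{N}(x_j)\cap\mathcal{N}(x_k)|}{4}\,V_{\rm r}(|x_j-x_k|).
\]
The decisive estimate is $|\mathcal{N}(x_j)\cap\mathcal{N}(x_k)|\le|\mathcal{N}(x_j)|\le 4$, obtained directly from the hypothesis on neighborhood size; together with $V_{\rm r}\ge 0$ from [iv], it yields $\mathcal{R}_{\rm r}(C_n)\le\mathcal{E}_{\rm r}(C_n)$ and hence the desired inequality.

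For the equality statement, the repulsion-free assumption means $|x_i-x_j|\ge\sqrt{2}$ for all same-charge pairs, so $V_{\rm r}(|x_j-x_k|)=0$ on every term of both repulsive sums by [vii]; this forces $\mathcal{E}_{\rm r}=\mathcal{R}_{\rm r}=0$ and, combined with the attractive identity, gives $\mathcal{E}(C_n)=\mathcal{R}(C_n)$. The conceptual content is light: once the combinatorial bookkeeping is set up, the inequality is driven entirely by the support assumptions [iii] and [vii] on the potentials together with the neighborhood bound $\#\mathcal{N}(x_i)\le 4$. The main (and modest) obstacle is the accounting of symmetry factors and ordered versus unordered pairs in the repulsive identity; no sharper geometric estimate is required.
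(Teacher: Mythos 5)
Your proposal is correct and follows essentially the same route as the paper's proof: the attractive contributions coincide because alternating charge and the support of $V_{\mathrm{a}}$ restrict both sums to the same neighbor pairs, and the repulsive inequality comes from the observation that each same-charge pair appears at most $2\cdot 4=8$ times in the sum defining $\mathcal{R}$, each with weight $\tfrac18$, which is precisely your identity $\mathcal{R}_{\rm r}=\sum_{\{j,k\}}\tfrac14\,\#(\mathcal{N}(x_j)\cap\mathcal{N}(x_k))\,V_{\rm r}(|x_j-x_k|)\le\mathcal{E}_{\rm r}$. Your write-up is merely a more explicit version of the paper's bookkeeping; no gap.
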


\begin{proof}
 Since $C_n$ has alternating charge distribution, the contributions of $V_{\rm a}$ in \eqref{Energy} and \eqref{def:renormalized energy} coincide. In the case that $C_n$ is repulsion-free, the contributions of $V_{\rm r}$ vanish and thus there indeed holds $\mathcal{E}(C_n) = \mathcal{R}(C_n)$. In the general case, we observe that in \eqref{Energy}  each  pair  $\lbrace x_i,x_j \rbrace$ with $q_i=q_j$  contributes $V_{\rm r}(|x_i-x_j|)$ to the energy. In \eqref{def:renormalized energy}, however,  $\lbrace x_i,x_j \rbrace$ contributes at most $V_{\rm r}(|x_i-x_j|)$ since the pair appears at most eight times in the sum. This is due to double counting and the fact that each atom can be bonded to at most  four  other atoms. \EEE
\end{proof}

 By  Lemma \ref{LemmaNeighborhood}, ground states $C_n$ satisfy the assumptions of the lemma. Later we will also see that ground states are repulsion-free which will imply  $\mathcal{E}(C_n) = \mathcal{R}(C_n)$. \EEE


 \textbf{Boundary atoms, boundary energy:} Within the bond graph, we say that an atom   is a \textit{boundary atom} if it is not contained in the interior region of any simple cycle. Otherwise, we call it \emph{bulk atom}.  We denote the union of the boundary atoms   by $\partial X_n$ and let $d = \# \partial X_n$. \EEE A \textit{boundary bond} is a bond containing a boundary atom. All other bonds are called \textit{bulk bonds}. Given $C_n$, we define its \textit{bulk}, denoted by $C_n^{\mathrm{bulk}}$, as the sub-configuration obtained by dropping all boundary atoms (and the corresponding charges). Similarly, the particle positions are indicated by $X_n^{\mathrm{bulk}}$. \EEE With the above definition, we have that the bulk is an \EEE $(n-d)$-atom configuration.   There are  two contributions to the energy of $C_n$, namely $\mathcal{R}^{\mathrm{bnd}}$ and $\mathcal{R}^{\mathrm{bulk}}$, defined by \EEE
\begin{align}\label{eq: bnd2}
\begin{split}
\mathcal{R}^{\mathrm{bulk}}(C_n) & := \mathcal{R}(C_n^{\mathrm{bulk}}), \\ 
\mathcal{R}^{\mathrm{bnd}}(C_n) & := \mathcal{R}(C_n) - \mathcal{R}^{\mathrm{bulk}}(C_n).
\end{split}
\end{align}
 
\begin{remark}[Boundary energy]\label{rem: boundary energy}
{\normalfont
Since $C_n^{\mathrm{bulk}}$ does not contain boundary bonds of  $C_n$,  each boundary bond  $\lbrace x_i,x_j \rbrace$ with $q_i \neq q_j$  contributes $V_{\rm a}(|x_i-x_j|)$ to $\mathcal{R}^{\mathrm{bnd}}(C_n)$, and each boundary bond  $\lbrace x_i,x_j \rbrace$ with $q_i = q_j$  contributes at least $\frac{1}{4} V_{\rm r}(|x_i-x_j|)$ to $\mathcal{R}^{\mathrm{bnd}}(C_n)$. Note that $\mathcal{R}^{\mathrm{bnd}}(C_n)$ contains also pair interactions of certain bulk bonds, namely if the corresponding bulk atoms  are neighbors of   the same boundary atom, see Fig.~\ref{Fig:ConcaveAngle}. We point out that, when the boundary energy in \eqref{eq: bnd2} is defined  with $\mathcal{E}$ in place of $\mathcal{R}$ (see, e.g., \cite{FriedrichKreutzHexagonal, Mainini-Piovano, Mainini}), such pair interactions do not contribute to the boundary energy.  Our definition, slightly different in comparison to  \cite{FriedrichKreutzHexagonal},  is necessary from a technical point of view  in order to derive the `correct' boundary energy estimate  (\ref{BoundaryEnergyEstimate}) and to obtain (\ref{eq: 2})-(\ref{eq: 3}) as necessary conditions for equality in (\ref{BoundaryEnergyEstimate}). 
}
\end{remark} 

\EEE 

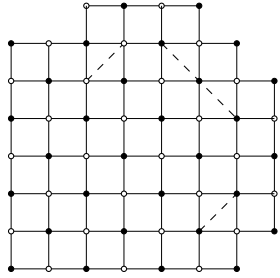
\begin{figure}[hbt!]
\begin{tikzpicture}[scale=0.5]
\draw[ultra thin](0,0) grid (6,6);
\draw[ultra thin](7,1)--++(0,4);
\draw[ultra thin](2,7)--++(3,0);
\draw[ultra thin](7,5)--++(-1,0);
\draw[fill=black](7,5)circle(.07);
\draw[dashed](4,6)--++(2,-2);

\draw[dashed](5,1)--++(1,1);
\draw[dashed](2,5)--++(1,1);
\foreach \k in {0,1}{
\draw[fill=white](2*\k+2,7) circle(.07);
\draw[ultra thin](2*\k+2,7) --++(0,-1);
\draw[fill=black](2*\k+3,7) circle(.07);
\draw[ultra thin](2*\k+3,7) --++(0,-1);

\draw[fill=black](7,2*\k+1) circle(.07);
\draw[ultra thin](7,2*\k+1) --++(-1,0);
\draw[fill=white](7,2*\k+2) circle(.07);
\draw[ultra thin](7,2*\k+2)--++(-1,0);
}

\clip(-.5,-.5) rectangle (6.5,6.5);
\foreach \k in {0,1}{
\foreach \j in {0,...,3}{
\foreach \i in {0,...,3}{
\draw[fill=black](2*\j+\k,2*\i+\k) circle(.07);
\draw[fill=white](2*\j+\k-1,2*\i+\k) circle(.07);
}
}
}

\end{tikzpicture}
\caption{The dashed bonds  contribute to $\mathcal{R}^\mathrm{bnd}$. They would not contribute to the boundary energy if it was defined with $\mathcal{E}$ in place of $\mathcal{R}$. }
\label{Fig:ConcaveAngle}
\end{figure}

%

   \textbf{Maximal polygon:} We introduce an additional notion in the case that $C_n$ is connected and does not contain acyclic bonds. In this case, the bond graph is  delimited by a simple cycle which we call the \emph{maximal polygon}.  We denote the atoms of the maximal polygon  by $\{x_1,\ldots,x_d\}$   and  the interior angle   at  $x_i \in \partial X_n$ by $\theta_i$.  Moreover, for  $2 \le k \le 4$,  we indicate by
\begin{align*}
&I_k=\{  x_i \in \partial X_n: \EEE \#\mathcal{N}(x_i)=k\}
\end{align*}
the set of  $k$-bonded  boundary atoms.  For ground states $C_n$ there holds  $\# I_2 + \# I_3 +\#I_4 = d$ by \eqref{NeighbourhoodCard}.   

 We now provide an estimate for the boundary energy  $\mathcal{R}^{\rm bnd}$.  Its proof is inspired by \cite[Lemma 3.1]{Mainini-Piovano}. The precise estimates, however, deviate significantly from   \cite{Mainini-Piovano}   due to  the presence of the repulsive potential $V_{\rm r}$ instead of an angular potential.  We defer the proof to Appendix \ref{appendix}.

\begin{lemma}\label{LemmaBoundaryEnergy}Let $n \geq 4$ and let $C_n$ be a connected ground state with no acyclic bonds.   Then
\begin{align}\label{BoundaryEnergyEstimate}
\mathcal{R}^{\mathrm{bnd}}(C_n) \geq -2d +4
\end{align}
with equality only if the following conditions are satisfied:
\begin{align}
& \text{All boundary bonds are of unit length},\label{eq: 1}\\
&\#I_2 + 2\#I_3 +3 \#I_4 = 2d -4,\label{eq: 2}\\
&\theta_i = \frac{\pi}{2} \ \text{ for } x_i \in I_2, \ \ \ \  \theta_i = \pi \ \text{ for } x_i \in I_3, \ \ \ \  \theta_i = \frac{3\pi}{2} \ \text{ for } x_i \in I_4.\label{eq: 3}
\end{align}
\end{lemma}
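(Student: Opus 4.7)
The plan is to follow the boundary-energy strategy of \cite{Mainini-Piovano}, but replacing angular terms by the repulsive interaction $V_{\mathrm{r}}$ via the slope condition [viii] and the convexity assumption [v]. Since $C_n$ has no acyclic bonds, its bond graph is delimited by the maximal polygon $\{x_1,\ldots,x_d\}\subset\partial X_n$, and the sum of the interior angles equals $(d-2)\pi$. The idea is to distribute $\mathcal{R}^{\mathrm{bnd}}(C_n)$ as a sum of local contributions at the boundary atoms, estimate each of them from below in terms of the coordination number and the interior angle, and close the estimate via the polygonal identity $\sum_i(\pi-\theta_i)=2\pi$.

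First I would identify which pair interactions actually enter $\mathcal{R}^{\mathrm{bnd}}(C_n)=\mathcal{R}(C_n)-\mathcal{R}(C_n^{\mathrm{bulk}})$. By the definition \eqref{def:cell energy}--\eqref{def:renormalized energy}, the $V_{\mathrm{a}}$-part of $\mathcal{R}^{\mathrm{bnd}}$ counts exactly the bonds with at least one boundary endpoint, while the $V_{\mathrm{r}}$-part counts the pairs of neighbors around any boundary atom together with the pairs of neighbors around a bulk atom provided at least one member of the pair is itself a boundary atom (this matches Remark \ref{rem: boundary energy} and Fig.~\ref{Fig:ConcaveAngle}). In particular all these $V_{\mathrm{r}}$-contributions are non-negative by [iv], so dropping the bulk-atom contributions only weakens the estimate and I may restrict to the local sum $\sum_{x_i\in\partial X_n}V_{\mathrm{atom}}(x_i)$ plus the boundary--bulk $V_{\mathrm{a}}$ bonds counted with their full weight.

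Next, for each boundary atom $x_i\in I_k$ I would derive an atom-wise bound of the form $V_{\mathrm{atom}}(x_i)\ge -\tfrac{k}{2}+A_k(\theta_i)$, where $A_k(\theta_i)\ge 0$ is a non-negative "angular" penalty that vanishes precisely at the canonical angles $\pi/2,\pi,3\pi/2$. The $V_{\mathrm{a}}$-part contributes $\ge -\tfrac{k}{2}$ by [ii] with equality only for unit bonds; the slope condition [viii] is used to absorb the excess $V_{\mathrm{a}}(r)-V_{\mathrm{a}}(1)$ for $r\in(1,r_0]$ by the contributions of $V_{\mathrm{r}}$ around $x_i$. For the angular dependence I use that if two neighbors of $x_i$ (both with unit bonds) enclose an angle $\alpha$, then their distance is $2\sin(\alpha/2)$; for $\alpha<\pi/2$ this is strictly less than $\sqrt 2$, so [vii] and [v] together with the bound $V'_{\mathrm{r},-}(\sqrt 2)<-\tfrac{16}{\sqrt 2\pi}$ from [viii] yield a positive penalty linear in $(\pi/2-\alpha)$. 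Running this through the three cases $k=2,3,4$ and doing some trigonometric bookkeeping produces the advertised $A_k(\theta_i)$, with equality iff all bonds are of unit length and all bond angles at $x_i$ are canonical.

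Summing the atom-wise bounds over $\partial X_n$, I would get $\mathcal{R}^{\mathrm{bnd}}(C_n)\ge -\tfrac12\sum_{k,x_i\in I_k}k+\sum_i A_{k(i)}(\theta_i)$. After adding back the boundary--bulk $V_{\mathrm{a}}$ bonds (needed because their boundary endpoint only sees them with weight $1/2$) and re-indexing so that each boundary bond is counted once, the $V_{\mathrm{a}}$ total becomes $-\,b_{\mathrm{bnd}}=-(d_2+2d_3+3d_4)$, using that the $d$ maximal-polygon edges are boundary--boundary. Combining with the angular penalties and the polygon identity $\sum_i(\pi-\theta_i)=2\pi$ yields $\mathcal{R}^{\mathrm{bnd}}(C_n)\ge -(d_2+2d_3+3d_4)+2(d_2-d_4)-4+\tfrac12\sum_i A_{k(i)}(\theta_i)\cdot 0$-type cancellations simplifying to $-2d+4$, once the identity $d=d_2+d_3+d_4$ is used. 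Tracing equality through all the inequalities forces \eqref{eq: 1}, \eqref{eq: 2} (equivalent to $d_2-d_4=4$, the $\pi/2$-lattice version of the Gauss--Bonnet condition), and \eqref{eq: 3}.

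The main obstacle I foresee is the quantitative matching in step three: obtaining the correct constant in $A_k(\theta_i)$ so that, after summing and using $\sum_i\theta_i=(d-2)\pi$, the angular penalties neither overshoot nor fall short of the target $-2d+4$. This is precisely where the sharp slope bound in [viii] must be calibrated against the factors $1/2$ and $1/8$ in $V_{\mathrm{atom}}$, and where the non-standard definition of $\mathcal{R}^{\mathrm{bnd}}$ (keeping the bulk--bulk pairs around a boundary atom, cf.\ Remark \ref{rem: boundary energy}) is essential: those pair terms are what guarantee the angular penalty has the right sign at concave boundary atoms ($x_i\in I_4$, $\theta_i=3\pi/2$), an issue that cannot be handled by working with $\mathcal{E}$ in place of $\mathcal{R}$.
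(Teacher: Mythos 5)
Your architecture is essentially the paper's: reduce to unit bond lengths via the $V_{\mathrm a}$-slope condition in [viii], extract an angular penalty from the $V_{\mathrm r}$-pairs retained in $\mathcal{R}^{\mathrm{bnd}}$ (including the bulk--bulk pair around a $4$-bonded boundary atom, which is indeed the reason for defining the boundary energy through $\mathcal R$), and close with the angle-sum identity $\sum_i\theta_i=\pi(d-2)$. However, the decisive quantitative step is exactly the one you defer to ``trigonometric bookkeeping,'' and the combination you do write down is not correct: $-(\#I_2+2\#I_3+3\#I_4)+2(\#I_2-\#I_4)-4$ does not equal $-2d+4$, and adding non-negative penalties cannot repair a lower bound that is already too low. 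The real difficulty is that $\#I_2+2\#I_3+3\#I_4$ may exceed $2d-4$, so the bare $V_{\mathrm a}$-count undershoots the target and the $V_{\mathrm r}$-penalties must supply precisely the deficit. The clean way to close this (equivalent to the paper's Jensen-plus-case-distinction argument on the average angle) is the tangent-line bound $\tfrac14 V_{\mathrm r}\big(2\sin(\alpha/2)\big)\ge 1-\tfrac{2\alpha}{\pi}$ for every sub-angle $\alpha$: for $\alpha\ge\pi/2$ the left-hand side vanishes by [vii] while the right-hand side is $\le 0$, and for $\alpha<\pi/2$ it follows from convexity of $V_{\mathrm r}$, concavity of $\sin$, and $V'_{\mathrm{r},-}(\sqrt2)<-16/(\sqrt2\,\pi)$. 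Since a $k$-bonded boundary atom carries $k-1$ sub-angles, there are $\#I_2+2\#I_3+3\#I_4$ of them summing to $\pi(d-2)$, so the penalties total at least $(\#I_2+2\#I_3+3\#I_4)-2d+4$, which added to $-(\#I_2+2\#I_3+3\#I_4)$ gives $-2d+4$. This is the calibration you flagged but did not carry out.

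Second, your description of the equality mechanism is off: the angular penalty does \emph{not} vanish ``precisely at the canonical angles'' --- by [vii] it vanishes whenever every sub-angle lies in $[\pi/2,3\pi/2]$, a whole interval. Hence \eqref{eq: 3} cannot be read off atom by atom. In the correct argument, equality first forces unit boundary bonds and every sub-angle $\ge\pi/2$ (so $\theta_i\ge(k-1)\tfrac{\pi}{2}$ for $x_i\in I_k$) and forces \eqref{eq: 2}, since $\#I_2+2\#I_3+3\#I_4<2d-4$ already yields the strict bound $\mathcal{R}^{\mathrm{bnd}}(C_n)\ge-(\#I_2+2\#I_3+3\#I_4)>-2d+4$; only then does $\sum_i\theta_i=\pi(d-2)=\tfrac{\pi}{2}(\#I_2+2\#I_3+3\#I_4)$ pin every sub-angle to exactly $\pi/2$, which is \eqref{eq: 3}. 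With these two repairs your proposal coincides with the paper's proof.
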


\begin{remark}\label{rem: bdy}
{\normalfont  
 Observe by  \eqref{eq: 2} that equality in \eqref{BoundaryEnergyEstimate} implies that $2d-4$ bonds contribute to the boundary energy. Thus, equality in \eqref{BoundaryEnergyEstimate} together with [ii] and [vii] imply that for   all boundary atoms $x_i$ one has $\min\lbrace |x_i  - x_j|: \  j \in \lbrace 1, \ldots, n \rbrace, j\neq i, \ q_j = q_i \rbrace \ge \sqrt{2}$.  Furthermore, note that \eqref{eq: 1}-\eqref{eq: 3} together with \eqref{def:renormalized energy}-\eqref{eq: bnd2} imply that for ground-state configurations $C_n$ such that there holds equality in \eqref{BoundaryEnergyEstimate} the atoms
 \begin{align}\label{eq: remark-equ} 
 \partial X_n \cup \bigcup\nolimits_{i=1}^d \mathcal{N}(x_i)
 \end{align}
 are subset of the same square lattice.  We point out that each $x_i \in I_k,k=3,4$, has exactly $k-2$ neighbors in $X_n^\mathrm{bulk}$ and that the interior bond angles are exactly $\pi/2$.}
\end{remark}

 Recall the excess of edges $\eta= \sum_{j\geq 4} (j-4)   f_j$, introduced in \eqref{Excess}, where $f_j$ denotes the number
of  elementary  polygons with $j$ vertices in the bond graph. Clearly, $\eta = 0$ if   and   only if the bond graph consists of squares only. We also recall that $b$ denotes the number of bonds in the bond graph.

\begin{lemma}[Cardinality of the bulk]\label{LemmaBoundaryestimate}  Let $C_n$ be a connected ground state with no acyclic bonds.   Then
\begin{align*}
n-d = 2b + 4 - 3n +\eta.
\end{align*}
\end{lemma}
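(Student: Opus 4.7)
The plan is to combine Euler's formula for planar graphs with a face–edge double-counting argument tailored to the structure of the bond graph. Recall that by the discussion in Section~\ref{sec: notions}, the bond graph of any finite-energy configuration is planar, and under the present hypotheses (connected, no acyclic bonds) it is a $2$-connected plane graph whose bounded faces are precisely the elementary polygons and whose unbounded face is bordered by the maximal polygon.

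First, I would invoke Euler's formula. Writing $F_{\rm in}=\sum_{j\geq 4} f_j$ for the number of bounded faces (elementary polygons; the lower bound $j\geq 4$ is justified by Lemma~\ref{RemarkPolygon}(a), since ground states have alternating charge distribution and hence only even-length cycles), one obtains
\begin{equation*}
n - b + (F_{\rm in}+1) = 2, \qquad \text{i.e.,} \qquad F_{\rm in} = b - n + 1.
\end{equation*}

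Second, I would carry out the face–edge count. Since $C_n$ has no acyclic bonds, every bond lies on the boundary of at least one elementary polygon. The $d$ bonds of the maximal polygon each lie on exactly one elementary polygon (the unbounded face accounts for the other side), while each of the remaining $b-d$ bulk bonds is shared by exactly two elementary polygons. Summing the vertex counts of the elementary polygons therefore yields
\begin{equation*}
\sum_{j\geq 4} j\, f_j = 2(b-d) + d = 2b - d.
\end{equation*}
Combined with the definition $\eta = \sum_{j\geq 4} (j-4)f_j = \sum_{j\geq 4} j f_j - 4 F_{\rm in}$, this gives
\begin{equation*}
F_{\rm in} = \frac{2b - d - \eta}{4}.
\end{equation*}

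Equating the two expressions for $F_{\rm in}$ and solving for $d$ yields $d = 4n - 2b - 4 - \eta$, whence
\begin{equation*}
n - d = 2b + 4 - 3n + \eta,
\end{equation*}
as claimed. The argument is essentially a one-line Euler calculation; the only point requiring care is the justification that every bond bounds a bounded face (so that the double count $2(b-d)+d$ is correct), which follows directly from the absence of acyclic bonds. No estimate on bond lengths or bond angles is used in this step.
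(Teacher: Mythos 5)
Your proof is correct and follows essentially the same route as the paper's: the identity $\sum_{j\geq 4} jf_j = 2b-d$ from counting bonds of the maximal polygon once and bulk bonds twice, the definition of $\eta$ to get $4f = 2b-d-\eta$, and Euler's formula $n-b+f=1$. The only difference is cosmetic — you spell out the justification that every bond borders a bounded face, which the paper leaves implicit.
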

\begin{proof}Let $f_j$ be the number elementary $j$-gons in the bond graph and let $f$ be the number of elementary polygons in the bond graph.  From Lemma \ref{RemarkPolygon}(a) we obtain  
\begin{align*}
\sum_{j\geq 4} j f_j = 2b-d, 
\end{align*} 
since by the summation on the left all bonds contained in the maximal polygon are counted
only once whereas all other bonds are counted twice.  By (\ref{Excess}) we get $4f = 2b-d -\eta$.  This along \EEE with Euler's formula $n-b+f=1$ (omitting the exterior face) yields $n-d = 2b+4 - 3n +\eta$.
\end{proof}

\subsection{Energetic characterization}\label{subsection:energetic characterization}
 We start with the proof of Theorem \ref{TheoremGroundstates}. \EEE We use the following properties of the function $\beta$ defined in \eqref{def: beta}. \EEE

\begin{lemma}\label{LemmaPropertiesbeta} The function $\beta : \mathbb{N} \to \mathbb{R}$ satisfies

\begin{itemize}
\item[(i)] $\lfloor\beta(n-m)\rfloor + \lfloor \beta(m)\rfloor +1 < \lfloor\beta(n)\rfloor$ for all $n \geq 4$ and $1\leq m,n-m \leq n$, 
\item[(ii)] $\lfloor \beta(n)\rfloor +1 \leq \lfloor\beta(n+1)\rfloor$ for all $n\geq 1$.
\end{itemize}
\end{lemma}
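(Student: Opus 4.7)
Proof plan. Both assertions reduce to controlling the smooth function $\beta(k) = 2k - 2\sqrt k$ analytically and then transferring the estimate through the floor. It is convenient in both cases to use the identity $\lfloor 2k - 2\sqrt k \rfloor = 2k - \lceil 2\sqrt k\rceil$, which converts the statements to bounds on $\lceil 2\sqrt\cdot\rceil$.

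For (ii), I would simply compute $\beta(n+1) - \beta(n) = 2 - 2(\sqrt{n+1} - \sqrt n) = 2 - \tfrac{2}{\sqrt{n+1}+\sqrt n}$. Since $n\ge 1$ gives $\sqrt{n+1}+\sqrt n \ge 1 + \sqrt 2 > 2$, we obtain $\beta(n+1) > \beta(n) + 1$. Monotonicity of the floor combined with the identity $\lfloor x+1\rfloor = \lfloor x\rfloor + 1$ then yields $\lfloor \beta(n+1)\rfloor \ge \lfloor \beta(n)\rfloor + 1$. Equivalently, via the reformulation above, one checks $\lceil 2\sqrt{n+1}\rceil - \lceil 2\sqrt n\rceil \le 1$, which holds because $2\sqrt{n+1}-2\sqrt n < 1$ for $n\ge 1$.

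For (i), the key algebraic identity is $(\sqrt m + \sqrt{n-m})^2 = n + 2\sqrt{m(n-m)}$, which gives
\[
\beta(n) - \beta(m) - \beta(n-m) \;=\; 2\bigl(\sqrt m + \sqrt{n-m} - \sqrt n\bigr) \;=\; \frac{4\sqrt{m(n-m)}}{\sqrt m + \sqrt{n-m} + \sqrt n} \;>\; 0.
\]
Equivalently, via the ceiling reformulation, the claim becomes
\[
\lceil 2\sqrt m\rceil + \lceil 2\sqrt{n-m}\rceil \;\ge\; \lceil 2\sqrt n\rceil + 2.
\]
I would argue this in two regimes. For sufficiently large $n$ (say $n\ge n_0$ with $n_0$ explicit), the analytic gap above strictly exceeds $3$, so since each floor differs from $\beta$ by less than $1$, a total loss of at most $2$ from the floor operations still leaves the inequality $\lfloor\beta(n)\rfloor - \lfloor\beta(m)\rfloor - \lfloor\beta(n-m)\rfloor \ge 2$ intact. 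Symmetry $m \leftrightarrow n-m$ together with monotonicity of $\sqrt m + \sqrt{n-m}$ in $m$ on $[1,n/2]$ reduces the verification to the worst-case endpoint of the admissible range.

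The main obstacle will be the tightness of the floor estimates in the borderline regime where $n$ is small and $\min(m,n-m)$ is close to its lower bound: here the analytic gap $\beta(n)-\beta(m)-\beta(n-m)$ sits in the narrow window where a loss of up to $2$ from the floor rounding would be fatal. I would dispatch this regime by an explicit finite case analysis for $n$ below the threshold $n_0$, using the exact values $\lfloor\beta(k)\rfloor = 2k-\lceil 2\sqrt k\rceil$ tabulated for small $k$. Combined with the general estimate for $n\ge n_0$, this completes the proof of (i).
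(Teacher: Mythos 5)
Your argument for (ii) is correct. For (i), however, there is a genuine gap, and it sits exactly at the worst case that your own monotonicity reduction singles out. Taking $m\le n-m$, the analytic gap $\beta(n)-\beta(m)-\beta(n-m)=2\bigl(\sqrt m+\sqrt{n-m}-\sqrt n\bigr)$ is minimized at $m=1$, where it equals $2-\tfrac{2}{\sqrt n+\sqrt{n-1}}<2$ for \emph{every} $n$; at $m=2$ it equals $2\sqrt2-\tfrac{4}{\sqrt n+\sqrt{n-2}}<2\sqrt2<3$. So the claim that the gap ``strictly exceeds $3$ for $n\ge n_0$'' is false at the reduced endpoint uniformly in $n$: the problematic regime is not ``$n$ small'' but ``$\min(m,n-m)\in\{1,2\}$ with $n$ arbitrary'', which no finite case check for $n<n_0$ can cover.

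Worse, the case $m=1$ is not merely out of reach of your estimate: the inequality itself fails there for infinitely many $n$. For $n=5$, $m=1$ one has $\lfloor\beta(4)\rfloor+\lfloor\beta(1)\rfloor+1=4+0+1=5=\lfloor\beta(5)\rfloor$, and via your ceiling reformulation the same equality occurs whenever $\lceil 2\sqrt n\,\rceil>\lceil 2\sqrt{n-1}\,\rceil$ (e.g.\ $n=7,10,\dots$); combinatorially, a $2\times2$ square with one pendant atom attached by a single bond already realizes the maximal bond number for $n=5$, which is why ground states may contain a flag (cf.\ Fig.~\ref{Fig:flexible}). Thus (i) can only hold with the range restricted to $2\le m\le n-2$, which is how it is used (the $m=1$/flag case being handled through (ii)); note the paper does not prove the lemma but cites \cite{Mainini-Piovano}. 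With that restriction your strategy does go through, but with the correct threshold: since the floor is lost on a single term, $\lfloor\beta(n)\rfloor-\lfloor\beta(m)\rfloor-\lfloor\beta(n-m)\rfloor>\beta(n)-\beta(m)-\beta(n-m)-1$, so an analytic gap of at least $2$ (not $3$) suffices for the integer inequality $\ge 2$. One checks $2(\sqrt m+\sqrt{n-m}-\sqrt n)\ge 2$ for all $3\le m\le n-3$ and for $m=2$, $n\ge 7$, leaving only finitely many pairs to verify by hand.
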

\begin{proof}
 See \cite[Proposition 4.1, Proposition 4.2]{Mainini-Piovano}. 
\end{proof}

\begin{lemma}\label{lemma: Wurzel}
Let $j,n,m \in \mathbb{N}$ and let $x \in \mathbb{R}$ satisfy
\begin{align*}
2- \frac{3}{2}n+ \frac{m}{2}\geq x \geq -2n +j +2\sqrt{-2x+4-3n +m}.
\end{align*}
Then $x \geq -2n +j-4 + 2\sqrt{m+n+8-2j}$.
\end{lemma}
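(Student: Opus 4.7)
The plan is to eliminate the square root by squaring, after verifying that both sides are non-negative. First I would set $y := -2x + 4 - 3n + m$. The upper bound $x \le 2 - \tfrac{3}{2}n + \tfrac{m}{2}$ is equivalent to $y \ge 0$, so the radical in the hypothesis is well-defined, and the lower bound from the hypothesis becomes
\[
x + 2n - j \;\ge\; 2\sqrt{y} \;\ge\; 0 .
\]
In particular, $z := x + 2n - j$ is non-negative, which legitimizes squaring.

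Squaring the inequality $z \ge 2\sqrt{y}$ and substituting back the definition of $y$ yields
\[
z^{2} \;\ge\; 4y \;=\; -8x + 16 - 12n + 4m \;=\; -8z + 4n + 4m + 16 - 8j ,
\]
where in the last step I rewrote $x = z - 2n + j$. Completing the square on the left then gives
\[
(z+4)^{2} \;\ge\; 4\bigl(n + m + 8 - 2j\bigr) .
\]

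Since $z + 4 \ge 4 > 0$, I may take the non-negative square root of both sides (the conclusion of the lemma implicitly requires $n+m+8-2j \ge 0$, as otherwise the bound is trivial), obtaining $z + 4 \ge 2\sqrt{n+m+8-2j}$. Substituting back $z = x + 2n - j$ and rearranging delivers the claim
\[
x \;\ge\; -2n + j - 4 + 2\sqrt{m + n + 8 - 2j} .
\]

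There is no real obstacle here; the only subtle point is the sign bookkeeping that makes the squaring reversible, and this is ensured by the first hypothesis (giving $y \ge 0$) together with the non-negativity of $2\sqrt{y}$ (giving $z \ge 0$). The remainder is algebraic manipulation.
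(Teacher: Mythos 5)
Your proof is correct. It takes a mildly different route from the paper: the paper's proof is a one-line monotonicity argument, observing that $x \mapsto x + 2n - j - 2\sqrt{-2x+4-3n+m}$ is strictly increasing on the domain where the radicand is non-negative and vanishes exactly at $x_0 = -2n+j-4+2\sqrt{m+n+8-2j}$, so that the hypothesis $f(x)\ge 0 = f(x_0)$ forces $x \ge x_0$. You instead square out the radical and complete the square in $z = x+2n-j$. The two arguments encode the same algebra, but yours makes the sign bookkeeping fully explicit (you verify $y\ge 0$ and $z\ge 0$ before squaring, and $z+4>0$ before extracting the root), whereas the paper's version hides an analogous subtlety inside the claim that $f(x_0)=0$ — verifying that requires evaluating $2\sqrt{-2x_0+4-3n+m} = 2\lvert\sqrt{m+n+8-2j}-2\rvert$ and resolving the absolute value, which is only unambiguous when $m+n+4\ge 2j$ (true in all applications, where $j\le 6$). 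Your remark that $m+n+8-2j\ge 0$ is implicitly needed for the conclusion to make sense is apt and applies equally to the paper's formulation.
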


\begin{proof} The proof is elementary: we note that the function
\begin{align*}
x \mapsto x+2n -j -2\sqrt{-2x+4-3n +m}
\end{align*}
is strictly increasing and vanishes for $x = -2n +j-4 + 2\sqrt{m+n+8-2j} $.
\end{proof}

\begin{proof}[Proof of  Theorem \ref{TheoremGroundstates}] We start by noting that every ground state $C_n$ has alternating charge distribution   and every atom has at most four neighbors   by  Lemma \ref{LemmaNeighborhood}.   By Proposition \ref{PropositionDaisy}  and  Lemma \ref{lemma: relation} the ground-state energy satisfies
\begin{align}\label{Groundstateineq1}
\mathcal{R}(C_n) \leq \mathcal{E}(C_n) \leq -\lfloor \beta(n)\rfloor.
\end{align}
We prove that, if $C_n$ is a ground state, then $C_n$ is connected,  does not contain bridges, and satisfies   
 \begin{align}\label{eq : GroundstateineqR}
 \mathcal{R}(C_n) = -b = -\lfloor\beta(n)\rfloor.
 \end{align}
 The   same statement   then  holds also for $\mathcal{E}(C_n)$.  In fact,  due to \eqref{Groundstateineq1}-\eqref{eq : GroundstateineqR}, \EEE we have  
$$-\lfloor \beta(n)\rfloor =  \mathcal{R}(C_n) \leq \mathcal{E}(C_n) \leq -\lfloor\beta(n)\rfloor.$$
 This implies $\mathcal{E}(C_n)= \mathcal{R}(C_n)=-b = -\lfloor\beta(n)\rfloor$.

  Let $n \ge 4$.   We proceed by induction.   Suppose that the statement has been proven for all $m < n$ (for $1 \le m \le 3$ this is elementary).  We first show connectedness of the ground state  and the non-existence of bridges (Step 1). Afterwards, we prove  the energy equality (Step 2).

\emph{Step 1:  $C_n$ is connected and does not contain bridges.}  Assume by contradiction that $C_n$ consists of two sub-configurations $C_m$ and $C_{n-m}$ that are connected by at most one bond. The energy contribution of this bond, if it exists, is greater or equal to $-1$. Apart from that,  we can estimate the sum of the energy contributions of both components separately. In both cases,  as  $m,n-m <n$ $n\geq 4$, using  the induction assumption, Lemma \ref{LemmaPropertiesbeta}(i), and $V_{\rm r} \ge 0$ (cf. [iv]),   we get
\begin{align*}
\mathcal{R}(C_n)  \ge \mathcal{R}(C_m) + \mathcal{R}(C_{n-m})  - 1  \ge  -\lfloor \beta(m)\rfloor - \lfloor \beta(n-m)\rfloor -1 > -\lfloor \beta(n)\rfloor.
\end{align*}
This contradicts (\ref{Groundstateineq1}) and shows that $C_n$ is connected and its bond graph does not contain any bridges.

 \emph{Step 2: Energy equality $\mathcal{R}(C_n) = -b = -\lfloor \beta(n) \rfloor$.}     We divide the proof into three steps. We first treat the case that $C_n$ contains acyclic bonds  (Step 2.1).   Afterwards, we consider only configurations $C_n$ without acyclic bonds and show $\mathcal{R}(C_n) = -b$ (Step 2.2) as well as $\mathcal{R}(C_n) = -\lfloor \beta(n) \rfloor$ (Step 2.3). 

\noindent\emph{Step 2.1: $C_n$ contains acyclic bonds.}  By Step 1, $C_n$ does not contain bridges.  If there exist flags, we can find an atom $x_i$ such that removing $x_i$ removes exactly one flag.  We can count the energy contribution of  this  flag by at least $-1$ and we estimate the energy of the rest of the configuration   by induction. By Lemma \ref{LemmaPropertiesbeta}(ii) we get  
\begin{align*}
\mathcal{R}(C_n) \ge  -1 + \mathcal{R}(C_n \setminus \lbrace (x_i,q_i)\rbrace)   \geq -1 -\lfloor \beta(n-1)\rfloor \geq -\lfloor \beta(n)\rfloor.
\end{align*}
 Equality also shows that $C_n \setminus \lbrace (x_i,q_i)\rbrace$ has $\lfloor \beta(n-1) \rfloor$ bonds by induction and $C_n$ has  $\lfloor \beta(n-1) \rfloor+1 = \lfloor \beta(n) \rfloor$ bonds.

\noindent\emph{Step 2.2: $\mathcal{R}(C_n) = -b$ for connected $C_n$ with no acyclic bonds.} Assume by contradiction that $\mathcal{R}(C_n) > -b$, i.e.,  there exist $x_1,x_2 \in X_n$ such that $q_1=q_2$ and $|x_1-x_2|<\sqrt{2}$  or there exists a bond between $x_1,x_2\in X_n$  such that  $q_1=-q_2$ and $|x_1-x_2| >1$. Now if $x_1 \in \partial X_n$ or $x_2 \in \partial X_n$ we have by   (\ref{BoundaryEnergyEstimate}),  \eqref{eq: 1},  and Remark \ref{rem: bdy}  
\begin{align*}
\mathcal{R}^{\mathrm{bnd}}(C_n) > -2d + 4.
\end{align*}
Moreover, by \eqref{eq: bnd2}, the  induction hypothesis, and \eqref{eq : GroundstateineqR} there holds \EEE
\begin{align*}
\mathcal{R}^{\mathrm{bulk}}(C_n)  =  \mathcal{R}(C_n^{\mathrm{bulk}})  \geq -\lfloor \beta(n-d)\rfloor.
\end{align*}
 On the other hand, if  $x_1, x_2 \notin \partial X_n$,  by Lemma \ref{LemmaBoundaryEnergy} we get $\mathcal{R}^{\mathrm{bnd}}(C_n) \geq -2d + 4$.  By \eqref{eq: bnd2} and the induction assumption we obtain  
\begin{align*}
 \mathcal{R}^{\mathrm{bulk}}(C_n) = \mathcal{R}(C_n^{\mathrm{bulk}}) >  -\lfloor \beta(n-d)\rfloor.
\end{align*}
In both cases,  by \eqref{eq: bnd2} and  \eqref{def: beta} there holds  $\mathcal{R}(C_n) > -\lfloor 2n -2\sqrt{(n-d)}\rfloor +4$. Since the right hand side is an integer, we obtain
\begin{align}\label{Integerinequality1}
-(\lfloor -\mathcal{R}(C_n)\rfloor +1) \geq -2n +2\sqrt{(n-d)} + 4.
\end{align}
 In a similar fashion, the assumption $\mathcal{R}(C_n) > -b$  implies $-\lfloor-\mathcal{R}(C_n)\rfloor -1   \geq -b$. Now by Lemma \ref{LemmaBoundaryestimate}   we obtain $n-d \geq 2 (\lfloor-\mathcal{R}(C_n)\rfloor+1) + 4 -3n$. This along with   (\ref{Integerinequality1})  yields
\begin{align*}
-(\lfloor -\mathcal{R}(C_n)\rfloor +1) \geq -2n +2\sqrt{2(\lfloor-\mathcal{R}(C_n)\rfloor+1) + 4 -3n} + 4.
\end{align*}
 Note that  $-  (\lfloor -\mathcal{R}(C_n)\rfloor +1) \leq 2- 3n/2  $. 
 By using Lemma \ref{lemma: Wurzel} with $j =4,m=0$, and $x=  -  (\lfloor -\mathcal{R}(C_n)\rfloor +1)$ we obtain
\begin{align*}
-(\lfloor -\mathcal{R}(C_n)\rfloor +1) \geq -2n + 2\sqrt{n}.
\end{align*}
 Since  the left hand side is an integer, we get  by \eqref{def: beta} 
\begin{align*}
\mathcal{R}(C_n) > -(\lfloor -\mathcal{R}(C_n)\rfloor +1) \geq -\lfloor 2n - 2\sqrt{n}\rfloor  = - \lfloor \beta(n) \rfloor. 
\end{align*}
 This contradicts  \eqref{Groundstateineq1}.

\noindent\emph{Step 2.3: $\mathcal{R}(C_n) = -\lfloor \beta(n) \rfloor$ for connected $C_n$ with no acyclic bonds.} 
  Due to (\ref{Groundstateineq1}), it suffices to prove $\mathcal{R}(C_n) \geq -\lfloor \beta(n)\rfloor$.  We again   proceed by induction.   By Lemma \ref{LemmaBoundaryEnergy}, \eqref{eq: bnd2},  and the induction hypothesis  we obtain 
\begin{align*}
\mathcal{R}^{\mathrm{bnd}}(C_n) \geq -2d+4, \ \ \ \ \ \  \mathcal{R}^{\mathrm{bulk}}(C_n)= \mathcal{R}(C_n^{\mathrm{bulk}}) \geq -\lfloor \beta(n-d)\rfloor.
\end{align*}
 By   \eqref{def: beta}  and \eqref{eq: bnd2} there holds  
$ \mathcal{R}(C_n) \geq -2n +2\sqrt{n-d} + 4$. By Lemma \ref{LemmaBoundaryestimate} and  Step 2.2   we obtain $n-d \geq  - 2 \mathcal{R}(C_n)  + 4 -3n$.
This yields
\begin{align*}
\mathcal{R}(C_n) \geq  -2n + 2\sqrt{-2\mathcal{R}(C_n)-3n +4}  + 4.
\end{align*}
By applying Lemma \ref{lemma: Wurzel} with $j=4$, $m=0$,  and $x = \mathcal{R}(C_n)$   we obtain $\mathcal{R}(C_n) \geq -\beta(n)$. Finally, since $\mathcal{R}(C_n)$ is an integer due to Step 2.2,  we conclude $\mathcal{R}(C_n) \ge -\lfloor \beta(n) \rfloor$. \EEE 
\end{proof}

\subsection{Geometric Characterization}\label{subsection:geometric characterization}

 We now proceed with the proof of Theorem \ref{theorem : GeometryGroundstatewithout}. To this end, we first prove two lemmas about flags and  non-equilibrated atoms. 

\begin{lemma}[Flags]\label{lemma:flags} Let $n \geq 4$ and let $C_n$ be a ground state. Then the bond graph of $C_n$ contains at most one flag.
\end{lemma}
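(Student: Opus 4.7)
The plan is to argue by contradiction: assuming the bond graph of the ground state $C_n$ contains two flags, I will remove two leaf atoms in succession and obtain a sub-configuration whose energy violates the lower bound $-\lfloor\beta(n-2)\rfloor$ coming from Theorem \ref{TheoremGroundstates}, the contradiction ultimately being supplied by the super-additivity estimate in Lemma \ref{LemmaPropertiesbeta}(i) specialized to $m=2$.

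First, I would observe that every degree-$1$ atom (leaf) is incident to a flag bond, since its single bond is acyclic and cannot be a bridge (the leaf side contains no cycle), and conversely any tree hanging off the cyclic part of the bond graph contains at least one leaf. Hence, given two flag bonds, I select a leaf $y_1$, remove it to obtain $C'_{n-1}:=C_n\setminus\{(y_1,q_{y_1})\}$, and still find a leaf $y_2$ in $C'_{n-1}$: removing a leaf neither creates cycles nor puts any other bond onto a new simple path between two distinct cycles, so the remaining flag bonds of $C_n$ are still flags in $C'_{n-1}$, and at least one of them exposes a leaf.

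Next, I would verify that each removal decreases both the number of bonds and the energy by exactly one. By Theorem \ref{TheoremGroundstates}, $C_n$ is connected, repulsion-free, and every bond has unit length, so the flag bond incident to $y_1$ contributes exactly $V_{\mathrm{a}}(1)=-1$ by [ii]. Any atom $x\notin\mathcal{N}(y_1)$ satisfies $|x-y_1|>r_0$, hence $V_{\mathrm{a}}(|x-y_1|)=0$ by [iii]; the repulsion-free property gives $|x-y_1|\ge\sqrt{2}$ whenever $q_x=q_{y_1}$, hence $V_{\mathrm{r}}(|x-y_1|)=0$ by [vii]. The analogous analysis applies to $y_2$, which is still a leaf in the repulsion-free, unit-bond configuration $C'_{n-1}$. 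Therefore $\mathcal{E}\bigl(C_n\setminus\{(y_1,q_{y_1}),(y_2,q_{y_2})\}\bigr)=\mathcal{E}(C_n)+2=-\lfloor\beta(n)\rfloor+2$.

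Finally, applying Theorem \ref{TheoremGroundstates} to the resulting $(n-2)$-atom configuration yields $\mathcal{E}\bigl(C_n\setminus\{(y_1,q_{y_1}),(y_2,q_{y_2})\}\bigr)\ge -\lfloor\beta(n-2)\rfloor$, whence $\lfloor\beta(n)\rfloor\le\lfloor\beta(n-2)\rfloor+2$. On the other hand, Lemma \ref{LemmaPropertiesbeta}(i) with $m=2$ gives $\lfloor\beta(n)\rfloor>\lfloor\beta(n-2)\rfloor+\lfloor\beta(2)\rfloor+1=\lfloor\beta(n-2)\rfloor+2$, the desired contradiction. I expect the only mildly subtle step to be the extraction of a second leaf after the first removal; the remaining parts are routine consequences of the ground-state structure (alternating charge, unit bonds, repulsion-freeness, no bridges) and the quantitative properties of $V_{\mathrm{a}}$ and $V_{\mathrm{r}}$.
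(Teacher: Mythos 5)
Your proposal is correct and follows essentially the same route as the paper: assume two flags, remove two atoms to delete the two flag bonds at a total energy cost of $2$, and contradict the ground-state energy identity by comparing $\lfloor\beta(n)\rfloor$ with $\lfloor\beta(n-2)\rfloor+2$. The only (harmless) differences are that you make the removal step more explicit by peeling off leaves one at a time, and that you close the argument by invoking Lemma \ref{LemmaPropertiesbeta}(i) with $m=2$, which uniformly covers all $n\ge 4$, whereas the paper carries out the explicit estimate $2-\frac{4}{\sqrt{n-2}+\sqrt{n}}\ge 1$ for $n\ge 6$ and checks $n=4,5$ by hand.
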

\begin{proof} Assume by contradiction that there exist  at least two  flags.  We can choose two flags such that removing the two flags increases the energy of the configuration at most by $2$.  By applying Thereom \ref{TheoremGroundstates} to the sub-configuration after removing the two flags, \EEE we obtain $\mathcal{E}(C_n) \geq -2 - \lfloor \beta(n-2)\rfloor$. Therefore, we get  by \eqref{def: beta} 
\begin{align*}
\mathcal{E}(C_n)\geq  -2  - 2(n-2) + 2\sqrt{n-2} &= -2n + 2\sqrt{n} -2\sqrt{n} + 2 +  2\sqrt{n-2} \\&= -2n + 2 \sqrt{n} +2 - \frac{4}{\sqrt{n-2}+ \sqrt{n}}.
\end{align*}
We have that $2 - \frac{4}{\sqrt{n-2}+ \sqrt{n}} \geq 1$ for all $n \geq 6$. With the above estimate this implies $\mathcal{E}(C_n) > -\lfloor \beta(n) \rfloor$.  This gives a contradiction to Theorem \ref{TheoremGroundstates} in the cases $n \geq 6$.  For $n=4,5$ it   can be checked directly that  $-2 - \lfloor \beta(n-2)\rfloor > -\lfloor \beta(n)\rfloor$. 
\end{proof}

 Recall the definition of  non-equilibrated atoms in Subsection \ref{sec: notions}. By $\mathcal{A}_{\rm bulk}(X_n) \subset \mathcal{A}(X_n)$ we additionally denote the bulk atoms which are not equilibrated. 

\begin{lemma}[Non-equilibrated atoms] \label{lemma : non-eq-atoms} Let $C_n$ be a ground state and assume that its bond graph does not contain any acyclic bonds. If $\mathcal{A}(X_n)\neq \emptyset$, then $\#\mathcal{A}(X_n) \geq 2$ and $\eta \geq 2$. If $ \mathcal{A}_{\mathrm{bulk}}(X_n) \neq \emptyset$, then $\#\mathcal{A}(X_n) \geq 2$ and $\eta \geq 4$.
\end{lemma}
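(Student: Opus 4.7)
My plan is to combine two ingredients already at our disposal: the structural information on ground states coming from Theorem~\ref{TheoremGroundstates} and Lemma~\ref{RemarkPolygon}, and a local angle-sum argument at the non-equilibrated atom under consideration. Since $C_n$ is a ground state, Theorem~\ref{TheoremGroundstates} yields $\mathcal{E}(C_n)=-b$, so Lemma~\ref{RemarkPolygon}(b) gives unit bonds, repulsion-freeness, every bond angle in $[\tfrac{\pi}{2},\tfrac{3\pi}{2}]$ and, crucially, every $4$-gon in the bond graph is a regular square with all four interior angles equal to $\tfrac{\pi}{2}$. Moreover, by Lemma~\ref{RemarkPolygon}(a) the bond graph contains only polygons of even length, and every elementary polygon which meets $\mathcal{A}(X_n)$ meets it in at least two atoms.

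For the first statement, pick $x\in\mathcal{A}(X_n)$ and let $\theta$ be a bond angle at $x$ with $\theta\notin\{\tfrac{\pi}{2},\pi,\tfrac{3\pi}{2}\}$. Since $C_n$ has no acyclic bonds, the two edges defining $\theta$ lie on the boundary of a uniquely determined elementary polygon $P$. As every interior angle of a $4$-gon is $\tfrac{\pi}{2}$, $P$ cannot be a $4$-gon; together with the parity constraint this forces $P$ to be a $j$-gon with $j\ge 6$, so $\eta\ge j-4\ge 2$. Applying Lemma~\ref{RemarkPolygon}(a) to $P$ gives $\#\mathcal{A}(X_n)\ge\#(\mathcal{A}(X_n)\cap P)\ge 2$.

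For the second statement, suppose $x\in\mathcal{A}_{\mathrm{bulk}}(X_n)$ and set $k:=\#\mathcal{N}(x)$. As there are no acyclic bonds, $k\ge 2$; and $k\le 4$ by Lemma~\ref{LemmaNeighborhood}. If $k=4$, the four angles $\theta_1,\dots,\theta_4$ at $x$ sum to $2\pi$ and each lies in $[\tfrac{\pi}{2},\tfrac{3\pi}{2}]$, forcing all of them to equal $\tfrac{\pi}{2}$, which contradicts $x\in\mathcal{A}(X_n)$. Thus $k\in\{2,3\}$. The key step is to show that at least \emph{two} angles at $x$ lie outside $\{\tfrac{\pi}{2},\pi,\tfrac{3\pi}{2}\}$. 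For $k=2$ the two angles sum to $2\pi$, so are of the form $\theta,2\pi-\theta$; since the set $\{\tfrac{\pi}{2},\pi,\tfrac{3\pi}{2}\}$ is invariant under $\theta\mapsto 2\pi-\theta$, either both are equilibrated or neither is, and by assumption the latter holds. For $k=3$, I plan to run the finite case check $\theta_2,\theta_3\in\{\tfrac{\pi}{2},\pi,\tfrac{3\pi}{2}\}$ with $\theta_1=2\pi-\theta_2-\theta_3$: each of the six ordered pairs either forces $\theta_1\in\{\tfrac{\pi}{2},\pi,\tfrac{3\pi}{2}\}$ or places $\theta_1$ outside $[\tfrac{\pi}{2},\tfrac{3\pi}{2}]$, contradicting Lemma~\ref{RemarkPolygon}(b). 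In both cases there are at least two non-equilibrated angles at $x$; since $x$ is a bulk atom, each of these angles sits in a distinct elementary polygon surrounding $x$, and by the argument of the previous paragraph each such polygon has $j\ge 6$. Therefore $\eta\ge(6-4)+(6-4)=4$, and $\#\mathcal{A}(X_n)\ge 2$ follows already from part one.

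The only place where care is needed is the finite enumeration for $k=3$; beyond that the proof is essentially an assemblage of Theorem~\ref{TheoremGroundstates}, Lemma~\ref{LemmaNeighborhood}, and Lemma~\ref{RemarkPolygon}, together with the planarity observation that distinct angles at a bulk vertex belong to distinct elementary polygons.
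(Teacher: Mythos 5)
Your proof is correct and follows essentially the same route as the paper: Lemma \ref{RemarkPolygon} plus the angle-sum argument at the vertex, with the observation that at a bulk atom the (at least two) non-equilibrated consecutive angles lie in distinct elementary polygons, each of which must be a $j$-gon with $j\ge 6$ by parity and regularity of squares; the only cosmetic difference is that for $\eta\ge 2$ you argue directly via a non-square polygon while the paper argues by contradiction from $\eta=0$. The single point deserving one extra word is in your first paragraph: if $x$ is a boundary atom, the chosen non-equilibrated angle $\theta$ might be the sector facing the exterior face rather than the interior angle of an elementary polygon, but the angle sum $2\pi$ together with the constraint that all bond angles lie in $[\tfrac{\pi}{2},\tfrac{3\pi}{2}]$ shows (exactly as in your $k=2,3$ enumeration) that some \emph{interior} sector at $x$ is then also non-equilibrated, so the argument goes through.
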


\begin{proof} We split the proof into several steps. First, we prove that $\theta \in [\frac{\pi}{2},\frac{3\pi}{2}]$ for all bond angles $\theta$. Secondly, we show that,  if $\mathcal{A}(X_n) \neq \emptyset$, then $\#\mathcal{A}(X_n) \geq 2$. Then we prove  $\eta \geq 2$  if $\mathcal{A}(X_n) \neq \emptyset$. Finally, we confirm  that, if $\mathcal{A}_{\mathrm{bulk}}(X_n) \neq \emptyset$, then $\eta \geq 4$. These statements show the thesis of the lemma.

\noindent\emph{Step 1: $\theta \in [\frac{\pi}{2},\frac{3\pi}{2}]$ for all bond angles $\theta$.} By Theorem \ref{TheoremGroundstates}  there holds  $\mathcal{E}(C_n)=-b$.  Thus, \EEE Lemma \ref{RemarkPolygon}(b)  implies \EEE that $\theta \in [\frac{\pi}{2},\frac{3\pi}{2}]$ for all bond angles $\theta$.\EEE

\noindent\emph{Step 2: If $\mathcal{A}(X_n) \neq \emptyset$, then $\#\mathcal{A}(X_n) \geq 2$.}
Since the bond graph of $C_n$ does not contain any acyclic bonds and $\mathcal{A}(X_n)\neq \emptyset$, there holds $\mathcal{A}(X_n) \cap P \neq \emptyset$ for some polygon $P$ in the bond graph. By Lemma \ref{RemarkPolygon}(a) it follows that $\#(\mathcal{A}(X_n)\cap P) \geq 2$.\EEE

\noindent\emph{Step  3: If $\mathcal{A}(X_n) \neq \emptyset$, then $\eta \geq 2$.}  
Assume by contradiction that $\eta=0$. By the definition of $\eta$ this implies that all elementary polygons in the bond graph are squares. Furthermore, since the bond graph of $C_n$ does not contain any acyclic bonds, all atoms are contained in some elementary polygon, i.e., in some square. Lastly, we note that, due to Theorem \ref{TheoremGroundstates}, there holds $\mathcal{E}(C_n)=-b$ and therefore by Lemma \ref{RemarkPolygon}(b) all squares are regular. In particular, all their atoms are equilibrated. This contradicts the fact that $\mathcal{A}(X_n) \neq \emptyset$.  
 
\noindent\emph{Step 4: If $\mathcal{A}_{\mathrm{bulk}}(X_n) \neq \emptyset$, then $\eta \geq 4$.} Let  $x \in \mathcal{A}_{\rm bulk}(X_n)$  and denote by $\theta_1, \ldots,\theta_k$ the bond angles at $x$. By \eqref{NeighbourhoodCard} we have that $k\leq 4$.  We have $k \in \{2,3,4\}$.    We conclude once we have shown  that there exist at least two bond angles at $x$ that are not equilibrated. In fact, since no acyclic bonds are present in the bond graph and $x \in \mathcal{A}_{\rm bulk}(X_n)$, all the bond angles at $x$ belong to different elementary polygons.  Then,  there exist at least two non squares in the bond graph and hence $\eta \geq 4$. 

It thus remains to prove that there exist $i_1,i_2 \in \{1,\ldots,k\}$,  $i_1 \neq i_2$,  such that $\theta_{i_1},\theta_{i_2} \notin \{\frac{\pi}{2},\pi,\frac{3\pi}{2}\}$. Recall $x \in \mathcal{A}(X_n)$.  Without restriction we suppose that $\theta_1 \notin \{\frac{\pi}{2},\pi,\frac{3\pi}{2}\}$. Now assume that $ \theta_j \in \{\frac{\pi}{2},\pi,\frac{3\pi}{2}\}$ for all $j = 2,\ldots, k$. Since the bond angles at $x$ need to sum to $2\pi$, we obtain
\begin{align*}
k'\frac{\pi}{2} + \theta_1 = \sum_{i=1}^k \theta_i = 2\pi \ \ \ \ \ \ \ \text{ with } \  k'= \sum_{j=1}^3\#\lbrace i \in \{2,\ldots,k\} : \theta_i= j\pi/2\rbrace.
\end{align*}
By  Step 1   we get   $\theta_1 \in \frac{\pi}{2}  \mathbb{N}  \cap [\frac{\pi}{2},\frac{3\pi}{2}]$.  This contradicts the fact that $\theta_1 \notin \{\frac{\pi}{2},\pi,\frac{3\pi}{2}\}$.
\end{proof}

 We are now in the position to prove Theorem \ref{theorem : GeometryGroundstatewithout}(a).

\begin{proof}[Proof of Theorem \ref{theorem : GeometryGroundstatewithout}(a)] We prove the statement by induction.   For $=1,2,3$ the statement is clearly true, and for $n=4$ it follows from Lemma  \ref{RemarkPolygon}(b).   Let $n \ge 5$.  By Theorem \ref{TheoremGroundstates} and Lemma \ref{lemma:flags}  the bond graph of ground states does not contain bridges and at most one flag. Therefore,  in view of Lemma \ref{LemmaPropertiesbeta}(ii),  up to removing a flag, it is not restrictive to  assume that $C_n$ is a ground state with no acyclic bonds. We will use the following fact several times: by Remark \ref{rem: repulsionsfree}, $C_n$ is repulsion-free. Thus Lemma \ref{lemma: relation} implies
\begin{align}\label{eq: several times}
\mathcal{R}(C_n^{\mathrm{bulk}})   = \mathcal{E}(C_n^{\mathrm{bulk}}),
\end{align}
 where $C_n^{\mathrm{bulk}}$ is defined before \eqref{eq: bnd2}. 
 \EEE By the definition of $\mathcal{A}(X_n)$ and due to the fact that ground states are connected (cf.\ Theorem \ref{TheoremGroundstates}), it suffices to prove that $\mathcal{A}(X_n)=\emptyset$. 
We divide the proof into two steps. We first prove that $\mathcal{A}(X_n) \setminus \mathcal{A}_{\mathrm{bulk}}(X_n) = \emptyset$ and secondly we show that $\mathcal{A}_{\mathrm{bulk}}(X_n) = \emptyset$.

\noindent\emph{Step 1:  $\mathcal{A}(X_n) \setminus \mathcal{A}_{\mathrm{bulk}}(X_n) = \emptyset$.} Assume by contradiction that $\mathcal{A}(X_n) \setminus \mathcal{A}_{\mathrm{bulk}}(X_n) \neq \emptyset$. By Lemma \ref{LemmaBoundaryEnergy} this implies  $\mathcal{R}^{\mathrm{bnd}}(C_n) > -2d + 4$. Then  \eqref{eq: several times}  and Theorem \ref{TheoremGroundstates} (applied for $C_n^{\mathrm{bulk}}$)  yield \EEE
\begin{align*}
\mathcal{R}^{\mathrm{bulk}}(C_n)  = \mathcal{R}(C_n^{\mathrm{bulk}})    = \mathcal{E}(C_n^{\mathrm{bulk}}) \geq -\lfloor \beta(n-d)\rfloor.
\end{align*}
 This along with $\mathcal{R}^{\mathrm{bnd}}(C_n) > -2d + 4$,   \eqref{eq: bnd2},   and again Lemma \ref{lemma: relation} gives    
\begin{align}\label{eq: prev-inequ}
\mathcal{E}(C_n) \geq \mathcal{R}(C_n) > -2d + 4 -\lfloor 2(n-d) - 2\sqrt{n-d}\rfloor = -\lfloor 2n -2\sqrt{n-d}\rfloor +4.
\end{align} 
By Theorem \ref{TheoremGroundstates}, $\mathcal{E}(C_n)$ is an integer and consequently we derive  
\begin{align*}
\mathcal{E}(C_n)  \ge  -\lfloor 2n - 2\sqrt{n-d}\rfloor +5 \geq -2n +2\sqrt{n-d} + 5.
\end{align*}
Now by  using Lemma \ref{LemmaBoundaryestimate} and $\mathcal{E}(C_n) =-b$ (see Theorem \ref{TheoremGroundstates}) we obtain
\begin{align*}
\mathcal{E}(C_n) \geq -2n + 2\sqrt{-2\mathcal{E}(C_n)+4-3n+\eta} +5.
\end{align*}
 Observe that   $\eta \geq 2$ by Lemma \ref{lemma : non-eq-atoms}. By  applying Lemma \ref{lemma: Wurzel} with $j=5, m= \eta$ and $x = \mathcal{E}(C_n)$ we conclude $\mathcal{E}(C_n) \geq - \beta(n)+1 > -\lfloor \beta(n)\rfloor$. This contradicts Theorem \ref{TheoremGroundstates}.
 
\noindent\emph{Step 2: $\mathcal{A}_{\mathrm{bulk}}(X_n)=\emptyset$.} Assume by contradiction that $\mathcal{A}_{\mathrm{bulk}}(X_n) \neq\emptyset$. Note that  $\eta \ge 4$ by  Lemma \ref{lemma : non-eq-atoms} and thus  $n \ge 8$. There are two cases two consider: a) $n-d \leq 3$ and b) $n-d \geq 4$.  

\noindent\emph{Proof for $n-d \le 3$.}  Using Lemma \ref{LemmaBoundaryestimate} we have that 
\begin{align}\label{ineq : case a}
2b+ 4 - 3n + \eta =n-d \leq 3.
\end{align}
By Theorem \ref{TheoremGroundstates} we have that $\mathcal{E}(C_n) = -b  = -\lfloor 2n - 2\sqrt{n}\rfloor  $. This together with (\ref{ineq : case a}) leads to 
\begin{align}\label{ineq : case a 2}
-2n -\lfloor -2\sqrt{n}\rfloor=  -\lfloor 2n - 2\sqrt{n}\rfloor =  \mathcal{E}(C_n) \geq -\frac{3}{2}n + \frac{1}{2}(1+\eta)  \geq -\frac{3}{2}n + \frac{5}{2}, 
\end{align}
 where the last step follows from $\eta \ge 4$.  Inequality (\ref{ineq : case a 2}) is violated for all  $n \geq 8$. This yields a contradiction and  concludes the proof in this case.

\noindent \emph{Proof for $n-d \ge 4$.}  
 We first observe that $C_n^{\mathrm{bulk}}$ is clearly not a subset of the square lattice if $\mathcal{A}(X_n^\mathrm{bulk}) \neq \emptyset$.  In addition, we notice that $\mathcal{A}_{\mathrm{bulk}}(X_n) \setminus\mathcal{A}(X_n^\mathrm{bulk}) \neq \emptyset$ implies that $C_n^{\mathrm{bulk}}$ is not subset of the square lattice. To see this, assume that there exists $x \in \mathcal{A}_{\mathrm{bulk}}(X_n) \setminus\mathcal{A}(X_n^\mathrm{bulk})$ and denote by $y \in \partial X_n$ one of its neighbors. If $y$ is $4$-bonded, \eqref{eq: remark-equ} and $x \in \mathcal{A}(X_n)$ yield that $C_n^\mathrm{bulk}$ is not a subset of the square lattice.   On the other hand, if $y$ is $3$-bonded, \eqref{eq: remark-equ} together with  a simple geometric consideration implies that $C_n$ is not repulsion-free, which contradicts Remark \ref{rem: repulsionsfree}.

First, assume that the bond graph of $C_n^{\mathrm{bulk}}$ does not contain a flag.   By induction hypothesis, ground states with less than $n$ atoms  without flags are subsets of the square lattice. Thus, by the assumption $\mathcal{A}_{\mathrm{bulk}}(X_n) \neq\emptyset$, $C_n^{\mathrm{bulk}}$ cannot be a ground state.  This along with \eqref{eq: bnd2} and  \eqref{eq: several times}  \EEE yields  \EEE
\begin{align}\label{eq: strici}
 \mathcal{R}^{\mathrm{bulk}}(C_n) = \mathcal{R}(C_n^{\mathrm{bulk}}) = \mathcal{E}(C_n^{\mathrm{bulk}}) > -\lfloor \beta(n-d)\rfloor.
\end{align}
By Lemma \ref{LemmaBoundaryEnergy} we  also have $\mathcal{R}^{\mathrm{bnd}}(C_n) \geq -2d + 4$. The two inequalities  together with \eqref{eq: bnd2}  lead to the strict inequality \eqref{eq: prev-inequ}, and we may proceed exactly as in Step 1 of the proof to  obtain a contradiction to the fact that $C_n$ is a ground state.

 Now assume that the bond graph of $C_n^{\mathrm{bulk}}$  contains a flag. Without restriction we can assume that $C_n^{\mathrm{bulk}}$  is a ground state since otherwise the strict inequality \eqref{eq: strici} holds, and we obtain a contradiction exactly as before. Since $n-d \geq 4$, the bond graph of $C_n^{\mathrm{bulk}}$ contains exactly one flag by Lemma \ref{lemma:flags}. \EEE   There holds  $\#\mathcal{A}(X_n) \geq 2$ by Lemma \ref{lemma : non-eq-atoms} and $\mathcal{A}(X_n)\setminus \mathcal{A}_{\mathrm{bulk}}(X_n) =\emptyset$  by Step 1. Therefore, $\#\mathcal{A}_{\mathrm{bulk}}(X_n) \geq 2$. After removing the flag from $C_n^{\mathrm{bulk}}$, \EEE we get a configuration $C_{n-d-1}$ which does not contain flags,  but there still holds  $\mathcal{A}(X_n) \cap X_{n-d-1} \neq \EEE \emptyset$.   By induction hypothesis, $C_{n-d-1}$ can thus not be a ground state and we get $\mathcal{E}(C_{n-d-1})  > \EEE -\lfloor \beta(n-d-1)\rfloor$. By  counting the contribution of the flag to the energy by at least $-1$  we obtain 
 $$\mathcal{R}(C_n^{\mathrm{bulk}}) = \mathcal{E}(C_n^{\mathrm{bulk}}) > -1 -\lfloor \beta(n-d-1)\rfloor,$$
where we also used \eqref{eq: several times}. This along with  $\mathcal{R}^{\mathrm{bnd}}(C_n) \geq -2d + 4$ (see Lemma \ref{LemmaBoundaryEnergy}) and \eqref{eq: bnd2}  gives \EEE 
\begin{align*}
\mathcal{E}(C_n)  > -1 - 2d + 4 -\lfloor \beta(n-d-1)\rfloor =   - \lfloor 2n  - 2\sqrt{n-d-1}\rfloor +5. 
\end{align*}
Since $\mathcal{E}(C_n) = - b$ is an integer by Theorem \ref{TheoremGroundstates}, \EEE we obtain $\mathcal{E}(C_n) \geq  -2n +6 + 2\sqrt{n-d-1}$. Now using Lemma \ref{LemmaBoundaryestimate} and $\mathcal{E}(C_n) =  -b$ \EEE we obtain
\begin{align*}
\mathcal{E}(C_n) \geq -2n +6 + 2\sqrt{-2\mathcal{E}(C_n) +3 -3n +\eta}.
\end{align*}
 Recall that  $\eta \geq 4$  by Lemma \ref{lemma : non-eq-atoms}. By \EEE applying Lemma \ref{lemma: Wurzel} with $j=6, m=\eta-1$ and $x =\mathcal{E}(C_n)$ we obtain
\begin{align*}
\mathcal{E}(C_n) &\geq -2n +2\sqrt{\eta +n -5} +2 \geq -2n + 2\sqrt{n-1} +2 \\&= -2n +2\sqrt{n} -2\sqrt{n}+2\sqrt{n-1}+2 = -2n + 2\sqrt{n} + 2 - \frac{2}{\sqrt{n}+\sqrt{n-1}}.
\end{align*}
 Recall that  $n \geq 8$. Therefore, \EEE we get $2 - \frac{2}{\sqrt{n}+\sqrt{n-1}} \geq 1$ and thus
\begin{align*}
\mathcal{E}(C_n) \geq -2n + 2\sqrt{n}+1 > -\lfloor 2n -2\sqrt{n}\rfloor.
\end{align*}
This  contradicts \EEE the fact that $C_n$ is a ground state and concludes the proof. \EEE
\end{proof}

 We now address the proof of Theorem \ref{theorem : GeometryGroundstatewithout}(b). To this end, we use the following result.

\begin{theorem}[Deviation from Wulff-shape]\label{theorem:deviation} Let  $n \in \mathbb{N}$   and let $C_n$ be a ground state with no acyclic bonds. Then, possibly after translation we find two squares $S_{k_1^2} \subset \mathbb{Z}^2$ and $  S_{k_2^2} \subset \mathbb{Z}^2$,  $k_1,k_2 \in \mathbb{N}$,  with $S_{k_1^2} \subset X_n \subset S_{k_2^2}$ such that
\begin{align*}
0< k_2-k_1 \leq cn^{\frac{1}{4}},
\end{align*}
where $c>0$ is a universal constant independent of $n$ and $C_n$. 
\end{theorem}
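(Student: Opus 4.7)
The plan is to reduce Theorem \ref{theorem:deviation} to the $n^{3/4}$-law of \cite{Mainini-Piovano} for bond-maximizing subsets of the square lattice, which I intend to cite rather than reprove.

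First I would upgrade Theorem \ref{theorem : GeometryGroundstatewithout}(a): that partial crystallization result allows at most one exceptional non-lattice atom, but, as remarked below the theorem and illustrated by Fig.~\ref{Fig:flexible}, such an atom can only appear as the endpoint of a flag (removing it leaves a configuration still subject to the elementary geometric constraints that forced crystallization of the remaining atoms). Since by hypothesis $C_n$ contains no acyclic bonds, there are in particular no flags, and I conclude that the entire positional set $X_n$ lies in $\mathbb{Z}^2$ after a suitable isometry.

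Second, Theorem \ref{TheoremGroundstates} combined with Remark \ref{rem: repulsionsfree} yields $b = \lfloor \beta(n)\rfloor = \lfloor 2n - 2\sqrt{n}\rfloor$, with $C_n$ repulsion-free and all bonds of unit length. Together with $X_n \subset \mathbb{Z}^2$, this means $X_n$ realises the maximum number of nearest-neighbour bonds among all $n$-subsets of the square lattice; the alternating charges play no role in the resulting combinatorial problem. This is precisely the setting studied in \cite[Section 4]{Mainini-Piovano}.

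Finally, I would invoke the $n^{3/4}$-law established there, which produces two squares $S_{k_1^2} \subset X_n \subset S_{k_2^2}$ (in the notation of \eqref{def : Sn}) with $k_2 - k_1 \le c n^{1/4}$ for a universal constant $c$. The strict inequality $0 < k_2-k_1$ is enforced, if needed, by replacing $k_2$ with $k_2+1$ and absorbing the increment into $c$. The main obstacle I anticipate is purely bookkeeping — aligning the square-notation $S_{k^2}$ with the conventions of \cite{Mainini-Piovano} and verifying that, once charges are dropped, their bond-maximising class coincides with ours — rather than any new geometric input, since the isoperimetric-type argument behind the $n^{3/4}$-law has already been carried out in the cited work.
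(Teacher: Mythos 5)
Your proposal is correct and follows essentially the same route as the paper: use Theorem \ref{theorem : GeometryGroundstatewithout}(a) together with the absence of acyclic bonds to conclude $X_n \subset \mathbb{Z}^2$, note via Remark \ref{rem: repulsionsfree} that the energy then reduces (after dropping charges) to the bond-counting functional of \cite{Mainini-Piovano}, and invoke their $n^{3/4}$-law (the paper cites \cite[Theorem 8.1]{Mainini-Piovano}). Nothing substantive differs.
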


\begin{proof} 
Let $n \ge 4$.  From Theorem \ref{theorem : GeometryGroundstatewithout}(a) and the fact that $C_n$ does not contain acyclic bonds we get that $C_n$ is a subset of the square lattice. Moreover, $C_n$ is repulsion-free, see Remark \ref{rem: repulsionsfree}. Therefore, our energy on ground-state configurations coincides (up to distributing alternating charge or neglecting it)  with the one considered in \cite{Mainini-Piovano}. The statement then follows from \cite[Theorem 8.1]{Mainini-Piovano}.
\end{proof}

\begin{proof}[Proof of Theorem \ref{theorem : GeometryGroundstatewithout}(b)]   By Theorem \ref{TheoremGroundstates} and  Lemma \ref{lemma:flags} we know that the bond graph contains at most one flag and  no other acyclic bonds. If it contains a flag, one can remove the flag and the remaining configuration is still a ground state   by  Lemma \ref{LemmaPropertiesbeta}(i). We can therefore assume that there are no acyclic bonds in the bond graph of $C_n$.  
The statement  follows from Theorem \ref{theorem:deviation}. In fact, observe that Theorem \ref{theorem:deviation} also implies that  the diameter of a ground  state  is of order $\sqrt{n}$. 
\end{proof}

\subsection{Characterization of the net charge}\label{subsection:netcharge}

We now prove Theorem \ref{theorem : charge}. Part (b) of the statement has already been addressed by an explicit construction in  Subsection  \ref{subsection:sqare rectangle}.  Thus, it remains to prove part  (a). \EEE 

\begin{definition}[Line segment] A tuple $(x_0,\ldots,x_m)\subset \mathbb{Z}^2$ is called a \textit{line segment} if there exists $v \in \{(1,0),(0,1)\}$ such that $x_{k+1}=x_k + v$ for all $k \in \{0,\ldots,m-1\}$.
\end{definition}

\begin{proposition}[Convexity of ground states]\label{prop:convexity} Let  $n \in \mathbb{N}$     and let $C_n$ be a ground state with no acyclic bonds. Then each line segment $(x_0,\ldots,x_m)$ with $x_0,x_m \in X_n$ satisfies $x_i \in X_n$ for $i=0,\ldots,m$.  
\end{proposition}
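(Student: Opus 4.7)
The plan is to reach a contradiction by a sharp row/column bond count. First, by the hypothesis that $C_n$ has no acyclic bonds together with the argument in the proof of Theorem \ref{theorem : GeometryGroundstatewithout}(a), which under this hypothesis shows $\mathcal{A}(X_n)=\emptyset$, the connected configuration $X_n$ is (up to an isometry applied once and for all) a subset of $\mathbb{Z}^2$. By Theorem \ref{TheoremGroundstates} the number of bonds equals exactly
\begin{align*}
b \;=\; \lfloor\beta(n)\rfloor \;=\; 2n - \lceil 2\sqrt{n}\,\rceil,
\end{align*}
and bonds coincide with pairs of atoms at unit distance on $\mathbb{Z}^2$.

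Now I argue by contradiction: suppose some axis-aligned line segment $(x_0,\ldots,x_m)\subset\mathbb{Z}^2$ with $x_0,x_m\in X_n$ has an intermediate point $x_i\notin X_n$. Let $r$, resp.\ $c$, denote the number of horizontal, resp.\ vertical, lines of $\mathbb{Z}^2$ that meet $X_n$. Since every atom lies in exactly one row and one column of $\mathbb{Z}^2$, there holds $n\leq r\cdot c$, so by the AM--GM inequality $r+c\geq 2\sqrt{n}$, and since $r+c$ is a positive integer this yields $r+c\geq \lceil 2\sqrt{n}\,\rceil$. For each row $j$ let $r_j$ be the number of atoms in that row and $k_j\geq 1$ the number of maximal contiguous segments they form, so that the number of horizontal unit bonds in row $j$ equals $r_j-k_j$; define $c_i$ and $l_i$ analogously for each column. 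Summing gives the basic identity
\begin{align*}
b \;=\; \sum_j (r_j - k_j)+\sum_i (c_i - l_i) \;=\; 2n - \sum_j k_j - \sum_i l_i.
\end{align*}

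The failure of the line-segment condition means that $X_n$ is either not row-convex or not column-convex, so at least one of $\sum_j k_j \geq r+1$ or $\sum_i l_i\geq c+1$ holds. In either case $\sum_j k_j+\sum_i l_i\geq r+c+1\geq \lceil 2\sqrt{n}\,\rceil+1$, hence
\begin{align*}
b \;\leq\; 2n-\lceil 2\sqrt{n}\,\rceil-1 \;=\; \lfloor\beta(n)\rfloor-1,
\end{align*}
contradicting the bond count from Theorem \ref{TheoremGroundstates}. This forces the line-segment condition to hold. I do not expect a substantial obstacle beyond verifying the elementary row/column bond identity and the sharp bound $r+c\geq\lceil 2\sqrt{n}\,\rceil$; the argument is essentially a discrete isoperimetric statement that meets the ground-state bond count exactly, with any non-HV-convex configuration strictly losing one bond.
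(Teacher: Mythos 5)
Your argument is correct. The paper itself does not prove this proposition; it defers to \cite[Proposition 6.3]{Mainini-Piovano} after observing that on ground states the energy reduces to (minus) the bond count of a subset of $\mathbb{Z}^2$. You instead supply a self-contained double-counting proof, and all the ingredients check out: Theorem \ref{theorem : GeometryGroundstatewithout}(a) (whose proof shows $\mathcal{A}(X_n)=\emptyset$ for ground states without acyclic bonds, and which does not rely on this proposition, so there is no circularity) together with connectedness from Theorem \ref{TheoremGroundstates} puts $X_n$ inside $\mathbb{Z}^2$ with all bonds being nearest-neighbor pairs; the identity $b=2n-\sum_j k_j-\sum_i l_i$ is exact because every bond is horizontal or vertical and a maximal contiguous segment of $s$ atoms carries $s-1$ bonds; $n\le rc$ gives $r+c\ge\lceil 2\sqrt n\,\rceil$ by AM--GM and integrality; and a missing interior point of a line segment forces $k_j\ge 2$ (or $l_i\ge 2$) in the corresponding row (or column), costing at least one bond against the exact count $b=\lfloor\beta(n)\rfloor=2n-\lceil 2\sqrt n\,\rceil$ from Theorem \ref{TheoremGroundstates}. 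This is essentially the discrete edge-isoperimetric argument underlying the cited result, so nothing is lost relative to the reference; what your version buys is that the proposition becomes independent of the external citation, at the price of re-deriving the sharp bound $r+c\ge\lceil 2\sqrt n\,\rceil$, which you do correctly.
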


\begin{proof}
 For the proof we refer to \cite[Proposition 6.3]{Mainini-Piovano}, where this property is called  \emph{convexity by rows and columns}. The result in  \cite[Proposition 6.3]{Mainini-Piovano} is applicable due to the fact that our energy on ground-state configurations coincides   with the one considered in \cite{Mainini-Piovano}, see the proof of Theorem \ref{theorem:deviation}.  \EEE
\end{proof}

\begin{proof}[Proof of Theorem \ref{theorem : charge}(a)]   In view of Theorem \ref{TheoremGroundstates} and Lemma \ref{lemma:flags}, it suffices to treat the case that the bond graph of $C_n$ does not contain any acyclic bonds. We apply Theorem \ref{theorem:deviation} to find two squares $S_{k_1^2} \subset X_n \subset S_{k_2^2}$.   By Proposition \ref{prop:convexity} it is elementary to see that $X_{n} \setminus S_{k_1^2}$ can be written as the union of at most $4(k_2-k_1)$ line segments. Recall that $C_n$ has alternating charge distribution and therefore the net charge of each line segment is in $\{-1,0,1\}$. Also recall from  Subsection  \ref{subsection:special subsets} that squares have charge in $\{-1,0,1\}$. This implies that the net charge of the configuration $C_n$ satisfies
\begin{align*}
|\mathcal{Q}(C_n)|\leq 4(k_2-k_1) +1.
\end{align*}
The statement follows from the fact that $k_2-k_1 \leq cn^{1/4}$, see Theorem \ref{theorem:deviation}.
\end{proof}

 \section{Characterization of optimal configurations for Prescribed charge}\label{sec: prescribed}

 In this section we prove  Proposition \ref{theorem : qsat}, Theorem \ref{theorem: min-en2}, and Theorem \ref{theorem : GeometryGroundstate}.

\subsection{Energy of optimal \EEE configurations}

 In this short subsection we prove  Theorem \ref{theorem: min-en2}. 
 
\begin{proof}[Proof of Theorem \ref{theorem: min-en2}:]  Without restriction we may suppose that $q_{\rm net} \ge 0$ since the proof for $q_\mathrm{net}\leq 0$ follows analogously.   We proceed in two steps. First, we prove the statement for $q_\mathrm{net}\geq q_\mathrm{sat}^n$ and then   for $ 0 \le q_\mathrm{net} < q_\mathrm{sat}^n$.  The proof is performed by an induction argument.

\noindent\emph{Step 1: $q_\mathrm{net} \geq q_\mathrm{sat}^n$.}   Our goal is to show
\begin{align}\label{eq: first-indu}
\mathcal{E}^n_\mathrm{min}(q_\mathrm{net}) =  -2n +2q_\mathrm{net} \ \ \ \text{for all } \  q_\mathrm{net} \geq q_\mathrm{sat}^n. 
\end{align}
For $q_{\rm net} = q_\mathrm{sat}^n$ the statement is clearly true, see \eqref{eq: qsat-def}.  Suppose that the statement holds for $q_\mathrm{net}  \ge  q_\mathrm{sat}^n  $ with $q_\mathrm{net} \leq n-2$. We show \eqref{eq: first-indu} for $q_\mathrm{net} + 2$. \EEE By Lemma \ref{lemma : Charge Energybound}(a), for all configurations $C_n$  satisfying  $\mathcal{Q}(C_n)=q_\mathrm{net}+2$ there holds $\mathcal{E}(C_n) \geq -2n + 2(q_\mathrm{net}+2)$.

It thus  suffices to construct a configuration  $\tilde{C}_n$ with $\mathcal{Q}(\tilde{C}_n)=q_\mathrm{net}+2$ and  $ \mathcal{E}(\tilde{C}_n) = -2n +2(q_\mathrm{net}+2)$.   To this end, let $C_n$ be  a configuration  with $\mathcal{Q}(C_n) = q_\mathrm{net}$ and $\mathcal{E}(C_n) =-2n +2q_\mathrm{net}$,  which exists by the induction hypothesis.

Choose $x_i \in C_n$  with  $q_i =-1$ and modify $C_n$ as follows:  remove $x_i$   and add a new atom $\tilde{x}_i$ with  charge  $+1$   to the configuration such that $|\tilde{x}_i-x_j| \geq \sqrt{2}$ for all $j=1,\ldots,n$. Denote this configuration by $\tilde{C}_n$.  We have $\mathcal{Q}( \tilde{C}_n \EEE ) = q_\mathrm{net}+2$. By Lemma \ref{lemma : Charge Energybound}(a)(i), the  relocated  atom was $4$-bonded  and the bond lengths were of unit length.  Hence, we obtain by [ii] 
 \begin{align*}
  \mathcal{E}(\tilde{C}_n)  = \mathcal{E}(C_n)+4 = -2n +2q_\mathrm{net} +4 = -2n +2(q_\mathrm{net}+2).
 \end{align*}
Thus, \eqref{eq: first-indu}   is proven for $ q_{\rm net } + 2$.

\noindent\emph{Step 2:  $ 0 \le \EEE q_\mathrm{net}  < \EEE  q_\mathrm{sat}^n$.}   As before, the lower bound in \eqref{eq: charge energy} follows from Lemma \ref{lemma : Charge Energybound}(a). Therefore, it remains to show that   
\begin{align}\label{eq: first-indu2}
\mathcal{E}^n_\mathrm{min}(q_\mathrm{net}) \leq -2n + 4q_\mathrm{sat}^n-2q_\mathrm{net}  \ \ \ \  \text{for all }  0 \le q_\mathrm{net} \le q_\mathrm{sat}^n.  
\end{align}
 For $q_{\rm net} = q_\mathrm{sat}^n$ the statement is clearly true,  see \eqref{eq: qsat-def}. \EEE  Assume that \eqref{eq: first-indu2} holds for $2\leq q_\mathrm{net} \le q_\mathrm{sat}^n$.  We show \eqref{eq: first-indu2} for $q_\mathrm{net} - 2$. To this end, we have to construct a configuration  $\tilde{C}_n$ with   $\mathcal{Q}(\tilde{C}_n) = q_\mathrm{net}-2$ and  $ \mathcal{E}(\tilde{C}_n) \le  -2n+ 4q_\mathrm{sat}^n-2(q_\mathrm{net}-2)$. 

Let $C_n$ be a configuration  with  $\mathcal{Q}(C_n) = q_\mathrm{net}$ and  $\mathcal{E}(C_n) = \mathcal{E}^n_\mathrm{min}(q_\mathrm{net})$.   Choose \EEE an atom $x_i$  with  $q_i=1$ and modify $C_n$ in the following way: remove $x_i$   and add a new atom $\tilde{x}_i$ with  charge $-1$    to the configuration such that $|\tilde{x}_i-x_j| \geq \sqrt{2}$ for all $j=1,\ldots,n$. Denote this configuration by $\tilde{C}_n$. First, note that $\mathcal{Q}( \tilde{C}_n  ) =q_\mathrm{net}-2$.  By [ii] and   $\#\mathcal{N}(x_i) \leq 4$, see \eqref{NeighbourhoodCard}, we get 
\begin{align*}
\mathcal{E}(\tilde{C}_n)  \leq \mathcal{E}(C_n) +4.
\end{align*}
As \eqref{eq: first-indu2} holds for $q_\mathrm{net}$ by induction hypothesis,  we obtain
\begin{align*}
\mathcal{E}_\mathrm{min}^n(q_\mathrm{net}-2) \leq  \mathcal{E}(\tilde{C}_n) \le  \mathcal{E}(C_n)  +4  \leq -2n +  4q_\mathrm{sat}^n-2q_\mathrm{net} + 4  =  -2n +   4q_\mathrm{sat}^n -2(q_\mathrm{net}-2).
\end{align*}
This shows \eqref{eq: first-indu2} for $q_{\rm net}-2$ and  concludes the proof.
\end{proof}

\subsection{Boundary and interior net charge}\label{sec: boundary charge} 
In the following, we will consider without restriction configurations which satisfy $\mathcal{Q}(C_n) \ge 0$, i.e., the $+1$ phase is the majority phase. This can indeed always be achieved by interchanging the roles of the positive and negative charges. We define the notion of a {bridging atom}. An atom $x_0$ is called \emph{bridging atom} if it is $2$-bonded and not contained in any simple cycle, see Fig.~\ref{Fig : bridging atom}.  Denote by $C_m$ and $C_{n-m-1}$ the two connected components of $C_n \setminus \{x_0,q_0\}$. We set 
 \begin{align}\label{eq: bridgi}
C_{m+1} = C_{m} \cup \{(x_0,q_0)\}, \ \ \ \  C_{n-m} = C_{n-m-1} \cup \{(x_0,q_0)\}.
\end{align} 
 In the following we say that $C_m$ and $C_{m-n}$ are \emph{connected through a bridging atom}. Recall the definition of acyclic bonds in Subsection \ref{sec: notions}.  In a similar fashion, we say that an atom $x_i$ is an \emph{acyclic atom} if it is not contained in any cycle. We denote  the union of the  acyclic atoms by $I_{\rm ac}$.  
 \EEE

\begin{figure}[htp]
\begin{tikzpicture}[scale=0.5]

\begin{scope}[rotate=90]
\draw[ultra thin](0,2)--++(90:2);
\draw[ultra thin](0,2)++(90:1)++(180:1)--++(0:2);
\draw[ultra thin](0,2)++(90:1)++(180:1)++(0:2)--++(90:2);
\draw[ultra thin](0,2)++(90:2)--++(0:2);

\foreach \j in {0,...,2}{
\draw[ultra thin](\j,-2+\j)--++(0,4-2*\j);
\draw[ultra thin](-\j,-2+\j)--++(0,4-2*\j);
\draw[ultra thin](\j-2,\j)--++(4-2*\j,0);
\draw[ultra thin](\j-2,-\j)--++(4-2*\j,0);
\draw[fill=black](0,2)++(90:1)++(90*\j:1)circle(.07);
\draw[fill=black](0,2)++(90:2)++(0:1)++(90*\j:1)circle(.07);
}

\draw[fill=white](0,2)++(90:1) circle(.07);
\draw[fill=white](0,2)++(90:2)++(0:1)circle(.07);

\draw[fill=black,black](0,2)circle(.1);

\foreach \j in {0,2}{
\pgfmathsetmacro\y{\j-1}
\foreach \l in {0,...,\y}{
\foreach \k in {0,...,3}{
\begin{scope}[rotate = \k*90]
\draw[fill=black](0,0)++(\j,0)++(-\l,\l) circle(.07);
\end{scope}
}
}}

\foreach \j in {1}{
\pgfmathsetmacro\y{\j-1}
\foreach \l in {0,...,\y}{
\foreach \k in {0,...,3}{
\begin{scope}[rotate = \k*90]
\draw[fill=white](0,0)++(\j,0)++(-\l,\l) circle(.07);
\end{scope}
}
}}

\end{scope}
\begin{scope}[shift={(7,0)}]
\draw(0,0) node[anchor =north]{$x_j$};
\draw(-.75,0) node[anchor =north east]{$x_i$};
\draw(.8,0) node[anchor =north west]{$x_k$};
\foreach \j in {0,...,3}{
\draw[fill=black](1,0)++(90*\j:1) circle(.07);
\draw[ultra thin](1,0)--++(90*\j:1);
\draw[fill=black](-1,0)++(90*\j:1) circle(.07);
\draw[ultra thin](-1,0)--++(90*\j:1);
\draw[fill=black](0,1)++(90*\j:1) circle(.07);
\draw[ultra thin](0,1)--++(90*\j:1);
}
\draw[fill=white](1,0) circle(.07);
\draw[fill=white](0,1) circle(.07);
\draw[fill=white](-1,0) circle(.07);
\end{scope}
\end{tikzpicture}
\caption{On the left: A configuration with a bridging atom illustrated bold. On the right: A configuration with $I_\mathrm{ac} \cap X_n^-=\emptyset$,  where the atoms $X_n^-$ defined in \eqref{def : Cnpm} are illustrated in white.  }
\label{Fig : bridging atom}
\end{figure}
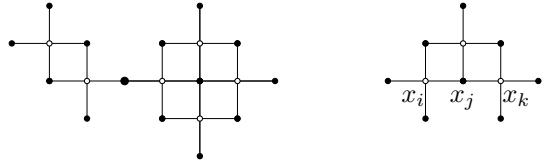

\begin{lemma}[Bridging atom]\label{lemma: bridging}
 Consider  $C_n$ with   $0 \le \mathcal{Q}(C_n)$   and $\mathcal{E}(C_n)  = -2n + 2\mathcal{Q}(C_n)$.  

\noindent (a) If there exists a bridging atom connecting two configurations $C_m$ and $C_{n-m}$, there holds  
\begin{align*}
\mathcal{Q}(C_m) \geq q_\mathrm{sat}^{m+1}-1 \ \ \  \text{ and } \ \ \  \mathcal{Q}(C_{n-m}) \geq q_\mathrm{sat}^{n-m}.
\end{align*} 
Moreover, $C_{n-m}$ is an optimal configuration  and there holds  $m, n-m \ge 4$. \EEE

\noindent (b)  If $n \geq 6$ and  $C_n$ is connected without bridging atoms,  then $I_\mathrm{ac}\cap X_n^- =\emptyset$  and $ I_\mathrm{ac}\cap X_n^+$ consists of $1$-bonded atoms. 
\end{lemma}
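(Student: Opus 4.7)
The strategy is to exploit the energetic decoupling at the bridging atom. Since $\mathcal{E}(C_n)=-2n+2\mathcal{Q}(C_n)$, Lemma \ref{lemma : Charge Energybound}(a)(i)--(ii) gives that every bond in $C_n$ has unit length and that $C_n$ is repulsion-free. From this I deduce that the cross-interactions between the two components $C_m$ and $C_{n-m-1}$ vanish: same-charge pairs $(y,y')$ satisfy $|y-y'|\ge \sqrt{2}$ and contribute nothing via $V_{\mathrm{r}}$ by [vii], whereas different-charge pairs across the two components are non-bonded (since $x_0$ is their only link), so $|y-y'|>r_0$ and their $V_{\mathrm{a}}$-contribution is zero by [iii]. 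Collecting the terms in the natural split yields $\mathcal{E}(C_n)=\mathcal{E}(C_{m+1})+\mathcal{E}(C_{n-m})$. Since $x_0$ is $2$-bonded, property (i) forces $q_0=+1$, so $\mathcal{Q}(C_n)=\mathcal{Q}(C_{m+1})+\mathcal{Q}(C_{n-m})-1$. Combining this with the hypothesis and applying the (symmetric) bound $\mathcal{E}(C_k)\ge -2k+2|\mathcal{Q}(C_k)|$ to both sub-configurations collapses the chain of inequalities to equalities, whence $\mathcal{Q}(C_{m+1}),\mathcal{Q}(C_{n-m})\ge 0$ and, by (iii), $\mathcal{Q}(C_{m+1})\ge q_{\mathrm{sat}}^{m+1}$ and $\mathcal{Q}(C_{n-m})\ge q_{\mathrm{sat}}^{n-m}$; subtracting $q_0=1$ yields the first inequality for $C_m$. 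Theorem \ref{theorem: min-en2} then identifies $C_{n-m}$ as an optimal configuration. The bound $m,n-m\ge 4$ follows by counting: the $-1$-charged neighbor of $x_0$ on either side is $4$-bonded by (i), so its three remaining bonds lie within the same component, giving at least four atoms there.

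\textbf{Plan for (b).} Let $x\in I_{\mathrm{ac}}$. Suppose first that $x\in X_n^-$. Then (i) forces $\#\mathcal{N}(x)=4$ and (iv) fixes the four bond angles at $x$ to $\pi/2$, so, translating $x$ to the origin, the neighbors of $x$ lie at $\pm e_1,\pm e_2$. If every neighbor is $1$-bonded, connectedness forces $n=5$, contradicting $n\ge 6$. Otherwise some neighbor, say $y_1=e_1$, is $\ge 2$-bonded. If $y_1$ is $2$-bonded, the no-bridging hypothesis places $y_1$ in a cycle, which necessarily passes through $x$. If $y_1$ is $3$-bonded, (v) constrains its bond angles to $\pi/2,\pi/2,\pi$, and a short check shows that at least one of $(1,1),(1,-1)$ lies in $\mathcal{N}(y_1)\subset X_n$; since this point sits at unit distance from $(0,1)$ or $(0,-1)$, it produces the $4$-cycle $x-y_1-(1,\pm 1)-(0,\pm 1)-x$. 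If $y_1$ is $4$-bonded, (iv) places its neighbors at $(2,0),x,(1,1),(1,-1)$ and the same $4$-cycle arises. In every sub-case $x$ lies in a cycle, contradicting $x\in I_{\mathrm{ac}}$. For $x\in X_n^+$ acyclic and $\ge 2$-bonded, the $2$-bonded case is precisely a bridging atom, excluded by hypothesis, and the $3$- and $4$-bonded cases are handled by the same combinatorial argument with the roles of $\pm 1$ swapped, noting that the neighbors of $x$ have charge $-1$ and are automatically $4$-bonded by (i), which immediately supplies the required common neighbor at $(1,\pm 1)$.

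\textbf{Main obstacle.} The genuinely delicate point lies in (b), specifically in the $3$-bonded sub-case, where the equilibrium pattern $\pi/2,\pi/2,\pi$ at $y_1$ admits two geometrically distinct orientations (depending on whether the bond $y_1\to x$ is one side of the straight angle or the perpendicular bond); in each orientation one must pin down precisely which of $(1,\pm 1)$ is forced into $X_n$ and verify, using unit bond lengths and $r_0\ge 1$, that the corresponding adjacent neighbor of $x$ is simultaneously in the configuration, so that the $4$-cycle through $x$ actually closes. The additivity $\mathcal{E}(C_n)=\mathcal{E}(C_{m+1})+\mathcal{E}(C_{n-m})$ underlying (a) is the other point requiring care, but once repulsion-freeness and the finite-range properties [iii] and [vii] are invoked, it is routine.
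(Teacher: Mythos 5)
Your proposal is correct and follows essentially the same route as the paper: in (a) you make the energetic decoupling at the bridging atom explicit via additivity and the finite-range assumptions [iii], [vii], whereas the paper duplicates the bridging atom, separates the two components at distance $\ge\sqrt{2}$, and invokes Lemma \ref{lemma : Charge Energybound}(a)(vi) — but both arguments reduce to applying the bound $\mathcal{E}(C_k)\ge -2k+2|\mathcal{Q}(C_k)|$ to each piece and forcing equality. Part (b) is the same square-closing argument as in the paper, using (i), (iv), (v) to pin the relevant neighbors to lattice positions and produce a $4$-cycle through the allegedly acyclic atom.
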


\begin{proof}
(a) Suppose that a bridging atom  $x_0$ exists. As $\mathcal{E}(C_n) = -2n + 2\mathcal{Q}(C_n)$, Lemma \ref{lemma : Charge Energybound}(a)(i) implies that $x_0$ has charge $q_0 = +1$. This clearly gives
\begin{align}\label{eq : charge added}
\mathcal{Q}(C_{m+1}) = \mathcal{Q}(C_m) +1.
\end{align}
Define by $C_{n+1}$  a configuration given by the union of the two connected components $C_{n-m}$ and $C_{m+1}$, where $C_{m+1}$ is translated in such a way that   $\mathrm{dist}(X_{n-m},X_{m+1})\geq \sqrt{2}$. We observe that the atoms  of charge $-1$ are $4$-bonded  by construction since $C_n$ satisfies this property. In a similar fashion, $C_{n+1}$ is repulsion-free and its bonds have unit length by Lemma \ref{lemma : Charge Energybound}(a)(i),(ii).    Therefore, in view of  Lemma \ref{lemma : Charge Energybound}(b), for $C_{n+1}$ there holds the equality  $\mathcal{E}(C_{n+1}) = -2(n+1) + 2\mathcal{Q}(C_{n+1})$. Then applying Lemma \ref{lemma : Charge Energybound}(a)(vi) on the two connected components $C_{m+1}$ and $C_{n-m}$ we find  \EEE $ \mathcal{Q}(C_{m+1}) \geq q_\mathrm{sat}^{m+1}$ and   $\mathcal{Q}(C_{n-m}) \geq q_\mathrm{sat}^{n-m}$. The first part of the claim now follows from (\ref{eq : charge added}). 

In view of  Lemma \ref{lemma : Charge Energybound}(b), for $C_{n-m}$ there holds the equality  $\mathcal{E}(C_{n-m}) = -2(n-m) + 2\mathcal{Q}(C_{n-m})$. Thus, $C_{n-m}$ is an optimal configuration  by \eqref{ineq : charge}.   Finally, since atoms of charge $-1$ are $4$-bonded, each of the two components contains at least four atoms, i.e.,  $m,n-m \ge 4$. \EEE

(b) Now assume that  $n \geq 6$ and that  $C_n$ is connected without bridging atoms.  By $\mathcal{E}(C_n)  = -2n + 2\mathcal{Q}(C_n)$ and  \eqref{ineq : charge} we have that $C_n$ is optimal. \EEE  First, \EEE suppose  by contradiction that there exists some  $x_i \in I_\mathrm{ac}\cap X_n^- \neq \emptyset$.   Since $C_n$ is connected with alternating charge distribution (see  Lemma \ref{LemmaNeighborhood}) \EEE and $n \ge 6$, there exist $x_k \in X_n^-$, $x_k \neq x_i$, and $x_j \in \mathcal{N}(x_i) \cap \mathcal{N}(x_k)$.   Since $C_n$ does not contain bridging atoms  and $x_i \in I_{\rm ac}$,  $x_j$ is at least $3$-bonded.  By Lemma \ref{lemma : Charge Energybound}(a)(iv),(v)  all bond angles of $x_j$ are integer multiples of $\frac{\pi}{2}$.  Thus, as sketched  in  Fig.~\ref{Fig : bridging atom}, $x_i$ is contained in a square.  This contradicts $x_i \in I_\mathrm{ac}$. In a similar fashion, Fig.~\ref{Fig : bridging atom} shows that some $x_j \in X_n^+$ which is $3$-bonded or $4$-bonded is contained in a square. This along with \eqref{NeighbourhoodCard} and the fact that  by assumption \EEE there are no  bridging atoms, i.e., no \EEE  $2$-bonded atoms in $I_{\rm ac}$, shows that $I_{\rm ac} \cap X_n^+$ consists of $1$-bonded atoms only.  \EEE
\end{proof}

\textbf{Sub-configurations:} Similar to the definition of $I_{\rm ac}$, we say that an atom $x_i$ is an \emph{exterior acyclic atom} if it is not contained in any cycle and not contained in the interior region of any cycle. We denote the union of the exterior acyclic atoms by $I_{\rm ac}^{\rm ext}$. Clearly, there holds  $I_{\rm ac}^{\rm ext} \subset I_{\rm ac}$.

Let  $n\geq 6$  and let $C_n$ be a connected,  optimal configuration without bridging atoms satisfying $\mathcal{Q}(C_n) \ge q_{\rm sat}^n$. In particular, there holds $\mathcal{E}(C_n)  = -2n + 2\mathcal{Q}(C_n)$ by  Theorem \ref{theorem: min-en2}. \EEE  We denote by $C_n^a$ the configuration without the exterior acyclic  atoms and their charges. \EEE We observe that  $C_n^a$ is still connected. This follows from Lemma \ref{lemma: bridging}(b). By $\partial C_n^{a}$ we indicate the maximal polygon of $C_n^a$, i.e,  the simple cycle which delimits the bond graph, see also Subsection \ref{section:boundary energy}. \EEE   The cardinality of   $\partial C_n^{a}$ is denoted by $d$.  Furthermore,  for $k=2,3,4$, \EEE  we set
\begin{align}\label{def: Ik}
I_k = \{x_i \in \partial X_n^a : \ \  \#(\mathcal{N}(x_i) \cap X_n^a)=k\}.
\end{align}
 Finally, we set $C_n^{a,\mathrm{bulk}} = C_n^a \setminus \partial C_n^{a}$. In a similar fashion, we denote by $X_n^a$, $\partial X_n^{a}$, and $X_n^{a,\mathrm{bulk}}$ the atomic positions of the sub-configurations.  Recall the definition of $\eta$ in \eqref{Excess}. \EEE

 \EEE

\begin{lemma}[Cardinality of  $n-d$\EEE] \label{lemma : card positively charged shell} 
Let  $n\geq 6$ and let  $C_n$ be a connected, optimal configuration which satisfies $\mathcal{Q}(C_n)\geq q_\mathrm{sat}^n$ and   does not contain a bridging atom. Then
\begin{align*}
n- d  \in \EEE n + 4-4  \mathcal{Q}(C_n)  + \eta+ 2\# I_{\rm ac}^{\rm ext}  + 2\mathbb{N}_0. \EEE  
\end{align*}
\end{lemma}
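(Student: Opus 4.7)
The plan is to adapt the Euler-formula argument of Lemma \ref{LemmaBoundaryestimate} to the present setting, keeping careful track of the contribution of acyclic atoms. First, observe that the hypothesis $\mathcal{Q}(C_n) \geq q_{\rm sat}^n$ combined with Theorem \ref{theorem: min-en2} forces $\mathcal{E}(C_n) = -2n + 2\mathcal{Q}(C_n)$, so Lemma \ref{lemma : Charge Energybound}(a)(i)--(ii) applies: $C_n$ has alternating charge distribution, is repulsion-free, has only unit-length bonds, and every atom of $X_n^-$ is $4$-bonded. Counting bonds from the $X_n^-$ side and using $\#X_n^- = (n - \mathcal{Q}(C_n))/2$ gives $b = 4\#X_n^- = 2n - 2\mathcal{Q}(C_n)$.

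Next, by Lemma \ref{lemma: bridging}(b), every atom in $I_{\rm ac}$ has charge $+1$ and is $1$-bonded. Split $I_{\rm ac} = I_{\rm ac}^{\rm ext} \cup I_{\rm ac}^{\rm int}$ into exterior and interior acyclic atoms (the latter being $1$-bonded atoms lying inside the interior region of some elementary polygon). After removing the exterior acyclic atoms, the resulting sub-configuration $C_n^a$ has $n^a = n - \#I_{\rm ac}^{\rm ext}$ atoms and $b^a = b - \#I_{\rm ac}^{\rm ext}$ bonds; it is still connected (since $C_n$ is connected without bridging atoms and we only pruned leaves of charge $+1$), so its maximal polygon $\partial C_n^a$ of cardinality $d$ is well defined.

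I then apply Euler's formula $V - E + F = 2$ to the planar graph $C_n^a$, obtaining that the number of inner faces is $\sum_j f_j = 1 + b^a - n^a$. To relate this to the excess $\eta = \sum_{j\geq 4}(j-4)f_j$, I use the identity
\begin{equation*}
2 b^a = d + \sum_j j\, f_j + 2\, \#I_{\rm ac}^{\rm int},
\end{equation*}
obtained by summing the lengths of all face-boundary walks: the outer face contributes $d$ (the maximal polygon, which carries no exterior acyclic bonds anymore), each elementary polygon contributes its vertex count, and each interior acyclic bond sitting inside an elementary polygon is traversed twice in that polygon's boundary walk (out and back). Subtracting four times the face count then yields
\begin{equation*}
\eta = -2 b^a - d - 2\,\#I_{\rm ac}^{\rm int} - 4 + 4 n^a.
\end{equation*}

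Finally, substituting $n^a = n - \#I_{\rm ac}^{\rm ext}$ and $b^a = 2n - 2\mathcal{Q}(C_n) - \#I_{\rm ac}^{\rm ext}$ and rearranging gives the exact identity
\begin{equation*}
n - d = n + 4 - 4\mathcal{Q}(C_n) + \eta + 2\,\#I_{\rm ac}^{\rm ext} + 2\,\#I_{\rm ac}^{\rm int},
\end{equation*}
from which the claim follows since $\#I_{\rm ac}^{\rm int} \in \mathbb{N}_0$. The only non-routine step is the face-walk accounting: verifying that each interior $1$-bonded flag contributes precisely $2$ to the boundary walk of the unique face containing it, while exterior acyclic atoms have already been removed and therefore contribute nothing to the outer-face length $d$. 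Everything else is a clean bookkeeping exercise combining the bond count $b = 2n - 2\mathcal{Q}(C_n)$ with Euler's formula.
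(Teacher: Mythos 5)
Your argument is correct and is essentially the paper's proof: both rest on Euler's formula $n-b+f=1$, the edge--face incidence count $\sum_j j f_j = 2b-d-2b_{\rm ac}$ (which is exactly your face-boundary-walk identity, after you prune the exterior leaves), and the bond count $b=2n-2\mathcal{Q}(C_n)$ coming from saturation of the minority phase. The only cosmetic difference is that you track the slack exactly as $2\#I_{\rm ac}^{\rm int}$, whereas the paper simply bounds $b_{\rm ac}\ge \#I_{\rm ac}^{\rm ext}$ and absorbs the rest into $2\mathbb{N}_0$.
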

\begin{proof}
Let $f_j$ be the number elementary $j$-gons in the bond graph and let $f$ be the number of elementary polygons.    There holds
\begin{align*}
\sum_{j\geq 4} j f_j =   2b-d-2  b_\mathrm{ac},  
\end{align*} 
where $b_\mathrm{ac}$ denotes the cardinality of the acyclic bonds. In fact,  by the summation on the left the bonds of the maximal polygon are  counted once, the  acyclic bonds  are not counted, and all other bonds are counted twice.  By Lemma \ref{RemarkPolygon}(a)      we obtain $4f + \eta = 2b-d - 2 b_{\rm ac}$. This along with Euler's formula $n-b+f=1$ (omitting the exterior face)  and the fact that $b_{\rm ac} \ge \# I_{\rm ac}^{\rm ext}$ \EEE yields 
\begin{align*}
n-d  \in \EEE  2b+4 -  3n \EEE +\eta+ 2\# I_\mathrm{ac}^{\rm ext}  +2\mathbb{N}_0. \EEE
\end{align*}
As $b=2n-2  \mathcal{Q}(C_n)  $, see   Remark \ref{rem: repulsionsfree}, Theorem \ref{theorem: min-en2}, and Lemma \ref{lemma : Charge Energybound}(a)(i),(ii),   the claim follows. \EEE 
\end{proof}

Recall the definition of the non-equilibrated atoms $\mathcal{A}(X_n)$ in  Subsection  \ref{sec: notions}. The following estimate for the  net \EEE charge of the boundary and the interior configuration will be instrumental for our analysis. Its proof will be given in Appendix \ref{appendix}, along with the proof of the boundary energy estimate (Lemma \ref{LemmaBoundaryEnergy}).


\begin{lemma}[Net charge of the interior and boundary]\label{lemma : charge interior} Let $n \geq 6$ and let $C_n$ be a connected $q_\mathrm{sat}^n$-optimal configuration without bridging atoms. 

 \noindent (a) Then there exists $m \geq 4$ such that
\begin{align}\label{eq :I2I3I3}
\#I_2 +2\#I_3+3\#I_4=2d-m. 
\end{align}
If  $\#\mathcal{A}(X_n) \geq \#\mathcal{A}(X_n^{\mathrm{a},\mathrm{bulk}})+2$, then $m \geq 6$.

\noindent (b) Suppose that \eqref{eq :I2I3I3} holds. \EEE Then $\#I_{\mathrm{ac}}^{\mathrm{ext}} \geq m$ and there exists an optimal configuration $C_{n-d-m}$ satisfying $\mathcal{Q}(C_{n-d-m}) \geq q_\mathrm{sat}^{n-d-m}$ and 
\begin{align}\label{ineq : charge-new}
\mathcal{Q}(C_{n}) = \mathcal{Q}(C_{n-d-m})+m.
\end{align}
 Moreover, $X_{n-d-m}$ is a subset of $X_n$ up to $0$-bonded atoms and \EEE $\mathcal{A}(X_{n-d-m}) = \mathcal{A}(X_n^{\mathrm{a},\mathrm{bulk}} )$.
\end{lemma}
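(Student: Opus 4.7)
\emph{Setup and strategy.} My plan is to combine the interior-angle-sum identity for the maximal polygon $\partial C_n^a$ with a bond-counting argument on the charge-$-1$ boundary atoms. First, Lemma~\ref{lemma : Charge Energybound}(a)(i),(ii) applied to the $q_\mathrm{sat}^n$-optimal $C_n$ gives that every atom of charge $-1$ is $4$-bonded, all bonds have unit length, and the configuration is repulsion-free; hence $\mathcal{E}(C_n)=-b$, and Lemma~\ref{RemarkPolygon}(b) implies that all bond angles of $C_n$ lie in $[\pi/2,3\pi/2]$. Since $d=\#I_2+\#I_3+\#I_4$, the identity $\#I_2+2\#I_3+3\#I_4=2d-m$ is equivalent to $m=\#I_2-\#I_4$.

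\emph{Part (a).} Summing the interior angles of the simple polygon $\partial C_n^a$ gives $\sum_{x_i\in\partial X_n^a}\theta_i=(d-2)\pi$. For each $x_i\in I_k$ the $k-2$ non-polygon bonds at $x_i$ subdivide $\theta_i$ into $k-1$ sub-angles, each at least $\pi/2$; hence $\theta_i\geq(k-1)\pi/2$, and summing yields $\#I_2+2\#I_3+3\#I_4\leq 2d-4$, i.e.\ $m\geq 4$. For the refinement $m\geq 6$, Lemma~\ref{lemma : Charge Energybound}(a)(iv),(v) together with Lemma~\ref{lemma: bridging}(b) imply that every atom in $\mathcal{A}(X_n)\setminus\mathcal{A}(X_n^{\mathrm{a},\mathrm{bulk}})$ must be a $2$-bonded boundary atom in $I_2$ (since $3$- and $4$-bonded atoms are equilibrated and the exterior acyclic atoms are $1$-bonded, hence vacuously equilibrated). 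A local rigidity argument, exploiting that each charge-$-1$ neighbour of such an atom is $4$-bonded with right-angle bonds and therefore locks a rigid cross of unit bonds around itself, then forces $\theta_i\in[\pi,3\pi/2]$ at each such atom. Two such contributions raise the angle-sum excess by at least $\pi$ in total, lifting $m\geq 4$ to $m\geq 6$.

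\emph{Part (b).} For the inequality $\#I_{\rm ac}^{\rm ext}\geq m$, I would double-count the $4$ bonds at each $-1$ atom of $\partial X_n^a$. Since the exterior acyclic atoms are $1$-bonded of charge $+1$ by Lemma~\ref{lemma: bridging}(b), writing $I_k^\pm$ for the atoms of charge $\pm 1$ in $I_k$, each $x\in I_k^-$ carries exactly $4-k$ bonds to $I_{\rm ac}^{\rm ext}$, and thus $\#I_{\rm ac}^{\rm ext}=2\#I_2^-+\#I_3^-$. Combined with $\#I_2^\pm+\#I_3^\pm+\#I_4^\pm=d/2$ and $m=\#I_2-\#I_4$, this yields $\#I_{\rm ac}^{\rm ext}-m=\#I_3^++2\#I_4^+\geq 0$. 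I then define $C_{n-d-m}$ by deleting the $d$ atoms of $\partial X_n^a$ together with any $m$ atoms of $I_{\rm ac}^{\rm ext}$, adjusting if necessary by placing isolated atoms of charge $+1$ to preserve saturation; the remaining $\#I_{\rm ac}^{\rm ext}-m$ exterior acyclic atoms lose their unique bond and become $0$-bonded, giving $X_{n-d-m}\subset X_n$ up to $0$-bonded atoms and $\mathcal{A}(X_{n-d-m})=\mathcal{A}(X_n^{\mathrm{a},\mathrm{bulk}})$. The charge identity $\mathcal{Q}(C_n)=\mathcal{Q}(C_{n-d-m})+m$ follows because $\partial X_n^a$ is an even-length alternating cycle (contributing $0$ to the net charge) while the deleted acyclic atoms all have charge $+1$. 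Optimality of $C_{n-d-m}$ and the bound $\mathcal{Q}(C_{n-d-m})\geq q_\mathrm{sat}^{n-d-m}$ then follow from Theorem~\ref{theorem: min-en2} together with Lemma~\ref{lemma : Charge Energybound}(b).

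\emph{Main obstacle.} The most delicate step is the rigidity argument in part (a) producing $m\geq 6$: turning the continuous non-equilibration at a $2$-bonded boundary atom into an integer jump in $m=\#I_2-\#I_4$ requires a careful geometric analysis which uses the right-angle $4$-bondedness of its charge-$-1$ neighbours to rule out non-equilibrated interior angles lying in $(\pi/2,\pi)$.
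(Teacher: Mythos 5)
Your proof follows the same architecture as the paper's (angle sum over the maximal polygon for $m\ge 4$; counting bonds from negative boundary atoms into $I_{\rm ac}^{\rm ext}$; stripping the boundary polygon and exterior acyclic atoms to build $C_{n-d-m}$), and the identities $m=\#I_2-\#I_4$ and $\#I_{\rm ac}^{\rm ext}-m=\#I_3^++2\#I_4^+$ are correct. But two steps have genuine gaps. The first is the refinement $m\ge 6$: you assert that every atom of $\mathcal{A}(X_n)\setminus\mathcal{A}(X_n^{\mathrm{a},\mathrm{bulk}})$ lies in $I_2$, but your justification only shows such atoms are $2$-bonded of charge $+1$ — it does not place them on $\partial X_n^a$. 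A non-equilibrated $2$-bonded atom can lie in the bulk of $X_n^a$ with exactly one neighbour on $\partial X_n^a$; it is then $1$-bonded, hence equilibrated, in $X_n^{\mathrm{a},\mathrm{bulk}}$, so it belongs to $\mathcal{A}(X_n)\setminus\mathcal{A}(X_n^{\mathrm{a},\mathrm{bulk}})$ while contributing no interior angle of the maximal polygon. This case requires a separate argument (the paper assigns to each such $x$ a distinct equilibrated atom $z_x\in I_2^+$ with interior angle $\pi$; the distinctness of the $z_x$ is itself a repulsion argument). Moreover, your bound $\theta_i\ge\pi$ is stronger than what the rigid-cross argument gives: the two $4$-bonded negative neighbours only force the bond angle into $[5\pi/6,7\pi/6]$, since at $\theta=5\pi/6$ the nearest positive atoms of the two crosses sit at distance exactly $\sqrt2$; angles in $[5\pi/6,\pi)$ are not locally excluded. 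Two such atoms only yield $\#I_2+2\#I_3+3\#I_4\le 2d-4-\tfrac43$, and one must invoke integrality of the left-hand side to reach $2d-6$. As written, your $m\ge 6$ step rests on a false intermediate claim plus an unproved angle bound that you yourself defer.

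The second gap is the construction in part (b). You delete all $d$ atoms of $\partial X_n^a$, but an atom of $I_3^+$ has a third neighbour in $X_n^a$ of charge $-1$ which is in general a bulk atom; deleting the $I_3^+$ atom leaves that negative atom only $3$-bonded. Hence your $C_{n-d-m}$ need not be saturated, Lemma \ref{lemma : Charge Energybound}(b),(a)(iii) is not applicable, and the crucial inequality $\mathcal{Q}(C_{n-d-m})\ge q_{\rm sat}^{n-d-m}$ does not follow; adding isolated positive atoms elsewhere cannot restore the lost bond. The repair is to keep the $I_3^+$ atoms at their positions — they are the only boundary atoms that can be bonded to a negative bulk atom, since $I_4^+=\emptyset$ and an $I_2^+$ atom has both of its $X_n^a$-neighbours on the polygon — and to compensate the atom count by adding $\#I_{\rm ac}^{\rm ext}-\#I_3^+-m\ge 0$ far-away $0$-bonded positive atoms; this is exactly the paper's auxiliary configuration $C_{n-d}'$.
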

\EEE
%
%

Roughly speaking, the configurations  $C_{n-d-m}$ \EEE are constructed by removing  $\partial X_n^{a}$  and $ I_\mathrm{ac}^{\rm ext}$ \EEE  from $C_n$. In this sense,   $\mathcal{Q}(C_{n-d-m})$   can be regarded as the net charge of the interior.  The statement then shows that the boundary net charge can be controlled from below by $4$, and  by at least 6 \EEE if  the number of non-equilibrated atoms decreases by at least 2 when the boundary is removed. \EEE 


\subsection{Characterization  of $q_\mathrm{sat}^n$}\label{subsection : calculation qsat}

 The goal of this subsection is to prove  Proposition  \ref{theorem : qsat}. In view of the construction in Subsection \ref{subsection : upper bound qsat}, see Proposition \ref{proposition : upper bound}, it remains to show the lower bound $q_\mathrm{sat}^n \geq \phi(n)$, where  the function $\phi$ is defined in \eqref{def: phi}.  As a preparation, we first  provide an equivalent representation of $\phi$, and state some  monotonicity and subadditivity properties.  

\begin{lemma}[Representation of $\phi$] For $n \ge 2$ there holds
\begin{align}\label{eq: different repr}
\phi(n) = \begin{cases}
2 +  2k  + 0  & \ \ \ \text{if } n=1+2k^2+2k+m, \ \ \  1\leq m \leq 2k+2, \ m \text{ odd},\\
2+ 2k +  1 &\ \ \ \text{if } n=1+2k^2+2k+m,  \ \ \  1\leq m \leq 2k+2, \ m \text{ even},\\
3+ 2k + 1  &\ \ \ \text{if } n=1+2k^2+2k+m, \ \ \   2k+3\leq m \leq 4k+4, \  m \text{ odd},\\
3+ 2k + 0 &\ \ \ \text{if } n=1+2k^2+2k+m, \ \ \  2k+3\leq m \leq 4k+4, \ m \text{ even},\\
\end{cases}
\end{align}
where $k \in \mathbb{N}_0$.
\end{lemma}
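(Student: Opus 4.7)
The plan is to prove the identity by direct case analysis. First I would observe that every integer $n \ge 2$ can be written uniquely as $n = 1 + 2k^2 + 2k + m$ with $k \in \mathbb{N}_0$ and $1 \le m \le 4k+4$, because the intervals $[2k^2 + 2k + 2,\, 2k^2 + 6k + 5]$ partition $\mathbb{N}_{\ge 2}$ as $k$ varies (note $2(k+1)^2 + 2(k+1) + 2 = 2k^2 + 6k + 6$). A second preliminary observation is that $1 + 2k^2 + 2k$ is odd, so the parity of $n$ coincides with that of $m$. Consequently, Cases 1 and 3 of \eqref{eq: different repr} correspond to $n$ even (so we use the second branch of \eqref{def: phi}), while Cases 2 and 4 correspond to $n$ odd (so we use the first branch of \eqref{def: phi}).

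Then for each of the four cases it remains to verify one floor identity by squaring and checking linear inequalities in $m$. In Case 1 ($m$ odd, $1 \le m \le 2k+2$) I would substitute $2n-4 = 4k^2 + 4k + 2m - 2$ and verify
\begin{align*}
4k^2 \;\le\; 4k^2 + 4k + 2m - 2 \;<\; 4(k+1)^2 = 4k^2 + 8k + 4,
\end{align*}
which is equivalent to $2 - 4k \le 2m \le 4k + 5$ and holds for all $1 \le m \le 2k+2$; this gives $\lfloor \sqrt{2n-4}/2 \rfloor = k$ and hence $\phi(n) = 2k + 2$. In Case 3 ($m$ odd, $2k+3 \le m \le 4k+4$) the analogous bound $(k+1) \le \sqrt{2n-4}/2 < k+2$ becomes $2k+3 \le m \le 6k+8$, which is satisfied in the claimed range, yielding $\phi(n) = 2k + 4$. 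In Cases 2 and 4 ($m$ even), substituting $2n-5 = 4k^2 + 4k + 2m - 3$, the target identity $\lfloor (-1 + \sqrt{2n-5})/2 \rfloor = k$ reduces to $(2k+1)^2 \le 4k^2 + 4k + 2m - 3 < (2k+3)^2$, i.e.\ $2 \le m \le 4k+5$, which covers both even ranges $2 \le m \le 2k+2$ (Case 2) and $2k+4 \le m \le 4k+4$ (Case 4), giving $\phi(n) = 2k + 3$ in both cases.

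Since these four verifications exhaust all possibilities by the initial partition, the representation \eqref{eq: different repr} follows. The main (mild) obstacle is purely bookkeeping: one must correctly pair each of the four cases with the right branch of \eqref{def: phi} via the parity of $n$, and check that the inequalities remain sharp at the transition values $m = 2k+2$ and $m = 2k+3$ where the block index jumps. No deeper argument is needed, as the identity is a direct arithmetic consequence of \eqref{def: phi}.
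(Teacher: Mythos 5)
Your proposal is correct and follows essentially the same route as the paper's proof: split according to the parity of $n$, pick the corresponding branch of \eqref{def: phi}, and verify the resulting floor identities by squaring and checking linear inequalities in $m$. One small wording slip: since $1+2k^2+2k$ is odd, the parities of $n$ and $m$ are \emph{opposite} rather than coinciding, but your subsequent case assignment ($m$ odd gives $n$ even, hence the second branch of \eqref{def: phi}) is the correct one, so nothing downstream is affected.
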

\EEE

\begin{proof}
We divide the proof into two steps. First, we prove the statement in the case that $n$ is even and then in the case that $n$ is odd.

\noindent \emph{$n$ even.} Let $n =1 + 2k^2+2k+m$ with $n$ even, i.e., $1 \leq m \leq 4k+4$ and $m$ odd. By (\ref{def: phi}) we have 
\begin{align}\label{eq: first-phi}
\phi(n) = 2 + 2 \lfloor \tfrac{1}{2}\sqrt{2n-4}  \rfloor = 2 + 2 \lfloor \tfrac{1}{2}\sqrt{4k^2+4k+2m-2} \rfloor.
\end{align}
It is elementary to check that  
\begin{align*}
\lfloor \tfrac{1}{2}\sqrt{4k^2+4k+2m-2} \rfloor & = k  \  \ \ \ \ \ \ \   \text{ for all }  1 \leq m \leq 2k+2, \text{ $m$ odd},\\
\lfloor \tfrac{1}{2}\sqrt{4k^2+4k+2m-2} \rfloor & = k+1 \  \ \ \text{ for all }  2k+3 \leq m \leq 4k+4,  \text{ $m$ odd}.
\end{align*}
This along with  \eqref{eq: first-phi} shows  the desired equality of  $\phi(n)$ with the expression  on the right hand side of (\ref{eq: different repr}), in the case that $n$ is even. \EEE

%
%
%
%

\noindent \emph{$n$ odd.} Let $n =1 + 2k^2+2k+m$ with $n$ odd, i.e., $1 \leq m \leq 4k+4$ and $m$ even. By (\ref{def: phi}) we have 
\begin{align}\label{eq: first-phi2}
\phi(n) = 3 + 2 \lfloor-\tfrac{1}{2}+ \tfrac{1}{2}\sqrt{2n-5}  \rfloor = 3 + 2 \lfloor-\tfrac{1}{2}+ \tfrac{1}{2}\sqrt{4k^2+4k+2m-3}\rfloor.
\end{align}
It is again  elementary to see that 
\begin{align*}
\lfloor-\tfrac{1}{2}+ \tfrac{1}{2}\sqrt{4k^2+4k+2m-3}\rfloor& = k  \  \ \ \ \ \ \ \   \text{ for all }  1 \leq m \leq 4k+4,  \text{ $m$ even}.
\end{align*}
This together with  \eqref{eq: first-phi2} shows  the desired equality of  $\phi(n)$ with the expression  on the right hand side of (\ref{eq: different repr}), in the case that $n$ is odd. \EEE
\end{proof}

\begin{lemma}[Properties of $\phi$] \label{lemma : properties phi}
The following  properties  hold true:
\begin{itemize}
\item[(i)] $\phi(n) \leq \phi(n-1)+1$ and $\phi(n) \leq \phi(n+1)+1$ for all $n \in \mathbb{N}$,
\item[(ii)] $\phi(n)\leq  \phi(n+2)$ for all $n \in \mathbb{N}$,
\item[(iii)] $\phi(n) \leq \phi(n-3)+1$ for all $n \geq 5$,
\item[(iv)]   $\phi(n) \leq \phi(n-5)+1$ for all $n \geq 11$, 
\item[(v)]   $\phi(n_1-t) + \phi(n_2+t) \le \phi(n_1) + \phi(n_2) +  6  $ for all $t \le n_1 \le n_2$. 
\item[(vi)] $\phi(n)+2 \leq \phi(m) +\phi(n-m)$ for all $3\leq m,n-m\leq n,  \ \  n\ge11$, 
\item[(vii)] There exists $m_0 \in \mathbb{N}$ such that $\phi(n)+   52  \leq \phi(m) +\phi(n-m)$ for all $m_0 \leq m,n-m \leq n$.
\end{itemize}
\end{lemma}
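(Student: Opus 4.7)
The plan is to base all seven claims on the explicit representation (\ref{eq: different repr}), which writes $\phi(n)$ as a piecewise constant function of the pair $(k, m)$ with $n = 1 + 2k^2 + 2k + m$, $1 \le m \le 4k+4$, taking only the three values $2k+2,\ 2k+3,\ 2k+4$ according to the parity and range of $m$. For the more analytic estimates (v) and (vii) I will additionally use the pointwise sandwich
\begin{equation*}
\sqrt{2n-5} \le \phi(n) \le \sqrt{2n-4} + 2 \qquad (n \ge 2),
\end{equation*}
which follows by removing the floor functions in (\ref{def: phi}) together with $2\lfloor t/2\rfloor \in [t-2,t]$.

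Properties (i)--(iv) will be verified by direct case analysis on (\ref{eq: different repr}). For (i) and (ii), the unit step $m \to m+1$ and the two-step $m \to m+2$ (together with the boundary transition $(k, 4k+4) \leadsto (k+1, 1)$) affect $\phi$ by $\pm 1$ or by $\{0, +2\}$ respectively, and this is read off from the four regime/parity cells of (\ref{eq: different repr}). For (iii) and (iv), the shifts $m \to m-3$ and $m \to m-5$ may require passing to $(k-1, m + 4k - 4)$ or $(k-1, m + 4k - 6)$; in each of the finitely many source-to-target regime combinations, substitution yields $\phi(n) - \phi(n-3) \le 1$ and $\phi(n) - \phi(n-5) \le 1$.

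For (v), concavity of $\sqrt{\cdot}$ gives $\sqrt{a-2t} + \sqrt{b+2t} \le \sqrt{a} + \sqrt{b}$ for $t \ge 0$ and $a \le b$. Combined with the pointwise bounds above, this yields
\begin{equation*}
\phi(n_1 - t) + \phi(n_2 + t) - \phi(n_1) - \phi(n_2) \le 4 + \sum_{i=1}^{2} \bigl(\sqrt{2n_i - 4} - \sqrt{2n_i - 5}\bigr) < 6,
\end{equation*}
since each parenthesised term is strictly below $1$. The boundary cases in which some argument would be non-positive are handled by direct inspection.

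For (vi), the extremes $m \in \{3, 5\}$ reduce to (iii) and (iv) respectively (noting $\phi(3) = \phi(5) = 3$), while $m = 4$ is equivalent to $\phi(n) \le \phi(n-4) + 2$ and is verified from (\ref{eq: different repr}) by the same regime-tracking as for (iii). For $6 \le m \le n-6$ I fix a threshold $m_1$ such that $\sqrt{2m_1 - 5} > 4$; for $m, n-m \ge m_1$ and $n$ above a computable $N_0$, concavity of $\sqrt{\cdot}$ combined with the pointwise bounds directly gives $\phi(m) + \phi(n-m) - \phi(n) \ge 2$, and the remaining finitely many cases (with $m \in \{6, \ldots, m_1 - 1\}$ or $n \le N_0$) are settled by the analogous case analysis on (\ref{eq: different repr}). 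For (vii), enlarging $m_0$ so that $\sqrt{2m_0 - 5} \ge 54$ gives the slack $+52$ by the same concavity argument. The main obstacle will be (vi): the inequality is tight for infinitely many pairs $(n, m)$ (for instance at $(n, m) = (100, 6)$), so the asymptotic bounds alone fall short in the intermediate regime, and the piecewise structure of $\phi$ in (\ref{eq: different repr}) must be exploited in full.
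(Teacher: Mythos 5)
Your plan follows essentially the same route as the paper: (i)--(iv) by regime tracking in the representation \eqref{eq: different repr} plus a table for small $n$, and (v)--(vii) by concavity of $\sqrt{\cdot}$ applied to $\sqrt{2\,\cdot\,-k}$-type bounds, with the endpoint-minimization observation and a residual finite check. Two points deserve attention. First, in (vii) your threshold is miscalculated: you only account for the additive loss of $2$ coming from the upper sandwich bound, but the dominant loss is $\sqrt{2n-4}-\sqrt{2(n-m_0)-5}$, which for the extremal case $n=2m_0$ equals roughly $(2-\sqrt{2})\sqrt{m_0}$ and is therefore of the \emph{same order} as the gain $\sqrt{2m_0-5}\approx\sqrt{2}\sqrt{m_0}$. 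The net gain is about $(2\sqrt{2}-2)\sqrt{m_0}-2$, so $\sqrt{2m_0-5}\ge 54$ (i.e.\ $m_0\approx 1461$) yields only about $30$ of slack, not $52$; you need $m_0$ on the order of $4000$ (the paper's sharper floor-based computation gives $m_0=3945$). Since the lemma only asserts existence of $m_0$, this is fixable by enlarging the constant, but as written the step fails.

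Second, your sandwich $\sqrt{2n-5}\le\phi(n)\le\sqrt{2n-4}+2$ discards the floor structure and loses $2$ at the top, where the paper keeps the floors and loses only $1$ via $\lfloor a+b\rfloor\le\lfloor a\rfloor+\lfloor b\rfloor+1$. The consequence for (vi) is that your computable $N_0$ (for which the asymptotic argument covers $m,n-m\ge m_1=11$) is in the thousands rather than the paper's $33$, so the ``remaining finitely many cases'' you defer to case analysis on \eqref{eq: different repr} form a large set, and handling them amounts to redoing the sharper floor-based four-parity computation anyway. You correctly identify that (vi) is tight for infinitely many pairs (e.g.\ $(n,m)=(100,6)$), so the small values $m\in\{6,\dots,10\}$ genuinely require the exact discrete structure for all $n$; that part of your plan is sound but is where all the real work sits. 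Also note the boundary sub-cases: the lower bound $\sqrt{2n-5}$ is vacuous for $n\le 2$, and in (v) the cases $n_1-t\in\{0,1\}$ need the separate elementary estimate you allude to.
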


\begin{proof}
Properties (i)-(v) are elementary and can be checked by using \eqref{eq: different repr}: for (ii), we use the monotonicity of $\phi$ when restricted to even and odd numbers, respectively. Property (i) follows by looking closely at \eqref{eq: different repr}. To see  (iii)-(iv),  we  denote by $k(n) \in \mathbb{N}_0$ and $1 \le m(n)  \le 4k(n)+4$ the numbers such that $n = 1 + 2(k(n))^2 + 2k(n) + m(n)$. We also set  $n_3 = 7$ and $n_5=15$. \EEE It is elementary to check that for $t \in \lbrace 3, 5 \rbrace$ and for all $n \ge n_t$ there holds    either   $k(n) = k(n-t)$ or
 $$\text{$k(n) = k(n-t)+1$, \ \ \ \ \  $m(n) \le 2k(n)+2$, \ \ \ \ \  $m(n-t) \ge 2k(n-t)+3$}.$$
This along with a careful inspection of  \eqref{eq: different repr} implies (iii)-(iv) for $n\geq 7$ and $n \geq 15$ respectively. The remaining cases can be checked by Table~\ref{table2}.  To see (v),   we first note  that for $0 \leq m \leq n$ there holds $\phi(n)-\phi(m) \leq 2(k(n)-k(m))+2$  and  $\phi(m)-\phi(n) \leq 2(k(m)-k(n))+2$. This can be seen by careful inspection of (\ref{eq: different repr}). Additionally, we use that for $t\leq n_1\leq n_2$  there holds  $k(n_1-t) - k(n_1) + k(n_2 + t) - k(n_2)\leq 1 $.    
 
  We proceed with (vi). The  case   $m=3$ follows from (iii) and $\phi(3) = 3$, see Table \ref{table2} below. The case $m=4$ follows by taking additionally (i)  and $\phi(4)= 4$ into account.  For $m=5$, we use (iv)  and the fact that $\phi(5) = 3$.   Now suppose that $n \ge 11 $ and  $6\leq m,n-m\leq n$.

We  will use the following property: let $k_1,k_2,m_0 \in \mathbb{N}$, with $m_0 \ge k_1 \ge k_2$. Since the function $x \mapsto \sqrt{x}$ is concave and increasing, the function
\begin{align}\label{eq: 4timesprop}
m \mapsto \frac{1}{2}\sqrt{2m-k_1}+\frac{1}{2}\sqrt{2(n-m)-k_2}, \quad m_0\leq m,n-m \leq n
\end{align}
attains its minimum for $m=m_0$. Moreover, we will use that $\lfloor a+ b \rfloor \le \lfloor a  \rfloor + \lfloor b \rfloor  +1$. We will work directly with the definition of $\phi$, see \eqref{def: phi}.  Consulting  Table~\ref{table2}, we note that the cases $11 \le n \le 32$ can be checked directly by comparing the values of $\phi(n-m)+\phi(m)$ and $\phi(n)$. We can therefore assume that $n \geq 33$.  There are four cases to consider: (a) $n$ even, $m$ even. (b) $n$ even, $m$ odd. (c) $n$ odd, $m$ even. (d) $n$ odd, $m$ odd.

\noindent\emph{(a) $n$ even, $m$ even.}  By \eqref{eq: 4timesprop} for $k_1 = k_2 = 4$ and $m_0 = 6$, we get
\begin{align*}
\phi(m)+\phi(n-m) &= 2 \lfloor \tfrac{1}{2}\sqrt{2m-4} \rfloor + 2 \lfloor \tfrac{1}{2}\sqrt{2(n-m)-4} \rfloor +4 \\&\geq 2  \lfloor \tfrac{1}{2}\sqrt{2m-4}+ \tfrac{1}{2}\sqrt{2(n-m)-4} \rfloor +2 \\&
\geq 2  \lfloor \tfrac{1}{2}\sqrt{8}+ \tfrac{1}{2}\sqrt{2n-16} \rfloor +2 \\& = 2\left\lfloor\frac{1}{2}\sqrt{2n-4} +\sqrt{2}-\frac{6}{\sqrt{2n-4}+\sqrt{2n-16}}\right\rfloor +2.
\end{align*}
Now for $n \geq 32$ we have that  $\sqrt{2}- 6(\sqrt{2n-4}+\sqrt{2n-16})^{-1} \geq 1$, which indeed yields $\phi(m)+\phi(n-m) \ge 2\lfloor\frac{1}{2}\sqrt{2n-4}\rfloor +2 +2= \phi(n)+2$.


\noindent\emph{(b) $n$ even, $m$ odd.}  Observe that in this case we have $m,n-m\ge7$ and thus $n \ge 14$.  By \eqref{eq: 4timesprop} for $k_1 = k_2 = 5$ and $m_0 = 7$, we obtain 
\begin{align*}
\phi(m)+\phi(n-m) &= 2 \lfloor-\tfrac{1}{2}+ \tfrac{1}{2}\sqrt{2m-5} \rfloor + 2 \lfloor -\tfrac{1}{2}+\tfrac{1}{2}\sqrt{2(n-m)-5} \rfloor +6 \\&\geq 2  \lfloor \tfrac{1}{2}\sqrt{2m-5}+ \tfrac{1}{2}\sqrt{2(n-m)-5}-1 \rfloor +4 \\&
\geq 2  \lfloor \tfrac{1}{2}\sqrt{9}+ \tfrac{1}{2}\sqrt{2n-19}-1 \rfloor +4 \\&
= 2\left\lfloor\frac{1}{2}\sqrt{2n-4} +  \frac{1}{2}  -\frac{1}{2}\frac{15}{\sqrt{2n-4}+\sqrt{2n-19}}\right\rfloor +4.
\end{align*}
We check that for all   $n \ge  34$ (note $n$ is even)  there holds  $1 - 15(\sqrt{2n-4}+\sqrt{2n-19})^{-1} \geq 0$,  and thus $\phi(m)+\phi(n-m)  \ge \phi(n)+2$. 


\noindent\emph{(c) $n$ odd, $m$ even.}  By \eqref{eq: 4timesprop} for $k_1 = 5$, $k_2 = 4$ and $m_0 = 6$, we obtain 
\begin{align*}
\phi(m)+\phi(n-m) &= 2 \lfloor\tfrac{1}{2}\sqrt{2m-4} \rfloor + 2 \lfloor -\tfrac{1}{2}+\tfrac{1}{2}\sqrt{2(n-m)-5}\rfloor +5 \\&\geq 2  \lfloor \tfrac{1}{2}\sqrt{2m-4}+ \tfrac{1}{2}\sqrt{2(n-m)-5}-\tfrac{1}{2} \rfloor +3 \\&
\geq 2 \lfloor \tfrac{1}{2}\sqrt{ 8  }+ \tfrac{1}{2}\sqrt{2n- 17  }-\tfrac{1}{2} \rfloor +3 \\& =  2\left\lfloor\frac{1}{2}\sqrt{2n-5}-\frac{1}{2} +  \sqrt{2} \EEE - \frac{ 6 \EEE }{\sqrt{2n-5}+\sqrt{2n- 17  }}\right\rfloor +3.
\end{align*}
For  $n \geq 33$  we have that   $\sqrt{2}-6(\sqrt{2n-5}+\sqrt{2n-17})^{-1} \geq 1$,   and thus $\phi(m)+\phi(n-m)  \ge \phi(n)+2$.  


\noindent\emph{(d) $n$ odd, $m$ odd.} We proceed as in (c) by interchanging the roles of $m$ and $n-m$.

 Finally, to see  (vii), \EEE one may follow the lines of the proof of   (vi). \EEE  We sketch only the case where $n$ and $m$   are   even. By repeating the argument in (a) for general $m_0$ we find 
$$\phi(m)+\phi(n-m) \ge 2\left\lfloor\frac{1}{2}\sqrt{2n-4} + \tfrac{1}{2}\sqrt{2m_0-4}-\frac{m_0}{\sqrt{2n-4}+\sqrt{2n-2m_0-4}}\right\rfloor +2. $$
 One can check that for $m_0 =  3945 \EEE $ we have $\phi(m)+\phi(n-m) \ge \phi(n) + 52$ for all $n \ge 2m_0$. 
\end{proof}
%

{
\begin{table}[h]\centering
\begin{tabular}{|c||c|c|c|c|c|c|c|c|c|c|c|c|c|c|c|c|c|c|c|c|c|c|c|c|c|c|c|c|c|c|c|c|c|}
\hline
$n$ & 1 & 2 & 3 & 4 & 5 & 6 & 7 & 8 & 9 & 10 & 11 & 12 & 13 & 14 &15 & 16     \\  \hline 
$\phi(n)$& 1 & 2 & 3 & 4&3&4&5&4& 5&6&5&6&5&6&7 &6\\
\hline 
\hline
$n$ &17 &18 &19 &20 &21 & 22 & 23 & 24 &25 & 26 &27 &28 &29&30 &31&32 \\
\hline
$\phi(n)$  &7&6&7 & 8 &7&8 &7 &8&7&8&9&8&9&8   & 9 & 8   \\
\hline 
  \end{tabular}
\vspace{8mm}
\caption{The function  $\phi(n)$ for $1 \leq n \leq 32$.}  \label{table2}
\end{table}
}

We now are in a position to establish the lower bound for  $q_\mathrm{sat}^n$.  This lower bound together with the upper bound in  Proposition \ref{proposition : upper bound} shows    Proposition \ref{theorem : qsat}.

\begin{proposition}[Lower bound for $q_\mathrm{sat}^n$]\label{proposition : lower bound} Let $n \in \mathbb{N}$.  Then $q_\mathrm{sat}^n \geq \phi(n)$.
\end{proposition}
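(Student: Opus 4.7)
The plan is to prove $q_\mathrm{sat}^n \geq \phi(n)$ by strong induction on $n$. Base cases for small $n$ (up to some explicit threshold $N_0 \geq 11$) are verified directly using Table~\ref{table2} together with Proposition~\ref{proposition : upper bound} and the explicit construction of Subsection~\ref{subsection : upper bound qsat}. For the inductive step, fix a $q_\mathrm{sat}^n$-optimal configuration $C_n$; then $\mathcal{E}(C_n) = -2n + 2\mathcal{Q}(C_n)$, and without loss of generality $\mathcal{Q}(C_n) \geq 0$ (otherwise exchange the two charges). The argument splits into three cases according to the connectivity of $C_n$.

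If $C_n$ is disconnected with components $C_{n_1}, \ldots, C_{n_k}$, Lemma~\ref{lemma : Charge Energybound}(a)(vi) ensures $\mathcal{Q}(C_{n_j}) \geq q_\mathrm{sat}^{n_j}$ for each $j$. Induction combined with the subadditivity estimate Lemma~\ref{lemma : properties phi}(vi), together with the monotonicity bounds (i), (ii) to absorb components of size $1$ or $2$, yields $\mathcal{Q}(C_n) = \sum_j \mathcal{Q}(C_{n_j}) \geq \sum_j \phi(n_j) \geq \phi(n)$. If $C_n$ is connected and contains a bridging atom $x_0$, Lemma~\ref{lemma: bridging}(a) decomposes $C_n$ into $C_{m+1}$ and $C_{n-m}$ (both sharing $x_0$, each of size at least $4$) with $\mathcal{Q}(C_{m+1}) \geq q_\mathrm{sat}^{m+1}$ and $\mathcal{Q}(C_{n-m}) \geq q_\mathrm{sat}^{n-m}$. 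Since the $+1$ charge of $x_0$ is double-counted, $\mathcal{Q}(C_n) = \mathcal{Q}(C_{m+1}) + \mathcal{Q}(C_{n-m}) - 1$, and induction followed by Lemma~\ref{lemma : properties phi}(vi) applied with parameters summing to $n+1$ and then (i) yields $\mathcal{Q}(C_n) \geq \phi(m+1) + \phi(n-m) - 1 \geq \phi(n+1) + 1 \geq \phi(n)$.

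The principal case is when $C_n$ is connected and has no bridging atom. Here Lemma~\ref{lemma : charge interior}(b) produces an optimal subconfiguration $C_{n-d-m}$ with $\mathcal{Q}(C_{n-d-m}) \geq q_\mathrm{sat}^{n-d-m}$ and the key relation
\[
\mathcal{Q}(C_n) = \mathcal{Q}(C_{n-d-m}) + m,
\]
where $d$ is the size of the maximal polygon of $C_n^a$ and $m \geq 4$ (with the strengthened estimate $m \geq 6$ when $\#\mathcal{A}(X_n) \geq \#\mathcal{A}(X_n^{\mathrm{a},\mathrm{bulk}}) + 2$, by Lemma~\ref{lemma : charge interior}(a)). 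The inductive hypothesis gives $\mathcal{Q}(C_{n-d-m}) \geq \phi(n-d-m)$, so the entire claim reduces to the purely arithmetic inequality $\phi(n) - \phi(n-d-m) \leq m$.

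The hardest part is this last inequality. Combining Euler's formula in the spirit of Lemma~\ref{LemmaBoundaryestimate} (suitably adapted to account for acyclic atoms) with the saturation identity $b = 2n - 2\mathcal{Q}(C_n)$ forces $d$ to be of order $\sqrt{n}$, comparable to $\phi(n)$. Since $\phi$ is piecewise of the form $2k + O(1)$ with $k \sim \sqrt{n/2}$, one has $\phi(n) - \phi(n-d-m) = O(1)$, and the desired bound follows with margin $m \geq 4$ in the bulk regime. The truly delicate situation is the borderline case near the corners of the emerging diamond Wulff shape, where the naive convex estimate would be exactly tight; here the strengthened lower bound $m \geq 6$ provided by Lemma~\ref{lemma : charge interior}(a) gives precisely the extra slack needed. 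Executing this step rigorously amounts to a case-by-case verification distinguishing the four branches of \eqref{def: phi} and, when necessary, iterating the boundary-peeling procedure until the residual configuration becomes fully convex.
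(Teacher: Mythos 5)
Your overall architecture coincides with the paper's: induction on $n$, with the three cases (disconnected, bridging atom, neither) handled via Lemma \ref{lemma : Charge Energybound}(a)(vi), Lemma \ref{lemma: bridging}(a), and Lemma \ref{lemma : charge interior}(b) respectively, together with the subadditivity and monotonicity properties of $\phi$ from Lemma \ref{lemma : properties phi}. The first two cases are carried out correctly and match the paper's Steps 1 and 2.

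The gap is in your resolution of the main case. You correctly reduce to $\phi(n)-\phi(n-d-m)\le m$, but your proposed way of closing this is flawed on two counts. First, there is no ``margin'': using Lemma \ref{lemma : card positively charged shell} together with $b=2n-2\mathcal{Q}(C_n)$ one gets $n-d-m\ge n+8-4\mathcal{Q}(C_n)$, and since $\phi(N)\sim\sqrt{2N}$ this makes $\phi(n)-\phi(n-d-m)$ asymptotically \emph{equal} to $4$, i.e.\ the inequality with $m\ge 4$ is exactly tight, not satisfied ``with margin in the bulk regime.'' The tight case is realized by the diamond $D_{1+2k^2+2k}$ itself, where one peeling step removes net charge exactly $m=4$ while $\phi$ drops by exactly $4$. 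Second, and more seriously, the strengthened bound $m\ge 6$ from Lemma \ref{lemma : charge interior}(a) cannot rescue the tight case: that bound is only available when $\#\mathcal{A}(X_n)\ge\#\mathcal{A}(X_n^{\mathrm{a},\mathrm{bulk}})+2$, whereas in the extremal (diamond-like) configurations all atoms are equilibrated, $\mathcal{A}(X_n)=\emptyset$, and $m=4$. So the ``extra slack'' you invoke is simply not there where you need it. What actually closes the argument in the paper is a purely discrete, zero-slack computation: substituting the Euler-formula identity turns the estimate into the self-referential inequality $q_{\rm sat}^n\ge \phi(n+8-4q_{\rm sat}^n)+4$, which is then shown to force $q_{\rm sat}^n\ge\phi(n)$ by an exact case analysis over the four branches of \eqref{eq: different repr}, exploiting that $n-q_{\rm sat}^n$ and $n-\phi(n)$ are even (this is Lemma \ref{lemma : computation of qsatn}). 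Your sketch neither performs this analysis nor identifies a valid substitute for it, so the hardest step of the proof remains open.
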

\begin{proof} 
Let $C_n$ be a $q_\mathrm{sat}^n$-optimal configuration,  i.e.,  $\mathcal{Q}(C_n)=q_\mathrm{sat}^n $ and  $\mathcal{E}(C_n) =\mathcal{E}_\mathrm{min}^n(q_\mathrm{sat}^n)=-2n+2q_\mathrm{sat}^n$. We proceed to show that $q_\mathrm{sat}^n=\mathcal{Q}(C_n) \geq \phi(n)$.
We prove the claim by induction.   In view of Table \ref{table2},  it is elementary to check that for $1 \leq n \leq 10 $ the claim holds.  Indeed, for $1 \leq n \leq 4$ the cardinality of the minority phase is zero, for $5 \leq n \leq 7$ the cardinality of the minority phase is at most one, and for $8 \leq n \leq 10$ the cardinality of the minority phase is at most two.  This is due to the fact  that by Lemma \ref{lemma : Charge Energybound}(a)(i),(iv) each atom of the minority phase is $4$-bonded with bond angles $\pi/2$ and thus  two atoms  of the minority phase can share at most two neighbors.  Finally, three such atoms can only have one neighbor in common. 

 Let $n \ge 11$.  We assume that the statement holds for all $m \in \mathbb{N}$  with  $1\leq m < n$, and prove the statement for $n$. We proceed in three steps. First, we show that the claim holds true if  $C_n$ is not connected. Then, we  treat  the case where  $C_n$   is connected and  contains a bridging atom. \EEE Finally, we  address     the case of connected  $C_n$  without any bridging atoms. \EEE  
 
\noindent\emph{Step 1: $C_n$ is not connected.} We assume that $C_n$ is not connected. Denote by $C_m$ and $C_{n-m}$, $1 \leq m,n-m \leq n$, two sub-configurations consisting of $m$ and $n-m$ atoms, respectively,  with no bonds between them.  Lemma \ref{lemma : Charge Energybound}(a)(vi) implies that $\mathcal{Q}(C_m) \geq q_\mathrm{sat}^m$ and $\mathcal{Q}(C_{n-m}) \geq q_\mathrm{sat}^{n-m}$.  Suppose that $m \leq 2$ or $n-m \le 2$, without restriction say $m \leq 2$.  We can apply the induction  hypothesis,  Table \ref{table2},  and Lemma \ref{lemma : properties phi}(i) ($m$ times)  to obtain
\begin{align*}
\mathcal{Q}(C_n)   = \mathcal{Q}(C_m) + \mathcal{Q}(C_{n-m}) \ge  q_\mathrm{sat}^{m} + q_\mathrm{sat}^{n-m} \geq \phi(m) +\phi(n-m)    = m + \phi(n-m) \geq \phi(n).
\end{align*}
  On the other hand,   if  $m,n-m \geq 3$, the induction   hypothesis and    Lemma \ref{lemma : properties phi}(vi)   (recall $n\geq 11$)   yield
\begin{align*}
 q_\mathrm{sat}^n =  \mathcal{Q}(C_n) = \mathcal{Q}(C_m) + \mathcal{Q}(C_{n-m}) \geq \phi(m) +\phi(n-m) \geq 2+ \phi(n)  \ge \phi(n). 
\end{align*}
This yields the claim in the case  that  $C_n$ is \EEE not connected.

\noindent\emph{Step 2:    $C_n$ is connected and contains   bridging atoms.} Assume that the bond graph contains a  bridging atom. Denote the two configurations that a connected through are   bridging atom by $C_{m}$ and $C_{n-m}$, see   the definition before \eqref{eq: bridgi}.  By Lemma \ref{lemma: bridging}(a) and the induction hypothesis we have  $n \ge m, n-m \ge 4$ and 
\begin{align*}
\mathcal{Q}(C_m) \geq q_\mathrm{sat}^{m+1} -1\geq \phi(m+1)-1 \ \ \ \text{ and } \ \ \ \mathcal{Q}(  C_{n-m} ) \geq q_\mathrm{sat}^{n-m}\geq \phi(n-m).
\end{align*}
By using Lemma \ref{lemma : properties phi}(i),(vi)    we conclude
\begin{align*}
 q_\mathrm{sat}^n = \mathcal{Q}(C_n) =  \mathcal{Q}(C_m) + \mathcal{Q}( C_{n-m}  )\geq \phi(m+1)-1+\phi(n-m) \geq \phi(n+1)+1 \geq \phi(n).
\end{align*}

\noindent\emph{Step 3:   $C_n$ is connected and does not contain any bridging atoms.} Since $C_n$ does not contain any bridging atoms, Lemma \ref{lemma : charge interior}   \EEE  along with  the induction hypothesis  yields 
 \begin{align*}
q_\mathrm{sat}^n=\mathcal{Q}(C_n) \geq q_\mathrm{sat}^{n-d-m}+m  \ge   \phi(n-d-m) +m
\end{align*}
for some $m \ge 4$. By Lemma \ref{lemma : card positively charged shell}  and Lemma \ref{lemma : properties phi}(i),(ii)   we then obtain
\begin{align*}
 q_\mathrm{sat}^n \geq  \phi(n-4q_\mathrm{sat}^n +\eta+2 \# I_{\rm ac}^{\rm ext}) +4. 
\end{align*}
By Lemma \ref{lemma : properties phi}(ii),  $\# I_{\rm ac}^{\rm ext} \ge m \EEE \geq 4$,  and the fact that $\eta$ is even  (see Lemma \ref{RemarkPolygon}(a))  we get
\begin{align}\label{eq: NNN}
q_\mathrm{sat}^n \geq   \phi(n-4q_\mathrm{sat}^n + \eta +2 \# I_{\rm ac}^{\rm ext}) +4 \geq \phi(n+8-4q_\mathrm{sat}^n)+4.
\end{align}
 Suppose first that $n+8-4q_\mathrm{sat}^n \le -1$. Then by  Table~\ref{table2},  \eqref{def: phi},  and the fact that $n- q_\mathrm{sat}^n$ is even  it is elementary to check that $q_\mathrm{sat}^n \ge \lceil (n+9)/4 \rceil \ge \phi(n)-1 $. Otherwise, the claim follows from \eqref{eq: NNN} and Lemma \ref{lemma : computation of qsatn}  below for $x=q_\mathrm{sat}^n$,  where we  again use  that $n- q_\mathrm{sat}^n$ is always even. 
\end{proof}

 In the previous proof we have used the following lemma. 

\begin{lemma} \label{lemma : computation of qsatn} Let $n\geq 7$. Let $x \in \mathbb{N}$ with $x \le n/4 +2$,  $n-x$ even,       and  
\begin{align} \label{ineq : q recursive}
x \ge  4+\phi(n+ 8-4x).
\end{align}
Then there holds $x \ge \phi(n)$.
\end{lemma}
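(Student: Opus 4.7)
The plan is to argue by contradiction: assume $x \le \phi(n)-1$, set $y := n+8-4x$ (which is nonnegative by the hypothesis $x \le n/4+2$), and turn the assumption $x \ge 4+\phi(y)$ into a lower bound $x^2 > 2n-4$ (if $n$ is even) or $x^2 > 2n-5$ (if $n$ is odd). This will contradict the matching upper bound coming from the explicit formula for $\phi(n)$ in \eqref{def: phi}, once one exploits a parity-matching property of $x$ and $\phi(n)$.

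A preliminary and crucial observation is that $x$ and $\phi(n)$ have the same parity. Indeed, $n-x$ is even by assumption, and a direct inspection of \eqref{def: phi} shows that $n-\phi(n)$ is even as well, so $x-\phi(n) = (n-\phi(n))-(n-x) \in 2\mathbb{Z}$. The contradiction hypothesis thus strengthens to $x \le \phi(n)-2$. Next I plug the explicit formula \eqref{def: phi} into the inequality $\phi(y) \le x-4$. For $n$ even (so that $y$ and $x$ are both even), in the generic sub-case $y \ge 2$ the formula reads $\phi(y) = 2+2\lfloor \tfrac{1}{2}\sqrt{2y-4}\rfloor$, so $\lfloor \tfrac{1}{2}\sqrt{2y-4}\rfloor \le (x-6)/2$. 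Since $(x-6)/2$ is an integer, this is equivalent to $\sqrt{2y-4} < x-4$; squaring (which is legitimate because $x \ge 4+\phi(y) \ge 6$ whenever $y \ge 2$) and substituting $y=n+8-4x$ yields after straightforward algebra $x^2 > 2n-4$. The degenerate sub-case $y=0$, which corresponds to $x=n/4+2$, is checked by direct substitution (the inequality $x^2 > 2n-4$ then reduces to $(x-4)^2 + 4 > 0$). The case of $n$ odd is entirely analogous, based on the second line of \eqref{def: phi}, with the degenerate sub-case $y=1$ handled by direct substitution; this produces $x^2 > 2n-5$.

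Finally, the formula \eqref{def: phi} immediately yields the matching upper bounds $(\phi(n)-2)^2 \le 2n-4$ for $n$ even and $(\phi(n)-2)^2 \le 2n-5$ for $n$ odd, obtained by dropping the floor in the defining expressions. Combining these with the previous step gives $x^2 > (\phi(n)-2)^2$, and since $x \ge 0$ and $\phi(n)-2 \ge 0$ (using $n \ge 7$), this forces $x > \phi(n)-2$. The parity observation from the beginning then upgrades the strict inequality to $x \ge \phi(n)$, contradicting $x \le \phi(n)-2$. The main technical work is purely book-keeping of the floor functions in \eqref{def: phi}; the most delicate point is the parity matching, which is exactly what makes the hypothesis ``$n-x$ even'' indispensable — without it, the argument would only give $x \ge \phi(n)-1$.
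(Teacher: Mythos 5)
Your proof is correct, and it takes a genuinely different route from the paper's. The paper also begins with the parity observation (that $x-\phi(n)$ and $\phi(8+n-4x)-\phi(n)$ are even, so the contradiction hypotheses can be strengthened by one unit), but then writes $n=1+2k^2+2k+m$, splits into three cases according to the range and parity of $m$, bounds $8+n-4x$ from below in each case by an expression of the form $1+2(k-2)^2+2(k-2)+m'$, and invokes the representation \eqref{eq: different repr} together with the monotonicity of $\phi$ from Lemma \ref{lemma : properties phi}(ii) to reach a numerical contradiction such as $5+2k\le 3+2k$; the small values $k=1$ are handled separately via Table~\ref{table2}. You instead unwind the floor in \eqref{def: phi} directly — using that $\lfloor a\rfloor\le m$ is equivalent to $a<m+1$ for integer $m$, which is exactly where the parity of $x$ enters — square the resulting inequality $\sqrt{2y-4}<x-4$ (resp.\ $\sqrt{2y-5}<x-4$), and compare the outcome $x^2>2n-4$ (resp.\ $x^2>2n-5$) against the trivial bound $(\phi(n)-2)^2\le 2n-4$ (resp.\ $\le 2n-5$) obtained by dropping the floor. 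This avoids the $k,m$-decomposition and the exceptional small cases entirely; the only case distinctions left are the parity of $n$ and the degenerate values $y\in\{0,1\}$ where the displayed formula in \eqref{def: phi} does not apply, both of which you handle correctly. Note that the contradiction framing in your write-up is superfluous: the chain $x^2>2n-4\ge(\phi(n)-2)^2$, hence $x>\phi(n)-2$, hence $x\ge\phi(n)-1$, hence (by parity) $x\ge\phi(n)$, is already a direct proof. Both arguments hinge on the same two ingredients — the evenness of $n-x$ and the explicit formula for $\phi$ — but yours is shorter and more self-contained.
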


\begin{proof}
If $4+\phi\left(8+n-4x\right)  \geq \phi(n)$, the statement follows from  (\ref{ineq : q recursive}) and if $x \geq \phi(n)$ there is nothing to prove. We  now  assume by contradiction that 
\begin{align}\label{ineq : strict phin}
{\rm (i)} \ \ 4+\phi\left(8+n-4x\right) < \phi(n)\ \ \  \text{ and }  \ \  \ {\rm (ii)} \ \ x <\phi(n).
\end{align}
 Since $\phi(8+n-4x)-\phi(n)$ is even  by (\ref{def: phi}),   and  $x-\phi(n)$  is even by assumption and (\ref{def: phi}), \EEE   (\ref{ineq : strict phin}) yields
\begin{align}\label{ineq : strict phin2}
{\rm (i)} \ \ 6+\phi\left(8+n-4x\right) \leq \phi(n) \ \ \  \text{ and } \EEE \ \  \   {\rm (ii)} \ \ x  \leq \phi(n)-2.
\end{align}
Let $n=1+2k^2+2k+m$, $m \in \mathbb{N}$, $1 \leq m \leq 4k+4$. By (\ref{ineq : strict phin2})(i) there holds $\phi(n) \geq 6$. Using Table~\ref{table2}, we observe that   $n \geq 10$ and thus also $k \ge 1$.  We distinguish three cases:  (a) $2\leq m \leq 4k+4$ and $m$ even. (b) $1\leq m \leq 2k+1$, $m$ odd. (c) $2k+3\leq m \leq 4k+3$, $m$ odd.

\noindent\emph{(a) $2 \leq m \leq 4k+4$ and $m$ even.}  By (\ref{eq: different repr})  there holds $\phi(n) = 3+2k$,  and (\ref{ineq : strict phin2})(ii) thus     implies
\begin{align*}
8+n-4x &\geq 1 + 2k^2 +2k +m  -4(\phi(n)-2) +8  =  1 + 2k^2 +2k +m  -4(2k+1) +8\\&= 1+2(k-2)^2 +2(k-2) + m  \geq 1+2(k-2)^2 +2(k-2) +2.
\end{align*}
 One can also check that the difference of the first and the last expression is even.   By Lemma \ref{lemma : properties phi}(ii), (\ref{ineq : strict phin2})(i),  and  $\phi(n) = 3+2k$  we then  obtain
\begin{align*}
\phi\big(1+2(k-2)^2 +2(k-2) +2\big)+6 \leq \phi\left(8+n-4x\right) +6 \leq \phi(n)=3+2k.
\end{align*}
  For $n\geq 14$ and thus $k\geq 2$,  there holds $\phi(1+2(k-2)^2 +2(k-2) +2)=3+2(k-2)$ by (\ref{eq: different repr}).  This yields the contradiction $5 +2k \le 3 + 2k$. A contradiction in the cases $n=11,13$, i.e., $k=1$, can be obtained by noting $1+2(k-2)^2 +2(k-2) +2=3$ and   $\phi(3)=3$, cf. Table~\ref{table2}.

\noindent\emph{(b) $1 \leq m \leq 2k+1$ and $m$ odd.}  By (\ref{eq: different repr}) there holds $\phi(n) = 2+2k$ and (\ref{ineq : strict phin2})(ii) thus yields
\begin{align*}
8+n-4x &\geq 1 + 2k^2 +2k +m  -4(\phi(n)-2) +8  =  1 + 2k^2 +2k +m  -8k +8\\&= 1+2(k-2)^2 +2(k-2) + m +4 \geq 1+2(k-2)^2 +2(k-2) +1.
\end{align*}
 The difference of the first and the last expression is even.  By Lemma \ref{lemma : properties phi}(ii),   (\ref{ineq : strict phin2})(i),  and $\phi(n) = 2+2k$  we obtain
\begin{align*}
\phi\big(1+2(k-2)^2 +2(k-2) +1\big)+6 \leq \phi\left(8+n-4x\right) +6 \leq \phi(n)=2+2k.
\end{align*}
  For $n\geq 14$ and thus $k\geq 2$, \EEE there holds $\phi(1+2(k-2)^2 +2(k-2)+1)=2+2(k-2)$ by (\ref{eq: different repr}). This yields the contradiction $4 +2k \le 2 + 2k$. A contradiction  in the cases $n=10,12$, i.e., $k=1$, can be obtained by noting $1+2(k-2)^2 +2(k-2) +1=2$ and   $\phi(2)=2$, cf. Table~\ref{table2}.

\noindent\emph{(c) $2k+3 \leq m \leq 4k+3$ and $m$ odd.} By (\ref{eq: different repr}) there holds $\phi(n) = 4+2k$ and (\ref{ineq : strict phin2})(ii) thus  implies
\begin{align*}
8+n-4x &\geq 1 + 2k^2 +2k +m  -4(\phi(n)-2) +8  =   1 + 2k^2 +2k +m  -4(2k+2) +8\\&= 1+2(k-2)^2 +2(k-2) + m -4 \geq 1+2(k-2)^2 +2(k-2) +2(k-2)+3.
\end{align*}
 The difference of the first and the last expression is even.  By Lemma \ref{lemma : properties phi}(ii), (\ref{ineq : strict phin2})(i),  and $\phi(n) = 4+2k$   we obtain
\begin{align*}
\phi\big(1+2(k-2)^2 +2(k-2) +2(k-2) +3\big)+6 \leq \phi\left(8+n-4x\right) +6 \leq \phi(n)=4+2k.
\end{align*}
  For $n\geq 14$ and thus $k\geq 2$,  there holds $\phi(1+2(k-2)^2 +2(k-2)+2(k-2) +3)=4+2(k-2)$ by (\ref{eq: different repr}).  This yields the contradiction $6 +2k \le 4 + 2k$. A contradiction  in the cases $n=10,12$, i.e., $k=1$, can be obtained by noting $1+2(k-2)^2 +2(k-2) +2(k-2)+3=2$ and   $\phi(2)=2$, cf. Table~\ref{table2}.
\end{proof}

\subsection{Crystallization result for \EEE $q_\mathrm{sat}^n$-optimal configurations} \label{subsection : geometry qsat}

This subsection is devoted to the proof of  Theorem \ref{theorem : GeometryGroundstate}(a). \EEE In Lemma \ref{lemma: removi} we first show that $q^{n}_{\rm sat}$-optimal configurations are  connected and do not contain bridging atoms   after removing a finite number of atoms independently of $n$. Afterwards, we control the number of non-equilibrated atoms (Lemma \ref{lemma : smallness of A}). This then allows us to show   Theorem \ref{theorem : GeometryGroundstate}(a).

 Recall that in Proposition \ref{theorem : qsat} we have shown that $\phi(n) = q_{\rm sat}^{n}$ for all $n \in \mathbb{N}$. In the following, we will use this equality without further notice.  As before, it is not restrictive to consider configurations $C_n$ with $\mathcal{Q}(C_n) \ge 0$.

\begin{lemma}[Connectedness, bridging atoms]\label{lemma: removi}
 Let \EEE $C_n$ be a $q_\mathrm{sat}^n$-optimal configuration. 
 
\noindent (a) If $n\geq 11$ and  $C_n$ is not connected, we can remove  $m \in \lbrace 1,2\rbrace$  $0$-bonded \EEE atoms from  $C_n$ to obtain a  connected  $q_{\rm sat}^{n-m}$-optimal configuration.

\noindent (b)  Let $n\geq 26$  and \EEE let $m_0$ be the constant from Lemma \ref{lemma : properties phi}(vii).   If $C_n$ contains a bridging atom, there holds $\mathcal{Q}(C_n) \ge \phi(n- 26 \EEE ) + 2$. We can remove $m \in \lbrace 1,\ldots,m_0\rbrace$ atoms from $C_n$ to obtain a configuration $C_{n-m}$ which  is $q_{\rm sat}^{n-m}$-optimal and  does not contain bridging atoms.  
\end{lemma}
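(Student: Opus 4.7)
The plan exploits the identity $\mathcal{Q}(C_n) = q_\mathrm{sat}^n = \phi(n)$ from Proposition \ref{theorem : qsat}, the rigid bonding constraints of Lemma \ref{lemma : Charge Energybound}(a), and the sub-/super-additive behavior of $\phi$ collected in Lemma \ref{lemma : properties phi}.

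For (a), I first observe that, since every atom of charge $-1$ is $4$-bonded by Lemma \ref{lemma : Charge Energybound}(a)(i), each connected component of $C_n$ is either a singleton of charge $+1$ or contains at least five atoms. Writing $k_1$ for the number of singletons and $k_2$ for the number of large components, Lemma \ref{lemma : Charge Energybound}(a)(vi) together with Proposition \ref{theorem : qsat} yields $\mathcal{Q}(C_{m_i}) \geq \phi(m_i)$ per component, whence $\phi(n) \geq k_1 + \sum_{i=1}^{k_2}\phi(m_i)$. The case $k_2 = 0$ forces $\phi(n) \geq n$, impossible for $n \geq 5$. For $k_2 \geq 2$, I combine Lemma \ref{lemma : properties phi}(vi) on the two largest components (applicable since $n \geq 11$; the borderline case $m_1 + m_2 = 10$ is handled by a direct computation against Table \ref{table2}) with $k_1$ applications of Lemma \ref{lemma : properties phi}(i) to absorb the singletons, obtaining $\phi(n) \geq \phi(n) + 2$, a contradiction. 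Hence $k_2 = 1$ and $\phi(n) \geq k_1 + \phi(n - k_1)$; Lemma \ref{lemma : properties phi}(iii) eliminates $k_1 = 3$, and iterating (iii) and (iv) together with (i) eliminates $k_1 \geq 4$, leaving $k_1 \in \{1, 2\}$. Removing the $k_1$ singletons (each contributing zero energy) yields a connected configuration of charge $\phi(n - k_1)$ and energy $-2(n - k_1) + 2\phi(n - k_1)$, hence $q_\mathrm{sat}^{n - k_1}$-optimal.

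For (b), Lemma \ref{lemma: bridging}(a) at the bridging atom $x_0$ provides the splitting $C_m \cup \{x_0\} \cup C_{n-m-1}$ with $m, n-m \geq 4$, together with the estimates $\mathcal{Q}(C_m) \geq \phi(m+1) - 1$ and $\mathcal{Q}(C_{n-m}) \geq \phi(n-m)$. Summation produces the key inequality
\[
\phi(m+1) + \phi(n-m) \leq \phi(n) + 1,
\]
while Lemma \ref{lemma : properties phi}(vi) (applicable since $(m+1) + (n-m) = n+1 \geq 11$ and $m+1, n-m \geq 3$) yields the reverse bound $\phi(m+1) + \phi(n-m) \geq \phi(n+1) + 2 \geq \phi(n) + 1$ via Lemma \ref{lemma : properties phi}(i), forcing equality throughout. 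In particular $\mathcal{Q}(C_{n-m}) = \phi(n-m)$, so $C_{n-m}$ is genuinely $q_\mathrm{sat}^{n-m}$-optimal. If both $m + 1 \geq m_0$ and $n - m \geq m_0$, Lemma \ref{lemma : properties phi}(vii) combined with (i) forces $\phi(m+1) + \phi(n-m) \geq \phi(n) + 51$, contradicting the display. Hence, up to interchanging the two sides, $m + 1 \leq m_0 - 1$, and removing the $m \leq m_0 - 2$ atoms of $C_m$ yields the required $q_\mathrm{sat}^{n-m}$-optimal configuration in which $x_0$ has become a $1$-bonded flag rather than a bridging atom. The lower bound $\mathcal{Q}(C_n) \geq \phi(n-26) + 2$ follows by substituting $\mathcal{Q}(C_n) = \phi(m+1) + \phi(n-m) - 1$ and using the $1$-Lipschitz property of $\phi$ (Lemma \ref{lemma : properties phi}(i)) to pass from $\phi(n-m)$ to $\phi(n-26)$, carefully tracking constants via the explicit small-argument values of $\phi$ in Table \ref{table2}.

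The principal obstacle in (b) is verifying that the configuration obtained after the single removal step contains no further bridging atoms, or else controlling the number of iterations so that the cumulative removal stays within $\{1, \ldots, m_0\}$; this should follow from propagation of the rigidity enforced by the equality case of Lemma \ref{lemma : properties phi}(vi) across the tree of $2$-connected blocks of $C_n$. A secondary technicality is identifying the precise constant $26$ in the lower bound on $\mathcal{Q}(C_n)$ via a case analysis on the size of the smaller side, together with the $1$-Lipschitz estimate for $\phi$.
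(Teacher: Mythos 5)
Your part (a) is correct and follows essentially the same route as the paper: rule out two components of three or more atoms via Lemma \ref{lemma : Charge Energybound}(a)(vi) and the superadditivity of $\phi$ in Lemma \ref{lemma : properties phi}(vi), conclude that the extra components are at most two $0$-bonded singletons of charge $+1$, and verify $q_{\rm sat}^{n-m}$-optimality of the remainder by combining Lemma \ref{lemma : properties phi}(i) with Lemma \ref{lemma : Charge Energybound}(a)(iii),(b). In part (b), however, two steps do not go through as written. The first is exactly the point you defer as ``the principal obstacle'': after removing the small side of \emph{one} bridging atom you must know that no bridging atoms remain, and your proposed ``propagation of rigidity across the tree of $2$-connected blocks'' is not an argument. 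No such machinery is needed: one shows directly that $C_n$ contains \emph{at most one} bridging atom, by applying the splitting of Lemma \ref{lemma: bridging}(a) at two putative bridging atoms and using Lemma \ref{lemma : properties phi}(ii),(vi) twice to get $\mathcal{Q}(C_n)\ge \phi(m_1+1)+\phi(m_2+1)+\phi(n-m_1-m_2)-2\ge \phi(n+2)+2\ge\phi(n)+2$, contradicting $\mathcal{Q}(C_n)=q_{\rm sat}^n$. Since being a bridging atom depends only on an atom's own bonds and cycles, removing one side of the unique bridging atom creates no new ones (the bridging atom itself becomes a $1$-bonded flag). Without this uniqueness step your proof of (b) is incomplete.

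The second gap is quantitative: deriving $\mathcal{Q}(C_n)\ge\phi(n-26)+2$ from the $1$-Lipschitz property of $\phi$ alone fails. For $m=4$ you would pass from $\phi(n-4)$ to $\phi(n-26)$ at a cost of up to $22$ while gaining only $\phi(5)-1=2$, yielding $\mathcal{Q}(C_n)\ge\phi(n-26)-20$. For $4\le m\le 26$ one needs the monotonicity of $\phi$ along parity classes, Lemma \ref{lemma : properties phi}(ii), which gives $\phi(n-m)\ge\phi(n-26)-(m\,\mathrm{mod}\,2)$ and reduces the claim to $\phi(m+1)\ge 3+(m\,\mathrm{mod}\,2)$, checkable from Table \ref{table2}; and for $27\le m\le m_0-2$ (a range your own bound $m+1\le m_0-1$ leaves open and which you do not treat) one needs the transfer inequality Lemma \ref{lemma : properties phi}(v) with $t=m-26$. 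On the positive side, your observation that the two-sided bounds force the exact equality $\phi(m+1)+\phi(n-m)=\phi(n)+1$, and hence $\mathcal{Q}(C_{n-m})=\phi(n-m)$ directly, is a clean shortcut through the paper's contradiction argument in its Step 3; but it does not repair either of the gaps above.
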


 We defer the proof and continue with the next ingredient for the proof of Theorem \ref{theorem : GeometryGroundstate}(a). \EEE We  show that the number of non-equilibrated atoms $\mathcal{A}(X_n)$ can be controlled. We remark that the bound on $\mathcal{A}(X_n)$ is not sharp and could be improved at the expense of more elaborated methods. As our focus lies on a qualitative description of the geometry of optimal configurations, we refrain from entering into finer estimates.

\begin{lemma}[Control on $\#\mathcal{A}(X_n)$] \label{lemma : smallness of A} Let $C_n$ be a   $q_\mathrm{sat}^n$-optimal configuration  without \EEE bridging atoms. Then $\#\mathcal{A}(X_n) \leq  50  $. 
\end{lemma}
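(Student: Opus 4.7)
I argue by contradiction and assume $\#\mathcal{A}(X_n) \ge 51$. The core idea is an iterative peeling argument terminating in a contradiction with charge conservation and the super-additivity estimate of Lemma \ref{lemma : properties phi}(vii).

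The starting observation is that, by Lemma \ref{lemma : Charge Energybound}(a)(iv)--(v), every non-equilibrated atom in a $q_{\rm sat}^n$-optimal configuration is necessarily $2$-bonded: a $1$-bonded atom is equilibrated vacuously, while $3$- and $4$-bonded atoms have all bond angles in $\{\pi/2,\pi,3\pi/2\}$. Since $C_n$ contains no bridging atoms, each non-equilibrated atom lies on some simple cycle of the bond graph and will eventually be reached by the peeling procedure.

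Set $C^{(0)} := C_n$ and construct $C^{(k+1)}$ inductively. Given $C^{(k)}$, first invoke Lemma \ref{lemma: removi} to discard at most $m_0 + 2$ atoms so that the resulting configuration is connected and free of bridging atoms, and then apply Lemma \ref{lemma : charge interior} to remove its boundary layer together with the exterior acyclic atoms, yielding $C^{(k+1)}$. Writing $m_k$ for the charge loss at step $k$ and $a_k := \#\mathcal{A}(X^{(k)})$, Lemma \ref{lemma : charge interior} supplies $m_k \ge 4$, the identity $a_{k+1} = \#\mathcal{A}(X^{(k),a,\mathrm{bulk}})$, and the sharpened bound $m_k \ge 6$ whenever $a_{k-1} - a_k \ge 2$. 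Terminate the iteration at some $K$ once $n_K$ falls below $m_0$. Combining charge conservation with $\mathcal{Q}(C^{(k)}) \ge q_{\rm sat}^{n_k} = \phi(n_k)$ (Theorem \ref{theorem: min-en2}) then gives
$$\phi(n) = \mathcal{Q}(C^{(0)}) \ge \phi(n_K) + \sum_{k=1}^K m_k.$$

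On the other hand, the assumption $a_0 \ge 51$ together with the fact that every non-equilibrated atom eventually leaves the iteration forces $\sum_{k=1}^K (a_{k-1} - a_k) \ge 51 - O(1)$, where the $O(1)$ term absorbs the bounded losses from Lemma \ref{lemma: removi} across stages. Summing the lower bounds $m_k \ge 4 + 2 \cdot \mathbf{1}_{\{a_{k-1} - a_k \ge 2\}}$ along the iteration and matching the accumulated extra contributions against the $+52$ of Lemma \ref{lemma : properties phi}(vii) (applied after a phase decomposition of the iteration into chunks of size $\ge m_0$) yields $\sum m_k > \phi(n) - \phi(n_K)$, contradicting the displayed inequality and thus establishing $\#\mathcal{A}(X_n) \le 50$. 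The main obstacle is the bookkeeping: intermediate configurations $C^{(k)}$ need only satisfy $\mathcal{Q} \ge q_{\rm sat}^{n_k}$ (rather than equality), so extensions of Lemmas \ref{lemma : charge interior} and \ref{lemma: removi} to merely optimal configurations must be invoked, and the per-step $+2$ gains from big-jump steps must be aligned carefully with the $+52$ of Lemma \ref{lemma : properties phi}(vii), allowing a $2$-unit slack to absorb the reduction costs.
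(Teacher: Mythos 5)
Your opening observation (non-equilibrated atoms in a $q_{\rm sat}^n$-optimal configuration are $2$-bonded atoms of charge $+1$, by Lemma \ref{lemma : Charge Energybound}(a)(i),(iv),(v)) is correct and is indeed used in the paper, but the accounting at the heart of your peeling argument does not close. Your baseline is $\phi(n)-\phi(n_K)\ge\sum_k m_k\ge 4K$ plus $2$ for each step in which at least two non-equilibrated atoms leave. This is not contradictory: for the diamond $D_n$ every layer contributes exactly $m_k=4$ and $\phi(n)-\phi(n_K)=\sum_k m_k$ with equality, so the only possible surplus comes from the big-jump steps — and nothing in your setup prevents the $51$ atoms from leaving one at a time (surplus $0$) or in a single step (surplus $2$). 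The accumulated surplus therefore need not exceed any fixed threshold, and the attempt to match it against the $+52$ of Lemma \ref{lemma : properties phi}(vii) cannot work; that lemma is a superadditivity statement for splitting $n$ into two \emph{disjoint} parts each of size $\ge m_0$ and has no counterpart for nested layer removal. The acknowledged ``obstacle'' that intermediate configurations only satisfy $\mathcal{Q}\ge q_{\rm sat}^{n_k}$ is likewise not resolved by a ``$2$-unit slack''.

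The mechanism that actually produces the contradiction in the paper is entirely missing from your proposal. First, one needs the quantitative link $\#\mathcal{A}(X_n)\le\tfrac32\eta$ (each non-equilibrated atom sits in a non-square elementary polygon), so $\#\mathcal{A}(X_n)\ge 51$ forces $\eta\ge 34$. Combined with the Euler-formula identity of Lemma \ref{lemma : card positively charged shell}, such a large defect excess makes the interior so large that a charge surplus of merely $+6$ over $\phi$ of a slightly reduced particle number, namely $\mathcal{Q}(C_n)\ge\phi(n-d-26-m)+6$, already contradicts $\mathcal{Q}(C_n)=\phi(n)$ via Lemma \ref{lemma : computation of qsatn}. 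The proof is then an induction on $n$ over a \emph{single} layer removal: either two non-equilibrated atoms sit on the boundary (then $m\ge 6$), or the interior charge exceeds $\phi(n-d-m)$ by $2$, or — since the induction hypothesis forces $\#\mathcal{A}(X_{n-d-m})=50$ exactly — the interior configuration must contain a bridging atom, which yields the extra $+2$ through Lemma \ref{lemma: removi}(b). Each branch delivers the fatal $+6$. Without the $\eta$-bound, the Euler-formula constraint, and the induction/bridging-atom dichotomy, your global bookkeeping has no source of contradiction.
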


 We again defer the proof and proceed to show Theorem \ref{theorem : GeometryGroundstate}(a). \EEE

%
%
%
%
%
%
%
%
%
%
%

\begin{proof}[Proof of Theorem \ref{theorem : GeometryGroundstate}(a)]    Let $m_0$ be the constant from Lemma \ref{lemma : properties phi}(vii) and define $n_0 =  52   m_0+2$. Let $C_n$ be a $q_\mathrm{sat}^n$-optimal configuration. If $n < n_0$, the statement is trivial, we therefore suppose that $n \ge n_0$.    The goal is to prove that,  after removing at most  $2m_0+2$ atoms, \EEE the remaining configurations   is a subset of the square lattice.  In view of Lemma \ref{lemma: removi}, we can remove $m \le m_0 +2$  atoms from $C_n$ to obtain a connected configuration $C_{n-m}$ without bridging atoms which is a $q_\mathrm{sat}^{n-m}$-optimal configuration. Consequently, it suffices to  consider a   $q_\mathrm{sat}^n$-configuration \EEE $C_n$, $n \ge   51 \EEE  m_0$,  which is connected without bridging atoms and to show that, after removing at most $m_0$ atoms,  it is a subset of the square lattice. 

\noindent \emph{Step 1: Proof for connected configurations without bridging atoms.} We introduce \EEE
\begin{align*}
\mathcal{X}(X_n) := \left\{X_m \subset X_n : X_m \text{ connected and  $\mathcal{A}(X_m) = \emptyset$\EEE}\right\}.
\end{align*}
Note that, if $X_m \in \mathcal{X}(X_n)$, then up to isometry $X_m \subset \mathbb{Z}^2$. Choose $X_n^{\mathrm{max}} \in \mathcal{X}(X_n)$ as a maximal element, that is $\#X_m \leq \# X_n^\mathrm{max}$ for all $X_m \in \mathcal{X}(X_n)$. Denote its cardinality by $n_\mathrm{max} \leq n$. As $C_n$ is connected and $\#\mathcal{A}(C_n)\leq  50 \EEE$, it is elementary to see that $C_n$ consists of at most $ 51 \EEE $ sub-configuration each of which subset of a (different) square lattice. Since $n \ge  51 \EEE  m_0$, this  implies $n_{\rm max} \ge m_0$.       Additionally, we set $C_n^\mathrm{max}=\{(x_i,q_i) : x_i \in X_n^\mathrm{max}\}.$
Our goal is now to prove that
\begin{align}\label{ineq : Xn setminus Xnmax}
\#\left(X_n \setminus X_n^\mathrm{max}\right)=n-n_\mathrm{max} \le m_0.
\end{align}
 The main ingredient for the  proof \EEE is the estimate \EEE
\begin{align}\label{ineq : chargesXnmax}
{\rm (i)} \ \ \mathcal{Q}(C_n^\mathrm{max}) \geq \phi(n_\mathrm{max}), \ \ \ \ \ \ \ \ {\rm (ii)} \ \      \mathcal{Q}(C_n \setminus C_n^\mathrm{max}) \geq \phi(n-n_\mathrm{max}+ 50 \EEE )-  50 \EEE  \EEE.
\end{align}
 We defer the proof of \eqref{ineq : chargesXnmax} to Step 2 below and first show \eqref{ineq : Xn setminus Xnmax}. Assume by contradiction that  (\ref{ineq : Xn setminus Xnmax}) \EEE does not hold true, i.e., $n-n_\mathrm{max} > m_0$. By (\ref{ineq : chargesXnmax})   we obtain
\begin{align*}
\mathcal{Q}(C_n) = \mathcal{Q}(C_n^\mathrm{max}) + \mathcal{Q}(C_n \setminus C_n^\mathrm{max}) \geq \phi(n_\mathrm{max}) + \phi(n-n_\mathrm{max}+  50 \EEE  )-  50 \EEE  \EEE.
\end{align*}
Since also $n_{\rm max} \ge m_0$, we then derive   by   Lemma \ref{lemma : properties phi}(ii),(vii) 
\begin{align*}
\mathcal{Q}(C_n) \geq \phi(n_\mathrm{max}) + \phi(n-n_\mathrm{max}+  50 \EEE  )-  50 \EEE   \geq   \phi(n+  50 \EEE )+2 \geq \phi(n)+2.
\end{align*}
This  yields $\mathcal{Q}(C_n)  \ge q_\mathrm{sat}^n+2$  and  contradicts \EEE  $\mathcal{Q}(C_n)=q_\mathrm{sat}^n$.

\noindent \emph{Step 2: Proof of (\ref{ineq : chargesXnmax}).} To conclude the proof, it remains to confirm (\ref{ineq : chargesXnmax}). As a preparation, \EEE we  introduce
\begin{align*}
\mathrm{d}X_n^\mathrm{max}:=\{x_i \in X_n^\mathrm{max} : \mathcal{N}(x_i) \setminus X_n^\mathrm{max} \neq \emptyset\},
\end{align*}
 where $ \mathcal{N}(x_i) $ (see \eqref{eq: neighborhood}) is defined with respect to $X_n$. \EEE First, we prove that 
 \begin{align}\label{subset : dXnmax} 
 \mathrm{d}X_n^\mathrm{max} \subset X_n^+ \cap \mathcal{A}(X_n).
 \end{align}
To see this, let $x_i \in \mathrm{d}X_n^\mathrm{max}$ and   choose $x_j \in \mathcal{N}(x_i) \setminus X_n^\mathrm{max}$.  Suppose by contradiction that $x_i  \notin \mathcal{A}(X_n)$ or $x_i \in X_n^-$. In both cases, this implies $x_i  \notin \mathcal{A}(X_n)$ by Lemma \ref{lemma : Charge Energybound}(a)(i),(iv). Then  all bond angles at $x_i$ are integer multiples of $\frac{\pi}{2}$, and thus the configuration $X_n^\mathrm{max} \cup \{x_j\} $ is such that  $\mathcal{A}(X_n^\mathrm{max} \cup \{x_j\} )=\emptyset$\EEE. This implies $X_n^\mathrm{max} \cup \{x_j\} \in \mathcal{X}(X_n)$  and \EEE contradicts the maximality of $\# X_n^{\rm max}$.

 We now show (\ref{ineq : chargesXnmax}) and begin with (i). \EEE By (\ref{subset : dXnmax}) and Lemma \ref{lemma : Charge Energybound}(a)(i) there holds $\#(\mathcal{N}(x_i) \cap X_n^\mathrm{max}) = \#\mathcal{N}(x_i) =4$  for all $x_i \in X_n^{\rm max}$ with $q_i = -1$, i.e., all negatively charged atoms of $X_n^\mathrm{max}$ are $4$-bonded.  Moreover, since $X_n^\mathrm{max} \subset X_n$, $X_n^\mathrm{max}$ is repulsion-free and all bonds have unit length by Lemma \ref{lemma : Charge Energybound}(a)(i),(ii). By Lemma  \ref{lemma : Charge Energybound}(b) this implies the equality $\mathcal{E}(C_n^\mathrm{max}) =   -2n_{\rm max} + 2\mathcal{Q}(C_n^\mathrm{max})$. Then  Lemma \ref{lemma : Charge Energybound}(a)(iii) yields      $\mathcal{Q}(C_n^\mathrm{max}) \geq \phi(n_\mathrm{max})$, as desired.

We now prove \eqref{ineq : chargesXnmax}(ii). We set  $\tilde{X}_n := (X_n \setminus X_n^\mathrm{max}) \cup \mathrm{d}X_n^\mathrm{max}$ and  \EEE
\begin{align*}
 \tilde{C}_n   :=  \EEE \{(x_i,q_i) : x_i \in (X_n \setminus X_n^\mathrm{max}) \cup \mathrm{d}X_n^\mathrm{max}\}.
\end{align*}
Since  $\mathrm{d}X_n^\mathrm{max} \subset X_n^+$  by \eqref{subset : dXnmax}, \EEE we obtain
\begin{align}\label{eq : charge Xnmaxc}
\mathcal{Q}( \tilde{C}_n) = \mathcal{Q}(C_n \setminus C_n^\mathrm{max}) + \#\mathrm{d}X_n^\mathrm{max}.
\end{align}
We now proceed   as in the proof of \eqref{ineq : chargesXnmax}(i): let $x_i \in    \tilde{X}_n^-\EEE$. By (\ref{subset : dXnmax}) and Lemma \ref{lemma : Charge Energybound}(a)(i) there holds $\#(\mathcal{N}(x_i) \cap \tilde{X}_n) = \#\mathcal{N}(x_i) =4$. Since $\tilde{X}_n \subset X_n$, $\tilde{X}_n$ is repulsion-free and all bonds are of unit length by Lemma \ref{lemma : Charge Energybound}(a)(i),(ii).  By Lemma  \ref{lemma : Charge Energybound}(b) this implies the equality $\mathcal{E}(\tilde{C}_n) =   -2\#\tilde{X}_n + 2\mathcal{Q}(\tilde{C}_n)$. Then   Lemma \ref{lemma : Charge Energybound}(a)(iii) implies  $\mathcal{Q}(\tilde{C}_n) \geq \phi(\#\tilde{X}_n)$. \EEE

By (\ref{subset : dXnmax}) and Lemma \ref{lemma : smallness of A}  we obtain $\#\mathrm{d}X_n^\mathrm{max} \leq \#\mathcal{A}(X_n)\leq  50 \EEE $.
By using (\ref{eq : charge Xnmaxc}), $\#\tilde{X}_n = n-n_\mathrm{max}+\#\mathrm{d}X_n^\mathrm{max} $, and applying Lemma \ref{lemma : properties phi}(i) ($  50 \EEE  -\#\mathrm{d}X_n^\mathrm{max}$)-times, we obtain
\begin{align*}
\mathcal{Q}(C_n \setminus C_n^\mathrm{max}) &= \mathcal{Q}(  \tilde{C}_n \EEE ) - \#\mathrm{d}X_n^\mathrm{max} \geq   \phi(\#\tilde{X}_n) - \#\mathrm{d}X_n^\mathrm{max}  \\&= \phi(n-n_\mathrm{max}+\#\mathrm{d}X_n^\mathrm{max} ) - \#\mathrm{d}X_n^\mathrm{max} \geq \phi(n-n_\mathrm{max}+ 50 \EEE )-  50. \EEE
\end{align*}
This concludes the proof \EEE of \eqref{ineq : chargesXnmax}(ii). 
\end{proof}

\begin{remark}[Maximal component]\label{remark: bridging-helpi}
For later purposes, we observe that the configuration $C_n^{\mathrm{max}}$ identified in the proof is a connected subset of the square lattice and that it is saturated, i.e., the atoms with charge $-1$ are $4$-bonded. We also note that $\mathcal{Q}(C_n^{\mathrm{max}}) \le \phi(n_{\mathrm{max}}) + 49$. In fact, otherwise by  (\ref{ineq : chargesXnmax})(ii) and  Lemma \ref{lemma : properties phi}(ii),(vi)  we  would get the contradiction  $\mathcal{Q}(C_n) \ge \phi(n_{\rm max}) + 50 +\phi(n-n_\mathrm{max}+ 50 )-  50 \ge \phi(n)+2$.
\end{remark}

\EEE

We proceed \EEE  with the proofs of Lemma \ref{lemma: removi} and Lemma \ref{lemma : smallness of A}. \EEE
 
\begin{proof}[Proof of Lemma \ref{lemma: removi}]
(a)   Let $n\geq 11$ and  assume  by contradiction that $C_n$ is not connected with two connected components of at least three atoms.  Denote by $C_m$ and $C_{n-m}$, $3 \leq m,n-m \leq n$, two  sub-configurations consisting of $m$ and $n-m$ atoms, respectively,  with no bonds between them. By Lemma \ref{lemma : Charge Energybound}(a)(vi)   we get $\mathcal{Q}(C_m)\geq \phi(m)$ and $\mathcal{Q}(C_{n-m})\geq\phi(n-m)$. By   Lemma  \ref{lemma : properties phi}(vi)   \EEE we obtain
\begin{align*}
\mathcal{Q}(C_n) = \mathcal{Q}(C_m) + \mathcal{Q}(C_{n-m}) \geq \phi(m) +\phi(n-m) \geq 2+ \phi(n) =  2 + q_{\rm sat}^n. 
\end{align*}
This contradicts $\mathcal{Q}(C_n)=q_\mathrm{sat}^n$.  Observe that two single atoms may indeed exist, see Fig.~\ref{fig : construction qsat}.  We also note that these (at most two)  atoms are $0$-bonded since they have the same charge $+1$ and $C_n$ has alternating charge distribution.

Thus, we can remove $m \in \lbrace 1,2\rbrace$ atoms of charge $+1$ to obtain a connected configuration $C_{n-m}$ with $\mathcal{Q}(C_{n-m}) = \mathcal{Q}(C_n)-m = q_{\rm sat}^n - m$. By Lemma \ref{lemma : properties phi}(i) this implies $\mathcal{Q}(C_{n-m}) \le q_{\rm sat}^{n-m}$. On the other hand, by construction, $C_{n-m}$ satisfies
$$\mathcal{E}(C_{n-m}) = \mathcal{E}(C_n) = -2n +2\mathcal{Q}(C_n) = -2(n-m) + 2\mathcal{Q}(C_{n-m}).$$ 
Therefore, we get $\mathcal{Q}(C_{n-m}) \ge q_{\rm sat}^{n-m}$, see   Lemma \ref{lemma : Charge Energybound}(a)(iii). Thus, $C_{n-m}$ is $q_{\rm sat}^{n-m}$-optimal.

 \noindent (b) \emph{Step 1: There exists at most one bridging atom.} Assume by contradiction that there exist two bridging atoms.  Denote the two components connected through the first bridging atom  by $C_{m_1},C_{n-m_1}$, and denote the two components of $C_{n-m_1}$ connected through the second bridging atom by  $C_{m_2}$ \EEE and $C_{n-m_1-m_2}$, see \eqref{eq: bridgi}.    By Lemma \ref{lemma: bridging}(a)  we get  $n \geq n-m_1-m_2,m_1,m_2 \ge 4$ and 
\begin{align*}
\mathcal{Q}(C_{m_1}) \geq \phi(m_1+1)-1, \ \ \ \  \mathcal{Q}(C_{m_2}) \geq \phi(m_2+1)-1, \ \ \ \  \mathcal{Q}(C_{n-m_1-m_2}) \geq \phi(n-m_1-m_2).
\end{align*}
 By using  Lemma \ref{lemma : properties phi}(ii) and  Lemma \ref{lemma : properties phi}(vi) twice we derive  
\begin{align}\label{eq: repeati1}
\mathcal{Q}(C_n) \geq \phi(m_1+1)+\phi(m_2+1)+\phi(n-m_1-m_2)-2 &\geq \phi(m_1+m_2+2) + \phi(n-m_1-m_2)\notag \\&\geq \phi(n+2)+2 \geq \phi(n)+2 =  2 + q_{\rm sat}^n.
\end{align}
 This contradicts $\mathcal{Q}(C_n)=q_\mathrm{sat}^n$. 
 
\noindent \emph{Step 2: If there exists a bridging atom, then one of the components contains at most $m_0$ atoms.} 
 Assume  by contradiction  that $C_n$ consists of two components $C_m$, $C_{n-m}$ connected through  a bridging atom  with \EEE $m,n-m \geq m_0+1$.     By Lemma \ref{lemma: bridging}(a) we get \EEE $\mathcal{Q}(C_{m}) \geq \phi(m+1)-1$ and $\mathcal{Q}(C_{n-m}) \geq \phi(n-m)$. Then Lemma \ref{lemma : properties phi}(i),(vii) yield
 \begin{align}\label{eq: repeati2}
 \mathcal{Q}(C_n) =\mathcal{Q}(C_{m}) + \mathcal{Q}(C_{n-m}) \geq \phi(m+1)-1 + \phi(n-m) \geq \phi(n+1) +   51 \EEE &\geq \phi(n)+   50. \EEE
 \end{align}
  Thus, $\mathcal{Q}(C_n) > q_\mathrm{sat}^n$.  \EEE  This contradicts $\mathcal{Q}(C_n)=q_\mathrm{sat}^n$.

\noindent \emph{Step 3: Conclusion.} In view of Step 1 and Step 2, we can remove  $m \in \lbrace 4, \ldots, m_0\rbrace$,  $n-m\ge 4$, \EEE atoms   to obtain a configuration $C_{n-m}$ without bridging atoms. We denote the sub-configuration consisting of the removed atoms by $C_{m}$. By  Lemma \ref{lemma: bridging}(a) we get that $C_{n-m}$ is an optimal configuration and $\mathcal{Q}(C_{n-m}) \ge q_{\rm sat}^{n-m}$.    Now suppose by contradiction that $\mathcal{Q}(C_{n-m}) > q_{\rm sat}^{n-m}$, i.e., $\mathcal{Q}(C_{n-m}) \ge q_{\rm sat}^{n-m} + 2$. Recall by Lemma \ref{lemma: bridging}(a) that  $\mathcal{Q}(C_{m}) \geq \phi(m+1)-1$. Therefore,   since \EEE  $n \geq m, n-m \ge 4$, \EEE by Lemma \ref{lemma : properties phi}(i),(vi)   we derive 
\begin{align*}
\mathcal{Q}(C_n) = \mathcal{Q}(C_m) + \mathcal{Q}(C_{n-m}) \geq \phi(m+1) - 1 +\phi(n-m)+ 2 \geq \phi(n+1) + 3 \ge \phi(n) +2.
\end{align*}
 This contradicts $\mathcal{Q}(C_n)=q_\mathrm{sat}^n$ and shows $\mathcal{Q}(C_{n-m}) = q_{\rm sat}^{n-m}$. It remains to prove $\mathcal{Q}(  C_n \EEE ) \ge \phi(n- 26 \EEE ) + 2$. To see this, we recall by Lemma \ref{lemma: bridging}(a) that 
\begin{align}\label{eq: repeat}
\mathcal{Q}(C_n)  \geq \phi(m+1) - 1 +\phi(n-m).
\end{align} 
 In view of Lemma  \ref{lemma : properties phi}(i),(ii), for $4 \le m \le  26 \EEE $    we derive  $ \mathcal{Q}(C_n)  \ge \phi(m+1) - 1 + \phi(n- 26 \EEE ) - m \, {\rm mod}\, 2$.       Then the result follows by checking Table \ref{table2}. If $m \ge  27 \EEE $, we suppose without restriction that $m+1 \le n-m$, and \EEE we use Lemma  \ref{lemma : properties phi}(v) for   $n_1 = m+1$, $n_2 = n-m$, $t = n_1 -  27 \EEE $  to \EEE  get 
$$\phi(m+1) - 1 +\phi(n-m) \ge \phi(m+1-t) + \phi(n-m+t) -  7 \EEE = \phi( 27 \EEE ) + \phi(n- 26 \EEE ) -   7 \EEE = \phi(n- 26 \EEE  ) + 2,$$
where the last step follows from   $\phi(27) = 9$, \EEE see Table \ref{table2}. In view of \eqref{eq: repeat}, this concludes the proof. 
\end{proof}
\begin{proof}[Proof of Lemma \ref{lemma : smallness of A}] We \EEE prove the statement by induction. For $n \leq 50$ the statement is clearly true.   Now let $n\geq 51$.   We assume  that the statement holds for all $m<n$ and we proceed to prove the statement for $n$. Let $C_n$ be a $q_\mathrm{sat}^n$-optimal configuration without bridging atoms. If $C_n$ is not connected, we can apply Lemma \ref{lemma: removi}(a) to remove $k \in \{1,2\}$ $0$-bonded atoms to obtain a $q_\mathrm{sat}^{n-k}$-optimal configuration. By the induction hypothesis, this new configuration contains at most $50$ non-equilibrated atoms and therefore also $\#\mathcal{A}(X_n) \leq 50$.  Therefore, it is not restrictive to assume that $C_n$ is connected. \EEE We divide the proof into several steps.  Recall the definition of  $I_{\rm ac}$ below \eqref{eq: bridgi}.

%

\noindent \emph{Step 1: $\mathcal{A}(X_n) \cap I_\mathrm{ac}=\emptyset$ and $\#\mathcal{A}(X_n) \leq 3\eta/2$.}  We first prove that $\mathcal{A}(X_n) \cap I_{\rm ac}=\emptyset$.  By \EEE Lemma \ref{lemma: bridging}(b) there  holds $I_\mathrm{ac}\cap X_n^- =\emptyset$  and $ I_\mathrm{ac}\cap X_n^+$ consists of $1$-bonded atoms only.  Since $1$-bonded atoms do not have bond angles, this implies $I_\mathrm{ac} \cap \mathcal{A}(X_n) = \emptyset$.

Next, we prove $\#\mathcal{A}(X_n) \leq  3\eta/2$. By   Theorem \ref{theorem: min-en2}, Remark \ref{rem: repulsionsfree}, Lemma  \ref{RemarkPolygon}(b), and Lemma \ref{lemma : Charge Energybound}(a)(i),(ii)       all squares are regular. Thus, if    $x \in  \mathcal{A}(X_n) \EEE$ is contained in a polygon, this  polygon cannot be a square. Moreover, $x \in X_n^+$ by Lemma \ref{lemma : Charge Energybound}(a)(i),(iv).  Denote now by $f_j$ the number of  elementary  $j$-gons in the bond graph and recall that $f_j = \emptyset$ for $j$ odd or $j \le 3$,  see Lemma \ref{RemarkPolygon}(a).   As $ \mathcal{A}(X_n) \EEE  \cap I_{\rm ac} = \emptyset$,  each $x \in  \mathcal{A}(X_n) \EEE$ is contained in at least one $j$-gon, $j \geq 6$. We also observe that,  due to alternating charge distribution, there can be at most $j/2$ atoms of positive charge in each $j$-gon. This implies  by  \eqref{Excess} \EEE
\begin{align*}
 \#  \mathcal{A}(X_n) \EEE \leq \frac{1}{2}\sum_{j \ge 6} jf_j  \leq \frac{3}{2}  \sum_{j \ge 6} (j-4)f_j =  \frac{3}{2}  \sum_{j \geq 4} (j-4)f_j = \frac{3}{2}\eta.
\end{align*}

 \noindent \emph{Step 2: Preliminaries.} \EEE  From now on, we suppose by contradiction that $\#\mathcal{A}(X_n) \geq 51$. This implies $\eta \ge 34$ by Step 1.    Note by Lemma \ref{lemma : charge interior} \EEE that  $\#I_2+2\#I_3+3\#I_4=2d-m$  and \EEE $\#I_\mathrm{ac}^\mathrm{ext}\geq m$  for some $m \geq 4$. We \EEE will use the following fact several times: the estimate
\begin{align}\label{charge: counter}
\mathcal{Q}(C_n) \geq  \phi(n-d-26 - m)+6 
\end{align}
 along with $\eta \geq 34$ leads to a contradiction: in fact,  by \EEE Lemma \ref{lemma : card positively charged shell} and Lemma \ref{lemma : charge interior} there holds $n-d \in n+4-4\mathcal{Q}(C_n) + \eta +2\#I_\mathrm{ac}^\mathrm{ext} + 2\mathbb{N}_0$ with  $\eta \geq 34$ and  $\#I_\mathrm{ac}^\mathrm{ext}\geq m \ge 4$. \EEE Using   Lemma \ref{lemma : properties phi}(ii) \EEE we obtain
\begin{align*}
\mathcal{Q}(C_n) \geq \phi(n+8 -4(\mathcal{Q}(C_n)-2)) +6.
\end{align*}
 In view of Lemma \ref{lemma : computation of qsatn} applied for $x=\mathcal{Q}(C_n)-2$, we get  the contradiction $\phi(n)-2=\mathcal{Q}(C_n)-2 \ge \phi(n)$, \EEE  see \eqref{eq: NNN} for a similar argument. \EEE

  We  denote by $C_{n-d-m}$ the  optimal \EEE configuration obtained from  Lemma \ref{lemma : charge interior}(b) \EEE satisfying  $\mathcal{A}(X_{n-d-m}) =\mathcal{A}(X_n^{\mathrm{a},\mathrm{bulk}})$,   $X_{n-d-m} \subset X_n$ up to $0$-bonded atoms, \EEE and
\begin{align}\label{charge: qn-d-m}
\mathcal{Q}(C_n) =\mathcal{Q}(C_{n-d-m})+m  \ge \phi(n-d-m) + m. \EEE
\end{align}
 Note by   $n-4\phi(n) \ge -16$ for $n \ge 51$, \EEE  (\ref{def: phi}), Lemma \ref{lemma : card positively charged shell}, and Lemma \ref{lemma : charge interior} that $n-d-m \geq n-4\phi(n) +4 +\eta +m \geq 26$.\EEE

\noindent \emph{Step 3: $C_{n-d-m}$ satisfies $\mathcal{Q}(C_{n-d-m})=\phi(n-d-m)$.}  \EEE  Assume by contradiction that  $\mathcal{Q}(C_{n-d-m}) \geq \phi(n-d-m)+2$. This together with \eqref{charge: qn-d-m}, $m \ge 4$, and Lemma \ref{lemma : properties phi}(ii) \EEE leads to $\mathcal{Q}(C_n) \geq  \phi(n-d-m)+6 \ge \phi(n-d-26-m)+6$ which yields a contradiction by \eqref{charge: counter}.  In view of Lemma \ref {lemma: removi}(a),  up to removing at most two $0$-bonded atoms, we can thus assume that $C_{n-d-m}$ is also connected, and that \EEE $X_{n-d-m} \subset X_n$.

\noindent \emph{Step 4: $\#\mathcal{A}(X_n) \leq \#\mathcal{A}(X_n^{\mathrm{a},\mathrm{bulk}})+1$,  in  particular $\mathcal{A}(X_n) \cap \partial X_n^\mathrm{a} = \emptyset$.}  Recall the definition of  $\partial X_n^{a}$ and $X_n^{a,\mathrm{bulk}}$ below \eqref{def: Ik}. \EEE Suppose by contradiction $\#\mathcal{A}(X_n) \geq \#\mathcal{A}(X_n^{\mathrm{a},\mathrm{bulk}})+2$.  By   Lemma \ref{lemma : charge interior}(a) we have $m \ge 6$. Thus, by \eqref{charge: qn-d-m} \EEE and Lemma  \ref{lemma : properties phi}(ii) we get $\mathcal{Q}(C_n) \geq \phi(n-d-m)+6\ge \phi(n-d-26-m)+6$ which yields a contradiction by \eqref{charge: counter}. In particular, this also shows $\mathcal{A}(X_n) \cap \partial X_n^\mathrm{a} = \emptyset$.  Indeed, \EEE otherwise we would get $\#(\mathcal{A}(X_n) \cap \partial X_n^\mathrm{a}) \geq 2$ by Lemma \ref{RemarkPolygon} which contradicts  \EEE $\#\mathcal{A}(X_n) \leq \#\mathcal{A}(X_n^{\mathrm{a},\mathrm{bulk}})+1$. 

\noindent \emph{Step 5: $C_{n-d-m}$ contains a bridging atom.}  Suppose by contradiction that $C_{n-d-m}$ does not contain a bridging atom. \EEE By Step 3 and Step 4 we get that \EEE
\begin{align}\label{ineq : A}
 51 \le \EEE \#\mathcal{A}(X_n) \leq \#\mathcal{A}(X_n^{\mathrm{a},\mathrm{bulk}})+1 =  \# \mathcal{A}(X_{n-d-m}) + 1 \EEE
\end{align}
 and $\mathcal{Q}(C_{n-d-m}) =\phi(n-d-m)$.  (The equality in \eqref{ineq : A} follows from Lemma \ref{lemma : charge interior}(b).)  \EEE By the induction hypothesis applied on $C_{n-d-m}$, we  get  $\#\mathcal{A}(X_{n-d-m}) \le 50$. This implies \EEE 
  equality in \EEE \eqref{ineq : A}, i.e., ~$50  = \EEE \#\mathcal{A}(X_n)-1    = \#\mathcal{A}(X_{n-d-m}) \EEE $.  Note that this together with Step 1  yields $\eta(C_{n-d-m}) \geq 34$.  By Step  4 (applied for   $X_{n-d-m}$ in place of $X_n$; note that $C_{n-d-m}$ is connected, $q_{\rm sat}^{n-d-m}$-optimal, without bridging atoms, and $\eta(C_{n-d-m}) \geq 34$) \EEE we also get $\mathcal{A}(X_{n-d-m})\cap \partial X_{n-d-m}^\mathrm{a} = \emptyset$.  
  
  Consider the unique  \EEE $x \in \mathcal{A}(X_n) \setminus \mathcal{A}(X_n^{\mathrm{a},\mathrm{bulk}})$. Since $X_n$ does not contain bridging atoms and all elements in $\mathcal{A}(X_n)$ are at least $2$-bonded, $x$ is contained in an elementary polygon $P$. By Lemma \ref{RemarkPolygon} we find  some \EEE $y \in \mathcal{A}(X_n) \cap (P \setminus \{x\})$. We now show that $y$ is a bridging atom in $X_{n-d-m}$.   Note first that  $y \in \mathcal{A}(X_{n-d-m})$ \EEE since otherwise there would hold strict inequality in \eqref{ineq : A}. Observe by Lemma \ref{lemma : Charge Energybound}(a)(iv),(v) that  $x$  is $2$-bonded.  Then, it is also clear that $P\cap X_{n-d-m}$ is contained in the boundary of the exterior face  of $X_{n-d-m}$.  Therefore, $y$ has to be an acyclic atom  since otherwise we obtain the contradiction $\mathcal{A}(X_{n-d-m})\cap \partial X_{n-d-m}^\mathrm{a} \neq \emptyset$.  As also $y$ is is $2$-bonded, see Lemma \ref{lemma : Charge Energybound}(a)(iv),(v), this \EEE shows that the configuration $C_{n-d-m}$ contains the bridging atom $y$.  This contradicts our assumption that $C_{n-d-m}$ does not contain bridging atoms. Summarizing, $C_{n-d-m}$ needs to contain a bridging atom. \EEE

\noindent \emph{Step 6: $\#\mathcal{A}(X_n) \leq 50$.}  Recall that we have supposed by contradiction that $\#\mathcal{A}(X_n) \geq 51$ which implies $\eta \geq 34$ by Step 1.  By Lemma  \ref{lemma: removi}(b) we obtain $\mathcal{Q}(C_{n-d-m}) \geq \phi(n-d-m-26)+2$. This along with \eqref{charge: qn-d-m}  and $m \ge 4$ \EEE  gives
$$
\mathcal{Q}(C_n) \geq \phi(n-d-26-m)+m+2 \geq \phi(n-d-26-m)+6,
$$ 
which leads to a contradiction by \eqref{charge: counter}. 
\end{proof}

\begin{remark}[Theorem \ref{theorem : GeometryGroundstate}(a) without bridging atoms]\label{remark: bridging} By increasing $n_0$ independently  of  $n$, the configuration in Theorem \ref{theorem : GeometryGroundstate}(a) can be constructed in such a way that it is  a  connected, saturated subset of the square lattice without  bridging atoms: the configuration $C_n^{\mathrm{max}}$  in Remark \ref{remark: bridging-helpi}  may contain at most $ 49  $  bridging atoms. In fact, otherwise  one could apply Lemma \ref{lemma: bridging}(a)  and Lemma \ref{lemma : properties phi}(vi)   $50$ times  to obtain an  estimate of the form (see \eqref{eq: repeati1} for a similar argument)
\begin{align}\label{eq: repeati3}
\mathcal{Q}(C_n^{\mathrm{max}}) \geq \phi(n_{\mathrm{max}}+50) + 50 \geq \phi(n_\mathrm{max}) +50,
\end{align}
which contradicts $\mathcal{Q}(C_n^{\mathrm{max}}) \le \phi(n_{\mathrm{max}}) + 49$. In a similar fashion, repeating the arguments in \eqref{eq: repeati2}, one of the components connected through a bridging atom has at most $m_0$ atoms as otherwise \eqref{eq: repeati3} holds. Thus, it suffices to remove at most $49m_0$ atoms to get the desired configuration without bridging atoms.  
\end{remark}

\subsection{Wulff-shape emergence for  $q_\mathrm{sat}^n$-optimal configurations} \label{subsection : geometry qsat2}

This subsection is devoted to the proof of Theorem \ref{theorem : GeometryGroundstate}(b).  Our strategy is to apply an anisotropic quantitative isoperimetric inequality on   suitable interpolations of the  configurations. (A similar idea can be found in \cite{cicalese}.) \EEE  We first state two lemmas which relate the $\infty$-perimeter of the interpolation to the net charge and then prove the main result. \EEE 
 Recall that configurations are called saturated if the atoms of the minority phase are $4$-bonded. We also recall \eqref{Excess}.

\begin{lemma}[Control on $\eta$]\label{lemma:etacontrol} 
 Let $N_1 \in \mathbb{N}$. \EEE  Let $C_n$ be a  connected and \EEE saturated sub-configuration of the square lattice \EEE without bridging atoms satisfying
\begin{align}\label{ineq:chargeCneta}
0\leq \mathcal{Q}(C_n) \leq \phi(n)+N_1.
\end{align}
Then there exists $N_2=N_2(N_1) >0$ such that we can add  $  k \EEE \le N_2$ atoms to $C_n$ to obtain a  connected, \EEE saturated sub-configuration $C_{n+  k \EEE }$ of the square lattice   without bridging atoms satisfying \EEE   $\mathcal{Q}(C_{n+  k \EEE }) \ge 0$ and $\eta(C_{n+  k \EEE })=0$. \EEE
\end{lemma}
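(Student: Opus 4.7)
My plan is to add atoms at every missing lattice point in the interior of $C_n$ (viewed as a sub-configuration of $\mathbb{Z}^2$), and to bound the total number $k$ of added atoms by a function of $N_1$ alone. This is carried out in two stages: first bound the defect count $\eta(C_n)$ via an Euler-type identity combined with the isoperimetric optimality of the Wulff diamond, and then convert this bound into a bound on $k$ through a lattice isoperimetric estimate.

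For the first stage, saturation gives $b = 2n - 2\mathcal{Q}(C_n)$ by Remark~\ref{rem: repulsionsfree}. Combining this with Euler's formula $n - b + f + 1 = 2$ and the face--edge incidence $\sum_j j f_j + d + 2f_\mathrm{flag} = 2b$---valid because, as $C_n$ contains no bridging atoms, every acyclic bond is a flag---I obtain
\begin{equation}\label{eq: plan-eta}
\eta(C_n) \;=\; 4\mathcal{Q}(C_n) - d - 2f_\mathrm{flag} - 4,
\end{equation}
where $d$ is the length of the outer cycle and $f_\mathrm{flag}$ the number of flags. Next, let $\widetilde{C}_{n+k}$ be obtained from $C_n$ by inserting, at each missing lattice point strictly enclosed by the outer cycle, an atom with the charge prescribed by the alternating pattern. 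Since the entire enclosed region is filled, every newly added $-1$ atom has all four lattice neighbors in $\widetilde{C}_{n+k}$, so $\widetilde{C}_{n+k}$ is saturated, satisfies $\eta(\widetilde{C}_{n+k}) = 0$, and shares the same $d$ and $f_\mathrm{flag}$ as $C_n$ (filling only touches the interior). Applying \eqref{eq: plan-eta} to $\widetilde{C}_{n+k}$ and invoking Proposition~\ref{theorem : qsat} gives $d + 2f_\mathrm{flag} \geq 4\phi(n+k) - 4$. Plugging this back, and using $\mathcal{Q}(C_n) \leq \phi(n) + N_1$ together with $\phi(n+k) \geq \phi(n) - 1$ (which follows from Lemma~\ref{lemma : properties phi}(i)--(ii)), produces the uniform bound $\eta(C_n) \leq 4N_1 + 4$.

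For the second stage, every elementary polygon has even length by Lemma~\ref{RemarkPolygon}(a), so each defect polygon has length $\ell_i \geq 6$ and contributes $\ell_i - 4 \geq 2$ to $\eta(C_n)$. Consequently $\sum_i \ell_i \leq 3\eta(C_n) = O(N_1)$. A lattice isoperimetric estimate (from Pick's theorem and the Euclidean isoperimetric inequality) bounds the number of lattice points strictly enclosed by a simple cycle of unit edges of length $\ell_i$ by at most $\ell_i^2/16$. Some of these may be flag atoms of $C_n$ rather than missing, but in any case
$$k \;\leq\; \tfrac{1}{16}\sum_i \ell_i^2 \;\leq\; \tfrac{1}{16}\Bigl(\sum_i \ell_i\Bigr)^{\!2} \;=\; O(N_1^2),$$
which yields the required $N_2 = N_2(N_1)$. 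Setting $C_{n+k} := \widetilde{C}_{n+k}$, the resulting configuration is connected and saturated by construction, contains no bridging atoms (filling can only strengthen connectivity, and every new atom is $4$-bonded), satisfies $\eta(C_{n+k})=0$, and has $\mathcal{Q}(C_{n+k}) \geq \phi(n+k) \geq 0$.

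The main obstacle is the passage from $\eta = O(N_1)$ to $k = O(N_1^2)$: making the lattice isoperimetric claim quantitative and robust to the complicated combinatorial structure of defect polygons. A non-convex defect polygon may enclose several disjoint pockets of missing lattice points, and flag atoms can sit inside the interior face of a defect polygon (contributing to the lattice-point count without being missing); these combinatorial details must be tracked carefully. The presence of flags in \eqref{eq: plan-eta} is a related subtlety, although it is resolved by the fact that $d + 2f_\mathrm{flag}$ is invariant under interior filling, so the flag contribution cancels once the identity is applied both to $C_n$ and to $\widetilde{C}_{n+k}$. Once the bound $k \leq N_2(N_1)$ is established, the verification of connectedness, saturation, absence of bridging atoms, and non-negativity of the charge in $C_{n+k}$ is essentially automatic.
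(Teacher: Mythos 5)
Your construction of $C_{n+k}$ is exactly the paper's: fill every lattice point of $\mathbb{Z}^2\setminus X_n$ enclosed by a cycle, observe that the result is connected, saturated, without bridging atoms and with $\eta=0$, and invoke Lemma \ref{lemma : Charge Energybound}(b),(a)(iii) to get $\mathcal{Q}(C_{n+k})\ge\phi(n+k)$. Where you genuinely diverge is the quantitative bound on $k$. The paper argues charge-theoretically: the added interior configuration $C_n^{\mathrm{int}}$, with its charges flipped, is itself saturated (a non-$4$-bonded added $+1$ atom would force a non-$4$-bonded $-1$ atom of $X_n$, contradicting saturation of $C_n$), whence $\mathcal{Q}(C_n^{\mathrm{int}})\le-\phi(k)$; combined with $\phi(n+k)\le\mathcal{Q}(C_n)+\mathcal{Q}(C_n^{\mathrm{int}})\le\phi(n)+N_1-\phi(k)$ and the near-monotonicity of $\phi$, any $k$ with $\phi(k+1)\gtrsim N_1$ is excluded. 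You instead bound $\eta(C_n)\le 4N_1+4$ via the Euler identity $\eta=4\mathcal{Q}-d-2b_{\mathrm{ac}}-4$ applied to both $C_n$ and $C_{n+k}$, and then convert this into $k=O(N_1^2)$ by Pick's theorem and the rectilinear isoperimetric inequality applied to each defect polygon; this is more bookkeeping but yields an explicit polynomial $N_2(N_1)$ of the same order as the paper's implicit one ($\phi(N_2)\sim\sqrt{2N_2}$ gives $N_2\sim N_1^2$ there too). Two small points to tighten: (i) $d+2f_{\mathrm{flag}}$ is \emph{not} invariant under interior filling, since a flag sitting inside a defect polygon becomes part of a $4$-cycle once its surroundings are filled; what saves you is that $b_{\mathrm{ac}}$ can only \emph{decrease} (every bond of $C_{n+k}$ with an endpoint strictly inside the outer cycle lies in a unit square of the completed grid, and every other bond already belonged to $C_n$), and this monotonicity points in the direction your inequality needs. (ii) Your identity presupposes a well-defined maximal polygon and $b=2n-2\mathcal{Q}(C_n)$, i.e.\ alternating charges along bonds; the latter is the same implicit assumption the paper makes, and the acyclic/exterior-flag bookkeeping should be handled as in Lemma \ref{lemma : card positively charged shell}. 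With these caveats the argument is sound.
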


 We defer the proof to the end of the subsection. \EEE A key ingredient for the proof of Theorem \ref{theorem : GeometryGroundstate}(b) is that,  for (special) saturated configurations, the net charge coincides with the $\infty$-perimeter of a suitable interpolation of the  configuration.  We define  $D(x) := \left\{|x-\cdot|_1 \leq 1\right\}$ for $x \in \mathbb{Z}^2$. For a saturated configuration $C_n$ with $\mathcal{Q}(C_n) \ge 0$ we  introduce the interpolation
\begin{align}\label{def : A(Cn)}
A(C_n) :=  \bigcup_{q_i=-1} D(x_i)
\end{align}
and define the   $\infty$-perimeter by
\begin{align}\label{ineq : infty charge}
\mathrm{Per}_\infty(A(C_n))   :=  \int_{\partial A(C_n)}|\nu_{\partial A(C_n)}|_\infty \, \mathrm{d}\mathcal{H}^1 =  \tfrac{1}{\sqrt{2}} \mathcal{H}^1(\partial A(C_n)).
\end{align}
 Here, $\nu_{\partial A(C_n)}$ is the  outer \EEE  unit normal to the set $A(C_n)$. The second identity in \eqref{ineq : infty charge} is elementary since by   definition  \EEE  there holds $|(\nu_{\partial A(C_n)})_1(x)|=|(\nu_{\partial A(C_n)})_2(x)|=1/\sqrt{2}$ for   $\mathcal{H}^1$-almost every point $x \in \partial A(C_n)$.

\begin{lemma}\label{lemma: extra}
 Let $C_n$ be a   saturated,  connected sub-configuration of the square lattice which does not contain bridging atoms and satisfies $\eta=0$  as well as $\mathcal{Q}(C_n) \ge 0$. \EEE Then  $\mathrm{Per}_\infty(A(C_n)) =  2\mathcal{Q}(C_n)-2$.
 \end{lemma}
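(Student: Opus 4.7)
\medskip
\noindent\textbf{Proof plan for Lemma~\ref{lemma: extra}.}

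The plan is to compute $\mathrm{Per}_\infty(A(C_n))$ by an inclusion--exclusion argument over the diamonds $D(x_i)$, $x_i\in X_n^-$, and then to combine the resulting counting identity with Euler's formula. Set $n^{\pm}=\#X_n^{\pm}$, so that $n=n^++n^-$ and $\mathcal{Q}(C_n)=n^+-n^-$. A direct calculation using \eqref{ineq : infty charge} shows that a single diamond satisfies $\mathrm{Per}_\infty(D(x))=\tfrac{1}{\sqrt 2}\,\mathcal{H}^1(\partial D(x))=\tfrac{1}{\sqrt 2}\cdot 4\sqrt 2=4$. I then claim that two diamonds $D(x_i)$ and $D(x_j)$ with $x_i,x_j\in X_n^-$ meet in a set of positive $\mathcal{H}^1$-measure precisely when $|x_i-x_j|=\sqrt 2$, and that in this case $D(x_i)\cap D(x_j)$ is a line segment of slope $\pm 1$ and Euclidean length $\sqrt 2$, contributing exactly $1$ to the $\infty$-perimeter. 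All other pairs of negative atoms lie at distance $\ge 2$ on the square lattice (since they carry the same charge and $C_n\subset\mathbb{Z}^2$ has alternating charge), and then $D(x_i)\cap D(x_j)$ is at most a single point, hence $\mathcal{H}^1$-negligible. I also need to observe that no three diamonds share a common segment, since three negative atoms on $\mathbb{Z}^2$ cannot be pairwise at distance $\sqrt 2$.

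These observations yield by inclusion--exclusion
\begin{align*}
\mathrm{Per}_\infty(A(C_n))=4n^--2s,
\end{align*}
where $s$ denotes the number of unordered pairs $\{x_i,x_j\}\subset X_n^-$ with $|x_i-x_j|=\sqrt 2$. The next step is to identify $s$ with the number $f$ of bounded faces of the bond graph. Denoting the number of elementary $j$-gons by $f_j$, the hypothesis $\eta(C_n)=0$ combined with Lemma~\ref{RemarkPolygon}(a) forces $f_j=0$ for all $j\ne 4$, so $f=f_4$. Each elementary square in the bond graph has two diagonally opposite corners in $X_n^-$, giving one such pair at distance $\sqrt 2$; conversely, if $x_i,x_j\in X_n^-$ satisfy $|x_i-x_j|=\sqrt 2$, then since $C_n$ is saturated both atoms are $4$-bonded, and one checks directly that their two common neighbors in $\mathbb{Z}^2$ belong to $X_n^+$, so that the four atoms form a unit square of the lattice which, being a unit square of $\mathbb{Z}^2$, encloses no further atoms and is therefore an elementary face. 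This establishes $s=f$.

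Finally, I combine this with Euler's formula for the connected planar bond graph, $n-b+f=1$ (omitting the exterior face), and with $b=4n^-$, which follows from saturation together with alternating charge distribution (every bond has exactly one endpoint in $X_n^-$, and each such endpoint contributes $4$ bonds). Substituting $f=1+b-n=1+4n^--n=1+3n^--n^+$ into $\mathrm{Per}_\infty(A(C_n))=4n^--2f$ yields
\begin{align*}
\mathrm{Per}_\infty(A(C_n))=4n^--2-6n^-+2n^+=2(n^+-n^-)-2=2\mathcal{Q}(C_n)-2,
\end{align*}
as claimed. The principal technical point to be handled carefully is the identification $s=f$: one has to rule out the possibility that a pair of diagonally aligned negative atoms fails to enclose an elementary face, which is precisely where the assumptions of saturation, $\eta=0$, absence of bridging atoms, and $C_n\subset\mathbb{Z}^2$ all come into play. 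The Euler-formula step is then routine once connectedness is invoked.
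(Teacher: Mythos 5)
Your argument is correct, but it follows a genuinely different route from the paper. The paper proves the identity by induction on $\#X_n^-$: it selects the leftmost of the uppermost negative atoms, distinguishes three cases according to how its diamond shares sides with neighbouring diamonds, removes a few atoms (in one case splitting the configuration into two pieces glued at a single atom), and tracks how both $\mathcal{Q}$ and $\mathrm{Per}_\infty$ change; the hypotheses of connectedness, saturation, $\eta=0$ and absence of bridging atoms are used to certify that the reduced configurations again satisfy the inductive hypotheses. You instead compute globally: $\mathrm{Per}_\infty(A(C_n))=4\#X_n^--2s$ by inclusion--exclusion over the diamonds (each shared side has $\infty$-weight $1$ and is counted twice, and no side is shared by three diamonds), you identify $s$ with the number $f$ of elementary faces using $\eta=0$ together with saturation (which guarantees that every diagonal pair of negative atoms at distance $\sqrt2$ closes up into a unit square of the bond graph), and you conclude from Euler's formula $n-b+f=1$ and the bond count $b=4\#X_n^-$. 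This yields a cleaner, essentially computation-free proof that makes the Euler-characteristic origin of the ``$-2$'' transparent, avoids the paper's delicate case (c) where the configuration splits, and in fact does not use the no-bridging-atoms hypothesis at all. Two small points you rely on, both consistent with the paper's own conventions: the identification of the bounded faces in Euler's formula with the elementary polygons (used by the paper in Lemma \ref{LemmaBoundaryestimate} and Lemma \ref{lemma : card positively charged shell}), and the alternating charge distribution of $C_n$, which is not literally in the hypotheses but is implicitly assumed in the paper's proof as well (it is what guarantees that distinct negative atoms lie at Euclidean distance at least $\sqrt2$ and that every bond joins opposite charges, so that $b=4\#X_n^-$).
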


 We again defer the proof and proceed to show Theorem \ref{theorem : GeometryGroundstate}(b). \EEE

\begin{proof}[Proof of Theorem \ref{theorem : GeometryGroundstate}(b)]  Our strategy is to apply an anisotropic quantitative isoperimetric inequality on $A(C_n)$, defined in \eqref{def : A(Cn)}.  To this end, we will modify a given $q_{\rm sat}^n$-optimal configuration such that Lemma \ref{lemma: extra} is applicable.  We split the proof into three steps. In the following, $c$ denotes a universal constant which may vary from line to line, but is independent of $n$.

\noindent\emph{Step 1: Some elementary facts.}  For $n \in \mathbb{N}$ we   set
\begin{align*}
\underline{n} = \max \{  N\leq n:  \,  N=1+2k^2+2k, \,  k\in \mathbb{N} \}.
\end{align*}
\EEE
A careful inspection of  \eqref{eq: different repr} yields  for all   $n \in \mathbb{N}$  
\begin{align}\label{charge:diamon}
{\rm (i)} \ \ \phi(\underline{n}) \le \phi(n) \le \phi(\underline{n}) + 3, \ \ \ \ \ \ {\rm (ii)} \ \ \underline{n}- \phi(\underline{n}) \le {n}- \phi({n})  \le \underline{n}- \phi(\underline{n}) + c\sqrt{n}. 
\end{align}
Let $C'_n$ be a saturated configuration with $\mathcal{Q}(C_n') \ge 0$. We find $\mathcal{L}^2(A(C'_n)) = 2 \# \lbrace q_i =-1 \rbrace$ since $\mathcal{L}^2(D(x)) = 2$ for all $x \in \mathbb{Z}^2$. Thus, $\mathcal{L}^2(A(C'_n)) = n-\mathcal{Q}(C_n')$.  We let  $W_r : =\{|\cdot|_1 \leq r\}$ and note that 
\begin{align}\label{eq: strangeW}
\mathcal{L}^2\big(W_{r(C'_n)}\big) = 2r(C'_n)^2 = \mathcal{L}^2(A(C'_n)), \ \ \ \ \ \text{where} \ \  r(C'_n) := \sqrt{\tfrac{1}{2}\EEE(
n-\mathcal{Q}(C_n'))}.
\end{align}

\noindent\emph{Step 2: Modification of  $q_{\rm sat}^n$-optimal configurations.} Let $C_n$ be a  $q_{\rm sat}^n$-optimal configuration.   By Remark \ref{remark: bridging},  after removing at most \EEE $n_0$ atoms, we  get  a saturated, connected sub-configuration  of the square lattice $C_{n-m}$, $0 \le m \le n_0$,  whose  \EEE bond graph does not contain any bridging atoms.  By  \EEE Lemma \ref{lemma : properties phi}(i)  there holds \EEE
\begin{align*}
 0\leq\mathcal{Q}(C_{n-m}) \leq \phi(n ) + m \leq \phi(n-m)+2m \le \phi(n-m) + 2n_0. \EEE
\end{align*}
By Lemma \ref{lemma:etacontrol} we can add at most  $N_2(2n_0)$  atoms to obtain a   saturated, connected sub-configuration of the square lattice without bridging atoms satisfying $\eta = 0$. For simplicity, we denote the configuration again by $C_n$ and observe that \EEE
\begin{align}\label{charge:QCn} 
|\mathcal{Q}(C_n)-\phi(n)|\leq c.
\end{align}
 In Step 3 below we will apply Lemma \ref{lemma: extra} on $C_n$. \EEE The newly constructed configuration and the original configuration differ in cardinality and net charge by a finite constant independent of $n$, i.e., it suffices to prove the statement for  this \EEE new configuration.

\noindent\emph{Step 3:  Application of the quantitative isoperimetric inequality.} Recall that the diamond $D_{\underline{n}}$  with alternating charge distribution \EEE is a $q_{\mathrm{sat}}^{\underline{n}}$-optimal configuration, see \eqref{def :Dn} and  Subsection \ref{subsection : upper bound qsat}.  By \eqref{charge:diamon}(ii), \eqref{eq: strangeW}, and \eqref{charge:QCn}  this implies $r(D_{\underline{n}})^2 - c \le r(C_n)^2 \le r(D_{\underline{n}})^2 +c\sqrt{n}$. Therefore, we get 
\begin{align}\label{ineq:perinftydiamond}
{\rm (i)} \ \ \mathrm{Per}_\infty(A(D_{\underline{n}})) \leq \mathrm{Per}_\infty(W_{r_n}) + c, \ \ \ \ \ {\rm (ii)} \ \  \mathcal{L}^2(W_{r_n} \triangle A(D_{\underline{n}})) \leq c\sqrt{n},
\end{align}
where for brevity we have set $r_n = r(C_n)$.  Using the definition $r_n = \sqrt{\tfrac{1}{2}\EEE(n-\mathcal{Q}(C_n))}$,  \eqref{def: phi}, and \eqref{charge:QCn}  one can also  check  that \EEE for $n$ sufficiently large there holds
\begin{align}\label{ineq:perinftydiamond2}
 r_n \in (\tfrac{1}{2}\sqrt{n/2},\sqrt{n/2}). \EEE
\end{align}
By the quantitative anisotropic isoperimetric inequality \cite[Theorem 1.1]{Figalli-Maggi-Pratelli}   (applied for the convex set $K = W_1$)  there exists  a translation $\tau \in \mathbb{R}^2$ such that   
$$
 (\mathcal{L}^2(W_{r_n}))^{-2} \mathcal{L}^2(W_{r_n}\triangle (A(C_n)+\tau))^2 \le c\, \mathrm{Per}_\infty(W_{r_n})^{-1}  \big(\mathrm{Per}_\infty(A(C_n))-\mathrm{Per}_\infty(W_{r_n})\big).
$$
As by \eqref{ineq : infty charge}, \eqref{eq: strangeW}, and \eqref{ineq:perinftydiamond2} there holds $\mathcal{L}^2(W_{r_n}) = 2r_n^2 \le  n\EEE $ and $\mathrm{Per}_\infty(W_{r_n}) = 4r_n \ge  2\sqrt{n/2} \EEE $, we get  
\begin{align}\label{ineq : volume perimeter}
\mathcal{L}^2(W_{r_n}\triangle (A(C_n)+\tau))^2  \le  c n^{3/2} \, \big(\mathrm{Per}_\infty(A(C_n))-\mathrm{Per}_\infty(W_{r_n})\big).
\end{align}
To estimate the right hand side, we apply  Lemma \ref{lemma: extra} (on both $C_n$ and $D_{\underline{n}}$),  \eqref{charge:diamon}(i), \eqref{charge:QCn}, \eqref{ineq:perinftydiamond}(i), and $\phi(\underline{n}) = \mathcal{Q}(D_{\underline{n}})$   to get
\begin{align}\label{ineq:comparisonAW}
\mathrm{Per}_\infty(A(C_n))-\mathrm{Per}_\infty(W_{r_n}) &\leq  \mathrm{Per}_\infty(A(C_n)) -\mathrm{Per}_\infty(A(D_{\underline{n}})) + c =  2\mathcal{Q}(C_n)-2\mathcal{Q}(D_{\underline{n}}) + c\notag \\&  \le  2 (\phi(n)- \phi(\underline{n})) +c\le c.  
\end{align}
 After translation  we may suppose that $X_n + \tau$ and $D_{\underline{n}}$ are both contained in $\mathbb{Z}^2$.  \EEE By the fact that $\mathcal{L}^2(D(x))=2$ for all $x \in \mathbb{Z}^2$,  (\ref{def : A(Cn)}),  \eqref{ineq:perinftydiamond}(ii),  (\ref{ineq : volume perimeter}),  and (\ref{ineq:comparisonAW}) we get
\begin{align*}
2\,\#\big(D_{\underline{n}}^- \triangle (X^-_n+\tau)\big) =\mathcal{L}^2(A(D_{\underline{n}}) \triangle (A(C_n)+\tau)) \leq\mathcal{L}^2(W_{r_n} \triangle (A(C_n)+\tau)) + \mathcal{L}^2(W_{r_n} \triangle A(D_{\underline{n}})) \leq Cn^{3/4}.
\end{align*}
It remains to note that $\#(D_{\underline{n}} \triangle (X_n+\tau)) \leq 5 \, \#(D_{\underline{n}}^- \triangle (X^-_n+\tau))$  since $C_n$ is saturated. \EEE 
  \end{proof}

 It remains to prove Lemma \ref{lemma:etacontrol} and Lemma \ref{lemma: extra}. 

\begin{proof}[Proof of Lemma \ref{lemma:etacontrol}]

 We first construct the configuration $C_{n+k}$. \EEE  By $X_{n}^{\mathrm{int}}$ we denote the union of the points  $\mathbb{Z}^2 \setminus X_n$ which are contained in the interior of a simple cycle of the bond graph of $X_n$. Let  \EEE 
$C_n^{\mathrm{int}}=(X_{n}^{\mathrm{int}},Q^{\mathrm{int}})$, where $Q^{\mathrm{int}}$ is chosen such that $C_n \cup C_n^{\mathrm{int}}$ has alternating charge  distribution. \EEE This is possible  since $X_n \cup X_n^{\mathrm{int}} \subset \mathbb{Z}^2$.  Define $k:= \#X_{n}^{\mathrm{int}}$ and  $C_{n+k} := C_n \cup C_n^{\mathrm{int}}$.  Note that $C_{n+k}$ is still connected, saturated and does not contain bridging atoms. Moreover, $C_{n+k}$ has alternating charge distribution, is subset of the square lattice,  and  $\eta(C_{n+k})=0$.  Therefore, the configuration satisfies Lemma \ref{lemma : Charge Energybound}(a)(i),(ii). \EEE Hence, Lemma \ref{lemma : Charge Energybound}(b),(a)(iii) implies \EEE
\begin{align}\label{ineq:QCn+k}
\mathcal{Q}(C_{n+k}) \geq \phi(n+k).
\end{align}
 It remains to control $k$.  Choose $N_2 \in \mathbb{N}$  depending only on $N_1$  such that 
\begin{align}\label{eq:choiceC2}
\phi(  N_2 + 1  ) \geq N_1+3.
\end{align}
 We now show $k \le N_2$. Assume  by contradiction that $k \ge N_2+1$.   \EEE  We claim that
\begin{align}\label{eq:chargeCnint}
\mathcal{Q}(C_{n}^{\mathrm{int}}) \leq -\phi(k).
\end{align}
 We postpone the proof of this estimate to the end and proceed to establish a contradiction.  Using (\ref{ineq:chargeCneta}) and \eqref{ineq:QCn+k}-(\ref{eq:chargeCnint}) we obtain
\begin{align}\label{ineq:lastcharge}
\phi(n+k) \leq \mathcal{Q}(C_{n+k})=\mathcal{Q}(C_n) + \mathcal{Q}(C_{n}^{\mathrm{int}}) \leq \phi(n)+N_1-\phi(k).
\end{align}
\EEE Now by  $k \ge N_2+1$, Lemma \ref{lemma : properties phi}(i),(ii), and    \eqref{eq:choiceC2} there holds $\phi(k)\geq \phi(N_2+1)-1 \geq N_1+2$. This together with (\ref{ineq:lastcharge}) leads to $\phi(n+k) \leq \phi(n)-2$  which contradicts  Lemma \ref{lemma : properties phi}(i),(ii). \EEE

 It remains to prove \eqref{eq:chargeCnint}. \EEE By construction there holds $X_{n}^{\mathrm{int}} \subset \mathbb{Z}^2$. In particular, the configuration is repulsion-free and all bonds have unit length. In view of Lemma \ref{lemma : Charge Energybound}(b),(a)(iii) \EEE (note that it is applicable even though the configuration has negative   net \EEE charge: consider $ (X_{n}^{\mathrm{int}},-Q^{\mathrm{int}})$), it remains to show that each  $x \in (X_{n}^{\mathrm{int}})^+$  has four neighbors in $X_{n}^{\mathrm{int}}$. \EEE 
If there existed some $x \in (X_{n}^{\mathrm{int}})^+$  which is at most $3$-bonded, \EEE  then there would exist $y \in  X^-_n \EEE$ such that $|x-y|=1$. Since every point on the square lattices has at most four neighboring points on the square lattice, this would imply that  $y$ is not $4$-bonded   in $X_n$ \EEE and thus \EEE $X_n$ would not have been a saturated configuration:  a contradiction.
\end{proof}
\begin{proof}[Proof of Lemma \ref{lemma: extra}]
 We prove the claim by induction over $k=\#X_n^-$.  Let $k=1$. As $C_n$ is connected, this implies $n=5$ and therefore \EEE $\mathcal{Q}(C_n)=3$ as well as $\mathrm{Per}_\infty(A(C_n))=4$. Thus, the claim is true for $k=1$.  Now assume that we  have \EEE proved the statement for all $l <k$. We proceed to prove the claim for $k$. Consider  the leftmost of the uppermost \EEE negatively charged atoms and denote it by $x_0$. There are three cases to consider (see Fig.~\ref{fig : infty charge proof}):

\begin{itemize}
\item[(a)] $D(x_0)$ shares  exactly one \EEE side with another square $D(x_1)$, $x_1 \in X^-_n$.
\item[(b)] $D(x_0)$ shares exactly two adjacent sides with  two \EEE other squares $D(x_1),D(x_2)$, $x_1,x_2 \in X^-_n$.  The atom which is contained in both of these sides is contained in four squares.  \EEE
\item[(c)] $D(x_0)$ shares  exactly two adjacent sides with  two \EEE other squares $D(x_1),D(x_2)$, $x_1,x_2 \in X^-_n$.  The atom which is contained in both of these sides is contained in only three squares.  \EEE
\end{itemize}


This is a consequence of the choice of $x_0$, the absence of bridging atoms, the connectedness of $C_n$, and $\eta=0$. In fact, by  the \EEE choice of the point $x_0$, the top corner cannot be contained in another square. Next, at least  one side of $D(x_0)$ has to be shared with another square \EEE since otherwise the configuration would not be connected  or there would be a 2-bonded atom. The latter in turn would be a bridging atom or would lead to  $\eta \ge 2$. Finally, if $D(x_0)$ shares exactly two sides with other squares, the bottom corner is either contained in three or four squares.  
 \EEE

\begin{figure}[htp]
\begin{tikzpicture}[scale=.6]
\draw[dashed, ultra thin] (0,-1)--++(-135:.5);
\draw[dashed, ultra thin] (1,-2)--++(-135:.5);

\draw[dashed, ultra thin] (2,-1)--++(-45:.5);
\draw[dashed, ultra thin] (1,-2)--++(-45:.5);
\draw[fill=white](0,0)circle(.07);
\draw(0,0) node[anchor=south]{$x_0$};
\foreach \j in {0,1,2,3}{
\draw[fill=black](90*\j:1) circle(.07);
\draw[ultra thin](90*\j:1)--(90*\j+90:1);
}
\begin{scope}[shift={(1,-1)}]
\draw[fill=white](0,0)circle(.07);
\foreach \j in {0,1,2,3}{
\draw[fill=black](90*\j:1) circle(.07);
\draw[ultra thin](90*\j:1)--(90*\j+90:1);
}
\end{scope}

\begin{scope}[shift={(12,0)}]
\draw(0,0) node[anchor=south]{$x_0$};
\draw[dashed, ultra thin] (-1,-2)--++(-135:.5);
\draw[dashed, ultra thin] (-2,-1)--++(-135:.5);
\draw[dashed, ultra thin] (2,-1)--++(-45:.5);
\draw[dashed, ultra thin] (1,-2)--++(-45:.5);
\draw[fill=white](0,0)circle(.07);
\foreach \j in {0,1,2,3}{
\draw[fill=black](90*\j:1) circle(.07);
\draw[very thin,gray](90*\j:1)--(90*\j+90:1);
}
\begin{scope}[shift={(1,-1)}]
\draw[fill=white](0,0)circle(.07);
\foreach \j in {0,1,2,3}{
\draw[fill=black](90*\j:1) circle(.07);
\draw[very thin,gray](90*\j:1)--(90*\j+90:1);
}
\end{scope}

\begin{scope}[shift={(-1,-1)}]
\draw[fill=white](0,0)circle(.07);
\foreach \j in {0,1,2,3}{
\draw[fill=black](90*\j:1) circle(.07);
\draw[very thin,gray](90*\j:1)--(90*\j+90:1);
}
\end{scope}

\end{scope}

\begin{scope}[shift={(6,.5)}]
\draw(0,0) node[anchor=south]{$x_0$};
\draw[dashed, ultra thin] (-1,-2)--++(-135:.5);
\draw[dashed, ultra thin] (0,-3)--++(-135:.5);
\draw[dashed, ultra thin] (0,-3)--++(-45:.5);
\draw[dashed, ultra thin] (-2,-1)--++(-135:.5);
\draw[dashed, ultra thin] (2,-1)--++(-45:.5);
\draw[dashed, ultra thin] (1,-2)--++(-45:.5);
\draw[fill=white](0,0)circle(.07);
\foreach \j in {0,1,2,3}{
\draw[fill=black](90*\j:1) circle(.07);
\draw[very thin,gray](90*\j:1)--(90*\j+90:1);
}
\begin{scope}[shift={(1,-1)}]
\draw[fill=white](0,0)circle(.07);
\foreach \j in {0,1,2,3}{
\draw[fill=black](90*\j:1) circle(.07);
\draw[very thin,gray](90*\j:1)--(90*\j+90:1);
}
\end{scope}

\begin{scope}[shift={(-1,-1)}]
\draw[fill=white](0,0)circle(.07);
\foreach \j in {0,1,2,3}{
\draw[fill=black](90*\j:1) circle(.07);
\draw[very thin,gray](90*\j:1)--(90*\j+90:1);
}
\end{scope}

\begin{scope}[shift={(0,-2)}]
\draw[fill=white](0,0)circle(.07);
\foreach \j in {0,1,2,3}{
\draw[fill=black](90*\j:1) circle(.07);
\draw[very thin,gray](90*\j:1)--(90*\j+90:1);
}
\end{scope}

\end{scope}
\end{tikzpicture}
\caption{The three cases  (a)-(c). \EEE   Case (a) holds up to reflection.  }
\label{fig : infty charge proof}
\end{figure}
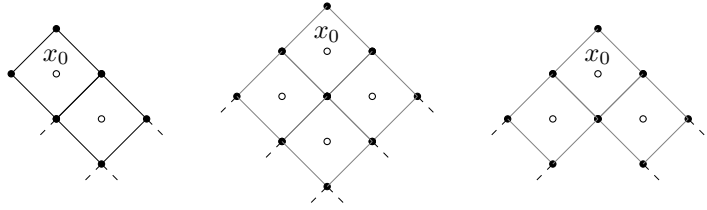
\EEE
\noindent\emph{Proof of (a):} We remove the two corners of positive charge not contained in any other square as well as $x_0$ to obtain a configuration  $C_{n-3}$ such that $\# X_{n-3}^- = k-1$. \EEE Note that
\begin{align}\label{eq : charge induction}
\mathcal{Q}(C_{n-3}) = \mathcal{Q}(C_n) -1. 
\end{align}
Since we have removed three sides of the square $D(x_0)$ but count a side of the square $D(x_1)$ to the perimeter,   we obtain
\begin{align} \label{eq : Per induction}
\mathrm{Per}_\infty(A(C_{n-3})) = \mathrm{Per}_\infty(A(C_n))-2.
\end{align}
 It is elementary to check that  the new configuration is still saturated, connected, satisfies   $\eta(C_{n-3})=0$, and does not contain any bridging atom. \EEE Therefore, we can apply the induction  hypothesis \EEE together with (\ref{eq : charge induction}) and (\ref{eq : Per induction}) to obtain
\begin{align*}
\mathrm{Per}_\infty(A(C_n)) = \mathrm{Per}_\infty(A(C_{n-3}))+2= 2\mathcal{Q}(C_{n-3})=2\mathcal{Q}(C_n)-2.
\end{align*}

\noindent\emph{Proof of (b):} We remove the  top corner of positive charge as well as $x_0$ to obtain a configuration   $C_{n-2}$ such that $\# X_{n-2}^- = k -1$. \EEE Again, the new configuration is still saturated, connected, satisfies  $\eta(C_{n-2})=0$, and does not contain any bridging atom.  We get $\mathcal{Q}(C_{n-2}) = \mathcal{Q}(C_n)$ and, since we have removed two sides of the square $D(x_0)$ but  count one side of each square $D(x_1)$, $D(x_2)$ \EEE to the perimeter, we obtain $\mathrm{Per}_\infty(A(C_{n-2})) = \mathrm{Per}_\infty(A(C_n))$. This along with the induction  hypothesis \EEE yields 
\begin{align*}
\mathrm{Per}_\infty(A(C_n)) = \mathrm{Per}_\infty(A(C_{n-2}))= 2\mathcal{Q}(C_{n-2})-2=2\mathcal{Q}(C_n)  - 2. 
\end{align*}

\noindent\emph{Proof of (c):} We remove the  top  corner of positive charge as well as $x_0$ to obtain two sub-configurations $C_{m_1}$,$C_{m_2}$  which share only the bottom atom of $D(x_0)$. This implies $m_1+m_2=n-1$ and  
\begin{align}\label{eq : charge induction3}
\mathcal{Q}(C_{m_1})+\mathcal{Q}(C_{m_2}) = \mathcal{Q}(C_n) +1.
\end{align}
We again observe that the new configurations are saturated, connected, satisfy 
  $\eta(C_{m_1})=\eta(C_{m_2})=0$,  and do not contain any bridging atom. (Note that the configuration obtained by removing the top corner and $x_0$ would contain one bridging atom.) We have removed two sides of the square $D(x_0)$, but count one side of $D(x_1)$ to the perimeter of $C_{m_1}$ and count one side of $D(x_2)$ to the perimeter of $C_{m_2}$. This implies    \EEE  
\begin{align}\label{eq : Per induction4}
\mathrm{Per}_\infty(A(C_{m_1}))+\mathrm{Per}_\infty(A({C}_{m_2})) =  \mathrm{Per}_\infty(A(C_n)). \EEE
\end{align}
Note that $\#X_{m_1}^-,\# X_{m_2}^- <k$.  Thus, \EEE  we can use the induction  hypothesis \EEE along with (\ref{eq : charge induction3})-(\ref{eq : Per induction4}) to obtain
\begin{align*}
\mathrm{Per}_\infty(A(C_n)) &= \mathrm{Per}_\infty(A(C_{m_1}))+\mathrm{Per}_\infty(A({C}_{m_2})) =
 2\mathcal{Q}(C_{m_1})-2+ 2\mathcal{Q}({C}_{m_2})-2 =2\mathcal{Q}(C_n)  - 2. 
\end{align*}
This concludes the proof.
\end{proof}

\appendix 
\section{Proof of the boundary energy and  net \EEE  charge estimates \EEE }\label{appendix}

This section is devoted to the proofs of  Lemma \ref{LemmaBoundaryEnergy} and Lemma \ref{lemma : charge interior}.

\begin{proof}[Proof of  Lemma \ref{LemmaBoundaryEnergy}]
For convenience, we decompose the proof into three steps.

\noindent \emph{Step 1: Preliminary estimate.} \EEE Assume that $C_n$ is a connected ground state with no acyclic bonds.
 Suppose that  $\{x_1,\ldots,x_d\}$ are ordered such that \EEE  $x_i\in \mathcal{N}(x_{i+1})$, $i=1,\ldots,d$.  Here and in the following, we use \EEE the identification $x_{d+1} = x_1$ and $x_{0} = x_d$. For a 3-bonded atom $x_i$, denote
by $x_i^b \in X_n \setminus \lbrace x_{i-1},x_{i+1}\rbrace$ the atom that is connected to $x_i$ with the third bond. In a similar fashion, for  a 4-bonded atom $x_i$, denote
by $x_i^{b,1}, x_i^{b,2} \in X_n \setminus \lbrace x_{i-1},x_{i+1}\rbrace$ the two atoms which are bonded to $x_i$.  Recall \eqref{def:cell energy}-\eqref{eq: bnd2} and Remark \ref{rem: boundary energy}.  We first observe  that the boundary energy can be estimated by
\begin{align}\label{eq: boundary energy}
\mathcal{R}^{\mathrm{bnd}}(C_n) &\geq \sum_{i=1}^d \Big(\frac{1}{2}\big( V_{\mathrm{a}}(|x_i-x_{i+1}|)+V_{\mathrm{a}}(|x_i-x_{i-1}|)\big)+\frac{1}{4} V_{\mathrm{r}}(|x_{i+1}-x_{i-1}|) \Big)\notag \\&\quad+ \sum_{x_i \in I_3}\Big(  V_{\mathrm{a}}(|x_i-x_i^b|) + \frac{1}{4}\left( V_{\mathrm{r}}(|x_{i-1}-x_i^b|) + V_{\mathrm{r}}(|x_{i+1}-x_i^b|)  \right) \Big)\notag \\& \quad  + \sum_{x_i \in I_4}\Big(  V_{\mathrm{a}}(|x_i-x_i^{b,1}|) + V_{\mathrm{a}}(|x_i-x_i^{b,2}|)\notag \\&\qquad \ \ \ \ \ \ \ \ + \frac{1}{4}\big( V_{\mathrm{r}}(|x_{i-1}-x_i^{b,1}|) + V_{\mathrm{r}}(|x_{i}^{b,1}-x_i^{b,2}|) + V_{\mathrm{r}}(|x_{i+1}-x_i^{b,2}|) \big) \Big).
\end{align}
For a $3$-bonded atom $x_i$, denote by $\theta_i^1,\theta_i^2\in [0,2\pi]$ the angle between $x_{i-1},x_i, x_{i}^b$ and the angle between  $x_{i}^b,x_i, x_{i+1}$ respectively. For a $4$-bonded atom $x_i$, denote by $\theta_i^1,\theta_i^2, \theta_i^3\in [0,2\pi]$ the angle between $x_{i-1},x_i, x_{i}^{b,1}$, the angle between $x_{i}^{b,1},x_i, x_{i}^{b,2}$ and the angle between $x_{i}^{b,2},x_i, x_{i+1}$, respectively. Finally, we define $\delta : = (\#I_2+2\#I_3+3\#I_4)/d$ and note that $\delta \in [1,3]$ since $\# I_2 + \# I_3 +\#I_4  =d$. \EEE
We will prove that 
\begin{align}\label{boundarylowerbound-new}
\mathcal{R}^{\mathrm{bnd}}(C_n) & \ge -\delta d + \frac{1}{4}\Big(\sum_{x_i \in I_2} V_{\mathrm{r}}\Big(2\sin\Big(\frac{\theta_i}{2}\Big)\Big) + \hspace{-0.1cm} \sum_{x_i \in I_3}   \sum_{j=1}^2 V_{\mathrm{r}}\Big(2\sin\Big(\frac{\theta_i^j}{2}\Big)\Big)+ \hspace{-0.1cm}  \sum_{x_i \in I_4}   \sum_{j=1}^3 V_{\mathrm{r}}\Big(2\sin\Big(\frac{\theta_i^j}{2}\Big)\Big)  \Big) \EEE \nonumber \\& \geq  - \delta d + \frac{1}{4}\delta d \, V_{\mathrm{r}}\Big(2\sin\Big(\frac{\pi(d-2)}{2\delta d}\Big)\Big),
\end{align}
where the first inequality is strict if not all  lengths of boundary bonds \EEE  are equal to $1$.  We defer the proof of \eqref{boundarylowerbound-new} to Step 3.  At this stage, let us only point out that the specific definition of the boundary energy including also bulk bonds, see Remark \ref{rem: boundary energy} and Fig.~\ref{Fig:ConcaveAngle}, ensures that also $\theta_i^2$, $x_i \in I_4$, gives a contribution.

 \noindent\emph{Step 2: Proof of the statement.} \EEE  We now show that \eqref{boundarylowerbound-new} implies  the statement of the lemma. First,  we introduce \EEE  
\begin{align*}
\alpha(\delta) := \frac{\pi(d-2)}{2\delta d},
\end{align*}
 and observe that \EEE estimate \eqref{boundarylowerbound-new}  can be written as \EEE
\begin{align}\label{Ebndestimate}
\begin{split}
\mathcal{R}^{\mathrm{bnd}}(C_n) &\geq -\delta d +\frac{1}{4} \delta d \, V_{\mathrm{r}}\Big(2\sin\Big(\frac{\pi(d-2)}{2\delta d}\Big)\Big) =  \delta d\Big(\frac{1}{4} V_{\mathrm{r}} \big(2\sin (\alpha(\delta) )\big)-1\Big).
\end{split}
\end{align}
We obtain (\ref{BoundaryEnergyEstimate}) by minimizing the right hand side of (\ref{Ebndestimate}) with respect to $\delta$. To see this, \EEE set $\delta_0 = 2-\frac{4}{d}$. For $ 1 \le \EEE \delta \leq \delta_0$, we have $\alpha(\delta_0) = \frac{\pi}{4} \le \alpha(\delta) \le \frac{\pi}{2}$. \EEE By $[\mathrm{vii}]$  we get $V_{\mathrm{r}}( 2\sin(\alpha(\delta)))  = \EEE 0$ for all $ 1 \le \EEE \delta\leq \delta_0$. Therefore, we find
\begin{align}\label{deltageqdelta1}
\delta d\Big(\frac{1}{4} V_{\mathrm{r}} (2\sin (\alpha(\delta) ))-1\Big) = -\delta d\geq -\delta_0 d=-  (2d-4) \EEE
\end{align}
and we obtain   (\ref{BoundaryEnergyEstimate})  \EEE for $ 1 \le \EEE \delta\leq \delta_0$. Now for $\delta_0 < \delta \le 3$, \EEE we have $\alpha(\delta) < \alpha(\delta_0)$.  By  $[\mathrm{v}]$ \EEE we get
$$V_{\mathrm{r}}\big(2\sin(\alpha(\delta))\big) \geq V_{\mathrm{r}}\big(2\sin(\alpha(\delta_0))\big) + 2V_{\mathrm{r},-}^\prime \big(2\sin(\alpha(\delta_0)\big) \, \big(\sin(\alpha(\delta))-\sin(\alpha(\delta_0)\big).$$
 Then by \EEE the fact that $\sin(\theta)$ is concave for $\theta \in [0,\pi]$,  $V_{\rm r}(\sqrt{2}) = 0$ by [vii], \EEE $V_{\mathrm{r},-}^\prime(\sqrt{2}) < - 16/(\sqrt{2}\pi)<0$  by $[\mathrm{viii}]$, $\alpha(\delta_0) = \frac{\pi}{4}$,  $\sin(\alpha(\delta_0)) = \cos(\alpha(\delta_0))= \frac{1}{2}\sqrt{2}$, \EEE and $\alpha(\delta)-\alpha(\delta_0) < 0$ \EEE we derive
\begin{align}\label{deltageqdelta0}
\begin{split}
V_{\mathrm{r}}(2\sin(\alpha(\delta))) &\geq V_{\mathrm{r}}(\sqrt{2}) + 2V_{\mathrm{r},-}^\prime (\sqrt{2})\cos(\alpha(\delta_0))\big(\alpha(\delta)-\alpha(\delta_0)\big) \\&=V_{\mathrm{r}}(\sqrt{2}) +  \sqrt{2} \, V_{\mathrm{r},-}^\prime(\sqrt{2}) \Big(\frac{\pi(d-2)}{2\delta d}-\frac{\pi}{4}\Big) > -\frac{16}{\pi}\Big(\frac{\pi(d-2)}{2\delta d}-\frac{\pi}{4}\Big) \\&=\frac{4}{\delta d} (\delta d-2d+4).
\end{split}
\end{align}
From the previous calculation and (\ref{Ebndestimate}),  estimate (\ref{BoundaryEnergyEstimate})  \EEE follows  also for  $\delta_0 < \delta \le 3$. \EEE

 We now show that we have strict inequality in (\ref{BoundaryEnergyEstimate})  if one of the conditions \eqref{eq: 1}-\eqref{eq: 3} is violated. First, if  a boundary bond is not of unit length, we have strict inequality in \eqref{boundarylowerbound-new} and thus also in \eqref{Ebndestimate}. If \eqref{eq: 2} is violated, we find $\delta \neq \delta_0$ after a short computation. Then we get strict inequalities from \eqref{deltageqdelta1} or \eqref{deltageqdelta0}, respectively. Finally, let use assume that \eqref{eq: 3} is violated. We can suppose that $\delta=\delta_0$ and \eqref{eq: 1}-\eqref{eq: 2} hold as otherwise the inequality in  (\ref{BoundaryEnergyEstimate}) is strict. If  \EEE equality holds in (\ref{BoundaryEnergyEstimate}), then equality also holds in \eqref{boundarylowerbound-new}.  As $V_{\rm r}(2\sin(\alpha(\delta_0)))= 0$, \EEE this implies  
\begin{align*}
\sum_{ x_i \in I_2} V_{\mathrm{r}}\Big(2\sin\Big(\frac{\theta_i}{2}\Big)\Big) + \sum_{x_i \in I_3}\sum_{j=1}^2 \Big( V_{\mathrm{r}}\Big(2\sin\Big(\frac{\theta_i^j}{2}\Big)\Big) + \sum_{x_i \in I_4} \sum_{j=1}^3 V_{\mathrm{r}}\Big(2\sin\Big(\frac{\theta_i^j}{2}\Big)\Big) \Big) =0.
\end{align*} 
In view of ${\rm [vii]}$, this gives 
\begin{align}\label{eq: anglerange}
\begin{split}
&\theta_i \in [\tfrac{\pi}{2},\tfrac{3\pi}{2}]  \ \text{ for $x_i \in I_2$},\ \ \ \    \theta^1_i, \theta^2_i \in  [\tfrac{\pi}{2},\tfrac{3\pi}{2}] \ \text{ for $x_i \in I_3$},\ \ \ \  \theta^1_i, \theta^2_i,\theta_i^3 \in  [\tfrac{\pi}{2},\tfrac{3\pi}{2}]  \ \text{ for $x_i \in I_4$}.
\end{split}
\end{align}
  Under the assumption that \eqref{eq: 3} is violated,  in view of   \eqref{eq: anglerange},  $\theta_i = \theta_i^1 + \theta_i^2$ for  $x_i \in I_3$ and $\theta_i = \theta_i^1 + \theta_i^2+\theta_i^3$ for  $x_i \in I_4$, we obtain
\begin{align*}
  \pi(d-2) &= \sum_{i=1}^d \theta_i = \sum_{x_i \in I_2}\theta_i + \sum_{x_i \in I_3}( \theta_i^1+\theta_i^2) + \sum_{x_i \in I_4} (\theta_i^1+\theta_i^2+\theta_i^3) \\&> \frac{\pi}{2}\#I_2 + \pi \#I_3 + \frac{3\pi}{2}\#I_4  = \pi(d-2),
\end{align*}
 where the first equality follows from the fact that the maximal polygon has $d$ vertices, and the last equality holds by \eqref{eq: 2}. \EEE This is a contradiction and shows  that strict inequality in \eqref{BoundaryEnergyEstimate} holds \EEE if \eqref{eq: 3} is violated.

\noindent \emph{Step 3: Proof of \eqref{boundarylowerbound-new}.} \EEE
To complete the proof, it remains to show \eqref{boundarylowerbound-new}  and its strict version. \EEE We follow the lines of the proof of \cite[Lemma 5.1]{FriedrichKreutzHexagonal}. \EEE In the case of  a \EEE $2$-bonded $x_i$,  define $r^1_i = |x_i-x_{i-1}|$ and  $r^2_i = |x_i-x_{i+1}|$. In the case of  a \EEE $3$-bonded $x_i$,  define additionally $r^3_i = |x_{i}^b-x_{i}|$. In the case of a \EEE $4$-bonded $x_i$,  define   $r^3_i = |x_i^{b,1}-x_{i}|$ and $r^4_i = |x_{i}^{b,2}-x_{i}|$.
By the cosine rule we obtain  
  \begin{align}\label{xidistance}
\nonumber&|x_{i+1}-x_{i-1}|=  \ell(\theta_i,r_i^1,r_i^2),\text{ for   } x_i \in \partial X_n,  \\&
  |x_i^b-x_{i-1}| = \ell(\theta_i^1,r_i^1,r_i^3), \    |x_i^b-x_{i+1}| =  \ell(\theta_i^2,r_i^2,r_i^3) \text{ for   } x_i \in I_3, \\& \nonumber |x_i^{b,1}-x_{i-1}| = \ell(\theta_i^1,r_i^1,r_i^3), \    |x_i^{b,1}-x_{i}^{b,2}| =  \ell(\theta_i^2,r_i^3,r_i^4), \   |x_i^{b,2}-x_{i+1}| =  \ell(\theta_i^3,r_i^2,r_i^4) \EEE \text{ for   } x_i \in I_4,
\end{align}
where we have used the shorthand  
\begin{align}\label{eq: ell} 
 \ell(\theta,r_1,r_2) = \sqrt{r_1^2 + r_2^2 - 2r_1r_2 \cos(\theta)}.
 \end{align}
\EEE We want to prove that for every boundary atom $x_i$ its contribution to the energy can be controlled by the energy contribution in a modified configuration which has \EEE  the same angles but   unit bond lengths instead of $r_i^1, r_i^2$. Recall that \EEE  by $[\mathrm{viii}]$ we have for all $r \in (1,r_0]$ \EEE
\begin{align}\label{eq: strict inequality}
\frac{2}{r-1}(V_{\mathrm{a}}(1)-V_{\mathrm{a}}(r))  <  V'_{\mathrm{r},+}(1). \EEE
\end{align}
Let $\theta \in  [2\pi/5,8\pi/5]$, \EEE $r_1,r_2 \in [1,r_0]$. Then      $\ell(\theta,r_1,r_2)\geq 1$.  Note that \EEE  $V_{\mathrm{r},+}'(r)\leq 0$ for $r \geq 1$ and $V_{\mathrm{r},+}'$ is monotone increasing \EEE due to the convexity assumption in  $(\mathrm{v})$ \EEE on $V_{\mathrm{r}}$.  Moreover, we have  $\partial_{r_1} \ell(\theta,r_1,r_2)\leq 1$ by an elementary computation. We therefore obtain
\begin{align*}
\frac{2}{r-1}(V_{\mathrm{a}}(1)-V_{\mathrm{a}}(r))  \le  V_{\mathrm{r},+}'(1) \EEE \leq V_{\mathrm{r},+}'(\ell(\theta,s,r_2)) \leq V_{\mathrm{r},+}'(\ell(\theta,s,r_2)) \partial_{ r_1  }\ell(\theta,s,r_2).
\end{align*}
 By integration from $1$ to $r$ in the variable $s$   we get \EEE
\begin{align*}
\frac{1}{2}\left( V_{\mathrm{a}}(1)-V_{\mathrm{a}}(r) \right)\leq \frac{1}{4}\int_{1}^r V_{\mathrm{r},+}'(\ell(\theta,s,r_2)) \partial_{r_1} \EEE \ell(\theta,s,r_2) \mathrm{d}s = \frac{1}{4} V_{\mathrm{r}}(\ell(\theta,r,r_2))- \frac{1}{4} V_{\mathrm{r}}(\ell(\theta,1,r_2)).
\end{align*}
By applying this estimate in the second \EEE as well as in the third \EEE component of $\ell(\theta,r_1,r_2)$ with  $1$ \EEE and $r_2$ respectively, we derive  \EEE
\begin{align}\label{Estimateunitlength}
V_{\mathrm{a}}(1) + \frac{1}{4} V_{\mathrm{r}}(\ell(\theta,1,1)) \leq \frac{1}{2}\left( V_{\mathrm{a}}(r_1) + V_{\mathrm{a}}(r_2)\right)  +\frac{1}{4} V_{\mathrm{r}}(\ell(\theta,r_1,r_2)).
\end{align}
  Due to (\ref{ineq: bondangles}),  we have that all angles in \eqref{xidistance} lie in  $[2\pi/5,8\pi/5]$ \EEE and all bond lengths lie in $[1,r_0]$.    Now for all \EEE $2$-bonded $x_i$,  using (\ref{Estimateunitlength})  with $r_i^1$, $r_i^2$, and $\theta_i$, \EEE we have
\begin{align}\label{Twobondedestimate}
\frac{1}{2}\big( V_{\mathrm{a}}(r_i^1) +  V_{\mathrm{a}}(r_i^2)\big) +  \frac{1}{4} \EEE V_{\mathrm{r}}(\ell(\theta_i,r_i^1,r_i^2)) \geq  V_{\mathrm{a}}(1) + \frac{1}{4} V_{\mathrm{r}}(\ell( \theta_i, \EEE 1,1)).
\end{align}
For  all \EEE $3$-bonded $x_i$,  using (\ref{Estimateunitlength}) twice, we  get \EEE
\begin{align}\label{Threebondedestimate}
\begin{split}
&\frac{1}{2}\left( V_{\mathrm{a}}(r_i^1) +V_{\mathrm{a}}(r_i^3)\right) + \frac{1}{4} V_{\mathrm{r}}(\ell(\theta_i^1,r_i^1,r_i^3)) \geq V_{\mathrm{a}}(1) + \frac{1}{4} V_{\mathrm{r}}(\ell(\theta_i^1,1,1)), \\&
\frac{1}{2} \left(V_{\mathrm{a}}(r_i^2) +  V_{\mathrm{a}}(r_i^3)\right) +  \frac{1}{4} V_{\mathrm{r}}(\ell(\theta_i^2,r_i^2,r_i^3)) \geq V_{\mathrm{a}}(1)  + \frac{1}{4} V_{\mathrm{r}}(\ell(\theta_i^2,1,1)).
\end{split}
\end{align}
Finally, for  all \EEE $4$-bonded $x_i$,  using (\ref{Estimateunitlength}) three times, we have
\begin{align}\label{Fourbondedestimate}
\begin{split}
&\frac{1}{2}\left( V_{\mathrm{a}}(r_i^1) +V_{\mathrm{a}}(r_i^3)\right) + \frac{1}{4} V_{\mathrm{r}}(\ell(\theta_i^1,r_i^1,r_i^3)) \geq V_{\mathrm{a}}(1) + \frac{1}{4} V_{\mathrm{r}}(\ell(\theta_i^1,1,1)), \\&
\frac{1}{2} \left(V_{\mathrm{a}}(r_i^3) +  V_{\mathrm{a}}(r_i^4)\right) +  \frac{1}{4} V_{\mathrm{r}}(\ell(\theta_i^2,r_i^3,r_i^4)) \geq V_{\mathrm{a}}(1)  + \frac{1}{4} V_{\mathrm{r}}(\ell(\theta_i^2,1,1)), \\&
\frac{1}{2} \left(V_{\mathrm{a}}(r_i^2) +  V_{\mathrm{a}}(r_i^4)\right) +  \frac{1}{4} V_{\mathrm{r}}(\ell(\theta_i^3,r_i^2,r_i^4)) \geq V_{\mathrm{a}}(1)  + \frac{1}{4} V_{\mathrm{r}}(\ell(\theta_i^3,1,1)).
\end{split}
\end{align}
 By applying \eqref{eq: boundary energy}, \EEE  (\ref{xidistance}), (\ref{Twobondedestimate})-(\ref{Fourbondedestimate}),  $V_{\rm a}(1) = -1$, \EEE and $V_{\rm r} \ge 0$   we obtain   
 \begin{align}\label{boundaryestimate1}
\mathcal{R}^{\mathrm{bnd}}(C_n) &  \geq \EEE  \sum_{x_i \in I_2}  \Big(\frac{1}{2}\big(V_{\mathrm{a}}(r_i^1) + V_{\mathrm{a}}(r_i^2)\big) +  \frac{1}{4} \EEE V_{\mathrm{r}}(\ell(\theta_i,r_i^1,r_i^2)) \Big)  \EEE \notag \\&\quad+\sum_{x_i \in I_3} \Big(\frac{1}{2}\big(V_{\mathrm{a}}(r_i^1)+V_{\mathrm{a}}(r_i^2) + 2 V_{\mathrm{a}}(r_i^3)\big) +\frac{1}{4}\big( V_{\mathrm{r}}(\ell(\theta_i^1,r_i^1,r_i^3))+  V_{\mathrm{r}}(\ell(\theta_i^2,r_i^2,r_i^3))\big) \Big) \notag\\&\quad+\sum_{x_i \in I_4} \Big(\frac{1}{2}\big(V_{\mathrm{a}}(r_i^1)+V_{\mathrm{a}}(r_i^2) + 2 V_{\mathrm{a}}(r_i^3)+ 2 V_{\mathrm{a}}(r_i^4)\big)\notag \\&\qquad \ \ \ \ \ \ \ \ + \frac{1}{4}\left( V_{\mathrm{r}}(\ell(\theta_i^1,r_i^1,r_i^3))+  V_{\mathrm{r}}(\ell(\theta_i^2,r_i^3,r_i^4)) + V_{\mathrm{r}}(\ell(\theta_i^3,r_i^2,r_i^4))   \right) \Big) \notag\\&\geq -(\#I_2+2\#I_3+3\#I_4)  \notag \\&  \quad +  \frac{1}{4} \EEE \Big( \sum_{  x_i \EEE \in I_2} V_{\mathrm{r}}(\ell(\theta_i,1,1))+ \sum_{  x_i \EEE \in I_3}  \sum_{j=1}^2  V_{\mathrm{r}}(\ell(\theta_i^j,1,1))+ \sum_{  x_i \EEE \in I_4}  \sum_{j=1}^3  V_{\mathrm{r}}(\ell(\theta_i^j,1,1))\Big). 
\end{align}
 For later purposes, we remark that this inequality is strict if one bond does not have \EEE unit length. This follows from the strict inequality in \eqref{eq: strict inequality}. \EEE

 Recall   $\delta= (\#I_2+2\#I_3+3\#I_4)/d$ and note that $\ell(\theta,1,1) = 2\sin(\theta/2)$ by  \eqref{eq: ell}.  By \EEE using    $\theta_i = \theta_i^1 + \theta_i^2$ for $x_i \in I_3$, $\theta_i = \theta_i^1 + \theta_i^2+\theta_i^3$ for $x_i \in I_4$,   and \eqref{boundaryestimate1} \EEE we obtain
\begin{align*} 
\mathcal{R}^{\mathrm{bnd}}(C_n) &\geq -\delta d +  \frac{1}{4} \EEE \Big(\sum_{x_i \in I_2} V_{\mathrm{r}}(\ell(\theta_i,1,1)) +   \sum_{x_i \in I_3}  \sum_{j=1,2}V_{\mathrm{r}}(\ell(\theta_i^j,1,1))  +    \sum_{x_i \in I_4}  \sum_{j=1}^3V_{\mathrm{r}}(\ell(\theta_i^j,1,1)) \Big)  \EEE \notag \\& = -\delta d +  \frac{1}{4} \EEE \Big(\sum_{x_i \in I_2} V_{\mathrm{r}}\Big(2\sin\Big(\frac{\theta_i}{2}\Big)\Big) + \hspace{-0.1cm}\sum_{x_i \in I_3}  \sum_{j=1}^2 V_{\mathrm{r}}\Big(2\sin\Big(\frac{\theta_i^j}{2}\Big)\Big)+ \hspace{-0.1cm} \sum_{x_i \in I_4}  \sum_{j=1}^3 V_{\mathrm{r}}\Big(2\sin\Big(\frac{\theta_i^j}{2}\Big)\Big)\Big). \EEE  
\end{align*}
This yields the first inequality in \eqref{boundarylowerbound-new}. We note that this inequality is strict  if one of the bonds does not have  unit length since then \eqref{boundaryestimate1} is strict. The second inequality in \eqref{boundarylowerbound-new} follows by a convexity argument:  since $V_{\mathrm{r}}$ is convex and non-increasing  by ${\rm [v]}$, \EEE and $\sin(\theta/2)$ is concave for $\theta \in [0,2\pi]$, we  observe that $\theta \mapsto V_{\mathrm{r}}(2\sin(\frac{\theta}{2}))$ is a convex function for $\theta \in [0,2\pi]$, see \cite[Proof of Lemma 5.1]{FriedrichKreutzHexagonal} for details. This along  with \EEE the fact that  $\# I_2 + 2 \# I_3 + 3\#I_4 = \delta d$   and 
$$
\pi(d-2)=\sum_{i=1}^d \theta_i = \sum_{x_i \in I_2} \theta_i + \sum_{x_i \in I_3}(\theta_i^1+\theta_i^2)+ \sum_{x_i \in I_4}(\theta_i^1+\theta_i^2+\theta_i^4)
$$
 yields the second inequality in  \eqref{boundarylowerbound-new}. This \EEE concludes the proof. \EEE
\end{proof}

We now proceed with the proof of Lemma \ref{lemma : charge interior}. 
Let $C_n$ be a connected $q_\mathrm{sat}^n$-optimal configuration without bridging atoms.  Recall the decomposition $X_n= X_n^{a,\mathrm{bulk}} \cup \partial X_n^a \cup I_\mathrm{ac}^{\rm ext}$ introduced in Subsection \ref{sec: boundary charge}. For $2 \le k\le 4$ we define
\begin{align*}
I_k^\pm = \{x_i \in \partial X_n^a : q_i= \pm 1, \ \  \#(\mathcal{N}(x_i) \cap X_n^a)=k\}.
\end{align*}
Moreover,  we set $I^\pm = \bigcup_{k=2}^{4} I_k^\pm$. Note that $I_k = I_k^+ \cup I_k^-$, where $I_k$ is given by \eqref{def: Ik}. \EEE By  Lemma \ref{LemmaNeighborhood}, \EEE $C_n$ has alternating charge distribution. Thus, $\partial C_n^a$ is a cycle of  even length $d \in 2\mathbb{N}$ and,   by \eqref{NeighbourhoodCard} \EEE there holds $\# I^- = \# I^+ = d/2$. We denote the interior angle of the polygon $\partial X_n^a$ at $x_i$ by $\theta_i$. We start with some preliminary properties.

\begin{lemma}\label{eq: preliminary estimates}
Let $n\geq 6$ and let $C_n$ be a connected $q_\mathrm{sat}^n$-optimal configuration. Then
\begin{itemize}
\item[(i)] $\# I_4^+=0$.
\item[(ii)] Let $m \geq 4$ be such that $\#I_2 + 2\#I_3 +3\#I_4 \leq 2d-m$. Then
 $2\#I_2^- + \#I_3^-  \geq m + \#I_3^+$.
\end{itemize}
\end{lemma}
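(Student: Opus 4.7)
The plan is to first establish part (i) by a planarity argument and then deduce (ii) from it by elementary bookkeeping on the boundary cycle $\partial X_n^a$.

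For (i), I would argue by contradiction: suppose $x_i \in I_4^+$. Since $C_n$ is $q_{\rm sat}^n$-optimal, Lemma \ref{lemma : Charge Energybound}(a)(i),(ii),(iv) guarantees that all bonds have unit length, every atom of charge $-1$ is $4$-bonded, and all $4$-bonded atoms have bond angles $\pi/2$. Placing $x_i$ at the origin, its four neighbors in $X_n^a$ are then the atoms of charge $-1$ at $(\pm 1, 0)$ and $(0, \pm 1)$, each of which is itself $4$-bonded with right angles in $X_n$. From this one reads off that two adjacent neighbors of $x_i$, say $(1,0)$ and $(0,1)$, share a common neighbor at $(1,1) \in X_n$ of charge $+1$. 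Since $(1,1)$ lies on the unit-square cycle $x_i, (1,0), (1,1), (0,1)$ it is not an exterior acyclic atom and therefore belongs to $X_n^a$; here one may also invoke Lemma \ref{lemma: bridging}(b) to rule out $(1,1) \in I_{\mathrm{ac}}^{\mathrm{ext}}$. Applying the same reasoning in each of the four quadrants exhibits four bounded unit-square faces of the planar bond graph of $X_n^a$ incident at $x_i$, forcing $x_i$ to be an interior atom of $X_n^a$, which contradicts $x_i \in \partial X_n^a$.

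For (ii), the alternating charge distribution from Lemma \ref{LemmaNeighborhood} implies that $\partial X_n^a$ is an even cycle of length $d$ with $\#I^+ = \#I^- = d/2$, while (i) gives $\#I_4^+ = 0$. Splitting $I_k = I_k^+ \cup I_k^-$ and substituting the linear relations $\#I_2^+ = d/2 - \#I_3^+$ together with $\#I_4^- = d/2 - \#I_2^- - \#I_3^-$ reduces the sum on the left of the hypothesis to
\begin{align*}
\#I_2 + 2\#I_3 + 3\#I_4 = 2d - \bigl(2\#I_2^- + \#I_3^- - \#I_3^+\bigr),
\end{align*}
and comparison with $\#I_2 + 2\#I_3 + 3\#I_4 \leq 2d - m$ yields (ii) at once. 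The main obstacle I anticipate is the verification in (i) that each diagonal atom genuinely belongs to $X_n^a$ rather than to the removed part $I_{\mathrm{ac}}^{\mathrm{ext}}$; this requires pinning down, via the right-angle constraint and the absence of bridging atoms, that these atoms are $4$-bonded in $X_n$ and therefore automatically lie on cycles.
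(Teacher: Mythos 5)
Your proposal is correct and follows essentially the same route as the paper: part (i) via the observation that a $4$-bonded positive boundary atom, all of whose negative neighbours are $4$-bonded with right angles, is enclosed by four unit squares and hence cannot lie on the maximal polygon, and part (ii) by the same elementary counting based on $\#I^+=\#I^-=d/2$ and $\#I_4^+=0$. Your identity $\#I_2+2\#I_3+3\#I_4=2d-\bigl(2\#I_2^-+\#I_3^--\#I_3^+\bigr)$ is just a repackaging of the paper's chain of inequalities.
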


\begin{proof}
\noindent\emph{Proof of (i).} Assume by contradiction that there exists $x_i \in I_4^+$. Since $C_n$ is a $q_\mathrm{sat}^n$-optimal configuration, all atoms of charge $-1$ are $4$-bonded with bond angles $\frac{\pi}{2}$, see Lemma \ref{lemma : Charge Energybound}(a)(i),(iv). Hence, $x_i$ is contained in four squares and thus not part of $\partial X_n^a$. 

\noindent\emph{Proof of (ii).}   As each atom is at most $4$-bonded, see \eqref{NeighbourhoodCard}, we have $\#I_2+ \#I_3+\#I_4 = d$. This along with   (i) and the assumption of (ii) shows
\begin{align*}
\#I_4^-+m = \#I_4+m  \leq \#I_2 = \#I_2^+ + \#I_2^-.
\end{align*}
This implies
\begin{align*}
2\#I_2^- +\#I_3^- &=  (\#I_2^- +\#I_3^- - \#I_2^+) + (\#I_2^++\#I_2^-)  \geq \#I_2^-+\#I_3^- - \#I_2^+ +\#I_4^- +m  \notag \\&  = \#I^-  +m - \#I_2^+   = \# I^- +m - \# I^+  + \#I_3^+ + \#I_4^+.
\end{align*}
Since there holds $\# I^- = \#I^+$ due to alternating charge distribution and $\# I_4^+ = 0$ by property  (i), property (ii) follows. 
\end{proof}

\begin{proof}[Proof of Lemma \ref{lemma : charge interior}(a)]

By Theorem \ref{theorem: min-en2}, Lemma \ref{lemma : Charge Energybound}(a)(i),(ii), and Remark \ref{rem: repulsionsfree} there holds $\mathcal{E}(C_n)=-b$. Therefore,  $\theta_i \geq (k-1)\frac{\pi}{2} \text{ if } x_i  \in I_k$ by Lemma \ref{RemarkPolygon}(b).  This implies 
\begin{align*}
\#I_2 + 2\#I_3 +3\#I_4 \leq \frac{2}{\pi} \sum_{i=1}^d \theta_i = 2d-4,
\end{align*}
where in the last step we used that the sum of the interior angles equals $\pi(d-2)$. This shows \eqref{eq :I2I3I3}. 

 We now show the additional statement. To this end, suppose that $\#\mathcal{A}(X_n) \geq \#\mathcal{A}(X_n^{\mathrm{a},\mathrm{bulk}})+2$. We claim that \EEE 
\begin{align}\label{ineq :I2card}
\#\{x_i \in I_2 : \theta_i \geq 5\pi/6\} \geq 2.
\end{align}
 We defer the proof of \eqref{ineq :I2card} and first conclude the proof of the statement. \EEE The fact that the sum of the interior angles equals $\pi(d-2)$, \eqref{ineq :I2card},  and $\theta_i \geq (k-1)\frac{\pi}{2}$ for $x_i  \in I_k$  \EEE imply \EEE
\begin{align*}
1 + \#I_2 +2\#I_3 + 3\#I_4<\frac{2}{\pi}\frac{5}{6}\pi\cdot 2 + (\#I_2-2) + 2\#I_3 + 3\#I_4 \leq \frac{2}{\pi}\sum_{i=1}^d \theta_i = 2d-4.
\end{align*}
Since the left and the right hand side are integers, we find $2 + \#I_2 +2\#I_3 + 3\#I_4 \le 2d-4$.  This shows that   $m \ge 6$ in \eqref{eq :I2I3I3}. \EEE It remains to prove \eqref{ineq :I2card}.  As a preparation, we \EEE first show that 
 \begin{align}\label{eq: grosser winkel}
 x \in \mathcal{A}(X_n)  \ \ \ \ \Rightarrow \ \ \ \  \text{ $x\in X^+_n$, $x$ is $2$-bonded and bond-angles at $x$ lie in $[5\pi/6, 7\pi/6 ]$.} \EEE
 \end{align}
By \EEE Lemma \ref{lemma : Charge Energybound}(a)(i),(iv),(v) we observe that  $x$ is $2$-bonded and has charge $+1$.  Denote the two neighbors of $ x \EEE \in   \mathcal{A}(X_n)$ by  $z_{1}$ and $z_{2}$. Since $z_1$ and $z_2$ have charge $-1$, they are $4$-bonded. Also note that  \EEE $(\mathcal{N}(z_{1}) \cap \mathcal{N}(z_{2}))\setminus \{x\} = \emptyset$. In fact, if there was an atom $  y \EEE \in (\mathcal{N}(z_{1}) \cap \mathcal{N}(z_{2}))\setminus \{x\}$, the four atoms $z_{1}$, $x$, $z_{2}$, and $y$ would form a square. Due to Lemma  \ref{RemarkPolygon}(b), this square is regular which contradicts the fact that $x \in  \mathcal{A}(X_n)$. Now, if we  had  that a bond-angle at $x$ does not lie in $[5\pi/6, 7\pi/6 ]$, \EEE it is elementary to see that $\mathrm{dist}(\mathcal{N}(z_{1}) \setminus \{x\},\mathcal{N}(z_{2})\setminus \{x\}) < \sqrt{2}$. This would contradict the fact that $C_n$ is repulsion-free.

 We now show that  \eqref{ineq :I2card} holds. Recall the assumption $\#\mathcal{A}(X_n) \geq \#\mathcal{A}(X_n^{\mathrm{a},\mathrm{bulk}})+2$.  \EEE We consider  the two \EEE cases (a)  $\mathcal{A}(X_n) \cap \partial X_n^\mathrm{a} \neq \emptyset$  and \EEE  (b) $\mathcal{A}(X_n) \cap \partial X_n^\mathrm{a} = \emptyset$. 

\noindent \emph{Proof of Case $\mathrm{(a)}$.} Observe that $\#( \partial X_n^a \cap \mathcal{A}(X_n)) \ge 2$ by Lemma \ref{RemarkPolygon}(a). Additionally, by \eqref{eq: grosser winkel} we get  $\partial X_n^a \cap \mathcal{A}(X_n) \subset I_2^+$  and  $\theta_i \geq 5\pi/6$ for $x_i \in \partial X_n^a \cap \mathcal{A}(X_n)$. \EEE Hence, \eqref{ineq :I2card}  holds. \EEE 

\noindent \emph{Proof of Case $\mathrm{(b)}$.}   We first \EEE show that for each $x \in \mathcal{A}(X_n)\setminus \mathcal{A}(X_n^{\mathrm{a},\mathrm{bulk}})$ we can find 
\begin{align}\label{eq: z-def}
z_x \in X_n \ \ \text{ such that $z_x \in I_2^+$, \  $|x-z_x| =\sqrt{2}$, \  and the two bond-angles at $z_x$ equal $\pi$}.
\end{align}
By \eqref{eq: grosser winkel} we get that \EEE $x\in X_n^+$,  that \EEE it is $2$-bonded in $X_n$ and, since $x \in \mathcal{A}(X_n)\setminus \mathcal{A}(X_n^{\mathrm{a},\mathrm{bulk}})$, it is $1$-bonded in $X_n^{\mathrm{a},\mathrm{bulk}} $.  Thus, there exists exactly one atom in $\mathcal{N}(x) \cap \partial X_n^\mathrm{a}$, denoted by $x_i$. By $x_{i-1},x_{i+1}$ we denote the two neighbors of $x_i$ \EEE in $\partial X_n^\mathrm{a}$.  At least one of these two atoms, say without restriction $x_{i-1}$, satisfies $|x_{i-1} - x| = \sqrt{2}$, see   Fig.~\ref{Fig: A.17}. \EEE We will identify $x_{i-1}$ as the atom $z_x$ in \eqref{eq: z-def}. \EEE  Recall the assumption $\partial X_n^a \cap \mathcal{A}(X_n)=\emptyset$, which implies that the bond-angles at $x_{i-1}$ lie in $\{\pi/2,\pi,3\pi/2\}$.  Let $y \in \mathcal{N}(x_{i-1}) \setminus \lbrace x_i\rbrace$. Suppose $y$, $x_i$, and $x_{i-1}$ would form an angle in $\lbrace \pi/2,3\pi/2\rbrace$, i.e., $|y-x_i| = \sqrt{2}$.    Then either (i) $x_{i-1},y,x,x_i$ or (ii) $x_{i-1},y,z,x_i$ form a square, where $z \in \mathcal{N}(x_i)\setminus \lbrace x,x_{i-1}\rbrace$, see  Fig.~\ref{Fig: A.17}. In view of \eqref{eq: grosser winkel}, case (i)  \EEE contradicts $x \in \mathcal{A}(X_n)$, and case (ii) contradicts  $x_{i-1} \in \partial X_n^\mathrm{a}$. This shows that $y,x_{i-1},x_i$ form an angle $\pi$, \EEE  and this also yields that $x_{i-1} \in I^+_2$.   We denote $x_{i-1}$ by $z_x$. \EEE

 Since $\#\mathcal{A}(X_n) \geq \#\mathcal{A}(X_n^{\mathrm{a},\mathrm{bulk}})+2$ by assumption, we find $x\neq  y \in \mathcal{A}(X_n) \setminus \mathcal{A}(X_n^{\mathrm{a},\mathrm{bulk}})$.   We now show that  $z_x \neq z_y$\EEE,  where $z_x$ and $z_y$ \EEE are the atoms identified in \eqref{eq: z-def}. \EEE This will imply \eqref{ineq :I2card}. 
Assume by contradiction that $z_x=z_y$. Denote  $z_x = z_y$  by $x_{i} \in I_2^+$ and recall $\theta_i = \pi$. \EEE Denote its two neighbors by $x_{i-1},x_{i+1}$,  and recall that they are $4$-bonded, in \EEE particular bonded to $x$  or $y$, respectively.    Consider the unique atoms $w_x \in \mathcal{N}(x) \setminus \partial X_n^\mathrm{a}$ and  $w_y \in \mathcal{N}(y) \setminus \partial X_n^\mathrm{a}$. Observe that \EEE they are $4$-bonded. First, note that  $\mathcal{N}(  w_x \EEE ) \cap \mathcal{N}( w_y \EEE )=\emptyset$ since the contrary would imply that $x,y$ would be contained in an octagon with bond angles $\pi,\pi/2$ respectively, contradicting the fact that $x,y \in \mathcal{A}(X_n)$. Since by  \eqref{eq: grosser winkel} the bond angles at $x,y \in \mathcal{A}(X_n)$  lie in $[5\pi/6,7\pi/6]$, \EEE an elementary argument shows that $\mathrm{dist}(\mathcal{N}(  w_x \EEE ), \mathcal{N}( w_y\EEE ))<\sqrt{2}$,  see Fig.~\ref{Fig: A.17}. \EEE This contradicts  the fact that $C_n$ is repulsion-free  and concludes the proof of case (b).
\end{proof}

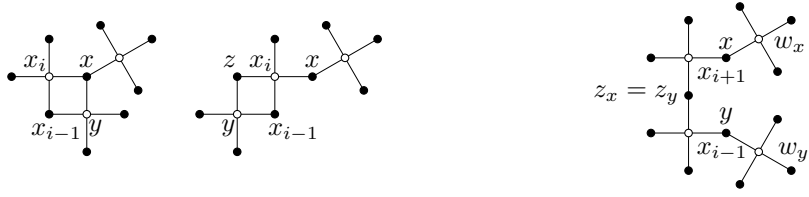
\begin{figure}[htp]
\centering
\begin{tikzpicture}[scale=.5]
\begin{scope}[rotate=90,shift={(-0,-3)}]
\draw[ultra thin](-2,-1)++(-120:1)--++(-120:1);
\draw[ultra thin](-2,-1)++(-120:1)--++(-30:-1);
\draw[ultra thin](-2,-1)--++(-120:1)--++(-30:1);
\draw[ultra thin](0,-1)++(-60:1)--++(30:-1);
\draw[ultra thin](0,-1)++(-60:1)--++(-60:1);
\draw[ultra thin](0,-1)--++(-60:1)--++(30:1);
\draw[ultra thin](0,0)--++(0,1);
\draw[ultra thin](-2,0)--++(0,1);
\draw[ultra thin](-3,0)--(1,0);
\draw[ultra thin](-2,0)--(-2,-1);
\draw[ultra thin](0,0)--(0,-1);
\draw[fill=white](0,0) circle(.1);
\draw[fill=white](-2,0) circle(.1);
\draw[fill=black](-2,1) circle(.1);
\draw[fill=black](0,1) circle(.1);
\draw[fill=black](-3,0) circle(.1);
\draw[fill=black](-1,0) circle(.1);
\draw[fill=black](1,0) circle(.1);
\draw[fill=black](0,-1) circle(.1);
\draw[fill=black](-2,-1) circle(.1);
\draw[fill=white](-2,-1)++(-120:1)circle(.1);
\draw[fill=black](-2,-1)++(-120:1)++(-30:-1)circle(.1);
\draw[fill=black](-2,-1)++(-120:1)++(-120:1)circle(.1);
\draw[fill=black](-2,-1)++(-120:1)++(-30:1)circle(.1);
\draw[fill=white](0,-1)++(-60:1)circle(.1);
\draw[fill=black](0,-1)++(-60:1)++(30:-1)circle(.1);
\draw[fill=black](0,-1)++(-60:1)++(-60:1)circle(.1);
\draw[fill=black](0,-1)++(-60:1)++(30:1)circle(.1);

\draw(-2,-.9) node[anchor=north] { $x_{i-1}$ };
\draw(0,-.9) node[anchor=north] { $x_{i+1}$  };
\draw(0,-1) node[anchor=south] {$x$};
\draw(-0.1,-2.7) node[anchor=south] { $w_x$  };
\draw(-3,-2.8) node[anchor=south] { $w_y$  };
\draw(-2,-1) node[anchor=south] {$y$};
\draw(-1,0) node[anchor=east]{$z_x=z_y$};
\end{scope}

\begin{scope}[rotate=90,shift={(-.5,14)}]

\draw[ultra thin](0,-1)++(-60:1)--++(30:-1);
\draw[ultra thin](0,-1)++(-60:1)--++(-60:1);
\draw[ultra thin](0,-1)--++(-60:1)--++(30:1);
\draw[ultra thin](0,0)--++(0,1);
\draw[ultra thin](-1,0)--(1,0);
\draw[ultra thin](0,0)--(0,-1);
\draw[fill=white](0,0) circle(.1);
\draw[fill=black](0,1) circle(.1);
\draw[fill=black](-1,0) circle(.1);
\draw[fill=black](1,0) circle(.1);
\draw[fill=black](0,-1) circle(.1);

\draw[ultra thin](-1,0)--++(0,-1)--++(1,0);
\draw[ultra thin](-1,0)++(0,-1)--++(-1,0);
\draw[ultra thin](-1,0)++(0,-1)--++(0,-1);
\draw[fill=black](-1,0)++(0,-1)++(-1,0)circle(.1);
\draw[fill=black](-1,0)++(0,-1)++(0,-1)circle(.1);
\draw[fill=white](-1,0)++(0,-1)circle(.1);
\draw[fill=white](0,-1)++(-60:1)circle(.1);
\draw[fill=black](0,-1)++(-60:1)++(30:-1)circle(.1);
\draw[fill=black](0,-1)++(-60:1)++(-60:1)circle(.1);
\draw[fill=black](0,-1)++(-60:1)++(30:1)circle(.1);
\draw(-0.1,-0.2) node[anchor=south east] {$x_i$};
\draw(0,-1) node[anchor=south] {$x$};
\draw(-.9,-.8) node[anchor= north west] {$y$};
\draw(-1,-.2) node[anchor=north]{ $x_{i-1}$};
\end{scope}

\begin{scope}[rotate=90,shift={(-.5,8)}]

\draw[ultra thin](0,-1)++(-60:1)--++(30:-1);
\draw[ultra thin](0,-1)++(-60:1)--++(-60:1);
\draw[ultra thin](0,-1)--++(-60:1)--++(30:1);
\draw[ultra thin](0,0)--++(0,1);
\draw[ultra thin](-1,0)--(1,0);
\draw[ultra thin](0,0)--(0,-1);
\draw[fill=white](0,0) circle(.1);
\draw[fill=black](0,1) circle(.1);
\draw[fill=black](-1,0) circle(.1);
\draw[fill=black](1,0) circle(.1);
\draw[fill=black](0,-1) circle(.1);

\draw[ultra thin](-1,0)--++(0,1)--++(1,0);
\draw[ultra thin](-1,0)++(0,1)--++(-1,0);
\draw[ultra thin](-1,0)++(0,1)--++(0,1);
\draw[fill=black](-1,0)++(0,1)++(-1,0)circle(.1);
\draw[fill=black](-1,0)++(0,1)++(0,1)circle(.1);
\draw[fill=white](-1,0)++(0,1)circle(.1);
\draw[fill=white](0,-1)++(-60:1)circle(.1);
\draw[fill=black](0,-1)++(-60:1)++(30:-1)circle(.1);
\draw[fill=black](0,-1)++(-60:1)++(-60:1)circle(.1);
\draw[fill=black](0,-1)++(-60:1)++(30:1)circle(.1);
\draw(-0.1,-0.2) node[anchor=south east] {$x_i$};
\draw(0,1.2) node[anchor=south] {$z$};
\draw(0,-1) node[anchor=south] {$x$};
\draw(-.9,.8) node[anchor= north east] {$y$};
\draw(-1,-.5) node[anchor=north]{ $x_{i-1}$};
\end{scope}

\end{tikzpicture}
\caption{ The relevant parts of the configurations considered in the proof of Case (b). Left: The two cases (i) and (ii) in the proof of \eqref{eq: z-def}. Right: The situation  $z_x=z_y$. \EEE }
\label{Fig: A.17}
\end{figure}

\vspace{-0.3cm}

%
%

\begin{proof}[Proof of Lemma \ref{lemma : charge interior}(b)]  
We proceed in several steps. We first introduce an auxiliary configuration $C_{n-d}'$ which arises from $C_n$ by removing the boundary and adding again the $3$-bonded boundary atoms of  charge $+1$ (Step 1). We collect some properties about the net charge of $C_{n-d}'$ whose proofs are deferred to Steps 3 and 4. In Step 2 we define the configuration $C_{n-d-m}$ by suitably adding atoms of charge $+1$ to $C_{n-d}'$, and we prove the statement of the lemma.

 \noindent \emph{Step 1: An auxiliary configuration.} We recall the decomposition $X_n= X_n^{a,\mathrm{bulk}} \cup \partial X_n^a \cup I_\mathrm{ac}^{\rm ext}$ introduced in Subsection \ref{sec: boundary charge}. By  Lemma \ref{RemarkPolygon}(a) and Lemma \ref{LemmaNeighborhood}, $\partial C_n^a$ is a cycle of even length $d$ with alternating charge distribution, and therefore $\mathcal{Q}(\partial C_n^a) = 0$. Moreover, all exterior acyclic bonds have charge $+1$  by Lemma \ref{lemma: bridging}(b). \EEE This implies  
\begin{align}\label{eq : chargeCna}
\mathcal{Q}(C_n) =  \mathcal{Q}(C_n^{a,\mathrm{bulk}})+ \mathcal{Q}(\partial C_n^a)+ \mathcal{Q}(I_\mathrm{ac}^{\rm ext}) = \mathcal{Q}(C_n^{a,\mathrm{bulk}})+  \# I_\mathrm{ac}^{\rm ext}.
\end{align}
 We introduce an auxiliary configuration $C_{n-d}'$ by 
\begin{align}\label{eq: strich definition}
C_{n-d}'= C_n^{a,\mathrm{bulk}} \cup \bigcup\nolimits_{x_i \in I_3^+} (x_i,1).
\end{align}
In a similar fashion, we denote the atomic positions by $X_{n-d}'$. It is clear that $\mathcal{Q}(C_{n-d}')=\mathcal{Q}(C_n^{a,\mathrm{bulk}})+\#I_3^+$. By construction there holds $\mathcal{A}(X_n^{\mathrm{a},\mathrm{bulk}} )=\mathcal{A}(X_{n-d}')$ since all the atoms added from $I_3^+$ are $1$-bonded in $C'_{n-d}$,  more precisely, bonded to an  atom which does not lie in $\mathcal{A}(X_n)$. \EEE  By (\ref{eq : chargeCna})   we get 
\begin{align}\label{eq : charge main}
\mathcal{Q}(C_n)= \mathcal{Q}(C_{n-d}') + \# I_{\rm ac}^{\rm ext} -\#I_3^+.
\end{align}
The goal is   to estimate the contributions of the different terms in (\ref{eq : charge main}) separately: we will show that 
\begin{align}
\mathcal{Q}(C_{n-d}') & \geq \phi\big(n-d-(\# I_\mathrm{ac}^{\rm ext}-\#I_3^+)\big), \label{eq: diff-prop1}\\
\# I_\mathrm{ac}^{\rm ext} - \# I_3^+  & \geq  m,     \label{eq: diff-prop2} 
\end{align}
 where $m \ge 4$ is given in \eqref{eq :I2I3I3}. \EEE We defer the proof of \eqref{eq: diff-prop1}--\eqref{eq: diff-prop2} to Steps 3 and 4 below and proceed to show the statement of the lemma.

\noindent \emph{Step 2: Proof of the statement}: First, \eqref{eq: diff-prop2} implies $\# I_\mathrm{ac}^{\rm ext}  \ge  \# I_3^+ + m \ge m$. Next, we construct the configuration $C_{n-d-m}$. A difficulty arises from the fact   that the auxiliary  configuration $C_{n-d}'$ \EEE possibly does not satisfy the net charge equality \eqref{ineq : charge-new} and therefore we  need to add atoms of charge $+1$.

 We set $ k_m:= \# I_\mathrm{ac}^{\rm ext} -\#I_3^+-m$ and observe that     $k_m \geq 0$ by  \eqref{eq: diff-prop2}. We define  
\begin{align*}
C_{n-d-m} := C_{n-d}' \cup   \big\{(z_1,1),\ldots, (z_{k_m},1)\big\}, 
\end{align*}
where the positions are chosen such that $\min\{\mathrm{dist}(z_i,X_n),\mathrm{dist}(z_i,z_j): i\neq j \} \geq \sqrt{2}$.
In view of \eqref{eq: strich definition}, we observe that $C_{n-d-m}$ indeed consists of  
\begin{align*}
n-d-\# I_\mathrm{ac}^{\rm ext} +\#I_3^+ +k_m = n-d-\# I_\mathrm{ac}^{\rm ext}+\#I_3^++\# I_\mathrm{ac}^{\rm ext} -\#I_3^+-m = n-d-m
\end{align*}
atoms. Since the added atoms are all $0$-bonded,  in view of \eqref{eq: strich definition}, $X_{n-d-m}$ is a subset of $X_n$ up to $0$-bonded atoms. Moreover, \EEE we clearly have $ \mathcal{A}(X_{n-d-m}) = \mathcal{A}(X_{n-d}')$ and thus $\mathcal{A}(X_{n-d-m}) = \mathcal{A}(X_n^{\mathrm{a},\mathrm{bulk}}) $.  As we have added $k_m$ $0$-bonded atoms of charge $+1$ to $C_{n-d}'$, $C_{n-d-m}$ still satisfies Lemma \ref{lemma : Charge Energybound}(a)(i),(ii), and thus by Lemma \ref{lemma : Charge Energybound}(b)  there holds equality in  \eqref{ineq : charge}. Therefore, $C_{n-d-m}$ is optimal and satisfies $\mathcal{Q}(C_{n-d-m})\geq q_\mathrm{sat}^{n-d-m}$  by Lemma \ref{lemma : Charge Energybound}(a)(iii). \EEE Moreover, we clearly have  $\mathcal{Q}(C_{n-d-m}) = \mathcal{Q}(C_{n-d}') + k_m$.  By \eqref{eq : charge main} and the definition $k_m =  \# I_\mathrm{ac}^{\rm ext} -\#I_3^+-m$   we observe    
\begin{align*}
\mathcal{Q}(C_n)= \mathcal{Q}(C_{n-d}')+   \# I_\mathrm{ac}^{\rm ext}-\#I_3^+ &= \mathcal{Q}(C_{n-d-m})+ \# I_\mathrm{ac}^{\rm ext} -k_m-\#I_3^+= \mathcal{Q}(C_{n-d-m})+m.
\end{align*}
This shows \eqref{ineq : charge-new} and concludes the proof of  the statement. To conclude the proof, it thus remains to show \eqref{eq: diff-prop1} and \eqref{eq: diff-prop2}.

 \noindent \emph{Step 3: Proof of \eqref{eq: diff-prop1}.} We first recall that  $C_n^{a,\mathrm{bulk}}$ consists of $n-d-\# I_\mathrm{ac}^{\rm ext}$ atoms and thus $C_{n-d}'$ consists of $\# X_{n-d}' :=  n-d-(\# I_\mathrm{ac}^{\rm ext}-\#I^+_3)$ atoms. Thus, in view of  Lemma \ref{lemma : Charge Energybound}(a)(iii)  it suffices to prove 
\begin{align}\label{eq : chargeenergieCn-dprime}
\mathcal{E}(C_{n-d}') = -2 \# X_{n-d}' + 2\mathcal{Q}(C_{n-d}').
\end{align}
By Lemma \ref{lemma : Charge Energybound}(a)(i),(ii) we find that $C_n$ is repulsion-free with unit bond lengths and all atoms of charge $-1$ are $4$-bonded. Since $C_{n-d}' \subset C_n$,   also $C_{n-d}'$ is repulsion-free with unit bond lengths. We now show that each atom of $C_{n-d}'$ with charge $-1$ is $4$-bonded.   Let $x_i \in X_{n-d}'$ with $q_i=-1$. We  show that  $\mathcal{N}(x_i) \cap (X_n \setminus  X_{n-d}' \EEE ) = \emptyset$. In fact, in view of \eqref{eq: strich definition} and   $q_i =-1$, \EEE we have 
\begin{align}\label{eq: emptyempty} 
\mathcal{N}(x_i) \cap (X_n \setminus X_{n-d}') \subset (\partial X_n^a \setminus I_3^+) \cap X_n^+ = I_2^+\cup I_4^+.
\end{align}
We first recall that $\#I_4^+ = \emptyset$, see Lemma \ref{eq: preliminary estimates}(i). Moreover, atoms in $\mathcal{N}(x_i) \cap( X_n \setminus  X_{n-d}' \EEE )$ cannot lie in $I_2$ since atoms in $I_2$ are only bonded to other atoms in $\partial X_n^a$ or to atoms in $I^{\rm ext}_\mathrm{ac}$,  see \eqref{def: Ik}. \EEE Then \eqref{eq: emptyempty} shows $\mathcal{N}(x_i) \cap (X_n \setminus   X_{n-d}' \EEE ) = \emptyset$.

This implies that the neighborhood of $x_i$ in $X_{n-d}'$ coincides with the one in $X_n$, and thus $x_i$ is still $4$-bonded, cf.\ Lemma \ref{lemma : Charge Energybound}(a)(i). Consequently, we have shown that  $C_{n-d}'$ satisfies Lemma \ref{lemma : Charge Energybound}(a)(i),(ii). Then \eqref{eq : chargeenergieCn-dprime} follows from Lemma \ref{lemma : Charge Energybound}(b).

\noindent \emph{Step 4: Proof of \eqref{eq: diff-prop2}}.  
 Since $n\geq 6$ and  $C_n$ does not contain bridging atoms, we can apply Lemma \ref{lemma: bridging}(b). We get that $I_{\rm ac}^{\rm ext} = I_{\rm ac}^{\rm ext} \cap X_n^+$ and that each atom in $I_{\rm ac}^{\rm ext}$ is $1$-bonded.  For all $x_i \in I_3^-$ there exists exactly one $x_j \in \mathcal{N}(x_i) \cap I_\mathrm{ac}^{\rm ext}$, whereas for $x_i \in I_2^-$ there exist exactly two $x_j,x_k \in \mathcal{N}(x_i) \cap I_\mathrm{ac}^{\rm ext}$. As atoms in $I_{\rm ac}^{\rm ext}$ are $1$-bonded, it is clear that no atom in $I_{\rm ac}^{\rm ext}$ is bonded to two different atoms in  $I_2^- \cup I_3^-$. This yields 
\begin{align*}
\# I_{\rm ac}^{\rm ext} \ge 2 \# I_{2}^- + \# I_{3}^-.
\end{align*}
Then  \eqref{eq: diff-prop2} follows from \eqref{eq :I2I3I3} and \EEE Lemma \ref{eq: preliminary estimates}(ii). 
\end{proof}
\EEE

\section*{Acknowledgements} 
This work was funded by the Deutsche Forschungsgemeinschaft (DFG, German Research Foundation) under Germany's Excellence Strategy EXC 2044 -390685587, Mathematics M\"unster: Dynamics--Geometry--Structure.


\begin{thebibliography}{99}
%




\bibitem{Molecular}
 {\sc   N.L. Allinger}.
\newblock {\em Molecular structure: understanding steric and electronic effects from molecular mechanics}.
\newblock John Wiley \& Sons
\newblock (2010).

 

 


\bibitem{Yuen}
 {\sc   Y.~Au Yeung, G.~Friesecke, B.~Schmidt}.
\newblock {\em Minimizing atomic configurations of short range
pair potentials in two dimensions: crystallization in the Wulff-shape}.
\newblock Calc.\ Var.\ Partial Differential Equations
\newblock {\bf 44} (2012), 81--100.



\bibitem{Betermin0}
{\sc L.~B\'etermin, L.~De Luca, M.~Petrache}.
\newblock  {\em Crystallization to the square lattice for a two-body potential}.  
\newblock Preprint at \href{https://arxiv.org/abs/1907.06105}{\tt arXiv:1907.06105}  
\EEE 

 
\bibitem{Betermin}
{\sc L.~B\'etermin, H.~Kn\"upfer, F.~Nolte}.
\newblock  {\em Crystallization of  one-dimensional alternating two-component systems}.  
\newblock Preprint at \href{https://arxiv.org/abs/1804.05743}{\tt arXiv:1804.05743}  
 
 
 \bibitem{Betermin2}
{\sc L.~B\'etermin, H.~Kn\"upfer}.
\newblock  {\em  On Born's conjecture about optimal distribution of charges for an infinite
ionic crystal.}  
\newblock J.\ Nonlinear Sci.\
\newblock {\bf 28} (2018),  1629–-1656.
 
 
 
  

\bibitem{Blanc}
{\sc X.~Blanc, M.~Lewin}.
\newblock {\em The crystallization conjecture: a review}.
\newblock EMS Surv.\ Math.\ Sci.\  
\newblock {\bf 2} (2015), 255--306.

 
\bibitem{B43}
{\sc D.~C.~Brydges, P.~A.~Martin}. 
\newblock {\em Coulomb systems at low   density: A review}.
\newblock J.\ Stat.\ Phys.\
\newblock {\bf 96} (1999),  1163--1330.

  



 

\bibitem{cicalese}
 {\sc M.~Cicalese, G.~P.~Leonardi}. {\em Maximal fluctuations on periodic lattices: an approach via quantitative Wulff inequalities
}.  Commun.\ Math.\ Phys., to appear. \EEE  Preprint at \href{http://cvgmt.sns.it/paper/4195/}{\tt http://cvgmt.sns.it/paper/4195/}
  




\bibitem{Davoli15}
{\sc E.~Davoli, P.~Piovano, U.~Stefanelli}. 
\newblock {\em Wulff shape emergence in graphene}. 
\newblock Math.\ Models Methods Appl.\ Sci.\ 
\newblock {\bf 26} (2016), 12:2277--2310.


\bibitem{Davoli16}
{\sc E.~Davoli, P.~Piovano, U.~Stefanelli}. 
\newblock {\em Sharp $n^{3/4}$ 
 law for the minimizers of the edge-isoperimetric problem in the
 triangular lattice}. 
\newblock J.\  Nonlin.\ Sci.\
\newblock  {\bf 27} (2017),  627--660.



\bibitem{Lucia}
{\sc L.~De Luca, G.~Friesecke}. 
\newblock {\em  Crystallization in two dimensions and a discrete Gauss–Bonnet Theorem}. 
\newblock 
J.\ Nonlinear Sci.\
\newblock {\bf 28} (2017), 69--90.



 
 
 \bibitem{ELi}
 {\sc   W.~E, D.~Li}.
\newblock {\em  On the crystallization of 2D hexagonal lattices}.
\newblock Comm.\ Math.\ Phys.\
\newblock {\bf 286} (2009), 1099--1140.



 

\bibitem{Smereka15}
{\sc B.~Farmer, S.~Esedo\={g}lu, P.~Smereka}. 
\newblock {\em Crystallization for a
Brenner-like potential}.  
\newblock  Comm.\ Math.\ Phys.\
\newblock {\bf 349} (2017),  1029--1061. 
  

\bibitem{Figalli-Maggi-Pratelli}
 {\sc A.~Figalli, F.~Maggi, A.~Pratelli}. 
\newblock {\em A mass transportation approach to quantitative isoperimetric inequalities}. 
\newblock Inventiones mathematicae
\newblock {\bf 182} (2010),167--211.
 
 \bibitem{Flateley1}
{\sc L.~Flatley, M.~Taylor, A.~Tarasov, F~ Theil}. 
\newblock {\em Packing twelve spherical caps to maximize tangencies}. 
\newblock J.\ Comput.\ Appl.\ Math.\
\newblock {\bf 254} (2013),220--225.



 \bibitem{Flateley2}
{\sc L.~Flatley, F.~Theil}. 
\newblock {\em Face-centered cubic crystallization of atomistic configurations}. 
\newblock Arch.\ Ration.\ Mech.\ Anal.\
\newblock {\bf 218} (2015), 363--416.



  

 

 



 \bibitem{FriedrichKreutzHexagonal}
 {\sc  M.~Friedrich, L.~Kreutz}.
 \newblock {\em Crystallization in the hexagonal lattice for ionic dimers}.
\newblock  Math.\ Models Methods Appl.\ Sci.\ 
\newblock {\bf 29} (2019), 1853--1900. \EEE


 



\bibitem{Friesecke-Theil15}
{\sc G.~Friesecke, F.~Theil}. 
\newblock {\em Molecular geometry optimization,
  models}. In the Encyclopedia of Applied and Computational Mathematics,
B. Engquist (Ed.), Springer, 2015.











 



 \bibitem{Gardner}
{\sc C.~S.~Gardner, C.~Radin}. 
\newblock {\em The infinite-volume ground state of the Lennard-Jones potential}. 
\newblock J.\ Stat.\ Phys.\
\newblock {\bf 20} (1979), 719--724.



   



\bibitem{Hamrick}
{\sc G.~C.~Hamrick and C.~Radin}. 
\newblock {\em The symmetry of ground states under perturbation}. 
\newblock J.\ Stat.\ Phys.\
\newblock {\bf 21} (1979), 601--607.
  

\bibitem{HR}
{\sc  R.~Heitman, C.~Radin}. 
\newblock {\em Ground states for sticky disks}. 
\newblock J.\ Stat.\ Phys.\
\newblock {\bf 22} (1980), 3:281--287. 


 


 
 





\bibitem{cronut}
{\sc G.~Lazzaroni, U.~Stefanelli}. {\em Chain-like minimizers in three
  dimensions}.
 Transactions of Mathematics and Its Applications {\bf 2} (2018), tny003. \EEE 


\bibitem{Lewars}
{\sc E.~G.~Lewars}.
\newblock {\em Computational Chemistry}. 2nd edition, Springer, 2011.

 
 

\bibitem{Mainini-Piovano}
{\sc E.~Mainini, P.~Piovano, U.~Stefanelli}. 
\newblock {\em Finite crystallization in the square lattice}. 
\newblock Nonlinearity
\newblock {\bf 27} (2014), 717--737.
 


\bibitem{Mainini-Piovano-schmidt}
{\sc E.~Mainini, P.~Piovano, B.~Schmidt, U.~Stefanelli}. 
\newblock {\em $N^3/4$ law in the cubic lattice}. 
\newblock Preprint at \href{https://arxiv.org/abs/1807.00811}{\tt arXiv:1807.00811}.

 




 
\bibitem{Mainini}
{\sc E.~Mainini, U.~Stefanelli}. 
\newblock {\em Crystallization in carbon nanostructures}. 
\newblock Comm.\ Math.\ Phys.\
\newblock {\bf 328} (2014), 545--571. 



 
 
 

 \bibitem{Pauling}
{\sc L.~Pauling}. 
\newblock {\em The nature of the chemical bond and the structure of molecules and crystals: an introduction to modern structural chemistry}.
\newblock Ithaca, New York: Cornell University Press 1960. 


 
\bibitem{Radin}
{\sc  C.~Radin}. 
\newblock {\em The ground state for soft disks}. 
\newblock J.\ Stat.\ Phys.\
\newblock {\bf 26} (1981), 2:365--373. 




  \bibitem{Radi}
{\sc C.~Radin}. 
\newblock {\em Classical ground states in one dimension}. 
\newblock  J.\ Stat.\ Phys.\
\newblock {\bf 35} (1983), 109--117.
  

 

 

 
 \bibitem{B195}
{\sc C.~Radin}. 
\newblock {\it Crystals and quasicrystals: a continuum
  model}. 
  \newblock Comm.\ Math.\ Phys.\
  \newblock {\bf 105} (1986),  385--390.
  


 
 
 

 

 
 
\bibitem{Schmidt-disk}
{\sc  B.~Schmidt}. 
\newblock {\em Ground states of the 2D sticky disc model: fine properties and $N^{3/4}$ law for the
deviation from the asymptotic Wulff-shape}. 
\newblock J.\ Stat.\ Phys.\
\newblock {\bf 153} (2013), 727--738.

\bibitem{Suto06}
{\sc A.~S\"ut\H{o}}. 
\newblock {\em From bcc to fcc: Interplay between oscillation long-range
and repulsive short
range forces}.  
\newblock Phys.\ Rev.\ B
\newblock  {\bf 74} (2006), 104117.

 
 \bibitem{Theil}
{\sc   F.~Theil}. 
\newblock {\em A proof of crystallization in two dimensions}. 
\newblock Comm.\ Math.\ Phys.\ 
\newblock {\bf 262} (2006), 209--236.

 
 
  
  \bibitem{Ventevogel}
{\sc W.~J.~Ventevogel, B.~R.~A.~Nijboer}. 
\newblock {\em On the configuration of systems of interacting atom with
minimum potential energy per atom}. 
\newblock  Phys.\ A
\newblock {\bf 99} (1979), 565--580.


  

 




 


\bibitem{Wagner83} {\sc H. J. Wagner}. {\em Crystallinity in two dimensions: a note on a paper of C. Radin}.  J.\ Stat.\ Phys.\   {\bf 33},  (1983),  523--526.





 


\end{thebibliography}
\end{document}